%% file: Granular_IV_v7_3_merged.tex
\documentclass[12pt,letterpaper]{article}
\usepackage[margin=1in]{geometry}
\usepackage{setspace}
\usepackage{amsmath,amssymb,amsfonts,amsthm,bm}
\usepackage{graphicx}
\usepackage{booktabs}
\usepackage{tabularx}
\usepackage{threeparttable}
\usepackage{array}
\usepackage{makeidx}
\usepackage{eurosym}
\usepackage{euscript}
\usepackage{chngcntr}
\usepackage{multirow}
\usepackage{caption}
\usepackage{algorithm}
\usepackage{natbib}
\usepackage{xr}
\usepackage[usenames]{color}

\providecommand{\U}[1]{\protect \rule{.1in}{.1in}}

\allowdisplaybreaks
\newtheorem{theorem}{Theorem}

\newtheorem{assumption}{Assumption}

\newtheorem{lemma}{Lemma}

\newtheorem{remark}{Remark}
\newtheoremstyle{ecta}
{\medskipamount}{\bigskipamount}{\normalfont}{1.4em}{\scshape}{:}{1em}{}
\theoremstyle{ecta}
\newtheorem*{example*}{Example}

\newcolumntype{Y}{>{\centering\arraybackslash}X}
\input{tcilatex}

\begin{document}

\title{Granular Instrumental Variables: Estimation and Inference}
\author{Jinyong Hahn\thanks{%
Department of Economics, UCLA, Los Angeles, CA 90095-1477 USA. Email:\
hahn@econ.ucla.edu} \\
UCLA \and Niu He\thanks{%
Department of Economics, UCLA, Los Angeles, CA 90095-1477 USA. Email:\
niuhe@g.ucla.edu} \\
UCLA \and Zhipeng Liao\thanks{%
Department of Economics, UCLA, Los Angeles, CA 90095-1477 USA. Email:\
zhipeng.liao@econ.ucla.edu} \\
UCLA \and Wenyu Zhou\thanks{%
International Business School, Zhejiang University, Haining, Zhejiang
314400, China. Email: wenyuzhou@intl.zju.edu.cn.} \\
Zhejiang University}
\date{\today }
\maketitle

\begin{abstract}
We develop an estimation and inference framework for granular instrumental variables (GIVs) in models with latent aggregate shocks. Our key insight is that valid GIVs are characterized by the orthogonal complement of the factor-loading space. This characterization yields a feasible procedure for constructing GIVs when factor loadings are unknown and does not require a large cross-sectional dimension. We provide practical procedures for inference and specification testing, and apply the framework to estimate the aggregate equity market multiplier. Our empirical results reveal substantial heterogeneity in equity demand elasticities across investor sectors and may provide nuanced support for the inelastic-markets hypothesis.

\bigskip

\noindent JEL Classification: C13, C26, C51, G12\bigskip

\noindent \textit{Keywords:} Granular Instrumental Variables; Latent
Aggregate Shocks; Identification and Inference; Asset Demand; Market
Elasticity.
\end{abstract}

\section{Introduction\label{sec:intro}}

Understanding how aggregate outcomes respond to shocks is a central
objective in economics and finance. In many applications, researchers
observe a large cross section of entities that are simultaneously exposed to
a small number of common shocks. Examples include firms responding to
aggregate demand conditions, financial institutions adjusting portfolios in
response to market forces, and countries reacting to global macroeconomic
shocks. A common feature of these environments is that the variables of
interest are jointly determined in equilibrium, creating endogeneity
problems that complicate identification and estimation of structural
parameters.

Recently, \cite{gabaix2024granular} proposed a novel identification strategy
based on granular instrumental variables (GIVs). The approach builds on the
insight from the granularity literature that when a small number of firms,
industries, countries, investors, or borrowers account for a non-negligible
share of aggregate activity, idiosyncratic shocks to these units may survive
aggregation and influence aggregate outcomes.\footnote{%
See, among others, \cite{gabaix2011granular,
acemoglu2012network,di2014firms, baqaee2019macroeconomic,gaubert2021granular}%
.} Exploiting this feature, GIV extracts the idiosyncratic component of
observables after controlling for common factors and aggregates these
components using size weights to construct instruments for causal parameters
such as elasticities and multipliers. Under suitable restrictions on the
covariance structure of the idiosyncratic shocks, the resulting instruments
are orthogonal to equilibrium disturbances and can therefore identify
structural parameters. Unlike earlier approaches that use idiosyncratic
shocks to variables excluded from the estimating equation,\footnote{%
See, for instance, \cite{leary2014peer,amiti2018much,amiti2019international}.%
} GIV constructs instruments from the idiosyncratic component of the
variables entering the estimating equation itself. As a result, the
methodology does not rely on traditional excluded instruments, which are
often difficult to justify or unavailable in practice.

Several recent papers extend the baseline GIV framework. %
\citet{banafti2022inferential} study inference in large-$n$, large-$T$
settings with unknown factors and loadings. \citet{baumeister2023uncovering}
develop a likelihood-based approach, while \citet{qian2023heterogeneity}
allows for heterogeneous spillovers. GIV has also become an important
identification device in empirical macroeconomics and finance, including
applications to stock-market demand, exchange rates, bank lending, and asset
pricing.\footnote{%
Recent applications include \citet{galaasen2020granular,
camanho2022global,ma2022mutual,dong2025fast}; many others are discussed in %
\citet{gabaix2024granular}.}

Despite its growing importance, several econometric questions remain
unresolved. The central challenge is that valid GIV construction requires
knowledge of the factor-loading space associated with latent aggregate
shocks, which is rarely observed in practice. Existing implementations
therefore estimate latent factors and construct GIVs in a second step, a
strategy that typically relies on a large cross-sectional dimension and
strong normalization assumptions. Moreover, little is known about the
consequences of estimating the factor-loading space for identification,
estimation, and inference.

This paper develops an estimation and inference framework for structural
models identified by GIVs when the factor-loading space is unknown. Rather
than estimating latent factors and constructing instruments in a second
step, we show that the relevant GIV space can be recovered directly from the
covariance structure of the observables. Specifically, the admissible GIVs
are generated by the orthogonal complement of the factor-loading space,
which can be identified from the eigenspace associated with the smallest
eigenvalues of the covariance matrix. This insight transforms the
construction of GIVs into a covariance-based problem and yields a feasible
procedure for constructing instruments directly from the data. The resulting
estimator remains valid even when the number of entities is fixed and
therefore does not require the cross-sectional dimension to diverge with the
sample size.

The characterization also provides a transparent identification strategy. We
show that all admissible GIVs are generated by the orthogonal complement of
the column space spanned by the factor loadings and the vector of ones. When
the factor-loading space is unknown, the relevant orthogonal complement can
be consistently recovered from the eigenspace associated with the smallest
eigenvalues of the covariance matrix of the observables. This result yields
a feasible GIV estimator and forms the basis for inference with estimated
GIVs.

Building on this characterization, we establish consistency and asymptotic
normality of the feasible GIV estimator and develop practical procedures for
inference and specification testing. In particular, we derive feasible
standard errors, establish the asymptotic validity of an over-identification $%
J$-test when the GIVs are estimated, and propose a BIC-type criterion for
determining the dimension of the factor-loading space. Monte Carlo
simulations show that the feasible estimator performs similarly to an oracle
estimator that knows the true factor-loading space and that the proposed
inference procedures perform well in finite samples.

Beyond estimation and inference, the paper clarifies several identification
issues that arise when factor loadings are unknown. We show that certain
restrictions commonly interpreted as normalizations on the factor loadings
instead impose substantive restrictions on the latent factors. We also
examine the identification strategy in \citet{gabaix2024granular} and show
that the moment conditions in their Proposition~7 may fail to identify
the structural parameters when factor loadings are unknown. These findings
highlight the challenges of identification in the presence of latent
aggregate shocks and motivate the alternative characterization developed in
this paper. The proposed framework avoids these restrictions and extends
naturally to settings with additional exogenous regressors, unbalanced
panels, and heterogeneous demand elasticities.

Finally, we return to the estimation of the aggregate equity market multiplier in demand-based asset pricing \citep{gabaix2021search}. This application is particularly relevant because the factor-loading space is unknown and only twelve investor sectors are available in the data. As a result, the large-cross-section justification underlying existing GIV procedures is difficult to invoke directly, making the setting a natural environment in which to assess the practical importance of estimating GIVs when factor loadings are unobserved. Applying our framework, we obtain estimates and conduct inference for the aggregate multiplier together with a formal specification test of the underlying GIV moment conditions. The empirical results provide evidence consistent with highly inelastic aggregate equity demand.

The remainder of the paper is organized as follows. Section~\ref{sec:
S_model} introduces the main ideas in a simplified framework. Section~\ref%
{sec: G_model} develops the general model, establishes identification, and
presents estimation and inference procedures based on estimated GIVs.
Section~\ref{sec: Extension} studies identification when factor loadings are
unknown, demonstrates a failure of identification in the moment conditions
proposed by \citet{gabaix2024granular}, and develops extensions to models
with exogenous regressors, unbalanced panels, and heterogeneous demand
elasticities. Section~\ref{sec: MC} reports Monte Carlo evidence, Section~%
\ref{sec:emp} presents an empirical application to the aggregate equity
market multiplier, and Section~\ref{sec:conclusion} concludes. Proofs and
additional technical and empirical results are collected in the Online
Appendix.

\textit{Notation.} We use $K$ to denote a generic strictly positive constant that may vary from place to place but does not depend on the sample size $T$. We write $a\equiv b$ to indicate that $a$ is defined as $b$. For any positive integer $k$, let $\mathbf{I}_k$, $\mathbf{1}_k$, and $\mathbf{0}_k$ denote the $k\times k$ identity matrix, the $k\times1$ vector of ones, and the $k\times1$ vector of zeros, respectively. For any vector $x_t\in\mathbb{R}^n$ (possibly indexed by $t$) and any weight vector $\omega\in\mathbb{R}^n$ satisfying $\omega^\top\mathbf{1}_n=1$, define the weighted average $x_{\omega,t}\equiv\omega^\top x_t$. For any matrix $A$, let $\func{col}(A)$ and $\func{rank}(A)$ denote its column space and rank, respectively. We use $\Vert A\Vert$ and $\Vert A\Vert_{\mathrm{o}}$ to denote the Frobenius norm and operator norm of $A$, respectively, and $M_A$ to denote the orthogonal projection matrix onto the orthogonal complement of $\func{col}(A)$. For any square matrix $A$, let $\rho_{\min}(A)$ and $\rho_{\max}(A)$ denote its smallest and largest eigenvalues, respectively. For any two matrices $A$ and $B$, let $\mathrm{diag}(A,B)$ denote the block-diagonal matrix with $A$ and $B$ on its main diagonal, and let $A\otimes B$ denote their Kronecker product. Finally, for any square matrix $A$, let $\func{vech}(A)$ denote the half-vectorization of $A$, obtained by stacking the elements of its lower triangular part (including the diagonal) column by column.

\section{A Simplified Framework \label{sec: S_model}}

We first illustrate the intuition of GIV using the following simplified
model:%
\begin{align}
y_{i,t} & =\phi p_{t}+\eta_{t}+u_{i,t},  \label{S_demand} \\
p_{t} & =\psi y_{S,t}+\varepsilon_{t},   \label{S_supply}
\end{align}
where $y_{i,t}$ denotes the log demand of entity $i\in \{1,\ldots,n\}$ at
time $t$, and $p_{t}$ is the log price common to all entities. The latent
variables $\eta_{t}$ and $u_{i,t}$ with $\mathbb{E}[u_{i,t}]=0$ represent
aggregate and idiosyncratic demand shocks, respectively. The parameter $\phi$
measures the demand elasticity. Equation~(\ref{S_supply}) describes the
supply side, where $y_{S,t}\equiv S^{\top}y_{t}$ denotes aggregate demand, $%
S\equiv(s_{i})_{i\leq n}$, and $y_{t}\equiv(y_{i,t})_{i\leq n}$, with $s_{i}$
denoting the market share of entity $i$. The term $\varepsilon_{t}$ is the
supply shock, and $\psi$ denotes the supply elasticity. Although stylized,
this model is widely used in the macroeconomics and finance literature %
\citep{gabaix2021search,camanho2022global}.

Let $u_{t}\equiv(u_{i,t})_{i\leq n}$. The following assumptions are
maintained throughout this section: 
\begin{equation}
\mathrm{Cov}(\varepsilon_{t},u_{t})=\sigma_{\varepsilon u}\mathbf{1}^{\top}%
_{n},\qquad \mathrm{Cov}(\eta_{t},u_{t})=\sigma_{\eta u}\mathbf{1}^{\top}%
_{n},\qquad \mathrm{Var}(u_{t})=\sigma_{u}^{2}\mathbf{I}_{n}, 
\label{GIV_Validity}
\end{equation}
where $\sigma_{\varepsilon u}$ and $\sigma_{\eta u}$ are constants that need
not be zero. These conditions ensure the validity of the GIVs constructed in
the literature and considered in this section.\footnote{\cite%
{gabaix2024granular} impose the stronger restrictions $\sigma_{\varepsilon
u}=0$ and $\sigma_{\eta u}=0$ for the identification of $\phi$ and $\psi$
(see the first sentence of the paragraph containing their display (3)). As we show
below, these restrictions are not necessary. Moreover, as shown in the next
section, identification and estimation of $\phi$ and $\psi$ do not require
the distributions of the aggregate and idiosyncratic shocks to be
time-invariant. In particular, their variances and covariances, such as $%
\mathrm{Var}(u_{i,t})$ and $\mathrm{Cov}(\eta_{t},u_{t})$, are allowed to
vary over time.}

Following \cite{gabaix2024granular}, we use the equally weighted average $%
y_{e,t}\equiv e^{\top}y_{t}$, where $e\equiv n^{-1}\mathbf{1}_{n}$, to
construct a GIV defined as 
\begin{equation}
z_{t}(e)\equiv y_{S,t}-y_{e,t}.   \label{GIV_1}
\end{equation}
From the demand equation~(\ref{S_demand}), this GIV satisfies $z_{t}(e)=u_{t}^{\top}(S-e)$. Together with the first condition in (\ref{GIV_Validity}), this implies 
\begin{equation}
\mathbb{E}[(p_{t}-\psi y_{S,t})z_{t}(e)]=\mathbb{E}[\varepsilon_{t}u_{t}^{%
\top}](S-e)=\sigma_{\varepsilon u}\mathbf{1}_{n}^{\top}(S-e)=0. 
\label{S_moment_1}
\end{equation}
Similarly,%
\begin{align}
\mathbb{E}[(y_{e,t}-\phi p_{t})z_{t}(e)] & =\mathbb{E}[(%
\eta_{t}+u_{e,t})u_{t}^{\top}](S-e)  \notag \\
& =\mathrm{Cov}(\eta_{t},u_{t})(S-e)+e^{\top}\mathrm{Var}(u_{t})(S-e)
\notag \\
& =\sigma_{\eta u}\mathbf{1}_{n}^{\top}(S-e)+\sigma_{u}^{2}e^{\top}(S-e)=0. 
\label{S_moment_2}
\end{align}
The GIV $z_{t}(e)$ thus provides moment conditions (\ref{S_moment_1}) and (%
\ref{S_moment_2}) for identifying and estimating the elasticities $\phi$ and 
$\psi$.

The above identification strategy can be generalized to construct generic
GIVs 
\begin{equation}
z_{t}(a)\equiv y_{S,t}-y_{a,t}=(S-a)^{\top}u_{t}, 
\label{GIV_Simple_General}
\end{equation}
where $a\in \mathbb{R}^{n}$ satisfies 
\begin{equation}
a\neq S\qquad \text{and}\qquad a^{\top}\mathbf{1}_{n}=1. 
\label{IV_validity_rest}
\end{equation}
Under these conditions, moment restrictions analogous to (\ref{S_moment_1})
and (\ref{S_moment_2}) can be constructed:%
\begin{align}
\mathbb{E}[(p_{t}-\psi y_{S,t})z_{t}(a)] & =0,  \label{S_moment_3} \\
\mathbb{E}[(y_{e,t}-\phi p_{t})z_{t}(a)] & =0.   \label{S_moment_4}
\end{align}
The GIV in (\ref{GIV_1}) corresponds to the special case $a=e$.

When $n>2$, there exist multiple vectors $a$ satisfying (\ref%
{IV_validity_rest}). Therefore multiple GIVs are available and $\psi$ and $%
\phi$ become over-identified. This provides a natural motivation for using
multiple GIVs both to improve efficiency and to conduct specification tests
of instrument validity. We next characterize the resulting set of moment
conditions and clarify its connection to the GIV-based approach.

The restrictions in (\ref{GIV_Validity}) imply the following moment
conditions: 
\begin{align}
\mathbb{E}[(y_{t}-\phi p_{t}\mathbf{1}_{n})(y_{t}-\phi p_{t}\mathbf{1}%
_{n})^{\top}] & =(\mathbb{E}[\eta_{t}^{2}]+2\sigma_{\eta u})\mathbf{1}_{n}%
\mathbf{1}_{n}^{\top}+\sigma_{u}^{2}\mathbf{I}_{n},  \label{eq_moment_1} \\
\mathbb{E}[(p_{t}-\psi y_{S,t})(y_{t}-\phi p_{t}\mathbf{1}_{n})] & =(\mathbb{%
E}[\varepsilon_{t}\eta_{t}]+\sigma_{\varepsilon u})\mathbf{1}_{n}, 
\label{eq_moment_2}
\end{align}
which together provide $n(n+3)/2$ moment conditions.\ To separate the
parameters of interest from the nuisance parameters, let $%
Q\equiv(q_{1},\ldots,q_{n})$ be an $n\times n$ orthonormal matrix with%
\begin{equation}
q_{1}\equiv n^{-1/2}\mathbf{1}_{n}\text{ \ \ and \ }q_{j}%
\equiv(j(j-1))^{-1/2}\left( \sum_{l\leq j-1}\ell_{l}-(j-1)\ell_{j}\right) 
\text{ \ for }j\geq2,   \label{ortho_q_j}
\end{equation}
where $\ell_{j}$ denotes the $j$th canonical basis vector of $\mathbb{R}^{n}$%
. The following lemma provides a non-redundant representation of these
moment conditions.

\begin{lemma}
\label{L1_eq_moment} The non-redundant restrictions in (\ref{eq_moment_1})
can be equivalently written as 
\begin{align}
\mathbb{E}\!\left[ (y_{e,t}-\phi p_{t})Q_{-1}^{\top }y_{t}\right] & =\mathbf{%
0}_{n-1},  \label{L1_eq_moment_d1} \\
\mathrm{vech}\!\left( \mathbb{E}[Q_{-1}^{\top }y_{t}y_{t}^{\top
}Q_{-1}]-\sigma _{u}^{2}\mathbf{I}_{n-1}\right) & =\mathbf{0}_{n(n-1)/2},
\label{L1_eq_moment_d2} \\
\mathbb{E}[(y_{e,t}-\phi p_{t})^{2}]-(\mathbb{E}[\eta _{t}^{2}]+2\sigma
_{\eta u})-n^{-1}\sigma _{u}^{2}& =0,  \label{L1_eq_moment_d3}
\end{align}%
while the restrictions in (\ref{eq_moment_2}) can be equivalently written as 
\begin{align}
\mathbb{E}\!\left[ (p_{t}-\psi y_{S,t})Q_{-1}^{\top }y_{t}\right] & =\mathbf{%
0}_{n-1},  \label{L1_eq_moment_s1} \\
\mathbb{E}\!\left[ (p_{t}-\psi y_{S,t})(y_{e,t}-\phi p_{t})\right] -(\mathbb{%
E}[\varepsilon _{t}\eta _{t}]+\sigma _{\varepsilon u})& =0,
\label{L1_eq_moment_s2}
\end{align}%
where $Q_{-1}\equiv (q_{2},\ldots ,q_{n})$ and $\{{q_{j}\}}_{j=1}^{n}$ is
defined in (\ref{ortho_q_j}).
\end{lemma}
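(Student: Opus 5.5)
The idea is to conjugate the matrix moment condition (\ref{eq_moment_1}) by the orthonormal matrix $Q$ from (\ref{ortho_q_j}). Since $Q$ is invertible, (\ref{eq_moment_1}) holds if and only if it holds after premultiplying by $Q^{\top}$ and postmultiplying by $Q$. Similarly, (\ref{eq_moment_2}) holds if and only if it holds after premultiplying by $Q^{\top}$. We then read off the resulting blocks.

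The key facts are $Q^{\top}\mathbf{1}_n=(\sqrt{n},0,\ldots,0)^{\top}$, i.e. $Q_{-1}^{\top}\mathbf{1}_n=\mathbf{0}_{n-1}$ and $q_1^{\top}\mathbf{1}_n=\sqrt n$, together with $Q^{\top}Q=\mathbf{I}_n$. The first follows because each $q_j$, $j\ge 2$, has entries summing to $(j-1)-(j-1)=0$.

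For (\ref{eq_moment_1}), write $v_t\equiv y_t-\phi p_t\mathbf{1}_n$. Then $Q_{-1}^{\top}v_t=Q_{-1}^{\top}y_t$ and $q_1^{\top}v_t=\sqrt n\,(y_{e,t}-\phi p_t)$. The right-hand side becomes $Q^{\top}[(\mathbb{E}[\eta_t^2]+2\sigma_{\eta u})\mathbf{1}_n\mathbf{1}_n^{\top}+\sigma_u^2\mathbf{I}_n]Q=\mathrm{diag}\big(n(\mathbb{E}[\eta_t^2]+2\sigma_{\eta u})+\sigma_u^2,\ \sigma_u^2\mathbf{I}_{n-1}\big)$. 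We now partition the symmetric matrix $\mathbb{E}[Q^{\top}v_tv_t^{\top}Q]$ into its $(1,1)$ entry, the $(n-1)\times 1$ off-diagonal block, and the $(n-1)\times(n-1)$ lower block.
\begin{itemize}
\item The off-diagonal block gives $\sqrt n\,\mathbb{E}[(y_{e,t}-\phi p_t)Q_{-1}^{\top}y_t]=\mathbf{0}_{n-1}$, which is (\ref{L1_eq_moment_d1}).
\item The lower block gives $\mathbb{E}[Q_{-1}^{\top}y_ty_t^{\top}Q_{-1}]=\sigma_u^2\mathbf{I}_{n-1}$. By symmetry its non-redundant content is the vech form (\ref{L1_eq_moment_d2}).
\item The $(1,1)$ entry, divided by $n$, gives (\ref{L1_eq_moment_d3}).
\end{itemize}
Symmetry of both sides means the upper off-diagonal block is redundant. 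The count then works out: $(n-1)+n(n-1)/2+1=n(n+1)/2$, which is exactly the number of distinct entries of a symmetric $n\times n$ matrix.

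For (\ref{eq_moment_2}), premultiply by $Q^{\top}$. The last $n-1$ rows give $\mathbb{E}[(p_t-\psi y_{S,t})Q_{-1}^{\top}y_t]=(\mathbb{E}[\varepsilon_t\eta_t]+\sigma_{\varepsilon u})Q_{-1}^{\top}\mathbf{1}_n=\mathbf{0}_{n-1}$, using $Q_{-1}^{\top}\phi p_t\mathbf{1}_n=0$; this is (\ref{L1_eq_moment_s1}). The first row, divided by $\sqrt n$, gives (\ref{L1_eq_moment_s2}). Because $Q$ is invertible, every step is reversible, so the two systems are equivalent.

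No step is a serious obstacle. The only points needing care are the orthonormality and zero-sum properties of the $q_j$ in (\ref{ortho_q_j}), and being clear that ``non-redundant'' means dropping the duplicated symmetric entries. For orthonormality, $\|q_j\|^2=((j-1)+(j-1)^2)/(j(j-1))=1$, and for $k<j$, $q_k$ is supported on the first $k\le j-1$ coordinates, where $q_j$ is constant, so $q_j^{\top}q_k\propto \mathbf{1}^{\top}q_k=0$.
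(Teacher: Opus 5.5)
Your proposal is correct and follows essentially the same route as the paper. Like the paper, you conjugate (\ref{eq_moment_1}) by the orthonormal $Q$, read off the $(1,1)$ entry, the off-diagonal block and the lower-right block (the symmetric counterpart being redundant), and premultiply (\ref{eq_moment_2}) by $Q^{\top}$, splitting it into the first row and the remaining $n-1$ rows.

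One small slip in your orthonormality check: $q_1=n^{-1/2}\mathbf{1}_n$ is supported on all $n$ coordinates, so your support argument only covers $2\le k<j$. The case $k=1$ is exactly the zero-sum property $Q_{-1}^{\top}\mathbf{1}_n=\mathbf{0}_{n-1}$ that you already verified.
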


The moment conditions in (\ref{L1_eq_moment_d1}) and (\ref{L1_eq_moment_s1})
are equivalent to those in (\ref{S_moment_3}) and (\ref{S_moment_4}), and
can be directly used within a GMM framework for estimation and inference of
the unknown elasticities $\psi$ and $\phi$.\footnote{%
To establish this equivalence, note first that (\ref{L1_eq_moment_d1}) and (%
\ref{L1_eq_moment_s1}) are constructed using GIVs of the form $q_{j}^{\top
}y_{t}$ for $j\geq2$. For any $a\in \mathbb{R}^{n}$ satisfying (\ref%
{IV_validity_rest}), the corresponding GIV is $(S-a)^{\top}y_{t}$, where $%
(S-a)^{\top}\mathbf{1}_{n}=0$. Since $Q_{-1}$ spans the subspace orthogonal
to $\mathbf{1}_{n}$, it follows that $S-a$ can be written as a linear
combination of the columns of $Q_{-1}$. Hence, the moment conditions in (\ref%
{S_moment_3}) and (\ref{S_moment_4}) are implied by those in (\ref%
{L1_eq_moment_d1}) and (\ref{L1_eq_moment_s1}). Conversely, for each $j\geq2$%
, the vector $S-q_{j}$ satisfies (\ref{IV_validity_rest}), implying that (%
\ref{L1_eq_moment_d1}) and (\ref{L1_eq_moment_s1}) are implied by (\ref%
{S_moment_3}) and (\ref{S_moment_4}).} In contrast, the restrictions in (\ref%
{L1_eq_moment_d2})--(\ref{L1_eq_moment_d3}) and (\ref{L1_eq_moment_s2})
involve only nuisance parameters, namely $\sigma_{u}^{2}$, $\mathbb{E}%
[\eta_{t}^{2}]+2\sigma_{\eta u}$, and $\mathbb{E}[\varepsilon_{t}\eta
_{t}]+\sigma_{\varepsilon u}$.

Specifically, the moment conditions in (\ref{L1_eq_moment_d2}) identify $%
\sigma_{u}^{2}$ and also yield additional restrictions that do not depend on
unknown parameters. Conditional on $\phi$, $\psi$, and $\sigma_{u}^{2}$, the
quantities $\mathbb{E}[\eta_{t}^{2}]+2\sigma_{\eta u}$ and $\mathbb{E}%
[\varepsilon_{t}\eta_{t}]+\sigma_{\varepsilon u}$ are just-identified by (%
\ref{L1_eq_moment_d3}) and (\ref{L1_eq_moment_s2}), respectively. Since $%
\sigma_{u}^{2}$ is over-identified by (\ref{L1_eq_moment_d2}), jointly
estimating $\phi$, $\psi$, and $\sigma_{u}^{2}$ using (\ref{L1_eq_moment_d1}%
), (\ref{L1_eq_moment_d2}), and (\ref{L1_eq_moment_s1}) may yield more
efficient estimators of $\phi$ and $\psi$ than those based only on (\ref%
{L1_eq_moment_d1}) and (\ref{L1_eq_moment_s1}); see, for example, \cite%
{ackerberg2014asymptotic}.

The analysis in this section relies on a simplified demand specification in
which the aggregate shock $\eta_{t}$ enters with a known and homogeneous
loading across entities. In many applications, however, aggregate shocks may
have heterogeneous effects that are not directly observed, giving rise to a
more general factor structure. In the next section, we extend the GIV
framework to this setting, where the demand equation includes unobserved
factors with unknown loadings. This introduces new identification and
estimation challenges, as the moment conditions derived above are no longer
directly applicable when the factor loadings are unknown.

\section{Granular IVs in a General Model \label{sec: G_model}}

In this section, we study estimation and inference using GIVs in a more
general model in which the demand equation (\ref{S_demand}) incorporates a
set of unobserved factors with unknown factor loadings.\footnote{%
The model (\ref{G_demand})-(\ref{G_supply}) can be further extended to
include exogenous regressors in both the demand and supply equations without
affecting the nature of the estimation and inference procedures proposed in
this section; see Subsection \ref{subsec: Ex1} for details.}\ Specifically,
we consider 
\begin{align}
y_{t} & =\phi p_{t}\mathbf{1}_{n}+\lambda \eta_{t}+u_{t},  \label{G_demand}
\\
p_{t} & =\psi y_{S,t}+\varepsilon_{t}.   \label{G_supply}
\end{align}
Here $\eta_{t}$ denotes an $r\times1$ vector of unobserved factors, and $%
\lambda$ is an $n\times r$ matrix of factor loadings. The vector $\eta_{t}$
may also include a constant term, in which case the corresponding loading
captures unobserved entity fixed effects. While the supply equation appears
identical to (\ref{S_supply}),\ we now define\ 
\begin{equation*}
y_{S,t}=S_{t}^{\top}y_{t}, 
\end{equation*}
where $S_{t}\equiv(s_{i,t})_{i\leq n}$, and $s_{i,t}$ is nonnegative and
predetermined at time $t$.\footnote{%
Throughout this section, we assume that the number of entities $n$ is fixed
over $t$. The identification strategy, as well as the estimation and
inference procedures proposed in this section, also apply to settings in
which $n$ varies over time; see Subsection \ref{subsec: Ex2} for details.}

In contrast to the simplified model studied in the previous section, we now
allow the factor loadings $\lambda$ to be unknown, which renders the earlier
identification and estimation results inapplicable and constitutes the main
challenge addressed in this section. One approach to handling the unobserved
factors $\eta_{t}$ is to estimate them from $y_{t}$ (after partialling out $%
p_{t}$ and entity fixed effects) using principal component analysis (see,
e.g., \cite{gabaix2021search} and \cite{banafti2022inferential}). However,
as noted in the literature (see, e.g., \cite{bai2003inferential}), the
consistency of the estimated factors typically requires the number of
entities $n$ to diverge, which stands in sharp contrast to most applications
of GIVs, where the number of entities is relatively small.

Another approach, proposed in \cite{gabaix2024granular}, attempts to
identify $\lambda$ jointly with the other unknown parameters in the model
under normalization restrictions and conditions similar to (but stronger
than) Assumption \ref{ID} below. However, as we show in Subsection \ref%
{subsec: GK}, their identification strategy fails to identify the factor
loadings and may lead to invalid inference.

The method proposed in this section is based on an identification result
that applies for any $n$, whether finite or diverging. Although we follow 
\cite{gabaix2024granular} and construct our inference procedures under an
asymptotic framework with fixed $n$, as discussed in Subsection \ref{subsec:
Ex2}, the method can be straightforwardly extended to settings in which $n$
is large or even exceeds $T$.

\subsection{Identification \label{subsec: G_model_ID}}

In this subsection, we first establish identification of the demand and
supply elasticities $\phi$ and $\psi$ given $\lambda$, thereby extending the
results in Lemma \ref{L1_eq_moment}. We then provide a constructive identification result for the orthogonal complement of $\func{col}((\mathbf{1}_{n}, \lambda))$, which forms the basis for the
estimation and inference procedures developed in the next subsection. We
begin by stating the conditions required for identification.

\begin{assumption}
\label{ID} (i) $\mathbb{E}[u_{t}]=\mathbf{0}_{n}$ and $\mathrm{Var}%
(u_{t})=\sigma_{u,t}^{2}\mathbf{I}_{n}$; (ii) $\mathrm{Cov}%
(\eta_{t},u_{t})=\Gamma_{\eta u,t}\mathbf{1}_{n}^{\top}$, where $%
\Gamma_{\eta u,t}$ is an $r\times1$ vector; (iii) $\mathrm{Cov}%
(\varepsilon_{t},u_{t})=\sigma _{\varepsilon u,t}\mathbf{1}_{n}^{\top}$,
where $\sigma_{\varepsilon u,t}$ is a finite scalar; (iv) $T^{-1}\sum_{t\leq
T}\mathbb{E}[\eta_{t}\eta_{t}^{\top }]$ is nonsingular and $n>\bar{r}$,
where $\bar{r}\equiv \mathrm{rank}((\mathbf{1}_{n},\lambda))$.
\end{assumption}

Assumption \ref{ID}(i)--(iii) generalize the conditions in (\ref%
{GIV_Validity}) by allowing the joint distribution of the demand shocks $%
u_{t}$, the supply shocks $\varepsilon_{t}$, and the factors $\eta_{t}$ to
vary over time. Under these conditions, we obtain%
\begin{align}
\mathbb{E}[(y_{t}-\phi p_{t}\mathbf{1}_{n})(y_{t}-\phi p_{t}\mathbf{1}%
_{n})^{\top}] & =\lambda \mathbb{E}[\eta_{t}\eta_{t}^{\top}]\lambda^{\top
}+\lambda \Gamma_{\eta u,t}\mathbf{1}_{n}^{\top}+\mathbf{1}_{n}\Gamma_{\eta
u,t}^{\top}\lambda^{\top}+\sigma_{u,t}^{2}\mathbf{I}_{n},  \label{G_moment_1}
\\
\mathbb{E}[(p_{t}-\psi y_{S,t})(y_{t}-\phi p_{t}\mathbf{1}_{n})] & =\lambda 
\mathbb{E}[\varepsilon_{t}\eta_{t}]+\sigma_{\varepsilon u,t}\mathbf{1}_{n}, 
\label{G_moment_2}
\end{align}
which provide a total of $n(n+3)/2$ moment conditions for the unknown
parameters $\phi$, $\psi$, $\sigma_{u,t}^{2}$, $\mathbb{E}[\eta_{t}\eta
_{t}^{\top}]$, $\Gamma_{\eta u,t}$, $\mathbb{E}[\varepsilon_{t}\eta_{t}]$,
and $\sigma_{\varepsilon u,t}$. Assumption \ref{ID}(iv) is primarily imposed
to ensure identification of $\func{col}((\mathbf{1}_{n},\lambda))$, whose
orthogonal complement will be used to construct the GIVs.

We now reorganize the moment conditions (\ref{G_moment_1})-(\ref{G_moment_2}%
) according to their roles in identifying the different parameters.

\begin{lemma}
\label{L2_eq_moment}\ Let $\bar{\lambda}\equiv(n^{-1/2}\mathbf{1}_{n},\bar{%
\lambda}_{-1})$ be an $n\times \bar{r}$ orthonormal matrix spanning $\func{col}((\mathbf{1}_{n}, \lambda))$, and let $\bar{\lambda}_{\bot}$ denote its
orthonormal complement. Then the non-redundant restrictions in (\ref%
{G_moment_1}) can be equivalently expressed as 
\begin{align}
\mathbb{E}\! \left[ (y_{e,t}-\phi p_{t})\bar{\lambda}_{\bot}^{\top}y_{t}%
\right] & =\mathbf{0}_{n-\bar{r}},  \label{L2_eq_moment_1} \\
\mathbb{E}\! \left[ \bar{\lambda}_{-1}^{\top}y_{t}y_{t}^{\top}\bar{\lambda }%
_{\bot}\right] & =\mathbf{0}_{(\bar{r}-1)\times(n-\bar{r})},
\label{L2_eq_moment_2} \\
\mathrm{vech}\! \left( \mathbb{E}\! \left[ \bar{\lambda}_{\bot}^{%
\top}y_{t}y_{t}^{\top}\bar{\lambda}_{\bot}\right] -\sigma_{u,t}^{2}\mathbf{I}%
_{n-\bar{r}}\right) & =\mathbf{0}_{(n-\bar{r}+1)(n-\bar{r})/2}, 
\label{L2_eq_moment_3}
\end{align}
and 
\begin{align}
& \mathrm{vech}\! \left( \mathbb{E}\! \left[ \bar{\lambda}^{\top}(y_{t}-\phi
p_{t}\mathbf{1}_{n})(y_{t}-\phi p_{t}\mathbf{1}_{n})^{\top}\bar{\lambda}%
\right] \right)  \notag \\
& \text{ \ \ \ \ \ \ \ \ \ \ \ \ }\overset{}{=}\mathrm{vech}\! \left( \bar{%
\lambda}^{\top}\Big(\lambda \mathbb{E}[\eta_{t}\eta_{t}^{\top}]\lambda^{%
\top}+\lambda \Gamma_{\eta u,t}\mathbf{1}_{n}^{\top}+\mathbf{1}%
_{n}\Gamma_{\eta u,t}^{\top}\lambda^{\top}+\sigma_{u,t}^{2}\mathbf{I}_{n}%
\Big)\bar{\lambda}\right) ,   \label{L2_eq_moment_4}
\end{align}
while the restrictions in (\ref{G_moment_2}) can be equivalently written as 
\begin{align}
\mathbb{E}\! \left[ (p_{t}-\psi y_{S,t})\bar{\lambda}_{\bot}^{\top}y_{t}%
\right] & =\mathbf{0}_{n-\bar{r}},  \label{L2_eq_moment_5} \\
\mathbb{E}\! \left[ \bar{\lambda}^{\top}(p_{t}-\psi y_{S,t})(y_{t}-\phi p_{t}%
\mathbf{1}_{n})\right] -\big(\bar{\lambda}^{\top}\lambda \mathbb{E}%
[\varepsilon_{t}\eta_{t}]+\sigma_{\varepsilon u,t}\bar{\lambda}^{\top }%
\mathbf{1}_{n}\big) & =\mathbf{0}_{\bar{r}}.   \label{L2_eq_moment_6}
\end{align}
\end{lemma}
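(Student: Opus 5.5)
The plan is to exploit the fact that the orthogonal matrix $\bar Q\equiv(\bar\lambda,\bar\lambda_{\bot})$ is invertible. Pre- and post-multiplying the $n\times n$ matrix equation (\ref{G_moment_1}) by $\bar Q^{\top}$ and $\bar Q$ is therefore an equivalence: (\ref{G_moment_1}) holds if and only if every block of the rotated system holds. Likewise, premultiplying the vector equation (\ref{G_moment_2}) by $\bar Q^{\top}$ is an equivalence. So the whole proof reduces to computing the blocks
\[
\bar\lambda^{\top}(\cdot)\bar\lambda,\qquad \bar\lambda_{\bot}^{\top}(\cdot)\bar\lambda,\qquad \bar\lambda_{\bot}^{\top}(\cdot)\bar\lambda_{\bot},
\]
and the two subvectors $\bar\lambda^{\top}(\cdot)$ and $\bar\lambda_{\bot}^{\top}(\cdot)$. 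We then rewrite each in terms of observables. By symmetry, only the lower-triangular blocks are non-redundant, which is where the $\mathrm{vech}$ operators come from.

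\textbf{Key facts.} The facts used throughout are $\bar\lambda_{\bot}^{\top}\mathbf{1}_n=\mathbf{0}$ and $\bar\lambda_{\bot}^{\top}\lambda=\mathbf{0}$. Because the first column of $\bar\lambda$ is $n^{-1/2}\mathbf{1}_n$, we also have $\bar\lambda_{-1}^{\top}\mathbf{1}_n=\mathbf{0}$. The first fact gives $\bar\lambda_{\bot}^{\top}(y_t-\phi p_t\mathbf{1}_n)=\bar\lambda_{\bot}^{\top}y_t$. The second gives $n^{-1/2}\mathbf{1}_n^{\top}(y_t-\phi p_t\mathbf{1}_n)=n^{1/2}(y_{e,t}-\phi p_t)$.

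\textbf{The $\bar\lambda_{\bot}\bar\lambda_{\bot}$ block.} On the right side of (\ref{G_moment_1}), every term except $\sigma_{u,t}^2\mathbf{I}_n$ contains $\lambda$ or $\mathbf{1}_n$ on both the left and the right. Those terms are annihilated, which gives (\ref{L2_eq_moment_3}).

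\textbf{The $\bar\lambda_{\bot}\bar\lambda$ block.} Here the right side is $\bar\lambda_{\bot}^{\top}\sigma_{u,t}^2\bar\lambda=\mathbf{0}$, because the remaining terms again contain $\lambda^{\top}\bar\lambda_{\bot}$ or $\mathbf{1}_n^{\top}\bar\lambda_{\bot}$ on one side. Splitting $\bar\lambda$ into its first column $n^{-1/2}\mathbf{1}_n$ and $\bar\lambda_{-1}$ separates this block into two pieces.
\begin{itemize}
\item The first-column piece gives (\ref{L2_eq_moment_1}), after dividing by $n^{1/2}$.
\item The $\bar\lambda_{-1}$ piece gives $\mathbb{E}[\bar\lambda_{-1}^{\top}(y_t-\phi p_t\mathbf{1}_n)y_t^{\top}\bar\lambda_{\bot}]=\mathbf{0}$. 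Since $\bar\lambda_{-1}^{\top}\mathbf{1}_n=\mathbf{0}$, the $p_t$ term drops out, yielding (\ref{L2_eq_moment_2}).
\end{itemize}

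\textbf{The $\bar\lambda\bar\lambda$ block.} This block is kept verbatim as (\ref{L2_eq_moment_4}).

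\textbf{The moment (\ref{G_moment_2}).} The $\bar\lambda_{\bot}^{\top}$ subvector kills the right side and turns $y_t-\phi p_t\mathbf{1}_n$ into $y_t$, giving (\ref{L2_eq_moment_5}). The $\bar\lambda^{\top}$ subvector is just (\ref{L2_eq_moment_6}).

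\textbf{Main obstacle.} The computations are routine; the delicate point is the meaning of ``non-redundant'' and the equivalence claim. I would argue equivalence in both directions. Given the listed equations, one reassembles $\bar Q^{\top}(\text{LHS}-\text{RHS})\bar Q=\mathbf{0}$ using symmetry, so that the upper block $\bar\lambda^{\top}(\cdot)\bar\lambda_{\bot}$ equals the transpose of the lower one. Invertibility of $\bar Q$ then recovers (\ref{G_moment_1}). I would also count equations to confirm nothing is double-counted:
\[
(n-\bar r)+(\bar r-1)(n-\bar r)+\frac{(n-\bar r)(n-\bar r+1)}{2}+\frac{\bar r(\bar r+1)}{2}=\frac{n(n+1)}{2},
\]
matching the number of distinct entries of the symmetric matrix. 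Similarly, the $(n-\bar r)+\bar r=n$ equations in (\ref{L2_eq_moment_5})--(\ref{L2_eq_moment_6}) match (\ref{G_moment_2}).
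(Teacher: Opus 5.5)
Your proposal is correct and follows the paper's proof: rotate (\ref{G_moment_1}) by $(\bar{\lambda},\bar{\lambda}_{\bot})$ on both sides and (\ref{G_moment_2}) on the left, then split the off-diagonal block along the first column $n^{-1/2}\mathbf{1}_{n}$ of $\bar{\lambda}$. Your explicit converse (via invertibility of $(\bar{\lambda},\bar{\lambda}_{\bot})$ and symmetry) and the equation count go slightly beyond the paper, which only records the forward direction. One small mislabel: the identity $n^{-1/2}\mathbf{1}_{n}^{\top}(y_{t}-\phi p_{t}\mathbf{1}_{n})=n^{1/2}(y_{e,t}-\phi p_{t})$ follows from the definition of $y_{e,t}$, not from $\bar{\lambda}_{\bot}^{\top}\lambda=\mathbf{0}$.
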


Lemma \ref{L2_eq_moment} shows that the key moment conditions for
identifying $\phi$ and $\psi$ are given by (\ref{L2_eq_moment_1}) and (\ref%
{L2_eq_moment_5}), which are constructed from the generalized GIVs $\bar{%
\lambda}_{\bot}^{\top}y_{t}$. The moment conditions associated with the
diagonal elements of 
\begin{equation*}
\mathbb{E}\! \left[ \bar{\lambda}_{\bot}^{\top}y_{t}y_{t}^{\top}\bar{\lambda 
}_{\bot}\right] -\sigma_{u,t}^{2}\mathbf{I}_{n-\bar{r}}=\mathbf{0}_{(n-\bar{r%
})\times(n-\bar{r})}
\end{equation*}
provide identifying restrictions for $T^{-1}\sum_{t\leq T}\sigma_{u,t}^{2}$.
In contrast, the moment conditions in (\ref{L2_eq_moment_2}), as well as the
off-diagonal elements of the matrix above, do not involve any unknown
parameters and are therefore redundant from an identification standpoint.
Nevertheless, they may be useful for improving the efficiency of the GMM
estimator and for testing specification assumptions, such as Assumption \ref%
{ID}(i). Finally, (\ref{L2_eq_moment_4}) and (\ref{L2_eq_moment_6}) impose
restrictions on the nuisance parameters $T^{-1}\sum_{t\leq T}\mathbb{E}%
[\eta_{t}\eta_{t}^{\top}]$, $T^{-1}\sum_{t\leq T}\Gamma_{\eta u,t}$, $%
T^{-1}\sum_{t\leq T}\mathbb{E}[\varepsilon_{t}\eta_{t}]$, and $%
T^{-1}\sum_{t\leq T}\sigma_{\varepsilon u,t}$, conditional on the
identification of $\phi$, $\psi$, and $T^{-1}\sum_{t\leq T}\sigma_{u,t}^{2}$.

The moment conditions in (\ref{L2_eq_moment_1}) and (\ref{L2_eq_moment_5})
for the identification of $\phi$ and $\psi$ rely on the generalized GIVs $%
\bar{\lambda}_{\bot}^{\top}y_{t}$. These instruments are, however,
infeasible in practice when the factor loading matrix $\lambda$ is unknown.
We therefore next show how to identify the column space of $\bar{\lambda}$,
which in turn determines the space spanned by $\bar{\lambda}_{\bot}$.

To this end, let $M_{\mathbf{1}_{n}}\equiv \mathbf{I}_{n}-n^{-1}\mathbf{1}%
_{n}\mathbf{1}_{n}^{\top}$ and define the demeaned variables 
\begin{equation}
\tilde{y}_{t}\equiv M_{\mathbf{1}_{n}}y_{t},\qquad \tilde{\lambda}\equiv M_{%
\mathbf{1}_{n}}\lambda,\qquad \tilde{u}_{t}\equiv M_{\mathbf{1}_{n}}u_{t}. 
\label{tilda_vars}
\end{equation}
Applying $M_{\mathbf{1}_{n}}$ to both sides of (\ref{G_demand}) yields 
\begin{equation}
\tilde{y}_{t}=\tilde{\lambda}\eta_{t}+\tilde{u}_{t}. 
\label{Demeaned_Demand}
\end{equation}
Combining (\ref{Demeaned_Demand}) with Assumption \ref{ID}, we obtain 
\begin{equation}
\mathbb{E}[\tilde{y}_{t}\tilde{y}_{t}^{\top}]=\tilde{\lambda}\mathbb{E}%
[\eta_{t}\eta_{t}^{\top}]\tilde{\lambda}^{\top}+\sigma_{u,t}^{2}M_{\mathbf{1}%
_{n}}.   \label{G_demand_factor_1}
\end{equation}
Averaging (\ref{G_demand_factor_1}) over $t$ then yields 
\begin{equation}
\bar{\Sigma}_{\tilde{y}}=\tilde{\lambda}\left( T^{-1}\sum_{t\leq T}\mathbb{E}%
[\eta_{t}\eta_{t}^{\top}]\right) \tilde{\lambda}^{\top}+\bar {\sigma}%
_{u}^{2}M_{\mathbf{1}_{n}}   \label{G_demand_factor}
\end{equation}
where 
\begin{equation*}
\bar{\Sigma}_{\tilde{y}}\equiv T^{-1}\sum_{t\leq T}\mathbb{E}[\tilde{y}_{t}%
\tilde{y}_{t}^{\top}]\text{ \ }\qquad \text{and \ }\qquad \bar{\sigma}%
_{u}^{2}\equiv T^{-1}\sum_{t\leq T}\sigma_{u,t}^{2}. 
\end{equation*}
Here the matrix $\bar{\Sigma}_{\tilde{y}}$ is identified and can be
consistently estimated.\ The following lemma shows that given the
identification of $\bar{\Sigma}_{\tilde{y}}$, (\ref{G_demand_factor}) is
sufficient to identify both $\bar{\sigma}_{u}^{2}$ and the subspace
orthogonal to the column space of $\bar{\lambda}$.

\begin{lemma}
\label{ID_G_GIV_Weight} Under Assumption \ref{ID}(iv), 
\begin{equation}
\bar{\sigma}_{u}^{2}=\min_{a\in \mathcal{B}_{\mathbf{1}_{n}}}a^{\top}\bar{%
\Sigma}_{\tilde{y}}a,   \label{G_demand_sigma_u}
\end{equation}
where 
\begin{equation*}
\mathcal{B}_{\mathbf{1}_{n}}\equiv \left \{ a\in \mathbb{R}^{n}:\mathbf{1}%
_{n}^{\top}a=0,\ \Vert a\Vert=1\right \} . 
\end{equation*}
Moreover, the set of minimizers of (\ref{G_demand_sigma_u}) spans $\func{col}(\bar{%
	\lambda}_{\bot})$.
\end{lemma}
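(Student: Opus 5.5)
Write $\bar{\Sigma}_{\eta}\equiv T^{-1}\sum_{t\leq T}\mathbb{E}[\eta_{t}\eta_{t}^{\top}]$, which is positive definite by Assumption \ref{ID}(iv). The plan is to evaluate the quadratic form in (\ref{G_demand_factor}) directly on $\mathcal{B}_{\mathbf{1}_{n}}$. For $a\in\mathcal{B}_{\mathbf{1}_{n}}$ we have $M_{\mathbf{1}_{n}}a=a$, so $a^{\top}M_{\mathbf{1}_{n}}a=\Vert a\Vert^{2}=1$. Moreover $\tilde{\lambda}^{\top}a=\lambda^{\top}M_{\mathbf{1}_{n}}a=\lambda^{\top}a$. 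Hence
\begin{equation*}
a^{\top}\bar{\Sigma}_{\tilde{y}}a=(\lambda^{\top}a)^{\top}\bar{\Sigma}_{\eta}(\lambda^{\top}a)+\bar{\sigma}_{u}^{2}\geq \bar{\sigma}_{u}^{2},
\end{equation*}
and, since $\bar{\Sigma}_{\eta}$ is positive definite, equality holds if and only if $\lambda^{\top}a=\mathbf{0}_{r}$.

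The first step is attainment of the minimum. Combined with $\mathbf{1}_{n}^{\top}a=0$, the equality condition says that $a$ is orthogonal to $\func{col}((\mathbf{1}_{n},\lambda))=\func{col}(\bar{\lambda})$, that is, $a\in\func{col}(\bar{\lambda}_{\bot})$. Because $n>\bar{r}$, this subspace has dimension $n-\bar{r}\geq 1$, so it contains unit vectors, and every such unit vector lies in $\mathcal{B}_{\mathbf{1}_{n}}$. Therefore the infimum $\bar{\sigma}_{u}^{2}$ is attained, which gives (\ref{G_demand_sigma_u}).

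The second step identifies the set of minimizers. By the equivalence above, this set is exactly
\begin{equation*}
\{a\in\func{col}(\bar{\lambda}_{\bot}):\Vert a\Vert=1\},
\end{equation*}
the unit sphere of $\func{col}(\bar{\lambda}_{\bot})$. The unit sphere of a nontrivial subspace spans that subspace; for instance, it contains the orthonormal columns of $\bar{\lambda}_{\bot}$. Hence the minimizers span $\func{col}(\bar{\lambda}_{\bot})$.

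The only delicate point is the converse direction: a minimizer must satisfy $\lambda^{\top}a=\mathbf{0}_{r}$, not merely $\tilde{\lambda}^{\top}a=\mathbf{0}_{r}$. This is resolved by the identity $\tilde{\lambda}^{\top}a=\lambda^{\top}a$ on $\{a:\mathbf{1}_{n}^{\top}a=0\}$, together with strict positive definiteness of $\bar{\Sigma}_{\eta}$; that is where Assumption \ref{ID}(iv) is used. The condition $n>\bar{r}$ is needed only for nonemptiness of the minimizer set.
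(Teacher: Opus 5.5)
Your proof is correct and follows essentially the same route as the paper's. Both arguments restrict the decomposition (\ref{G_demand_factor}) to $\mathcal{B}_{\mathbf{1}_{n}}$, use $a^{\top}M_{\mathbf{1}_{n}}a=1$, and identify the minimizers as the unit vectors on which the factor term vanishes, namely those orthogonal to $\operatorname{col}((\mathbf{1}_{n},\lambda))$. The paper gets there by counting restricted eigenvalues via $\mathrm{rank}(\tilde{\lambda})=\bar{r}-1$, whereas you read the equality condition $\lambda^{\top}a=\mathbf{0}_{r}$ directly off positive definiteness of your $\bar{\Sigma}_{\eta}$ and the identity $\tilde{\lambda}^{\top}a=\lambda^{\top}a$, which is slightly more direct.
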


Lemma \ref{ID_G_GIV_Weight} provides a constructive characterization of $%
\bar{\lambda}_{\bot}$, which is central to the construction of generalized
GIVs. In particular, $\bar{\lambda}_{\bot}$ can be recovered as the
eigenspace associated with the smallest eigenvalue of $\bar{\Sigma}_{\tilde{y%
}}$ restricted to $\mathcal{B}_{\mathbf{1}_{n}}$. Intuitively, this
corresponds to extracting directions of cross-sectional variation in $y_{t}$
that are orthogonal to both the common factor structure and the aggregate
component spanned by $\mathbf{1}_{n}$.

The minimization problem in (\ref{G_demand_sigma_u}), however, is defined
over the constrained set $\mathcal{B}_{\mathbf{1}_{n}}$, which is not
directly convenient for implementation. To facilitate computation, we next
provide an equivalent representation that transforms this constrained
problem into an unconstrained eigenvalue problem in $\mathbb{R}^{n-1}$.

\begin{lemma}
\label{G_GIV_Solutions} Suppose Assumption \ref{ID}(iv) holds. Consider the
minimization problem:%
\begin{equation}
\min_{\tilde{a}\in B_{n-1}}\tilde{a}^{\top}Q_{-1}^{\top}\bar{\Sigma}%
_{y}Q_{-1}\tilde{a},   \label{G_GIV_Weights}
\end{equation}
where 
\begin{equation*}
\bar{\Sigma}_{y}\equiv T^{-1}\sum_{t\leq T}\mathbb{E}[y_{t}y_{t}^{\top
}],\qquad B_{n-1}\equiv \{ \tilde{a}\in \mathbb{R}^{n-1}:\Vert \tilde{a}%
\Vert=1\}. 
\end{equation*}
Then $a$ is a minimizer of (\ref{G_demand_sigma_u}) if and only if there
exists a minimizer $\tilde{a}$ of (\ref{G_GIV_Weights}) such that $a=Q_{-1}%
\tilde{a}$.
\end{lemma}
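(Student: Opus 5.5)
The plan is to show that the map $\tilde a\mapsto Q_{-1}\tilde a$ is a bijection from $B_{n-1}$ onto $\mathcal{B}_{\mathbf{1}_n}$ that leaves the objective unchanged. Minimizers then correspond exactly.

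\textbf{Step 1 (the bijection).} Because $Q$ is orthonormal with $q_1=n^{-1/2}\mathbf{1}_n$, we have $Q_{-1}^\top Q_{-1}=\mathbf{I}_{n-1}$, $Q_{-1}^\top\mathbf{1}_n=\mathbf{0}_{n-1}$, and $Q_{-1}Q_{-1}^\top=\mathbf{I}_n-n^{-1}\mathbf{1}_n\mathbf{1}_n^\top=M_{\mathbf{1}_n}$.

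Take $\tilde a\in B_{n-1}$ and set $a=Q_{-1}\tilde a$. Then $\mathbf{1}_n^\top a=0$ and $\Vert a\Vert^2=\tilde a^\top Q_{-1}^\top Q_{-1}\tilde a=1$, so $a\in\mathcal{B}_{\mathbf{1}_n}$.

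Conversely, take $a\in\mathcal{B}_{\mathbf{1}_n}$. Since $M_{\mathbf{1}_n}a=a$, setting $\tilde a=Q_{-1}^\top a$ gives $Q_{-1}\tilde a=a$ and $\Vert\tilde a\Vert^2=a^\top M_{\mathbf{1}_n}a=1$. Injectivity follows from $Q_{-1}^\top Q_{-1}=\mathbf{I}_{n-1}$.

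\textbf{Step 2 (equality of objectives).} By definition, $\tilde y_t=M_{\mathbf{1}_n}y_t$, so $\bar\Sigma_{\tilde y}=M_{\mathbf{1}_n}\bar\Sigma_yM_{\mathbf{1}_n}$. For $a=Q_{-1}\tilde a$ we have $M_{\mathbf{1}_n}a=a$, and therefore
\begin{equation*}
a^\top\bar\Sigma_{\tilde y}a=a^\top\bar\Sigma_ya=\tilde a^\top Q_{-1}^\top\bar\Sigma_yQ_{-1}\tilde a .
\end{equation*}
The two problems thus have the same optimal value. Moreover, $a$ attains the minimum in (\ref{G_demand_sigma_u}) if and only if $\tilde a=Q_{-1}^\top a$ attains the minimum in (\ref{G_GIV_Weights}), with $a=Q_{-1}\tilde a$.

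\textbf{Role of Assumption \ref{ID}(iv) and main obstacle.} Both problems minimize a continuous function over a compact set, so minimizers exist regardless. Assumption \ref{ID}(iv) enters only to tie the common minimum value to $\bar\sigma_u^2$ and the minimizers to $\func{col}(\bar\lambda_\bot)$, and both of these are already supplied by Lemma \ref{ID_G_GIV_Weight}. The only point requiring care is establishing $Q_{-1}Q_{-1}^\top=M_{\mathbf{1}_n}$, which follows from the completeness of the orthonormal basis $Q$. The rest is routine.
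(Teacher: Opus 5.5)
Your proposal is correct and follows essentially the same route as the paper, which also shows that $\tilde a\mapsto Q_{-1}\tilde a$ is a bijection from $B_{n-1}$ onto $\mathcal{B}_{\mathbf{1}_n}$ and uses $M_{\mathbf{1}_n}Q_{-1}=Q_{-1}$ to get $Q_{-1}^\top\bar\Sigma_{\tilde y}Q_{-1}=Q_{-1}^\top\bar\Sigma_yQ_{-1}$, so that minimizers correspond. Your remark that Assumption \ref{ID}(iv) is not needed for the equivalence itself is also accurate, since the paper's argument does not use it either.
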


The solutions to (\ref{G_GIV_Weights}) are given by the normalized
eigenvectors associated with the smallest eigenvalue of the symmetric matrix 
$Q_{-1}^{\top}\bar{\Sigma}_{y}Q_{-1}$. By Lemma \ref{ID_G_GIV_Weight} and
Lemma \ref{G_GIV_Solutions}, these eigenvectors, after left multiplication
by $Q_{-1}$, span the same subspace as $\bar{\lambda}_{\bot}$, thereby
providing a feasible representation of the generalized GIVs. In practice, $%
Q_{-1}^{\top }\bar{\Sigma}_{y}Q_{-1}$ can be consistently estimated by its
sample analogue, so the GIVs can be implemented via standard eigenvalue
decomposition without requiring knowledge of the factor loadings $\lambda$.

\subsection{Estimation and inference with GIVs \label{subsec: G_model_EI}}

Building on Lemmas \ref{L2_eq_moment}--\ref{G_GIV_Solutions} in the previous
subsection, the moment conditions (\ref{L2_eq_moment_1}) and (\ref%
{L2_eq_moment_5}) for the identification and estimation of $\theta
\equiv(\phi,\psi)^{\top}$\ can now be written as%
\begin{equation}
\mathbb{E}[\bar{g}_{T}(\theta;A)]=\mathbf{0}_{2(n-\bar{r})}, 
\label{Moment_Cond}
\end{equation}
where%
\begin{equation}
\bar{g}_{T}(\theta;A)\equiv T^{-1}\sum_{t\leq T}%
\begin{pmatrix}
A^{\top}y_{t}(y_{e,t}-\phi p_{t}) \\ 
A^{\top}y_{t}(p_{t}-\psi y_{S,t})%
\end{pmatrix}
.   \label{Moment_Func}
\end{equation}
Here $A\equiv Q_{-1}A_{0}$, where $Q_{-1}$ is an $n\times(n-1)$ matrix
defined in Lemma \ref{L1_eq_moment}, and $A_{0}$ is an $(n-1)\times(n-\bar{r}%
)$ matrix collecting the eigenvectors corresponding to the smallest $n-\bar{r%
}$ eigenvalues of 
\begin{equation*}
S_{y}\equiv Q_{-1}^{\top}\bar{\Sigma}_{y}Q_{-1}. 
\end{equation*}
Since\ $A_{0}$ depends on the unknown population covariance matrix, the
moment function $\bar{g}_{T}(\theta;A)$ is not directly feasible in practice.

To construct feasible moment conditions, we replace $A$ in (\ref{Moment_Func}%
) with $\hat{A}\equiv Q_{-1}\hat{A}_{0}$, where $\hat{A}_{0}$ collects the
eigenvectors corresponding to the smallest $n-\bar{r}$ eigenvalues of 
\begin{equation*}
\hat{S}_{y}\equiv Q_{-1}^{\top}\hat{\Sigma}_{y}Q_{-1},\qquad \text{where}\ 
\hat{\Sigma}_{y}\equiv T^{-1}\sum_{t\leq T}y_{t}y_{t}^{\top}. 
\end{equation*}
The GIV estimator is then defined as 
\begin{equation}
\hat{\theta}(\hat{A})\equiv \arg \min_{\theta \in \Theta}\bar{g}_{T}(\theta ;%
\hat{A})^{\top}W_{0,T}(\hat{A})\bar{g}_{T}(\theta;\hat{A}), 
\label{GMM_Criterion}
\end{equation}
where 
\begin{equation}
W_{0,T}(\hat{A})\equiv((\mathbf{I}_{2}\otimes \hat{A}^{\top})W_{0,T}(\mathbf{%
I}_{2}\otimes \hat{A}))^{-1},   \label{Weight}
\end{equation}
and $W_{0,T}$ is a user-specified symmetric positive definite $2n\times2n$
matrix.

Since $\bar{g}_{T}(\theta;\hat{A})$ is linear in $\theta$, the GIV estimator
admits the closed-form representation 
\begin{equation}
\hat{\theta}(\hat{A})=\big(D_{1,T}(\hat{A})^{\top}W_{0,T}(\hat{A})D_{1,T}(%
\hat{A})\big)^{-1}\big(D_{1,T}(\hat{A})^{\top}W_{0,T}(\hat{A})D_{2,T}(\hat {A%
})\big),   \label{GIV_Form_1}
\end{equation}
where $D_{j,T}(\hat{A})\equiv(\mathbf{I}_{2}\otimes \hat{A}^{\top})D_{j,T}$
for $j=1,2$, and 
\begin{equation}
D_{1,T}\equiv T^{-1}\sum_{t\leq T}\mathrm{diag}\! \left(
y_{t}p_{t},\,y_{t}y_{S,t}\right) ,\text{ }\ D_{2,T}\equiv T^{-1}\sum_{t\leq
T}%
\begin{pmatrix}
y_{t}y_{e,t} \\ 
y_{t}p_{t}%
\end{pmatrix}
.   \label{GIV_Form_2}
\end{equation}

We next present sufficient conditions for establishing the asymptotic
properties of $\hat{\theta}(\hat{A})$ within the same framework as \cite%
{gabaix2024granular}, where the number of entities $n$ is fixed and the
number of observations $T$ (indexed by $t$) tends to infinity.\footnote{%
The asymptotic properties of $\hat{\theta}(\hat{A})$, as well as inference
for the unknown parameter $\theta$, can be extended to the case where both $n
$ and $T$ diverge by applying techniques from the many-moments literature;
see, for example, \cite{han2006gmm} and \cite{newey2009generalized}.} Let $%
\{ \mu _{j}\}_{j\leq n-1}$ denote the eigenvalues of $S_{y}$ arranged in
increasing order, and let $A_{0,\bot}$ denote the matrix collecting the
eigenvectors associated with $\{ \mu_{j}\}_{n-\bar{r}+1\leq j\leq n-1}$.

\begin{assumption}
\label{Asy_Cond_1} (i) $\left \vert \phi \psi-1\right \vert \geq K^{-1}$ and 
$\left \vert \phi \right \vert +\left \vert \psi \right \vert +\left \Vert
\lambda \right \Vert \leq K$;\ (ii)\ for $a,b\in \{u,\eta,\varepsilon \}$, 
\begin{equation*}
T^{-1/2}\sum_{t\leq T}\big(a_{t}b_{t}^{\top}-\mathbb{E}[a_{t}b_{t}^{\top }]%
\big)=O_{p}(1); 
\end{equation*}
(iii) $\mu_{n-\bar{r}+1}-\bar{\sigma}_{u}^{2}>K^{-1}$ and\ $\bar{\sigma}%
_{u}^{2}>K^{-1}$;\ (iv) $\max_{t\leq T}\mathbb{E}[u_{t}^{\top}u_{t}+%
\varepsilon_{t}^{2}+\eta_{t}^{\top}\eta_{t}]\leq K$.
\end{assumption}

Assumption \ref{Asy_Cond_1}(i) ensures that the demand and supply system
admits a well-defined reduced form and, consequently, a unique equilibrium.
Assumption \ref{Asy_Cond_1}(ii) guarantees that the population second
moments of $(u_{t},\varepsilon_{t},\eta_{t})$ are approximated by their
sample counterparts at the rate $T^{-1/2}$. From Lemma \ref{ID_G_GIV_Weight}%
, we have $\mu_{j}=T^{-1}\sum_{t\leq T}\sigma_{u,t}^{2}$ for all $j\leq n-%
\bar{r}$. Therefore, Assumption \ref{Asy_Cond_1}(iii) imposes an eigenvalue
gap condition on $S_{y}$, which is essential for consistent estimation of
the eigenspace $\bar{\lambda}_{\bot}$. This condition can be verified under
a lower bound condition on $\rho_{\min}((\mathbf{1}_{n},\lambda)^{\top }(%
\mathbf{1}_{n},\lambda))$, or on $\rho_{\min}(\lambda^{\top}\lambda)$ when $%
\mathbf{1}_{n}\in \func{col}(\lambda)$; see Lemma \ref{Suff_Cond1_iii} in
Online Appendix \ref{APP_3} for details. Assumption \ref{Asy_Cond_1}(iii)
also requires that the variance of the idiosyncratic demand shock be bounded
away from zero, which is important for maintaining sufficient identification
strength of the GIVs. Finally, Assumption \ref{Asy_Cond_1}(iv), together
with \ref{Asy_Cond_1}(i), ensures that the second moments of $y_{t}$ and $%
p_{t}$ are well defined.

\begin{assumption}
\label{S} The sequence of market shares $\left \{ S_{t}\right \} $
satisfies: (i) 
\begin{equation*}
T^{-1/2}\sum_{t\leq T}(a_{t}b_{t}^{\top}-{\mathbb{E}[}a_{t}b_{t}^{%
	\top}])=O_{p}(1) 
\end{equation*}
for $a_{t},b_{t}\in \{S_{t}^{\top}u_{t},\eta_{t}\otimes S_{t},\varepsilon
_{t}\}$, or $a_{t}\in \{u_{t},\eta_{t}\}$\ and\ $b_{t}\in
\{S_{t}^{\top}u_{t},\eta_{t}\otimes S_{t}\}$; (ii)\ $\mathbf{1}%
_{n}^{\top}S_{t}=1$ for all $t$.
\end{assumption}

Assumption \ref{S}(i) imposes a set of high-level moment conditions ensuring
that sample averages involving the weighted aggregates, such as $S_{t}^{\top
}u_{t}$, satisfy a standard $T^{-1/2}$ law of large numbers. In particular,
it requires that interactions between market-share weights and the
structural shocks $u_{t}$, $\eta_{t}$ and $\varepsilon_{t}$\ exhibit
sufficiently weak temporal dependence and possess finite second moments.
This condition is analogous to Assumption \ref{Asy_Cond_1}(ii), and is
implied by it when $S_{t}$ is time-invariant. More generally, both
conditions can be verified under standard mixing or martingale difference
assumptions. Assumption \ref{S}(ii) is a normalization condition requiring
that the elements of $S_{t}$ sum to one.

Let $v_{t}\equiv y_{e,t}-\phi p_{t}$. For $b\in \{v,\varepsilon \}$, define%
\begin{equation}
\xi_{b,t}\equiv y_{t}b_{t}-\mathbb{E}[y_{t}b_{t}]+(y_{t}y_{t}^{\top }-%
\mathbb{E}[y_{t}y_{t}^{\top}])\, \Upsilon \left( T^{-1}\sum_{t\leq T}\mathbb{%
E}[y_{t}b_{t}]\right) ,   \label{zeta_b}
\end{equation}
where%
\begin{equation*}
\Upsilon \equiv Q_{-1}A_{0,\bot}(\bar{\sigma}_{u}^{2}\mathbf{I}_{\bar{r}%
-1}-\Lambda_{\bot})^{-1}A_{0,\bot}^{\top}Q_{-1}^{\top},\qquad \Lambda_{\bot
}\equiv \mathrm{diag}((\mu_{j})_{n-\bar{r}+1\leq j\leq n-1}). 
\end{equation*}
The random vectors $\xi_{b,t}$, for $b\in \{v,\varepsilon \}$, represent the
estimation errors in the moment conditions used to estimate $\phi$ and $\psi$%
, respectively. The first component, $y_{t}b_{t}-\mathbb{E}[y_{t}b_{t}]$,
captures the sampling variation that would arise even if the\ factor loading
matrix $\lambda$ were known. The second component reflects the additional
estimation error induced by replacing $\bar{\lambda}_{\bot}$ with its
estimator, and hence accounts for the impact of estimating the loading
matrix on the moment conditions.

\begin{assumption}
\label{Asy_Cond_2} (i) $V^{-1/2}T^{-1/2}\sum_{t\leq T}\xi_{t}\rightarrow
_{d}N(0,\mathbf{I}_{2n})$ where $\xi_{t}\equiv(\xi_{v,t}^{\top},\xi
_{\varepsilon,t}^{\top})^{\top}$ with $\rho_{\min}(V)\geq K^{-1}$; (ii) $%
W_{0,T}=W_{0}+o_{p}(1)$, where $W_{0}$ is a nonrandom symmetric matrix with $%
K^{-1}\leq \rho_{\min}(W_{0})\leq \rho_{\max}(W_{0})\leq K$; (iii) $\Vert
T^{-1}\sum_{t\leq T}A^{\top}\mathbb{E}[y_{t}p_{t}]\Vert \geq K^{-1}$ and $%
\Vert T^{-1}\sum_{t\leq T}A^{\top}\mathbb{E}[y_{t}y_{S,t}]\Vert \geq K^{-1}$%
; (iv) there exists a matrix {$\hat{V}$ such that $\hat{V}=V+o_{p}(1)$.}
\end{assumption}

Assumption \ref{Asy_Cond_2}(i) concerns the asymptotic distribution of $%
V^{-1/2}T^{-1/2}\sum_{t\leq T}\xi_{t}$, which can be established via a
central limit theorem. Here $V$ denotes the variance matrix of $%
T^{-1/2}\sum_{t\leq T}\xi_{t}$. Assumption \ref{Asy_Cond_2}(ii) ensures
consistent estimation of the weighting matrix, while Assumption \ref%
{Asy_Cond_2}(iii) guarantees that the GIVs provide sufficient identification
strength for $\phi$ and $\psi$ to be $T^{1/2}$-estimable.\ The latter
condition essentially requires that the weighted mean of the entities'
market shares,\ $S_{u}\equiv T^{-1}\sum_{t\leq T}\sigma_{u,t}^{2}\mathbb{E}%
[S_{t}]$ does not lie in $\func{col}((\mathbf{1}_{n},\lambda))$. It can be
verified under suitable primitive conditions; see Lemma \ref{Suff_Cond2_iii}
in Online Appendix \ref{APP_3} for details. Finally, Assumption \ref%
{Asy_Cond_2}(iv) requires the existence of a consistent estimator of $V$.%
\footnote{%
A consistent estimator of $V$ can be constructed using the estimated shocks $%
\hat{v}_{t}$ and $\hat{\varepsilon}_{t}$ obtained from the GIV estimator $%
\hat{\theta}(\hat{A})$ with identity weighting matrix $W_{0,T}=\mathbf{I}%
_{2n}$; see, for example, (\ref{V_est}) in the implementation algorithm in
Online Appendix \ref{APP_0}. The consistency of this variance estimator is
established in Theorem \ref{V_Est} in the Online Appendix.}

To simplify the notation for the asymptotic variance of the GIV estimator,
define 
\begin{equation*}
\Gamma(D_{1},W_{0},A)\equiv(D_{1}(A)^{\top}W_{0}(A)D_{1}(A))^{-1}D_{1}(A)^{%
\top}W_{0}(A), 
\end{equation*}
where $D_{1}(A)\equiv(\mathbf{I}_{2}\otimes A^{\top})D_{1},$ 
\begin{equation*}
\text{ }D_{1}\equiv T^{-1}\sum_{t\leq T}\mathbb{E}\! \left[ \mathrm{diag}\!
\left( y_{t}p_{t},\,y_{t}y_{S,t}\right) \right] \text{ \ \ \ and \ \ \ }%
W_{0}(A)\equiv((\mathbf{I}_{2}\otimes A^{\top})\,W_{0}\,(\mathbf{I}%
_{2}\otimes A))^{-1}. 
\end{equation*}
The following theorem establishes the asymptotic distribution of the GIV
estimator.

\begin{theorem}
\label{Asy_Dist} Under Assumptions \ref{ID}, \ref{Asy_Cond_1}, \ref{S}, and %
\ref{Asy_Cond_2}(i)--(iii), 
\begin{equation}
(\Gamma(D_{1},W_{0},A)V(A)\Gamma(D_{1},W_{0},A)^{\top})^{-1/2}T^{1/2}(\hat{%
\theta}(\hat{A})-\theta)\; \rightarrow_{d}\;N(0,\mathbf{I}_{2}), 
\label{Asy_Dist_1}
\end{equation}
where 
\begin{equation*}
V(A)\equiv(\mathbf{I}_{2}\otimes A^{\top})\,V\,(\mathbf{I}_{2}\otimes A). 
\end{equation*}
Moreover, if Assumption \ref{Asy_Cond_2}(iv) also holds, then 
\begin{equation}
\Gamma(D_{1,T},W_{0,T},\hat{A})\hat{V}(\hat{A})\Gamma(D_{1,T},W_{0,T},\hat {A%
})^{\top}=\Gamma(D_{1},W_{0},A)V(A)\Gamma(D_{1},W_{0},A)^{\top}+o_{p}(1), 
\label{Asy_Dist_2}
\end{equation}
where 
\begin{equation*}
\hat{V}(\hat{A})\equiv(\mathbf{I}_{2}\otimes \hat{A}^{\top})\, \hat {V}\,(%
\mathbf{I}_{2}\otimes \hat{A}), 
\end{equation*}
and $\Gamma(D_{1,T},W_{0,T},\hat{A})$ is defined analogously to $\Gamma
(D_{1},W_{0},A)$ with $D_{1}$, $W_{0}$, and $A$ replaced by $D_{1,T}$, $%
W_{0,T}$, and $\hat{A}$, respectively.
\end{theorem}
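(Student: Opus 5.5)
The plan is to linearize the feasible GMM estimator around the infeasible one that uses the population GIV matrix $A$. The estimated GIVs $\hat A$ are handled through a first-order perturbation expansion of the eigenprojection.

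\emph{Step 1 (closed form).} From the closed form (\ref{GIV_Form_1}), $\hat\theta(\hat A)-\theta=\Gamma(D_{1,T},W_{0,T},\hat A)\,\bar g_T(\theta;\hat A)$, since $\bar g_T(\theta;\hat A)=D_{2,T}(\hat A)-D_{1,T}(\hat A)\theta$. Moreover
\[
\bar g_T(\theta;\hat A)=(\mathbf{I}_2\otimes\hat A^\top)\,T^{-1}\sum_{t\leq T}\begin{pmatrix}y_tv_t\\ y_t\varepsilon_t\end{pmatrix}.
\]
Note that $\hat A$ is only determined up to rotation within the eigenspace. However, $W_{0,T}(\hat A)$ is constructed so that the estimator depends on $\hat A$ only through $\mathrm{col}(\hat A)$. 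We may therefore fix the rotation so that $\hat A\to_p A$.

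\emph{Step 2 (consistency of $\hat A$ and convergence of the sample moments).} By Assumptions \ref{Asy_Cond_1}(i),(ii),(iv) and \ref{S}, the reduced form $p_t=(1-\phi\psi)^{-1}(\psi S_t^\top(\lambda\eta_t+u_t)+\varepsilon_t)$ and $y_t$ are linear in the shocks, with coefficients involving $S_t$. This gives $\hat\Sigma_y-\bar\Sigma_y=O_p(T^{-1/2})$, $D_{1,T}-D_1=O_p(T^{-1/2})$, and similarly for the other sample second moments. The eigengap in Assumption \ref{Asy_Cond_1}(iii) together with Davis--Kahan then yields $\hat A\hat A^\top-AA^\top=O_p(T^{-1/2})$. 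Combined with Assumption \ref{Asy_Cond_2}(ii),(iii), which keep $D_1(A)^\top W_0(A)D_1(A)$ invertible, this gives $\Gamma(D_{1,T},W_{0,T},\hat A)=\Gamma(D_1,W_0,A)+o_p(1)$.

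\emph{Step 3 (expansion of the moment; the main obstacle).} Write
\[
\bar g_T(\theta;\hat A)=(\mathbf{I}_2\otimes A^\top)\,T^{-1}\sum_{t\leq T}(y_tb_t-\mathbb{E}[y_tb_t])+(\mathbf{I}_2\otimes(\hat A-A)^\top)\,T^{-1}\sum_{t\leq T}\mathbb{E}[y_tb_t]+o_p(T^{-1/2}),
\]
using Lemma \ref{L2_eq_moment} (equivalently (\ref{Moment_Cond})) for $A^\top\,T^{-1}\sum_{t\leq T}\mathbb{E}[y_tb_t]=0$. For the second term we use the standard first-order eigenvector perturbation. Since $Q_{-1}^\top Q_{-1}=\mathbf{I}$ and $\hat S_y-S_y=Q_{-1}^\top(\hat\Sigma_y-\bar\Sigma_y)Q_{-1}$,
\[
\hat A_0-A_0=A_{0,\bot}(\bar\sigma_u^2\mathbf{I}-\Lambda_\bot)^{-1}A_{0,\bot}^\top(\hat S_y-S_y)A_0+(\text{within-space rotation})+O_p(T^{-1}).
\]
This expansion is valid because all of the bottom $n-\bar r$ eigenvalues equal $\bar\sigma_u^2$ (Lemma \ref{ID_G_GIV_Weight}). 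The within-space rotation term contributes nothing to the second term, since $A^\top\,T^{-1}\sum_{t\leq T}\mathbb{E}[y_tb_t]=0$.

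Substituting, $(\hat A-A)^\top\,T^{-1}\sum_{t\leq T}\mathbb{E}[y_tb_t]=A^\top(\hat\Sigma_y-\bar\Sigma_y)\,\Upsilon\,T^{-1}\sum_{t\leq T}\mathbb{E}[y_tb_t]+o_p(T^{-1/2})$. Hence
\[
\bar g_T(\theta;\hat A)=(\mathbf{I}_2\otimes A^\top)\,T^{-1}\sum_{t\leq T}\xi_t+o_p(T^{-1/2}),
\]
which matches the definition (\ref{zeta_b}). The delicate points in this step are:
\begin{itemize}
\item the repeated eigenvalue, which requires working with projections rather than individual eigenvectors;
\item controlling the remainder at order $O_p(T^{-1})$ uniformly, using the gap bound.
\end{itemize}

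\emph{Step 4 (conclusion).} Assumption \ref{Asy_Cond_2}(i) gives $T^{-1/2}(\mathbf{I}_2\otimes A^\top)\sum_{t\leq T}\xi_t$ asymptotically $N(0,V(A))$ after standardization. Combining this with Step 2 via Slutsky yields (\ref{Asy_Dist_1}). This uses the fact that the sandwich variance has eigenvalues bounded away from zero, by $\rho_{\min}(V)\geq K^{-1}$, the full column rank of $A$, and Assumption \ref{Asy_Cond_2}(iii). Finally, (\ref{Asy_Dist_2}) follows from continuity of $\Gamma$ and the consistency results of Step 2 together with $\hat V=V+o_p(1)$. This again uses the rotation invariance: $\hat V(\hat A)$ enters only through the combination $\Gamma\,\hat V(\hat A)\,\Gamma^\top$, which depends on $\hat A$ only through $\hat A\hat A^\top$.
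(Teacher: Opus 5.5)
Your proposal is correct and follows essentially the paper's route. The paper likewise replaces $\hat A$ by a rotated $\tilde A=\hat A\hat H_0^\top$, justified by Lemma \ref{Invariance} and the direct check of (\ref{P_Asy_Dist_0b}), and obtains the eigenspace expansion with the $(\bar\sigma_u^2\mathbf{I}_{\bar r-1}-\Lambda_\bot)^{-1}$ factor from a Sylvester-type equation (Lemmas \ref{Asy_L1}--\ref{Asy_L4}), which yields $\bar g_T(\theta;\tilde A)=(\mathbf{I}_2\otimes A^\top)T^{-1}\sum_{t\le T}\xi_t+O_p(T^{-1})$ (Lemma \ref{Asy_L5}) before a Cram\'er--Wold/subsequence conclusion; the only difference is that you let the within-eigenspace part of $\hat A-A$ be arbitrary and annihilate it with $A^\top T^{-1}\sum_{t\le T}\mathbb{E}[y_tb_t]=0$, whereas the paper's specific choice of $\hat H_0$ makes that part $O_p(T^{-1})$ directly.
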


Theorem \ref{Asy_Dist} shows that the asymptotic variance of the GIV
estimator is minimized when $W_{0}=V$. Accordingly, the optimal weighting
matrix can be obtained by setting $W_{0,T}=\hat{V}$ in (\ref{Weight}), which
yields 
\begin{equation*}
W_{\ast,T}(\hat{A})\equiv \hat{V}(\hat{A})^{-1}. 
\end{equation*}
Let $\hat{\theta}^{\ast}(\hat{A})$ denote the corresponding optimally
weighted GIV estimator. It then follows that 
\begin{equation}
(D_{1}(A)^{\top}V(A)^{-1}D_{1}(A))^{1/2}T^{1/2}\bigl(\hat{\theta}^{\ast}(%
\hat{A})-\theta \bigr)\rightarrow_{d}N(0,\mathbf{I}_{2}). 
\label{Asy_Dist_OGMM}
\end{equation}
Standard errors for $\hat{\theta}^{\ast}(\hat{A})$ can be constructed from
the square roots of the diagonal elements of 
\begin{equation}
\bigl(TD_{1,T}(\hat{A})^{\top}W_{\ast,T}(\hat{A})D_{1,T}(\hat{A})\bigr)%
^{-1},   \label{GIV_Est_STD}
\end{equation}
whose validity follows from (\ref{Asy_Dist_2}).

Since there are $2(n-\bar{r})$ moment conditions for the identification and
estimation of $\phi$ and $\psi$, these parameters are over-identified
whenever $n>\bar{r}+1$. In this case, the validity of the GIVs can be
assessed using an over-identification test.

\begin{theorem}
\label{J_Test} Under Assumptions {\ref{ID}, \ref{Asy_Cond_1}, \ref{S}, and %
\ref{Asy_Cond_2}}, 
\begin{equation*}
T\bar{g}_{T}(\hat{\theta}^{\ast}(\hat{A});\hat{A})^{\top}W_{\ast,T}(\hat {A})%
\bar{g}_{T}(\hat{\theta}^{\ast}(\hat{A});\hat{A})\rightarrow_{d}\chi ^{2}%
\bigl(2(n-\bar{r}-1)\bigr). 
\end{equation*}
\end{theorem}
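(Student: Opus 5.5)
The plan follows the standard GMM $J$-test argument. The only nonstandard ingredient is the estimated instrument matrix $\hat{A}$, and its effect is already handled by the linearization underlying Theorem \ref{Asy_Dist}.

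\textbf{Step 1: first-order expansion of the moments.} From the proof of Theorem \ref{Asy_Dist} I will take the expansion
\[
T^{1/2}\bar{g}_{T}(\theta;\hat{A})=(\mathbf{I}_{2}\otimes A^{\top})T^{-1/2}\sum_{t\leq T}\xi_{t}+o_{p}(1).
\]
Here $\hat{A}$ is replaced by $A$ at leading order. The perturbation $\hat{A}-A$ contributes exactly the $\Upsilon$-term in (\ref{zeta_b}), obtained from the eigenvector perturbation expansion of $\hat{A}_{0}$ under the eigengap in Assumption \ref{Asy_Cond_1}(iii), together with the population identity $\mathbb{E}[\bar{g}_{T}(\theta;A)]=0$. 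One subtlety is that $\hat{A}$ is identified only up to rotation within $\func{col}(\bar{\lambda}_{\bot})$. The statistic is invariant to such rotations, since $W_{\ast,T}(\hat{A})=\hat{V}(\hat{A})^{-1}$ transforms covariantly. So I will fix the rotation $\hat{A}=Q_{-1}\hat{A}_{0}H$ with $H$ chosen so that $\hat{A}\to_{p}A$.

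Combining this expansion with Assumption \ref{Asy_Cond_2}(i) gives $T^{1/2}\bar{g}_{T}(\theta;\hat{A})\to_{d}V(A)^{1/2}Z$ with $Z\sim N(0,\mathbf{I}_{2(n-\bar{r})})$, after standardization.

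\textbf{Step 2: expansion at the estimator.} Linearity of the moments gives
\[
\bar{g}_{T}(\hat{\theta}^{\ast};\hat{A})=\bar{g}_{T}(\theta;\hat{A})-D_{1,T}(\hat{A})(\hat{\theta}^{\ast}-\theta).
\]
The closed form (\ref{GIV_Form_1}) with weight $W_{\ast,T}(\hat{A})$ then yields
\[
T^{1/2}\bar{g}_{T}(\hat{\theta}^{\ast};\hat{A})=\big(\mathbf{I}-D_{1,T}(\hat{A})(D_{1,T}(\hat{A})^{\top}W_{\ast,T}D_{1,T}(\hat{A}))^{-1}D_{1,T}(\hat{A})^{\top}W_{\ast,T}\big)T^{1/2}\bar{g}_{T}(\theta;\hat{A}).
\]
Several limits follow from Assumptions \ref{Asy_Cond_1}(ii), \ref{S}(i), and Step 1. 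First, $D_{1,T}\to_{p}D_{1}$ and $\hat{A}\to_{p}A$, so $D_{1,T}(\hat{A})\to_{p}D_{1}(A)$. Second, $W_{\ast,T}(\hat{A})\to_{p}V(A)^{-1}$ by Assumption \ref{Asy_Cond_2}(iv) and $\rho_{\min}(V(A))\geq\rho_{\min}(V)\geq K^{-1}$, since $A$ has orthonormal columns. Third, $D_{1}(A)$ has full column rank $2$ by Assumption \ref{Asy_Cond_2}(iii), which makes it block-diagonal with two nonzero blocks.

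\textbf{Step 3: quadratic form.} Write $G\equiv V(A)^{-1/2}D_{1}(A)$ and $P_{G}\equiv G(G^{\top}G)^{-1}G^{\top}$. The statistic then equals
\[
Z^{\top}(\mathbf{I}_{2(n-\bar{r})}-P_{G})Z+o_{p}(1).
\]
Since $\mathbf{I}-P_{G}$ is idempotent with rank $2(n-\bar{r})-2=2(n-\bar{r}-1)$, the continuous mapping theorem gives the $\chi^{2}(2(n-\bar{r}-1))$ limit.

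\textbf{Main obstacle.} The main difficulty is Step 1. It requires confirming that the eigenvector estimation error in $\hat{A}$ enters the moment expansion through $\mathbb{E}[y_{t}b_{t}]$ exactly as captured by $\Upsilon$ in $\xi_{b,t}$, with all remainders $o_{p}(1)$, and handling the rotational non-uniqueness of $\hat{A}_{0}$ when $n-\bar{r}>1$. For this I will rely on the expansion of $\hat{A}$ established in proving Theorem \ref{Asy_Dist}, a Davis--Kahan-type bound giving $\hat{A}-AH=O_{p}(T^{-1/2})$, and the rotation invariance of the statistic noted in Step 1.
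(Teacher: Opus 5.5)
Your proposal is correct and follows essentially the same route as the paper: rotation invariance (Lemma \ref{Invariance} with $C_1=\hat H_0^\top$) replaces $\hat A$ by $\tilde A=\hat A\hat H_0^\top$, the moment expansion of Lemma \ref{Asy_L5} supplies the $\Upsilon$-corrected leading term, and the statistic reduces to a quadratic form in the idempotent matrix $\Pi(A)=\mathbf{I}-P_G$ of rank $2(n-\bar r)-2$. The only cosmetic difference is the order of steps: you first write $\bar g_T(\hat\theta^\ast;\hat A)$ as an exact projection of $\bar g_T(\theta;\hat A)$ and then take limits, whereas the paper first expands $T^{1/2}(\hat\theta^\ast-\theta)$ and then substitutes.
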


Theorem \ref{J_Test} establishes the asymptotic distribution of the J-test
statistic under the null hypothesis that the moment conditions (\ref%
{Moment_Cond}) are valid. When the GIVs are invalid, the power of the J-test
follows from standard GMM arguments and is therefore omitted for brevity.

\begin{remark}
Theorems \ref{Asy_Dist} and \ref{J_Test} establish estimation and inference
procedures for the demand and supply elasticities based on the full set of
moment conditions. In some applications, however, interest may center on a
single structural parameter, making it natural to estimate the demand and
supply elasticities separately using the corresponding subsets of moment
conditions.

For example, when the demand elasticity is the primary parameter of
interest, estimation may be based on the moment functions 
\begin{equation}
\bar{g}_{\phi,T}(\phi;A)\equiv T^{-1}\sum_{t\leq
T}A^{\top}y_{t}(y_{e,t}-\phi p_{t}).   \label{Moment_phi}
\end{equation}
Using arguments analogous to those in the proof of Theorem \ref{Asy_Dist},
the resulting GIV estimator can be shown to be $T^{1/2}$-consistent and
asymptotically normal. Its asymptotic variance and standard error can be
constructed in the same manner as those of the joint GMM estimator $\hat{%
\theta}^{\ast}(\hat{A})$. In particular, a formula analogous to (\ref%
{GIV_Est_STD}) applies after removing the components associated with the
supply-side moment conditions.
\end{remark}

\begin{remark}
The preceding discussion also extends naturally to specification testing. In
particular, the $J$-test constructed from the moment conditions in (\ref%
{Moment_phi}) and the corresponding GIV estimator provides a test of the
validity of the demand-side moment restrictions. Under correct
specification, the resulting $J$-statistic converges in distribution to $%
\chi^{2}(n-\bar {r}-1)$. Analogous estimation, inference, and
specification-testing results hold when the analysis is based solely on the
supply-side moment conditions.
\end{remark}

\subsection{Estimating the number of GIVs}

Construction of the GIVs requires knowledge of the rank $\bar{r}$ of the
matrix $(\mathbf{1}_{n},\lambda)$, which may not be feasible in practice. In
this subsection, we propose a Bayesian information criterion (BIC) for
consistent estimation of $\bar{r}$. The construction is motivated by Lemma %
\ref{ID_G_GIV_Weight} and Lemma \ref{G_GIV_Solutions}.

Specifically, let $\{ \hat{\mu}_{j}\}_{j\leq n-1}$ denote the eigenvalues of 
$\hat{S}_{y}$ arranged in increasing order. Define the information criterion%
\begin{equation}
\mathrm{BIC}_{T}(j)\equiv \frac{T}{n-j}\sum_{s=1}^{n-j}\frac{(\hat{\mu}_{s}-%
\hat{\mu}_{1})^{2}}{2\hat{\mu}_{s}^{2}}+j\log(T),   \label{BIC}
\end{equation}
for $j\in \mathcal{J}$, where $\mathcal{J}\equiv \{1,\ldots,n-1\}$. The
estimator $\hat{r}$ of $\bar{r}$ is then given by%
\begin{equation}
\hat{r}=\arg \min_{j\in \mathcal{J}}\mathrm{BIC}_{T}(j).   \label{r_hat}
\end{equation}
The consistency of $\hat{r}$ is established in Theorem \ref%
{r_hat_consistency}.

\begin{theorem}
\label{r_hat_consistency}\ Under Assumptions \ref{ID}, \ref{Asy_Cond_1} and \ref{S}, we have $\hat{r}=\bar{r}$ with probability approaching 1.
\end{theorem}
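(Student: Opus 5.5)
The plan is to first establish that the sample eigenvalues are $T^{1/2}$-consistent. Then I compare $\mathrm{BIC}_T(j)$ with $\mathrm{BIC}_T(\bar r)$ separately for $j<\bar r$ and for $j>\bar r$.

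\emph{Step 1: eigenvalue consistency.} Substituting the reduced form for $p_t$ into (\ref{G_demand}) writes $y_t$ as a linear combination of $u_t,\eta_t,\varepsilon_t$ and $S_t^\top u_t$, $\eta_t\otimes S_t$. The coefficients are bounded by Assumption~\ref{Asy_Cond_1}(i) through $1/(1-\phi\psi)$ and $\psi$. Assumptions~\ref{Asy_Cond_1}(ii) and \ref{S}(i) then give $\hat\Sigma_y-\bar\Sigma_y=O_p(T^{-1/2})$, hence $\Vert\hat S_y-S_y\Vert=O_p(T^{-1/2})$. By Weyl's inequality, $\max_j|\hat\mu_j-\mu_j|=O_p(T^{-1/2})$. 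From Lemmas~\ref{ID_G_GIV_Weight}--\ref{G_GIV_Solutions}, $\mu_1=\dots=\mu_{n-\bar r}=\bar\sigma_u^2$. Assumption~\ref{Asy_Cond_1}(iii) gives $\mu_{n-\bar r+1}\ge\bar\sigma_u^2+K^{-1}$ and $\bar\sigma_u^2>K^{-1}$, while $\mu_{n-1}\le K$ by Assumption~\ref{Asy_Cond_1}(iv). Consequently every $\hat\mu_s$ is bounded away from $0$ and $\infty$ w.p.a.1.

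\emph{Step 2: $j>\bar r$ (overfitting).} Here $n-j<n-\bar r$, so both $\mathrm{BIC}_T(j)$ and $\mathrm{BIC}_T(\bar r)$ sum only over indices $s\le n-\bar r$. For these indices $\hat\mu_s-\hat\mu_1=(\hat\mu_s-\mu_s)-(\hat\mu_1-\mu_1)=O_p(T^{-1/2})$. Each fit term is therefore $T\cdot O_p(T^{-1})=O_p(1)$, and
\begin{equation*}
\mathrm{BIC}_T(j)-\mathrm{BIC}_T(\bar r)=O_p(1)+(j-\bar r)\log T\to\infty .
\end{equation*}
This handles finitely many $j$ since $n$ is fixed.

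\emph{Step 3: $j<\bar r$ (underfitting).} Now the sum in $\mathrm{BIC}_T(j)$ includes $s=n-\bar r+1$. For that index $\hat\mu_s-\hat\mu_1\ge K^{-1}-o_p(1)$ and $\hat\mu_s\le K$, so the summand is bounded below by a positive constant w.p.a.1. Since $n-j\le n$, $\mathrm{BIC}_T(j)\ge cT$ for some $c>0$. Meanwhile $\mathrm{BIC}_T(\bar r)=O_p(1)+\bar r\log T=O_p(\log T)$. Hence $\mathrm{BIC}_T(j)>\mathrm{BIC}_T(\bar r)$ w.p.a.1.

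Combining Steps 2 and 3 over the finite set $\mathcal J$ gives $\hat r=\bar r$ w.p.a.1, using $\bar r\in\mathcal J$ because $1\le\bar r<n$ by Assumption~\ref{ID}(iv). The main obstacle is Step 1, specifically the $O_p(T^{-1/2})$ rate for $\hat\Sigma_y$ with time-varying shares. Cross products such as $y_{S,t}y_t^\top$ bring in $S_t^\top u_t\,u_t^\top$ and $(\eta_t\otimes S_t)\eta_t^\top$ terms. I would handle these by expanding the reduced form explicitly and matching each term to one of the pairs listed in Assumption~\ref{S}(i).
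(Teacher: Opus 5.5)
Your proposal is correct and takes essentially the same route as the paper. The paper first gets the $O_p(T^{-1/2})$ rate for $\hat S_y$ (Lemma~\ref{Asy_L0}, proved by the reduced-form expansion you describe) and eigenvalue consistency, then compares $\mathrm{BIC}_T(j)$ with $\mathrm{BIC}_T(\bar r)$ in two cases: the $\log T$ penalty dominates for $j>\bar r$, and the order-$T$ fit term driven by the gap $\mu_{n-\bar r+1}-\bar\sigma_u^2$ dominates for $j<\bar r$. Your explicit checks that $\bar r\in\mathcal J$ and that every $\hat\mu_s$ stays bounded away from zero are small details the paper leaves implicit.
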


We conclude this subsection by providing intuition for the construction of (%
\ref{BIC}). The criterion $\mathrm{BIC}_{T}(j)$ consists of two components.
The first term, $T(n-j)^{-1}\sum_{s=1}^{n-j}(\hat{\mu}_{s}-\hat{\mu}%
_{1})^{2}/(2\hat{\mu}_{s}^{2})$, is decreasing in $j$ and captures an
over-fitting effect analogous to that in classical regression settings,
where $j$ reflects the dimension or complexity of the model. The second
term, $j\log(T)$, is strictly increasing in $j$ and serves as a penalty on
model complexity. The estimator $\hat{r}$ in (\ref{r_hat}) therefore
balances the trade-off between goodness-of-fit and model complexity.

The eigenvalues $\{ \hat{\mu}_{j}\}_{j\leq n-1}$ are $T^{1/2}$-consistent
estimators of $\{ \mu_{j}\}_{j\leq n-1}$ under Assumptions \ref{Asy_Cond_1}%
(i, ii). Since $\mu_{s}=\bar{\sigma}_{u}^{2}$ for $s\in \{1,\ldots,n-\bar{r}%
\}$, it follows that for $j\geq \bar{r}$, 
\begin{equation*}
\frac{T}{n-j}\sum_{s=1}^{n-j}\frac{(\hat{\mu}_{s}-\hat{\mu}_{1})^{2}}{2\hat{%
\mu}_{s}^{2}}=O_{p}(1). 
\end{equation*}
Consequently, the penalty term $j\log(T)$ dominates the first term in (\ref%
{BIC}). Since $j\log(T)$ is strictly increasing in $j$, $\mathrm{BIC}_{T}(j)$
is asymptotically minimized at $\bar{r}$ over $j\in \{ \bar{r},\ldots,n-1\}$%
. On the other hand, for $j<\bar{r}$, we have $n-j\geq n-\bar {r}+1$, so
Assumption \ref{Asy_Cond_1}(iii) implies that $\mu_{n-j}-\mu_{1}$ is bounded
away from zero. Combined with the $T^{-1/2}$ consistency of $\{ \hat{\mu}%
_{j}\}_{j\leq n-1}$, this yields%
\begin{equation*}
\frac{T}{n-j}\sum_{s=1}^{n-j}\frac{(\hat{\mu}_{s}-\hat{\mu}_{1})^{2}}{2\hat{%
\mu}_{s}^{2}}\geq \frac{T}{n-j}\frac{(\hat{\mu}_{n-j}-\hat{\mu}_{1})^{2}}{2%
\hat{\mu}_{n-j}^{2}}\geq K^{-1}T((\mu_{n-j}-\mu_{1})^{2}-O_{p}(T^{-1/2})), 
\end{equation*}
which diverges at rate $T$. This term therefore dominates the penalty term
of order $\log(T)$, implying that $\mathrm{BIC}_{T}(j)>\mathrm{BIC}_{T}(\bar{%
r})$ wpa1 for all $j<\bar{r}$. Combining the two cases, $\mathrm{BIC}_{T}(j)$
is asymptotically minimized at $j=\bar{r}$ over $j\in \mathcal{J}$, which
ensures the consistency of $\hat{r}$.

\section{Extensions and Discussion\label{sec: Extension}}

This section provides further discussion and extensions of the main results
established in the previous section. First, we show that the identification
strategy in \cite{gabaix2024granular} may fail when the factor loadings are
unknown, potentially leading to inconsistent estimation and invalid
inference under standard GMM procedures. Second, we demonstrate that our
estimation and inference procedures can be straightforwardly extended to
settings with exogenous regressors in the demand and supply equations and to
data with unbalanced features. The latter extension shows that our methods
remain applicable even when the number of entities $n$ is large and may
exceed $T$.

\subsection{Identification failure of factor loadings in \protect\cite%
{gabaix2024granular}\ \label{subsec: GK}}

In this subsection, we show that the identification strategy in \cite%
{gabaix2024granular}, in particular their Proposition 7, may fail when $%
\lambda$ is unknown. \cite{gabaix2024granular} impose a normalization on the
factor loadings $\lambda$ by setting the loadings of the first factor $%
\eta_{1,t}$ to be $\mathbf{1}_{n}$, and the loadings of the remaining
factors $\eta_{2,t}$, denoted by $\lambda_{-1}$, to satisfy\footnote{%
See the second paragraph above Proposition 7 in \cite{gabaix2024granular}.
In the same paragraph, they also impose the restriction that $\mathrm{Var}%
(\lambda _{-1}^{\top}y_{t})$ is diagonal with distinct diagonal entries\ ($%
\lambda _{-1}^{\top}y_{t}$ here is equal to $n\check{\eta}_{t}$ in their
notation). As we show below, this additional restriction should be
interpreted as an assumption on the latent factors rather than a
normalization on the factor loadings. Moreover, imposing this restriction
does not resolve the identification issue in their approach.} 
\begin{equation}
\mathbf{1}_{n}^{\top}\lambda_{-1}=\mathbf{0}_{r-1}^{\top}\text{, \ \ }%
n^{-1}\lambda_{-1}^{\top}\lambda_{-1}=\mathbf{I}_{r-1}.   \label{GK_F_0a}
\end{equation}
Moreover, they assume\footnote{%
See the second paragraph on page 2279, display (3), and Assumption 3 in \cite%
{gabaix2024granular}.} 
\begin{equation}
\mathrm{Var}(u_{t})=\sigma_{u}^{2}\mathbf{I}_{n},\qquad \mathrm{Cov}(\eta
_{t},u_{t})=\mathbf{0}_{r\times n},\qquad \mathbb{E}[\eta_{t}]=\mathbf{0}%
_{r},\qquad \mathrm{Cov}(u_{t},\varepsilon_{t})=\mathbf{0}_{n}. 
\label{GK_F_0b}
\end{equation}
For notational simplicity, we abstract from exogenous regressors in both the
demand and supply equations. The demand equation (\ref{G_demand}) can then
be written as 
\begin{equation}
y_{t}=\phi p_{t}\mathbf{1}_{n}+\eta_{1,t}\mathbf{1}_{n}+\lambda_{-1}\eta
_{2,t}+u_{t},   \label{GK_F_1}
\end{equation}
while the supply equation remains the same as (\ref{G_supply}).

\cite{gabaix2024granular} propose using both $\lambda_{\bot}^{\top}y_{t}$
and $\lambda_{-1}^{\top}y_{t}$ as IVs to construct moment conditions for
identifying and estimating $\phi$, $\psi$, and $\lambda_{-1}$.\footnote{
	They correspond to $z_t(m^y)$ and $\check{\eta}_t(m^y)$ in Proposition~4 of
	\citet{gabaix2024granular}.
} Because $\mathbf{1}%
_{n}^{\top}\lambda_{-1}=\mathbf{0}_{r-1}^{\top}$, one can partial out $%
\eta_{2,t}$ by premultiplying (\ref{GK_F_1}) by $e^{\top}$, yielding 
\begin{equation}
y_{e,t}-\phi p_{t}=\eta_{1,t}+u_{e,t}.   \label{GK_F_2}
\end{equation}
Moreover, because $n^{-1}\lambda_{-1}^{\top}\lambda_{-1}=\mathbf{I}_{r-1}$,
premultiplying (\ref{GK_F_1}) by $\lambda_{-1}^{\top}$, we also obtain 
\begin{equation}
\lambda_{-1}^{\top}y_{t}=n\eta_{2,t}+\lambda_{-1}^{\top}u_{t}. 
\label{GK_F_3}
\end{equation}
From Assumption \ref{ID}(i) and (\ref{GK_F_0b}), it follows that the product
of (\ref{GK_F_3}) and the second term on the right of (\ref{GK_F_2}) is such
that 
\begin{equation}
\mathbb{E}[u_{e,t}\lambda_{-1}^{\top}y_{t}]=n\mathbb{E}[\eta_{2,t}u_{e,t}]+%
\lambda_{-1}^{\top}\mathbb{E}[u_{t}u_{e,t}]=\mathbf{0}_{r-1}. 
\label{GK_F_4}
\end{equation}
However, the product $\mathbb{E}[\eta_{1,t}\lambda_{-1}^{\top}y_{t}]$ of the
first term on the right of (\ref{GK_F_2}) and (\ref{GK_F_3}) may be nonzero
due to possible correlation between $\eta_{1,t}$ and $\eta_{2,t}$.
Therefore, $\lambda_{-1}^{\top}y_{t}$ cannot be directly used together with $%
y_{e,t}-\phi p_{t}$ to form valid moment conditions. To address this issue, 
\cite{gabaix2024granular} introduce the regression coefficient $b_{y}$ of $%
\eta_{1,t}$ on $y_{t}^{\top}\lambda_{-1}$ such that%
\begin{equation}
\mathbb{E}[(\eta_{1,t}-y_{t}^{\top}\lambda_{-1}b_{y})\lambda_{-1}^{%
\top}y_{t}]=\mathbf{0}_{r-1}.   \label{GK_F_4b}
\end{equation}
Combining this with (\ref{GK_F_2}) and (\ref{GK_F_4}) yields 
\begin{equation}
\mathbb{E}[(y_{e,t}-\phi p_{t}-y_{t}^{\top}\lambda_{-1}b_{y})\lambda
_{-1}^{\top}y_{t}]=\mathbf{0}_{r-1}.   \label{GK_F_5a}
\end{equation}
Since $y_{e,t}-\phi p_{t}-y_{t}^{\top}\lambda_{-1}b_{y}=u_{e,t}-u_{t}^{\top
}\lambda_{-1}b_{y}+\eta_{1,t}-n\eta_{2,t}^{\top}b_{y}$, it follows from
Assumption \ref{ID}(i) and (\ref{GK_F_0b}) that 
\begin{equation*}
\mathbb{E}\! \left[ (y_{e,t}-\phi
p_{t}-y_{t}^{\top}\lambda_{-1}b_{y})\lambda_{\bot}^{\top}y_{t}\right] =%
\mathbb{E}\! \left[ (u_{e,t}-u_{t}^{\top}\lambda_{-1}b_{y}+\eta_{1,t}-n%
\eta_{2,t}^{\top}b_{y})\lambda_{\bot }^{\top}u_{t}\right] =\mathbf{0}_{n-r}. 
\end{equation*}
Together with (\ref{GK_F_5a}), this yields the moment conditions from the
demand equation:%
\begin{equation}
\mathbb{E}\left[ (y_{e,t}-\phi p_{t}-y_{t}^{\top}\lambda_{-1}b_{y})\left( 
\begin{array}{c}
\lambda_{\bot}^{\top}y_{t} \\ 
\lambda_{-1}^{\top}y_{t}%
\end{array}
\right) \right] =\mathbf{0}_{n-1}.   \label{GK_F_5}
\end{equation}

Similarly, $\lambda_{-1}^{\top}y_{t}$ cannot be directly used together with $%
p_{t}-\psi y_{S,t}$ to identify $\psi$, because $\lambda_{-1}^{\top}y_{t}$
contains $\eta_{2,t}$, which may be correlated with $\varepsilon_{t}$. \cite%
{gabaix2024granular} therefore propose%
\begin{equation*}
\mathbb{E}\! \left[ (p_{t}-\psi
y_{S,t}-y_{t}^{\top}\lambda_{-1}b_{p})\lambda_{-1}^{\top}y_{t}\right] =%
\mathbf{0}_{r-1}, 
\end{equation*}
where $b_{p}$ is defined by%
\begin{equation*}
\mathbb{E}\! \left[ (\varepsilon_{t}-y_{t}^{\top}\lambda_{-1}b_{p})%
\lambda_{-1}^{\top}y_{t}\right] =\mathbf{0}_{r-1}. 
\end{equation*}
Moreover, by Assumption \ref{ID}(i) and (\ref{GK_F_0b}),%
\begin{equation*}
\mathbb{E}\! \left[ (p_{t}-\psi
y_{S,t}-y_{t}^{\top}\lambda_{-1}b_{p})\lambda_{\bot}^{\top}y_{t}\right] =%
\mathbb{E}\! \left[ (\varepsilon
_{t}-n\eta_{2,t}^{\top}b_{p}-u_{t}^{\top}\lambda_{-1}b_{p})\lambda_{\bot
}^{\top}u_{t}\right] =\mathbf{0}_{n-r}, 
\end{equation*}
which provides additional moment conditions. Therefore, the moment
conditions from the supply equation are%
\begin{equation}
\mathbb{E}\left[ (p_{t}-\psi y_{S,t}-y_{t}^{\top}\lambda_{-1}b_{p})\left( 
\begin{array}{c}
\lambda_{\bot}^{\top}y_{t} \\ 
\lambda_{-1}^{\top}y_{t}%
\end{array}
\right) \right] =\mathbf{0}_{n-1}.   \label{GK_F_6}
\end{equation}

The moment conditions in (\ref{GK_F_5}) and (\ref{GK_F_6}) coincide with
those in Proposition 4 of \cite{gabaix2024granular} when additional
exogenous variables are excluded. \cite{gabaix2024granular} argue in their
Proposition 4 that these conditions identify $\phi$, $\psi$, $b_{y}$, and $%
b_{p}$ when $\lambda_{-1}$ is known. When $\lambda_{-1}$ is unknown, they
propose the additional moment conditions 
\begin{equation}
\mathbb{E}\big[(M_{\mathbf{1}_{n}}-\lambda_{-1}(\lambda_{-1}^{\top}%
\lambda_{-1})^{-1}\lambda_{-1}^{\top})y_{t}y_{t}^{\top}\lambda_{-1}\big]=%
\mathbf{0}_{n\times(r-1)},   \label{GK_F_7}
\end{equation}
which correspond to equation (50) in Proposition 7 of \cite%
{gabaix2024granular}. To verify (\ref{GK_F_7}), note that $M_{\mathbf{1}%
_{n}}\lambda_{-1}=\lambda_{-1}$ by (\ref{GK_F_0a}). Using (\ref{GK_F_1}), 
\begin{equation*}
(M_{\mathbf{1}_{n}}-\lambda_{-1}(\lambda_{-1}^{\top}\lambda_{-1})^{-1}%
\lambda_{-1}^{\top})y_{t}=(M_{\mathbf{1}_{n}}-\lambda_{-1}(\lambda_{-1}^{%
\top }\lambda_{-1})^{-1}\lambda_{-1}^{\top})u_{t}. 
\end{equation*}
Hence, (\ref{GK_F_7}) follows from Assumption \ref{ID}(i).

Since (\ref{GK_F_7}) provides $n(r-1)$ moment conditions for $n(r-1)$
unknown entries in $\lambda_{-1}$, it may seem to deliver exact
identification.\footnote{%
Indeed, \cite{gabaix2024granular} state at the top of page 2294 that
\textquotedblleft The new moment (50) identifies $\check {\lambda}$." In our
notation, their moment (50) corresponds to (\ref{GK_F_7}), while their $%
\check{\lambda}$ is denoted here by $\lambda_{-1}$.} However, as shown in
the lemma below, this is not the case: the restrictions in (\ref{GK_F_7})
fail to uniquely identify $\lambda_{-1}$, even up to rotation.

\begin{lemma}
\label{GK_Non_ID} Suppose that $\mathbb{E}[y_{t}y_{t}^{\top}]$ is finite and
nonsingular, and that the conditions in (\ref{GK_F_0b}) hold. Let $%
\{d_{j}\}_{j=1}^{n-1}$ be an orthonormal basis of eigenvectors associated
with the nonzero eigenvalues of $\mathbb{E}[\tilde{y}_{t}\tilde{y}_{t}^{\top
}]$. For any subset $J\subset \{1,\ldots,n-1\}$ with $|J|=r-1$, let $D_{J}$
collect the columns $d_{j}$, $j\in J$. Then $n^{1/2}D_{J}$ satisfies (\ref%
{GK_F_0a}) and (\ref{GK_F_7}).
\end{lemma}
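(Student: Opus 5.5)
Set $\Sigma\equiv \mathbb{E}[\tilde y_t\tilde y_t^{\top}]=M_{\mathbf{1}_n}\mathbb{E}[y_ty_t^{\top}]M_{\mathbf{1}_n}$. The verification is direct and uses only the spectral structure of $\Sigma$; the factor model itself plays no role.

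First, I will show that the eigenvectors with nonzero eigenvalues are orthogonal to $\mathbf{1}_n$ and that there are exactly $n-1$ of them. We have $\Sigma\mathbf{1}_n=0$. Since $\mathbb{E}[y_ty_t^{\top}]$ is nonsingular, for any $a\perp\mathbf{1}_n$ with $a\neq 0$ we get $a^{\top}\Sigma a=a^{\top}\mathbb{E}[y_ty_t^{\top}]a>0$. Hence $\Sigma$ has rank $n-1$, its null space is $\mathrm{span}(\mathbf{1}_n)$, and each $d_j$ lies in $\mathrm{col}(M_{\mathbf{1}_n})$. Consequently $\mathbf{1}_n^{\top}D_J=\mathbf{0}^{\top}_{r-1}$, and the orthonormality of the $d_j$ gives $n^{-1}(n^{1/2}D_J)^{\top}(n^{1/2}D_J)=D_J^{\top}D_J=\mathbf{I}_{r-1}$. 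This establishes (\ref{GK_F_0a}).

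Next, I will verify (\ref{GK_F_7}) with $\lambda_{-1}=n^{1/2}D_J$. The projection term reduces to $\lambda_{-1}(\lambda_{-1}^{\top}\lambda_{-1})^{-1}\lambda_{-1}^{\top}=D_JD_J^{\top}$, and the scalar $n^{1/2}$ on the right factor can be dropped. It therefore suffices to show
\begin{equation*}
(M_{\mathbf{1}_n}-D_JD_J^{\top})\,\mathbb{E}[y_ty_t^{\top}]\,D_J=\mathbf{0}.
\end{equation*}
Because $M_{\mathbf{1}_n}D_J=D_J$ and $M_{\mathbf{1}_n}$ is idempotent, the left side equals $(M_{\mathbf{1}_n}-D_JD_J^{\top})M_{\mathbf{1}_n}\mathbb{E}[y_ty_t^{\top}]M_{\mathbf{1}_n}D_J=(M_{\mathbf{1}_n}-D_JD_J^{\top})\Sigma D_J$. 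Writing $\Sigma D_J=D_J\Lambda_J$ with $\Lambda_J$ the diagonal matrix of the corresponding eigenvalues, this becomes $(M_{\mathbf{1}_n}D_J-D_JD_J^{\top}D_J)\Lambda_J=(D_J-D_J)\Lambda_J=\mathbf{0}$, which completes the argument.

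The step most likely to be an obstacle is the careful handling of $M_{\mathbf{1}_n}$. Equation (\ref{GK_F_7}) is written with the raw $y_t$ rather than $\tilde y_t$, so one needs both $M_{\mathbf{1}_n}D_J=D_J$ and the idempotence of $M_{\mathbf{1}_n}$ to replace $\mathbb{E}[y_ty_t^{\top}]$ by $\Sigma$ on both sides. The conditions in (\ref{GK_F_0b}) are needed only to make the model coherent; the algebraic identity holds for any eigenvector selection $J$. This shows non-identification even up to rotation: distinct choices of $J$ give loading spaces that are mutually orthogonal, so they are not rotations of one another.
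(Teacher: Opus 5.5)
Your proof is correct and follows essentially the same route as the paper: both use $M_{\mathbf{1}_n}D_J=D_J$ to replace $\mathbb{E}[y_ty_t^{\top}]$ by $M_{\mathbf{1}_n}\mathbb{E}[y_ty_t^{\top}]M_{\mathbf{1}_n}$ and then apply the eigen-equation $\Sigma D_J=D_J\Lambda_J$. The paper phrases this step as showing that both sides of $M_{\mathbf{1}_n}\mathbb{E}[y_ty_t^{\top}]M_{\mathbf{1}_n}\lambda_{-1}=\lambda_{-1}(\lambda_{-1}^{\top}\lambda_{-1})^{-1}\lambda_{-1}^{\top}\mathbb{E}[y_ty_t^{\top}]\lambda_{-1}$ equal $D_J\Lambda_J$.

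One correction to your closing side remark: two subsets $J\neq J'$ give mutually orthogonal column spaces only when $J\cap J'=\emptyset$. In general they give distinct column spaces, by linear independence of the $d_j$, and that is all you need to rule out equivalence up to rotation.
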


Lemma \ref{GK_Non_ID} shows that the moment condition (\ref{GK_F_7}),
together with the normalization in (\ref{GK_F_0a}), fails to identify $\func{%
col}(\lambda_{-1})$. Indeed, (\ref{GK_F_0a}) and (\ref{GK_F_7}) admit $%
\binom{n-1}{r-1}$ different choices of $D_{J}$, whose column spaces are
generally distinct. Moreover, 
\begin{equation*}
\func{col}((d_{1},\ldots,d_{n-1}))=\func{col}(M_{\mathbf{1}_{n}})=\func{col}%
((\lambda_{-1},\lambda_{\bot})). 
\end{equation*}
Therefore, if $\func{col}(D_{J})\neq \func{col}(\lambda_{-1})$, then the
orthogonal complement of $(\mathbf{1}_{n},D_{J})$, denoted by $D_{J,\bot}$,
need not be orthogonal to the true factor space $\func{col}(\lambda_{-1})$.
In particular,%
\begin{equation*}
D_{J,\bot}^{\top}\lambda_{-1}\neq \mathbf{0}_{(n-r)\times(r-1)}
\end{equation*}
may hold. Consequently, the candidate GIVs $D_{J,\bot}^{\top}y_{t}$ may
still contain components of the latent factors $\eta_{t}$, since 
\begin{equation*}
D_{J,\bot}^{\top}y_{t}=D_{J,\bot}^{\top}\lambda_{-1}\eta_{2,t}+D_{J,\bot
}^{\top}u_{t}. 
\end{equation*}
As a result, moment conditions constructed from $D_{J,\bot}^{\top}y_{t}$ may
fail to eliminate the latent factor component and therefore need not provide
valid identifying restrictions for the structural parameters.

To make the identification problem more explicit, consider the special case $%
r=n-1$. In this case, $\lambda_{\bot}$ is one-dimensional, and the moment
conditions in (\ref{GK_F_5}) and (\ref{GK_F_6}) provide only exact
identification for the unknown parameters $\phi$, $\psi$, $b_{y}$, and $b_{p}
$, given the true factor loading matrix $\lambda_{-1}$ and its orthogonal
complement $\lambda_{\bot}$. Therefore, identification of these parameters
ultimately relies on identification of $\func{col}(\lambda_{-1})$.

However, as discussed above, (\ref{GK_F_5}) and (\ref{GK_F_6}) admit $n-1$
different choices of $D_{J}$, and hence $n-1$ corresponding choices of $%
D_{J,\bot}$. Since $\lambda_{\bot}$ is one-dimensional, at least $n-2$ of
these choices satisfy 
\begin{equation*}
D_{J,\bot}^{\top}\lambda_{-1}\neq \mathbf{0}_{1\times(r-1)}. 
\end{equation*}
For such choices, the corresponding candidate GIVs $D_{J,\bot}^{\top}y_{t}$
retain latent factor components and therefore generally fail to identify the
true elasticities $\phi$ and $\psi$. Consequently, GMM estimation based on
these invalid instruments would generally converge to pseudo-true values
rather than the true structural parameters. Moreover, since different
choices of $D_{J,\bot}$ generally lead to different pseudo-true values, the
resulting limits need not even be uniquely determined.

Lemma \ref{GK_Non_ID} establishes the non-identification of the factor
loadings under the normalization in (\ref{GK_F_0a}). In addition to (\ref%
{GK_F_0a}), \cite{gabaix2024granular} also assume that $\mathrm{Var}%
(\lambda_{-1}^{\top}y_{t})$ is a diagonal matrix with distinct diagonal
entries.\footnote{%
See the second paragraph above Proposition 7 in \cite{gabaix2024granular}.}
Under (\ref{GK_F_0a}) and (\ref{GK_F_0b}), however, 
\begin{equation*}
\mathrm{Var}(\lambda_{-1}^{\top}y_{t})=\lambda_{-1}^{\top}\mathbb{E}%
[y_{t}y_{t}^{\top}]\lambda_{-1}=n^{2}\mathbb{E}[\eta_{2,t}\eta_{2,t}^{\top
}]+n\sigma_{u}^{2}\mathbf{I}_{r-1}. 
\end{equation*}
Hence, the diagonal structure imposed on $\mathrm{Var}(\lambda_{-1}^{\top
}y_{t})$ amounts to additional restrictions on the latent factors, requiring
that the components of $\eta_{2,t}$ are uncorrelated and have distinct
variances. Since $\mathbb{E}[y_{t}y_{t}^{\top}]$ is unknown, this condition
should be viewed as an assumption on the latent factors rather than a
restriction on the factor loadings.\footnote{%
If one instead imposes the corresponding restriction on the sample second
moment $T^{-1}\sum_{t\leq T}y_{t}y_{t}^{\top}$, then for a broad class of
data-generating processes, the orthonormalized eigenvectors associated with
any collection of $r-1$ eigenvalues may satisfy the empirical counterpart of
this restriction in finite samples due to estimation error in $%
T^{-1}\sum_{t\leq T}y_{t}y_{t}^{\top}$.}

Even after imposing this restriction together with (\ref{GK_F_0a}), the
moment condition (\ref{GK_F_7}) still fails to identify $\func{col}%
(\lambda_{-1})$. Indeed, by (\ref{GK_F_0a}), (\ref{GK_F_0b}), and (\ref%
{GK_F_1}), 
\begin{equation*}
\mathbb{E}[\tilde{y}_{t}\tilde{y}_{t}^{\top}]=M_{\mathbf{1}_{n}}\mathbb{E}%
[y_{t}y_{t}^{\top}]M_{\mathbf{1}_{n}}=\lambda_{-1}\mathbb{E}%
[\eta_{2,t}\eta_{2,t}^{\top}]\lambda_{-1}^{\top}+\sigma_{u}^{2}M_{\mathbf{1}%
_{n}}. 
\end{equation*}
Under the additional assumption that $\mathbb{E}[\eta_{2,t}\eta_{2,t}^{\top}]
$ is diagonal with distinct diagonal entries, the matrix $\mathbb{E}[\tilde {%
y}_{t}\tilde{y}_{t}^{\top}]$ has $r-1$ distinct eigenvalues larger than $%
\sigma_{u}^{2}$, while $\sigma_{u}^{2}$ itself is an eigenvalue with
multiplicity $n-r$.\ The eigenspace associated with $\sigma_{u}^{2}$ is
spanned by $\lambda_{\bot}$. Consequently, there are at least $(r-1)(n-r)+1$ such
choices of $D_{J}$ with distinct column spaces.\footnote{%
Note that one choice is given by $D_{J}=n^{-1/2}\lambda_{-1}$, while the
remaining $(r-1)(n-r)$ choices are obtained by selecting $r-2$ columns from $%
\lambda_{-1}$ columns and one column from $\lambda_{\bot}$.} Since these
columns are orthonormal eigenvectors of $\mathbb{E}[\tilde{y}_{t}\tilde{y}%
_{t}^{\top}]$, it follows that 
\begin{equation*}
\mathrm{Var}(n^{1/2}D_{J}^{\top}y_{t})=nD_{J}^{\top}\mathbb{E}%
[y_{t}y_{t}^{\top}]D_{J}=nD_{J}^{\top}\mathbb{E}[\tilde{y}_{t}\tilde{y}%
_{t}^{\top }]D_{J}
\end{equation*}
is diagonal with distinct diagonal entries. Since there exist $(r-1)(n-r)+1$%
\ such choices of $D_{J}$ with distinct column spaces, this again shows that 
$\func{col}(\lambda_{-1})$ is not identified.

We conclude this subsection with a lemma establishing the rotational
non-uniqueness of the factor loadings identified from (\ref{GK_F_0a}), (\ref%
{GK_F_5})--(\ref{GK_F_7}) in the general case.

\begin{lemma}
\label{GK_Rotation} Suppose that $\{b_{y},b_{p},\lambda_{-1},\lambda_{\perp
}\}$ satisfies (\ref{GK_F_5})--(\ref{GK_F_7}) given $\phi$ and $\psi$. Let $%
C_{1}$ and $C_{2}$ be arbitrary orthogonal matrices of dimensions $%
(r-1)\times(r-1)$ and $(n-r)\times(n-r)$, respectively. Define 
\begin{equation*}
\lambda_{C_{1}}\equiv \lambda_{-1}C_{1},\qquad \lambda_{C_{2}}\equiv
\lambda_{\perp}C_{2}. 
\end{equation*}
Then $\{C_{1}^{\top}b_{y},C_{1}^{\top}b_{p},\lambda_{C_{1}},\lambda_{C_{2}}\}
$ also satisfies (\ref{GK_F_5})--(\ref{GK_F_7}) given $\phi$ and $\psi$.
Moreover, $\lambda_{C_{1}}$ and $\lambda_{C_{2}}$ satisfy the same
normalization and orthogonality restrictions as $\lambda_{-1}$ and $%
\lambda_{\perp}$.
\end{lemma}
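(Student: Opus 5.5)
The argument is direct substitution. Under the transformation, every term entering (\ref{GK_F_5})--(\ref{GK_F_7}) is either invariant or changes by a right multiplication by $C_1$ or $C_2$. Such a multiplication preserves the zero right-hand sides.

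First, the normalization and orthogonality restrictions. For (\ref{GK_F_0a}),
\[
\mathbf{1}_n^\top\lambda_{C_1}=\mathbf{1}_n^\top\lambda_{-1}C_1=\mathbf{0}_{r-1}^\top,
\]
and, using $C_1^\top C_1=\mathbf{I}_{r-1}$,
\[
n^{-1}\lambda_{C_1}^\top\lambda_{C_1}=C_1^\top(n^{-1}\lambda_{-1}^\top\lambda_{-1})C_1=C_1^\top C_1=\mathbf{I}_{r-1}.
\]
For the complement, $\lambda_{C_2}^\top\lambda_{C_2}=C_2^\top\lambda_\perp^\top\lambda_\perp C_2=\mathbf{I}_{n-r}$. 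Orthogonality to $\mathbf{1}_n$ and to $\lambda_{C_1}$ holds because $\lambda_\perp^\top(\mathbf{1}_n,\lambda_{-1})=0$ is multiplied on the left by $C_2^\top$ and on the right by $\mathrm{diag}(1,C_1)$. Finally, $\func{col}(\lambda_{C_1})=\func{col}(\lambda_{-1})$ and $\func{col}(\lambda_{C_2})=\func{col}(\lambda_\perp)$, since $C_1$ and $C_2$ are invertible.

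Next, (\ref{GK_F_7}). The projection is invariant:
\[
\lambda_{C_1}(\lambda_{C_1}^\top\lambda_{C_1})^{-1}\lambda_{C_1}^\top=\lambda_{-1}C_1C_1^\top(\lambda_{-1}^\top\lambda_{-1})^{-1}C_1C_1^\top\lambda_{-1}^\top=\lambda_{-1}(\lambda_{-1}^\top\lambda_{-1})^{-1}\lambda_{-1}^\top,
\]
using $C_1^{-1}=C_1^\top$. The left-hand side of (\ref{GK_F_7}) evaluated at $\lambda_{C_1}$ is therefore the original left-hand side postmultiplied by $C_1$. Since the original equals $\mathbf{0}_{n\times(r-1)}$, so does the new one.

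Then, (\ref{GK_F_5}) and (\ref{GK_F_6}). The residual is invariant because
\[
y_t^\top\lambda_{C_1}C_1^\top b_y=y_t^\top\lambda_{-1}C_1C_1^\top b_y=y_t^\top\lambda_{-1}b_y,
\]
and the same holds with $b_p$. The instrument vector transforms as
\[
\begin{pmatrix}\lambda_{C_2}^\top y_t\\ \lambda_{C_1}^\top y_t\end{pmatrix}=\mathrm{diag}(C_2^\top,C_1^\top)\begin{pmatrix}\lambda_\perp^\top y_t\\ \lambda_{-1}^\top y_t\end{pmatrix}.
\]
Each new moment vector is thus a fixed nonsingular matrix times the original one, so it is zero exactly when the original is. The same computation shows that $C_1^\top b_y$ and $C_1^\top b_p$ still satisfy their defining projection equations (\ref{GK_F_4b}) and its $b_p$ analogue, should one also wish to check those.

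Nothing here is a real obstacle. The only point requiring care is the order of transposes, $C_1C_1^\top=\mathbf{I}$ versus $C_1^\top C_1=\mathbf{I}$; both hold for orthogonal matrices, so the argument goes through.
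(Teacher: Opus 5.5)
Your proposal is correct and follows the paper's proof essentially line by line. It verifies (\ref{GK_F_0a}) and $\lambda_{C_2}^\top(\mathbf{1}_n,\lambda_{C_1})=0$, notes that the residuals are unchanged while the instruments are premultiplied by $\mathrm{diag}(C_2^\top,C_1^\top)$ in (\ref{GK_F_5})--(\ref{GK_F_6}), and uses invariance of the projection so that the left side of (\ref{GK_F_7}) is only postmultiplied by $C_1$. One harmless aside: your identity $\lambda_{C_2}^\top\lambda_{C_2}=\mathbf{I}_{n-r}$ presumes $\lambda_\perp$ has orthonormal columns, which the paper does not state explicitly, but the same computation preserves whatever normalization $\lambda_\perp$ carries.
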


Lemma \ref{GK_Rotation} shows that the factor loadings $\lambda_{-1}$ and
the regression coefficients $b_{y}$ and $b_{p}$ are, at best, identified up
to an orthonormal rotation. This raises concerns for estimation and
inference based on these moment conditions, since standard GMM procedures
require uniqueness of the identified parameters and are therefore not
directly applicable in this setting.

The non-identification issue here is more challenging to address than in our
approach, because $\{b_{y},b_{p},\lambda_{-1},\lambda_{\perp}\}$ are jointly
identified together with the demand and supply elasticities. This joint
determination complicates both the computation of the GIV estimator and the
analysis of its statistical properties. In contrast, our approach separates
the identification of the factor loadings from that of the elasticity
parameters. We then exploit the invariance properties of the GIV estimator
and the $J$-test statistic to address the fact that the factor loadings are
only identified up to rotation, thereby allowing standard GMM estimation and
inference to remain valid.

\subsection{Model with exogenous regressors\label{subsec: Ex1}}

This subsection extends the model studied in the previous section by
allowing for additional exogenous regressors in both the demand and supply
equations in (\ref{G_demand})--(\ref{G_supply}). Specifically, we consider 
\begin{align}
y_{t} & =\phi p_{t}\mathbf{1}_{n}+x_{t}\beta+\lambda \eta_{t}+u_{t},
\label{F_demand} \\
p_{t} & =\psi y_{S,t}+w_{t}^{\top}\gamma+\varepsilon_{t},   \label{F_supply}
\end{align}
where $x_{t}\equiv(x_{1,t},\ldots,x_{n,t})^{\top}$ with $x_{i,t}\in \mathbb{R%
}^{d_{x}}$, and $w_{t}\in \mathbb{R}^{d_{w}}$ denote observed exogenous
variables that have direct effects on demand and supply, respectively. The
variables $x_{t}$ include both sector fixed effects and unit-level demand
shifters, while $w_{t}$ captures aggregate supply shifters.\footnote{%
Since $\lambda \eta_{t}$ can be decomposed as $\lambda (\eta_{t}-\mathbb{E}%
[\eta_{t}])+\lambda \mathbb{E}[\eta_{t}]$, and $\lambda \mathbb{E}[\eta_{t}]$
can be absorbed into the sector fixed effects, we assume without loss of
generality that $\mathbb{E}[\eta_{t}]=\mathbf{0}_{r}$ throughout this
subsection.} Under suitable exogeneity conditions, the main identification
and estimation arguments continue to apply after partialling out $x_{t}$
from the demand equation (\ref{F_demand}).\footnote{%
When $x_{t}$ includes variables excluded from the supply equation, these may
serve as IVs for identifying $\psi$ in (\ref{F_supply}) if they are
uncorrelated with $\varepsilon_{t}$. Similarly, variables in $w_{t}$
excluded from the demand equation may identify $\phi$ in (\ref{F_demand}) if
they are uncorrelated with $\eta_{t}$ and $u_{t}$. Although standard in the
classical simultaneous equations literature, this strategy is not widely
used in empirical applications of GIV. We therefore do not assume the
existence or exogeneity of such excluded variables.}

Specifically, multiplying $M_{\mathbf{1}_{n}}$ on both sides of (\ref%
{F_demand}) yields%
\begin{equation*}
\tilde{y}_{t}=\tilde{x}_{t}\beta+\tilde{\lambda}\eta_{t}+\tilde{u}_{t}, 
\end{equation*}
where $\tilde{x}_{t}\equiv M_{\mathbf{1}_{n}}x_{t}$, and $\tilde{y}_{t}$, $%
\tilde{\lambda}$, and $\tilde{u}_{t}$ are defined analogously; see (\ref%
{tilda_vars}). Let $\tilde{y}_{t}^{\ast}\equiv M_{\mathbf{1}_{n}}y_{t}^{\ast}
$, where $y_{t}^{\ast}\equiv y_{t}-x_{t}\beta$. Then the above equation can
be written as 
\begin{equation*}
\tilde{y}_{t}^{\ast}=\tilde{\lambda}\eta_{t}+\tilde{u}_{t}, 
\end{equation*}
which takes a form similar to (\ref{Demeaned_Demand}).\footnote{%
Since $w_{t}$ is invariant across $i$, it is automatically partialled out in 
$\tilde{y}$. Therefore, $\tilde{y}_{t}^{\ast}$ effectively partials out the
exogenous regressors in both the demand and supply equations.} Therefore, we
can apply Lemmas \ref{ID_G_GIV_Weight} and \ref{G_GIV_Solutions} in
Subsection \ref{subsec: G_model_ID}, with the second moment matrix $\bar{%
\Sigma}_{\tilde{y}}$ replaced by%
\begin{equation*}
\bar{\Sigma}_{\tilde{y}^{\ast}}\equiv T^{-1}\sum_{t\leq T}\mathbb{E}[\tilde {%
y}_{t}^{\ast}\tilde{y}_{t}^{\ast \top}], 
\end{equation*}
to identify the subspace $\func{col}(\bar{\lambda}_{\bot})$, which is
orthogonal to $(\mathbf{1}_{n},\lambda)$.

Given $\beta$, the GIVs and moment conditions can be constructed in the same
way as in the previous section, with $y_{t}$ replaced by $y_{t}^{\ast}$. To
proceed, we need to estimate $\beta$, which is required to construct an
estimator for $y_{t}^{\ast}$. The unknown parameter $\beta$ is estimated by 
\begin{equation*}
\hat{\beta}\equiv \Bigl(\sum_{t\leq T}\tilde{x}_{t}^{\top}\tilde{x}_{t}\Bigr)%
^{-1}\Bigl(\sum_{t\leq T}\tilde{x}_{t}^{\top}\tilde{y}_{t}\Bigr). 
\end{equation*}
If $x_{t}$ is exogenous, in the sense that it is uncorrelated with both $%
\eta_{t}$ and $u_{t}$, then standard least squares theory implies that $\hat{%
\beta}$ is a $T^{1/2}$-consistent estimator of $\beta$. Given $\hat{\beta}$%
, define%
\begin{equation*}
\hat{y}_{t}^{\ast}\equiv y_{t}-x_{t}\hat{\beta},\qquad \hat{y}_{e,t}^{\ast
}\equiv e^{\top}\hat{y}_{t}^{\ast}. 
\end{equation*}
We then obtain $\hat{A}\equiv Q_{-1}\hat{A}_{0}$, where $\hat{A}_{0}$
collects the eigenvectors corresponding to the smallest\ $n-\bar{r}$
eigenvalues of 
\begin{equation*}
\hat{S}_{y^{\ast}}\equiv Q_{-1}^{\top}\hat{\Sigma}_{\hat{y}%
^{\ast}}Q_{-1},\qquad \hat{\Sigma}_{\hat{y}^{\ast}}\equiv T^{-1}\sum_{t\leq
T}\hat {y}_{t}^{\ast}\hat{y}_{t}^{\ast \top}. 
\end{equation*}

The moment conditions used to estimate the unknown parameter $\theta
\equiv(\phi,\psi,\gamma^{\top})^{\top}$ are 
\begin{equation}
\bar{g}_{T}(\theta;\hat{A},\hat{\beta})\equiv T^{-1}\sum_{t\leq T} 
\begin{pmatrix}
\hat{A}^{\top}\hat{y}_{t}^{\ast}(\hat{y}_{e,t}^{\ast}-\phi p_{t}) \\ 
\hat{A}^{\top}\hat{y}_{t}^{\ast}(p_{t}-\psi y_{S,t}-w_{t}^{\top}\gamma) \\ 
w_{t}(p_{t}-\psi y_{S,t}-w_{t}^{\top}\gamma)%
\end{pmatrix}
.   \label{F_Moments}
\end{equation}

The GIV estimator is defined as 
\begin{equation}
\hat{\theta}(\hat{A})\equiv \arg \min_{\theta \in \Theta}\bar{g}_{T}(\theta ;%
\hat{A},\hat{\beta})^{\top}W_{0,T}(\hat{A})\bar{g}_{T}(\theta;\hat{A},\hat{%
\beta}),   \label{G_GIV_1}
\end{equation}
where 
\begin{equation}
W_{0,T}(\hat{A})\equiv \Bigl(\mathrm{diag}(\hat{A}^{\top},\hat{A}^{\top },%
\mathbf{I}_{d_{w}})\,W_{0,T}\, \mathrm{diag}(\hat{A},\hat{A},\mathbf{I}%
_{d_{w}})\Bigr)^{-1},   \label{F_Weight}
\end{equation}
and $W_{0,T}$ is a user-specified symmetric positive definite $%
(2n+d_{w})\times(2n+d_{w})$ matrix. Since $\bar{g}_{T}(\theta;\hat{A},\hat{%
\beta})$ is linear in $\theta$, the GIV estimator takes the same form as in (%
\ref{GIV_Form_1}), with $D_{j,T}(\hat{A})$ redefined as%
\begin{equation}
D_{j,T}(\hat{A})\equiv \mathrm{diag}(\hat{A}^{\top},\hat{A}^{\top},\mathbf{I}%
_{d_{w}})D_{j,T},\quad j=1,2,   \label{GIV_Form_3}
\end{equation}
where 
\begin{equation}
D_{1,T}\equiv T^{-1}\sum_{t\leq T}%
\begin{pmatrix}
\hat{y}_{t}^{\ast}p_{t} & \mathbf{0}_{n} & \mathbf{0}_{n\times d_{w}} \\ 
\mathbf{0}_{n} & \hat{y}_{t}^{\ast}y_{S,t} & \hat{y}_{t}^{\ast}w_{t}^{\top}
\\ 
\mathbf{0}_{d_{w}} & w_{t}y_{S,t} & w_{t}w_{t}^{\top}%
\end{pmatrix}
,\quad D_{2,T}\equiv T^{-1}\sum_{t\leq T}%
\begin{pmatrix}
\hat{y}_{t}^{\ast}\hat{y}_{e,t}^{\ast} \\ 
\hat{y}_{t}^{\ast}p_{t} \\ 
w_{t}p_{t}%
\end{pmatrix}
.   \label{GIV_Form_4}
\end{equation}

To conduct inference on $\theta$ and test the validity of the moment
conditions in (\ref{F_Moments}), one must account for the estimation error
in $\hat{\beta}$, since it enters $\bar{g}_{T}(\theta;\hat{A},\hat{\beta})$
through both $\hat{y}_{t}^{\ast}$ and $\hat{A}$. Lemma \ref{Moment_est} in
Online Appendix \ref{APP_5} shows that the randomness introduced by the
estimation error of $\hat{\beta}$ is of higher order. Therefore, the
estimation error of $\hat{\beta}$ is asymptotically negligible and can be
ignored. Consequently, the standard errors of $\hat{\theta}(\hat{A})$ and
the specification tests can be constructed in the same way as in the
previous section. See Algorithm 1 in Online Appendix \ref{APP_0} for details.

\subsection{Model with unbalanced data structure\label{subsec: Ex2}}

The data used to estimate the demand and supply equations have, thus far,
been assumed to follow a balanced structure.\ For instance, $y_{i,t}$
denotes the demand of entity $i$ in period $t$, where $i\in \{1,\ldots,n\}$
and $t\in \{1,\ldots,T\}$. The analysis in the previous section assumes a
balanced data structure, so that each time period is associated with the
same number of entities. In practice, however, entry and exit lead to an
unbalanced data structure. As we show below, the identification and
estimation approach extends naturally to this setting, provided that the
entry and exit decisions of entities are independent of their demand. \ 

Specifically, the demand equation in (\ref{G_demand}) is generalized as 
\begin{equation*}
y_{t}=\phi p_{t}\mathbf{1}_{n_{t}}+\lambda \eta_{t}+u_{t}, 
\end{equation*}
where $y_{t}\equiv(y_{i,t})_{i\leq n_{t}}$ and $n_{t}$ denotes the number of
entities present in the market at time $t$. The factor-loading matrix $%
\lambda$ is of dimension $n_t\times r$, and the idiosyncratic shock vector 
$u_t$ is of dimension $n_t\times1$. The supply equation in (\ref{G_supply})
remains unchanged, with $y_{S,t}\equiv S_{t}^{\top}y_{t}$, where $S_{t}$ is
an $n_{t}\times1$ vector of market shares.

To construct the GIV, consider a subsample of $n_{0}$ entities that are
observed in all periods. Let $y_{t}^{0}$ denote the corresponding subvector
of $y_{t}$. Their demand equation is 
\begin{equation}
y_{t}^{0}=\phi p_{t}\mathbf{1}_{n_{0}}+\lambda_{0}\eta_{t}+u_{t}^{0}, 
\label{n0_Demand}
\end{equation}
where $\lambda_{0}$ and $u_{t}^{0}$ are the associated submatrices of $%
\lambda$ and $u_{t}$. Let $Q_{0,-1}\equiv(q_{0,2},\ldots,q_{0,n_{0}})\in 
\mathbb{R}^{n_{0}\times(n_{0}-1)}$ be defined analogously to (\ref{ortho_q_j}%
) with $n$ replaced by $n_{0}$, and $\lambda_{0,\bot}\in \mathbb{R}%
^{n_{0}\times(n_{0}-\bar{r}_{0})}$ denotes the orthogonal complement of $(%
\mathbf{1}_{n_{0}},\lambda_{0})$ where $\bar{r}_{0}\equiv \mathrm{rank}((%
\mathbf{1}_{n_{0}},\lambda_{0}))$. Since $\lambda
_{0,\bot}^{\top}y_{t}^{0}=\lambda_{0,\bot}^{\top}u_{t}^{0}$, Assumption \ref%
{ID} implies 
\begin{equation*}
\mathbb{E}[\eta_{t}u_{t}^{0\top}\lambda_{0,\bot}]=\mathbf{0}_{r\times (n_{0}-%
\bar{r}_{0})},\quad \mathbf{1}_{n_{t}}^{\top}\mathbb{E}[u_{t}u_{t}^{0\top}]%
\lambda_{0,\bot}=\mathbf{0}_{n_{0}-\bar{r}_{0}},\quad \mathbb{E}%
[\varepsilon_{t}u_{t}^{0\top}\lambda_{0,\bot}]=\mathbf{0}_{n_{0}-\bar{r}%
_{0}}. 
\end{equation*}
These imply $\lambda_{0,\bot}^{\top}y_{t}^{0}$ provides valid moment
conditions%
\begin{equation*}
\mathbb{E}\! \left[ 
\begin{pmatrix}
\lambda_{0,\bot}^{\top}y_{t}^{0}(y_{e,t}-\phi p_{t}) \\ 
\lambda_{0,\bot}^{\top}y_{t}^{0}(p_{t}-\psi y_{S,t})%
\end{pmatrix}
\right] =\mathbf{0}_{2(n_{0}-\bar{r}_{0})}, 
\end{equation*}
which identify $\phi$ and $\psi$.

Lemmas \ref{ID_G_GIV_Weight} and \ref{G_GIV_Solutions} apply to this
subsample. In particular, $\lambda_{0,\bot}$ can be consistently estimated
(up to an orthonormal rotation) by $\hat{A}\equiv Q_{0,-1}\hat{A}_{0}$,
where $\hat{A}_{0}$ collects the eigenvectors corresponding to the smallest $%
n_{0}-\bar{r}_{0}$ eigenvalues of 
\begin{equation*}
\hat{S}_{y^{0}}\equiv Q_{0,-1}^{\top}\hat{\Sigma}_{y^{0}}Q_{0,-1},\qquad 
\hat{\Sigma}_{y^{0}}\equiv T^{-1}\sum_{t\leq T}y_{t}^{0}y_{t}^{0\top}. 
\end{equation*}
The GIV estimator is defined analogously to (\ref{GMM_Criterion}), with
moment function 
\begin{equation*}
\bar{g}_{T}^{0}(\theta;\hat{A})\equiv T^{-1}\sum_{t\leq T}%
\begin{pmatrix}
\hat{A}^{\top}y_{t}^{0}(y_{e,t}-\phi p_{t}) \\ 
\hat{A}^{\top}y_{t}^{0}(p_{t}-\psi y_{S,t})%
\end{pmatrix}
, 
\end{equation*}
and a user-specified symmetric positive definite $2n_{0}\times2n_{0}$ weight
matrix, and it admits an explicit form 
\begin{equation}
\hat{\theta}(\hat{A})=\big(D_{1,T}^{0}(\hat{A})^{\top}W_{0,T}(\hat{A}%
)D_{1,T}^{0}(\hat{A})\big)^{-1}\big(D_{1,T}^{0}(\hat{A})^{\top}W_{0,T}(\hat {%
A})D_{2,T}^{0}(\hat{A})\big),   \label{GIV_Form_n0_1}
\end{equation}
where $D_{j,T}^{0}(\hat{A})\equiv(\mathbf{I}_{2}\otimes \hat{A}%
^{\top})D_{j,T}^{0}$ for $j=1,2$, and 
\begin{equation}
D_{1,T}^{0}\equiv T^{-1}\sum_{t\leq T}\mathrm{diag}\! \left(
y_{t}^{0}p_{t},\,y_{t}^{0}y_{S,t}\right) ,\qquad D_{2,T}^{0}\equiv
T^{-1}\sum_{t\leq T}%
\begin{pmatrix}
y_{t}^{0}y_{e,t} \\ 
y_{t}^{0}p_{t}%
\end{pmatrix}
.   \label{GIV_Form_n0_2}
\end{equation}

The asymptotic normality of the GIV estimator and the asymptotic
distribution of the $J$-test statistic follow from the same arguments as in
the proofs of Theorems \ref{Asy_Dist} and \ref{J_Test}, with the appropriate
modifications to the $D_{1}$ and $V$ matrices. To conserve space, the
implementation details of the GIV estimation and inference procedure are
provided in Algorithm 2 in Online Appendix \ref{APP_0}.

\begin{remark}
The method developed in this subsection constructs GIVs using data from a
subset of $n_{0}$ entities. Once the GIVs are obtained, the full data set
can be used to construct moment conditions for estimating the unknown
parameters in the model. Since $n_{0}$ may be substantially smaller than
both $n$ and $T$, the proposed approach can be applied in settings where the
cross-sectional dimension is large and may even exceed the sample size.
\end{remark}

\begin{remark}
The flexibility of using only a subset of entities to construct the GIVs
also allows the framework to accommodate heterogeneous demand elasticities
across entities, provided that a subset of entities is known to share a
common elasticity.

Specifically, suppose that the first $n_{0}$ entities share a common demand
elasticity $\bar{\phi}$. We may use the demand equations for these entities,
i.e., (\ref{n0_Demand}), to construct the GIVs $\hat{A}^{\top}y_{t}^{0}$,
which can then be employed to form the moment functions 
\begin{equation}
\bar{g}_{T}^{n_{0}}(\bar{\phi};\hat{A})\equiv T^{-1}\sum_{t\leq T}\hat {A}%
^{\top}y_{t}^{0}(y_{n_{0},t}-\bar{\phi}p_{t})\text{ \ \ and \ \ }\bar {g}%
_{T}^{i}(\phi_{i};\hat{A})\equiv T^{-1}\sum_{t\leq T}\hat{A}%
^{\top}y_{t}^{0}(y_{i,t}-\phi_{i}p_{t}),   \label{GIV_Form_n0_3}
\end{equation}
to estimate the common elasticity $\bar{\phi}$ and the entity-specific
elasticities $\phi_{i}$ for $i=n_{0}+1,\ldots,n$, where $%
y_{n_{0},t}=n_{0}^{-1}\mathbf{1}_{n_{0}}^{\top}y_{t}^{0}$. Using arguments
analogous to those in the proof of Theorem \ref{Asy_Dist}, it can be shown
that the resulting GIV estimators of $\bar{\phi}$ and $\phi_{i}$ ($%
i=n_{0}+1,\ldots,n$) are $T^{1/2}$-consistent and asymptotically normal.
Together with consistent estimators of their asymptotic variances, these
results can be used to conduct inference on heterogeneous demand
elasticities and to test hypotheses such as $H_{0}:\bar{\phi}=\phi_{i}$ and $%
H_{0}:\phi_{i}=\phi_{i^{\prime}}$ for $i\neq i^{\prime}$.
\end{remark}

\section{Simulation Studies\label{sec: MC}}

We examine the finite-sample performance of the proposed GIV estimation and
inference procedures through Monte Carlo experiments. Subsection \ref%
{sec-mc1} describes the simulation design, and Subsection \ref{sec-mc2}
reports the results.

\subsection{Simulation Setting\ \label{sec-mc1}}

We consider two simulation designs, a baseline design and an extended
design, to investigate the finite-sample performance of the proposed GIV
estimator, the associated inference procedures, and the specification test.
The baseline design follows the model in (\ref{G_demand})--(\ref{G_supply}),
while the extended design augments the demand equation with three exogenous
regressors.

Solving the demand-supply system in (\ref{F_demand})--(\ref{F_supply}), with
exogenous regressors in the demand equation but no additional exogenous
variables in the supply equation, yields the following reduced-form
expressions under the extended design: 
\begin{align}
y_{t} & =\left( \mathbf{I}_{n}+\frac{\phi \psi}{1-\phi \psi}\mathbf{1}%
_{n}S_{t}^{\top}\right) (x_{t}\beta+\lambda \eta_{t}+u_{t})+\frac{\phi}{%
1-\phi \psi}\mathbf{1}_{n}\varepsilon_{t},  \label{F_Reduced_Demand} \\
p_{t} & =\frac{\psi}{1-\phi \psi}S_{t}^{\top}(x_{t}\beta+\lambda \eta
_{t}+u_{t})+\frac{\varepsilon_{t}}{1-\phi \psi}.   \label{F_Reduced_Supply}
\end{align}
To generate the simulated data, we first draw the demand and supply shocks $%
\eta_{t}$, $u_{t}$, and $\varepsilon_{t}$, the exogenous regressors $x_{t}$,
and the market share vector $S_{t}$ conditional on the parameter values of $%
\phi$, $\psi$, $\beta$, and $\lambda$. These simulated values are then
substituted into (\ref{F_Reduced_Demand})--(\ref{F_Reduced_Supply}) to
obtain $(p_{t},y_{t})$. This procedure generates the simulated observations $%
\{y_{t},p_{t},x_{t},S_{t}\}$ for each period $t$.

The demand and supply shocks are mutually independent and i.i.d.\ across $t$%
, with 
\begin{equation}
\eta_{t}\sim N(\mathbf{0}_{r+1},\Sigma_{\eta}),\qquad u_{t}\sim N(\mathbf{0}%
_{n},\mathbf{I}_{n}),\qquad \varepsilon_{t}\sim
N(0,\sigma_{\varepsilon}^{2}),   \label{MC_Shocks}
\end{equation}
where $\Sigma_{\eta}\equiv((0.1)^{|i-j|})_{i,j\leq r+1}$, and $\sigma
_{\varepsilon}^{2}$ is set to $0.5$. The demand and supply elasticities are
set to $\phi=-0.5$ and $\psi=1.5$, respectively. The exogenous regressors $%
\{x_{i,t}\}$ are generated i.i.d.\ from $N(\mathbf{0}_{3},\mathbf{I}_{3})$
across $i$ and $t$, independently of $(\eta_{t}^{\top},u_{t}^{\top
},\varepsilon_{t})^{\top}$. The market share vector $S_{t}$ is fixed across $%
t$ and follows a Pareto rank-size specification:\ \ $s_{i}\propto \left(
i/n\right) ^{-1/\mu_{S}}$, with tail index $\mu_{S}=0.2$, where the shares
are normalized to satisfy $\sum_{i=1}^{n}s_{i}=1$. Such a power-law profile
is consistent with the size distributions documented for industries, firms,
and financial intermediaries \citep{gabaix2011granular,gabaix2024granular},
and generates the concentrated cross-sectional structure under which
granular variation is informative.

The factor loading matrix is specified as $\lambda=(\mathbf{1}%
_{n},\lambda_{-1})$, where $\lambda_{-1}$ is the $n\times r$ matrix of
loadings on the $r$ non-aggregate latent factors. To construct $\lambda_{-1}$%
, we first draw $nr$ independent $N(0,1)$ random variables to form an $%
n\times r$ matrix $\lambda_{0}$. We then project $\lambda_{0}$ onto the
orthogonal complement of $(\mathbf{1}_{n},S)$ and obtain a preliminary
loading matrix $\tilde{\lambda }_{0}$ through the QR decomposition $M_{(%
\mathbf{1}_{n},S)}\lambda_{0}=\tilde{\lambda}_{0}R_{\lambda_{0}}$,\ where $%
R_{\lambda_{0}}$ is an $r\times r$ upper triangular matrix and the columns
of $\tilde{\lambda}_{0}$ are orthonormal. We then set $\lambda_{-1}=n^{1/2}%
\tilde{\lambda}_{0}$. By construction, the resulting loading matrix $%
\lambda=(\mathbf{1}_{n},\lambda_{-1})$ has rank $\bar{r}=r+1$.

We set $\beta=\mathbf{0}_{d_{x}}$ in (\ref{F_Reduced_Demand})--(\ref%
{F_Reduced_Supply}) for both designs. In the baseline design, $\beta$ is
treated as known and the regressors $x_{t}$ are omitted. In the extended
design, however, $\beta$ is estimated using the procedure described in
Subsection~\ref{subsec: Ex1}. Because $\beta=\mathbf{0}_{d_{x}}$, the
regressors $x_{t}$ play no role in the data-generating process. Thus, any
difference between the two designs reflects the additional estimation error
associated with estimating $\beta$ and partialling out $x_{t}$.

We consider six combinations of the number of entities and the number of
non-aggregate latent factors: 
\begin{equation}
(n,r)\in \{(5,1),(5,2),(8,3),(8,5),(10,5),(10,7)\},   \label{MC_nr_pair}
\end{equation}
together with three sample sizes, $T\in \{150,300,450\}$.\ To evaluate the
finite-sample performance of the GIV estimator, as well as the size
properties of the associated inference and specification tests, we conduct $%
10{,}000$ Monte Carlo replications for each $(n,r,T)$ cell under each
simulation design.

To assess the effect of estimation error arising from the recovery of the
subspace orthogonal to $\func{col}((\mathbf{1}_{n},\lambda))$ on the
performance of the GIV estimator, we consider an oracle GIV estimator
constructed under the assumption that $\func{col}((\mathbf{1}_{n},\lambda))$
is known. Consequently, the matrix $A$, whose columns form a basis for the
orthogonal complement of $\func{col}((\mathbf{1}_{n},\lambda))$ and are used
to construct the GIVs, is treated as known.\footnote{%
In the simulation, $A$ is constructed from the left singular vectors
associated with the zero singular values in the singular value decomposition
of $(\mathbf{1}_{n},\lambda)$.}\ The oracle GIV estimator therefore bypasses
both the BIC step for estimating the number of factors and the estimation of
the orthogonal complement of $\func{col}((\mathbf{1}_{n},\lambda))$. In the
extended design, the oracle estimator additionally treats $\beta$ as known
and partials out $x_{t}\beta$ using the true parameter value. By contrast,
the feasible GIV estimator selects $\bar {r}$ using the BIC criterion in (%
\ref{BIC})--(\ref{r_hat}), and then constructs the GIVs and the
corresponding GIV estimator according to Algorithm 1 of Online Appendix \ref%
{APP_0}.

Both the oracle and feasible GIV estimators are evaluated using their
finite-sample root mean squared errors (RMSEs), with the results reported in
Table \ref{tab:rmse} of the next subsection. We also investigate the
empirical rejection probabilities of the two-sided tests of $H_{0}:\phi=-0.5$
and $H_{0}:\psi=1.5$ at the $5\%$ significance level. The results are
reported in Table \ref{tab:t_size} of the next subsection. Inference is
conducted using $t$-tests based on the standard error estimators described
in Algorithm 1 of Online Appendix \ref{APP_0}, together with the asymptotic
normality of the GIV estimators.

To evaluate the power of the $J$-test, we consider a controlled violation of
the covariance restrictions underlying GIV validity while keeping the
factor-loading matrix $\lambda=(\mathbf{1}_{n},\lambda_{-1})$, the
structural parameters, and the marginal distributions of $u_{t}$, $\eta_{t}$%
, and $\varepsilon_{t}$ unchanged. Let $b_{n}=Ad_{n}$, where $d_{n}\in 
\mathbb{R}^{n-r-1}$ is the unit vector obtained by applying the
Gram--Schmidt procedure to the first standard basis vector against $A^{\top}S
$. For $\rho \in \lbrack0,1)$, we generate $u_{t}$ as 
\begin{equation*}
u_{t}=u_{t}^{\ast}+\frac{\rho}{\sigma_{\varepsilon}}\varepsilon_{t}b_{n}-%
\Bigl(1-(1-\rho^{2})^{1/2}\Bigr)\bigl(b_{n}^{\top}u_{t}^{\ast }\bigr)b_{n}, 
\end{equation*}
where $u_{t}^{\ast}\sim N(\mathbf{0}_{n},\mathbf{I}_{n})$ is generated
together with $\varepsilon_{t}$ and $\eta_{t}$ in the same way as $u_{t}$ in
(\ref{MC_Shocks}).

By construction, $b_{n}\in \func{col}(A)$ implies $b_{n}^{\top }\mathbf{1}%
_{n}=0\ $and$\ b_{n}^{\top}\lambda_{-1}=\mathbf{0}_{r}^{\top}$, so $b_{n}$
is orthogonal to $\func{col}(\lambda)$. It is straightforward to verify that
for any $\rho \in \lbrack0,1)$, the joint distribution of $%
(u_{t},\varepsilon_{t})$ remains Gaussian with 
\begin{equation*}
\func{Var}(u_{t})=\mathbf{I}_{n}\qquad \text{and}\qquad \func{Cov}%
(u_{t},\varepsilon_{t})=\rho \sigma_{\varepsilon}b_{n}. 
\end{equation*}
When $\rho=0$, the design reduces to the correctly specified benchmark
design used above to study the finite-sample properties of the GIV
estimators and inference procedures. When $\rho>0$, however, $u_{t}$ and $%
\varepsilon_{t}$ become correlated, with heterogeneous correlation patterns
determined by $b_{n}$. As a result, the orthogonality conditions underlying
the GIVs are violated, rendering the GIVs invalid. Moreover, the degree of
misspecification increases linearly with $\rho$.

We use the six $(n,r)$ combinations in (\ref{MC_nr_pair}) together with the
three sample sizes $T$ considered above, vary $\rho$ over the grid $\rho
_{j}=0.02j\ $for $j=0,\ldots,20$, and conduct $10{,}000$ simulation
replications for each $(n,r,T,\rho)$ cell under each design. The
significance level of the $J$-test is set at $0.05$, and the resulting
empirical rejection probabilities are reported in Figure~\ref%
{fig:j_power_baseline} of the next subsection.

\begin{table}[!t]
\caption{Root Mean Squared Error of the GIV Estimators}
\label{tab:rmse}\centering
{\small \setlength{\tabcolsep}{4.5pt} \renewcommand{\arraystretch}{1.05} 
\begin{threeparttable}
		\begin{tabular*}{\textwidth}{@{\extracolsep{\fill}}cc cccc cccc@{}}
			\toprule
			& & \multicolumn{4}{c}{Baseline ($d_x = 0$)}
			& \multicolumn{4}{c}{Extended ($d_x = 3$)} \\
			\cmidrule(lr){3-6}\cmidrule(lr){7-10}
			& & \multicolumn{2}{c}{Oracle} & \multicolumn{2}{c}{Feasible}
			& \multicolumn{2}{c}{Oracle} & \multicolumn{2}{c}{Feasible} \\
			\cmidrule(lr){3-4}\cmidrule(lr){5-6}\cmidrule(lr){7-8}\cmidrule(lr){9-10}
			$(n, r)$ & $T$
			& $\phi$ & $\psi$ & $\phi$ & $\psi$
			& $\phi$ & $\psi$ & $\phi$ & $\psi$ \\
			\midrule
			\multirow{3}{*}{$(5, 1)$}
			& 150 & 0.129 & 0.086 & 0.130 & 0.087 & 0.126 & 0.085 & 0.131 & 0.091 \\
			& 300 & 0.088 & 0.060 & 0.092 & 0.065 & 0.088 & 0.060 & 0.089 & 0.065 \\
			& 450 & 0.072 & 0.049 & 0.076 & 0.051 & 0.071 & 0.049 & 0.074 & 0.050 \\
			\addlinespace
			\multirow{3}{*}{$(5, 2)$}
			& 150 & 0.128 & 0.086 & 0.150 & 0.109 & 0.129 & 0.086 & 0.160 & 0.116 \\
			& 300 & 0.089 & 0.060 & 0.106 & 0.078 & 0.088 & 0.061 & 0.103 & 0.083 \\
			& 450 & 0.071 & 0.049 & 0.094 & 0.072 & 0.071 & 0.049 & 0.095 & 0.071 \\
			\addlinespace
			\multirow{3}{*}{$(8, 3)$}
			& 150 & 0.118 & 0.082 & 0.119 & 0.084 & 0.118 & 0.082 & 0.121 & 0.084 \\
			& 300 & 0.081 & 0.057 & 0.081 & 0.058 & 0.080 & 0.058 & 0.081 & 0.059 \\
			& 450 & 0.066 & 0.047 & 0.067 & 0.048 & 0.066 & 0.046 & 0.067 & 0.052 \\
			\addlinespace
			\multirow{3}{*}{$(8, 5)$}
			& 150 & 0.118 & 0.082 & 0.142 & 0.115 & 0.117 & 0.082 & 0.152 & 0.115 \\
			& 300 & 0.080 & 0.057 & 0.100 & 0.070 & 0.081 & 0.057 & 0.111 & 0.081 \\
			& 450 & 0.066 & 0.046 & 0.081 & 0.062 & 0.066 & 0.046 & 0.097 & 0.080 \\
			\addlinespace
			\multirow{3}{*}{$(10, 5)$}
			& 150 & 0.113 & 0.082 & 0.115 & 0.084 & 0.115 & 0.081 & 0.118 & 0.083 \\
			& 300 & 0.079 & 0.057 & 0.080 & 0.057 & 0.080 & 0.057 & 0.081 & 0.058 \\
			& 450 & 0.064 & 0.046 & 0.064 & 0.046 & 0.064 & 0.045 & 0.064 & 0.046 \\
			\addlinespace
			\multirow{3}{*}{$(10, 7)$}
			& 150 & 0.113 & 0.080 & 0.145 & 0.113 & 0.113 & 0.081 & 0.142 & 0.110 \\
			& 300 & 0.079 & 0.056 & 0.102 & 0.083 & 0.080 & 0.056 & 0.107 & 0.093 \\
			& 450 & 0.064 & 0.045 & 0.081 & 0.058 & 0.064 & 0.045 & 0.081 & 0.059 \\
			\bottomrule
		\end{tabular*}
			\begin{tablenotes}[flushleft]
				\footnotesize
				\item \textit{Notes.} This table reports the root mean squared errors of the GIV estimators based on
				$10{,}000$ Monte Carlo replications, separately for the oracle and feasible
				estimators, under the baseline design ($d_x=0$) and the extended design with
				$d_x=3$ exogenous regressors. The oracle estimator treats $\lambda$ (and
				$\beta$ when present) as known. The feasible estimator selects $\bar r$ using
				the BIC criterion in~\eqref{BIC}--\eqref{r_hat} and is computed according to
				Algorithm~1 in Online Appendix~\ref{APP_0}. Both estimators are evaluated using the
				same simulated data within each replication. To reduce the influence of extreme draws, the RMSE is computed using the truncated loss function $\min \{5,(\widehat{\alpha}-\alpha)^2\}$ for a generic estimator $\widehat{\alpha}$ of the parameter $\alpha$.
			\end{tablenotes}
		\end{threeparttable}
}
\end{table}

\subsection{Simulation Results \label{sec-mc2}}

Table \ref{tab:rmse} shows that the RMSEs of both the oracle and feasible
estimators of $\phi$ and $\psi$ decline substantially as the sample size $T$
increases. For example, in the baseline design with $(n,r)=(5,1)$, the RMSE
of the oracle estimator of $\phi$ decreases from $0.129$ at $T=150$, to $%
0.088$ at $T=300$, and further to $0.072$ at $T=450$. Similar improvements
are observed across all configurations and for both structural parameters.

The feasible GIV estimator closely tracks the oracle estimator throughout
the simulation designs. In most cases, the difference in RMSE between the
oracle and feasible estimators is negligible. For instance, in the baseline
design with $(n,r)=(10,5)$, the RMSEs of the feasible estimator $\hat{\phi}%
^{\ast }(\hat{A})$ are $0.115$, $0.080$, and $0.064$ at $T=150$, $300$, and $%
450$, respectively, compared with the corresponding oracle RMSEs of $0.113$, 
$0.079$, and $0.064$. Similar qualitative patterns are observed for the GIV
estimator of $\psi$, as well as for both estimators in the extended design
with exogenous regressors.

We next examine the size properties of the $t$-tests for $H_{0}:\phi=-0.5$
and $H_{0}:\psi=1.5$ at the $5\%$ significance level. The results, reported
in Table \ref{tab:t_size}, show that the empirical rejection probabilities
approach the nominal level as the sample size $T$ increases. When the sample
size is relatively small, i.e., $T=150$, the tests exhibit modest
over-rejection, with the distortion becoming more pronounced in
configurations involving a larger number of GIVs. For example, in the
baseline design with $(n,r)=(8,3)$, where there are four GIVs and eight
moment conditions, the empirical rejection probabilities of the $t$-tests
based on the feasible GIV estimators for $\phi$ and $\psi$ are $0.104$ and $%
0.073$, respectively, at $T=150$. By contrast, when $(n,r)=(8,5)$, where
only two GIVs are available, the corresponding rejection probabilities are $%
0.075$ and $0.061$, respectively. Similar patterns are observed in the
extended design.

The over-rejection observed in these $t$-tests does not appear to be
primarily driven by estimation error in the number of factors or in the null
space of $(\mathbf{1}_{n},\lambda)$, since the tests based on the oracle GIV
estimators, which do not require estimation of these nuisance parameters,
display similar finite-sample behavior. Instead, the over-rejection is
likely related to the well-known many-moment bias in two-step GMM
estimation, of which the GIV estimator is a special case; see, for example, %
\citet{HansenHeatonYaron1996} and \citet{newey2009generalized}. Several
approaches may help mitigate the resulting size distortion in small samples.
For example, instead of estimating $\phi$ and $\psi$ jointly, one may
estimate them using two separate GMM procedures. This reduces the number of
moment conditions used in each estimation problem, although it may sacrifice
some of the efficiency gains from joint GMM estimation. Another possibility
is to employ the continuously updated GMM estimator rather than the two-step
GMM estimator. While this approach may improve finite-sample inference, it
also introduces additional computational burden, since the continuously
updated GMM estimator does not admit a closed-form solution.

\begin{table}[!t]
\caption{Empirical Rejection Probabilities of the Two-sided $t$-tests}
\label{tab:t_size}\centering
{\small \setlength{\tabcolsep}{4.5pt} \renewcommand{\arraystretch}{1.05} 
\begin{threeparttable}
		\begin{tabular*}{\textwidth}{@{\extracolsep{\fill}}cc cccc cccc@{}}
			\toprule
			& & \multicolumn{4}{c}{Baseline ($d_x = 0$)}
			& \multicolumn{4}{c}{Extended ($d_x = 3$)} \\
			\cmidrule(lr){3-6}\cmidrule(lr){7-10}
			& & \multicolumn{2}{c}{Oracle} & \multicolumn{2}{c}{Feasible}
			& \multicolumn{2}{c}{Oracle} & \multicolumn{2}{c}{Feasible} \\
			\cmidrule(lr){3-4}\cmidrule(lr){5-6}\cmidrule(lr){7-8}\cmidrule(lr){9-10}
			$(n, r)$ & $T$
			& $\phi$ & $\psi$ & $\phi$ & $\psi$
			& $\phi$ & $\psi$ & $\phi$ & $\psi$ \\
			\midrule
			\multirow{3}{*}{$(5, 1)$}
			& 150 & 0.091 & 0.063 & 0.094 & 0.062 & 0.089 & 0.058 & 0.090 & 0.061 \\
			& 300 & 0.068 & 0.058 & 0.067 & 0.059 & 0.071 & 0.058 & 0.071 & 0.059 \\
			& 450 & 0.063 & 0.052 & 0.065 & 0.053 & 0.058 & 0.056 & 0.059 & 0.058 \\
			\addlinespace
			\multirow{3}{*}{$(5, 2)$}
			& 150 & 0.071 & 0.058 & 0.072 & 0.058 & 0.073 & 0.057 & 0.076 & 0.059 \\
			& 300 & 0.062 & 0.052 & 0.064 & 0.053 & 0.060 & 0.057 & 0.062 & 0.056 \\
			& 450 & 0.056 & 0.052 & 0.056 & 0.052 & 0.058 & 0.056 & 0.057 & 0.055 \\
			\addlinespace
			\multirow{3}{*}{$(8, 3)$}
			& 150 & 0.101 & 0.069 & 0.104 & 0.073 & 0.104 & 0.070 & 0.113 & 0.073 \\
			& 300 & 0.076 & 0.059 & 0.078 & 0.060 & 0.073 & 0.062 & 0.076 & 0.064 \\
			& 450 & 0.068 & 0.058 & 0.068 & 0.059 & 0.067 & 0.055 & 0.071 & 0.056 \\
			\addlinespace
			\multirow{3}{*}{$(8, 5)$}
			& 150 & 0.070 & 0.058 & 0.075 & 0.061 & 0.069 & 0.059 & 0.075 & 0.061 \\
			& 300 & 0.058 & 0.058 & 0.058 & 0.056 & 0.057 & 0.054 & 0.059 & 0.053 \\
			& 450 & 0.056 & 0.052 & 0.059 & 0.054 & 0.055 & 0.051 & 0.057 & 0.051 \\
			\addlinespace
			\multirow{3}{*}{$(10, 5)$}
			& 150 & 0.096 & 0.072 & 0.102 & 0.077 & 0.101 & 0.067 & 0.107 & 0.073 \\
			& 300 & 0.078 & 0.059 & 0.079 & 0.059 & 0.077 & 0.061 & 0.082 & 0.063 \\
			& 450 & 0.068 & 0.058 & 0.068 & 0.057 & 0.068 & 0.053 & 0.066 & 0.054 \\
			\addlinespace
			\multirow{3}{*}{$(10, 7)$}
			& 150 & 0.069 & 0.057 & 0.073 & 0.064 & 0.068 & 0.057 & 0.075 & 0.062 \\
			& 300 & 0.060 & 0.057 & 0.061 & 0.059 & 0.061 & 0.051 & 0.062 & 0.053 \\
			& 450 & 0.057 & 0.053 & 0.058 & 0.056 & 0.055 & 0.053 & 0.057 & 0.054 \\
			\bottomrule
		\end{tabular*}
			\begin{tablenotes}[flushleft]
	\footnotesize
	\item \textit{Notes.} This table reports the empirical rejection rates of the two-sided $t$-tests of $H_{0}:\phi=-0.5$ and $H_{0}:\psi=1.5$ at the 5\% nominal significance level, using the asymptotic critical value $1.96$, based on $10{,}000$ Monte Carlo replications. Results are reported separately for the oracle and feasible estimators under both the baseline design ($d_x=0$) and the extended design with $d_x=3$ exogenous regressors. The oracle estimator treats $\lambda$ (and $\beta$ when present) as known. The feasible estimator selects $\bar{r}$ using the BIC criterion in \eqref{BIC}--\eqref{r_hat} and is implemented according to Algorithm~1 in Online Appendix~\ref{APP_0}.
\end{tablenotes}
		\end{threeparttable}
}
\end{table}

\begin{table}[!t]
\caption{Empirical Rejection Probabilities of the $J$-test}
\label{tab:j_size}\centering
{\small \setlength{\tabcolsep}{6pt} \renewcommand{\arraystretch}{1.05} 
\begin{threeparttable}
		\begin{tabular*}{\textwidth}{@{\extracolsep{\fill}}cc cc cc@{}}
			\toprule
			& & \multicolumn{2}{c}{Baseline ($d_x = 0$)}
			& \multicolumn{2}{c}{Extended ($d_x = 3$)} \\
			\cmidrule(lr){3-4}\cmidrule(lr){5-6}
			$(n, r)$ & $T$ & Oracle & Feasible & Oracle & Feasible \\
			\midrule
			\multirow{3}{*}{$(5, 1)$}
			& 150 & 0.067 & 0.071 & 0.068 & 0.068 \\
			& 300 & 0.060 & 0.062 & 0.058 & 0.059 \\
			& 450 & 0.053 & 0.052 & 0.054 & 0.055 \\
			\addlinespace
			\multirow{3}{*}{$(5, 2)$}
			& 150 & 0.060 & 0.061 & 0.059 & 0.061 \\
			& 300 & 0.056 & 0.058 & 0.052 & 0.055 \\
			& 450 & 0.056 & 0.057 & 0.058 & 0.057 \\
			\addlinespace
			\multirow{3}{*}{$(8, 3)$}
			& 150 & 0.074 & 0.082 & 0.073 & 0.082 \\
			& 300 & 0.064 & 0.064 & 0.062 & 0.067 \\
			& 450 & 0.058 & 0.058 & 0.060 & 0.061 \\
			\addlinespace
			\multirow{3}{*}{$(8, 5)$}
			& 150 & 0.060 & 0.064 & 0.055 & 0.063 \\
			& 300 & 0.057 & 0.058 & 0.055 & 0.058 \\
			& 450 & 0.050 & 0.050 & 0.053 & 0.057 \\
			\addlinespace
			\multirow{3}{*}{$(10, 5)$}
			& 150 & 0.069 & 0.080 & 0.074 & 0.083 \\
			& 300 & 0.055 & 0.060 & 0.064 & 0.069 \\
			& 450 & 0.056 & 0.061 & 0.053 & 0.058 \\
			\addlinespace
			\multirow{3}{*}{$(10, 7)$}
			& 150 & 0.060 & 0.065 & 0.055 & 0.063 \\
			& 300 & 0.054 & 0.058 & 0.052 & 0.057 \\
			& 450 & 0.054 & 0.056 & 0.053 & 0.054 \\
			\bottomrule
		\end{tabular*}
			\begin{tablenotes}[flushleft]
	\footnotesize
	\item \textit{Notes.} This table reports the empirical rejection rates of the $J$-test at the $5\%$
	nominal significance level based on $10{,}000$ Monte Carlo replications.
	Results are reported separately for the oracle and feasible estimators under
	both the baseline design ($d_x=0$) and the extended design with
	$d_x=3$ exogenous regressors. For the oracle estimator, the reference
	distribution is $\chi^{2}_{2(n-r-2)}$. For the feasible estimator, the
	reference distribution is the replication-specific
	$\chi^{2}_{2(n-\widehat{r}-1)}$. The oracle estimator treats $\lambda$ (and
	$\beta$ when present) as known. The feasible estimator selects $\bar r$ using
	the BIC criterion in~\eqref{BIC}--\eqref{r_hat} and is implemented according
	to Algorithm~1 in Online Appendix~\ref{APP_0}.
\end{tablenotes}
		\end{threeparttable}
}
\end{table}

\begin{figure}[!t]
\caption{Empirical Power of the $J$-test: Baseline Design}
\label{fig:j_power_baseline}\centering
\includegraphics[width=0.85%
\textwidth]{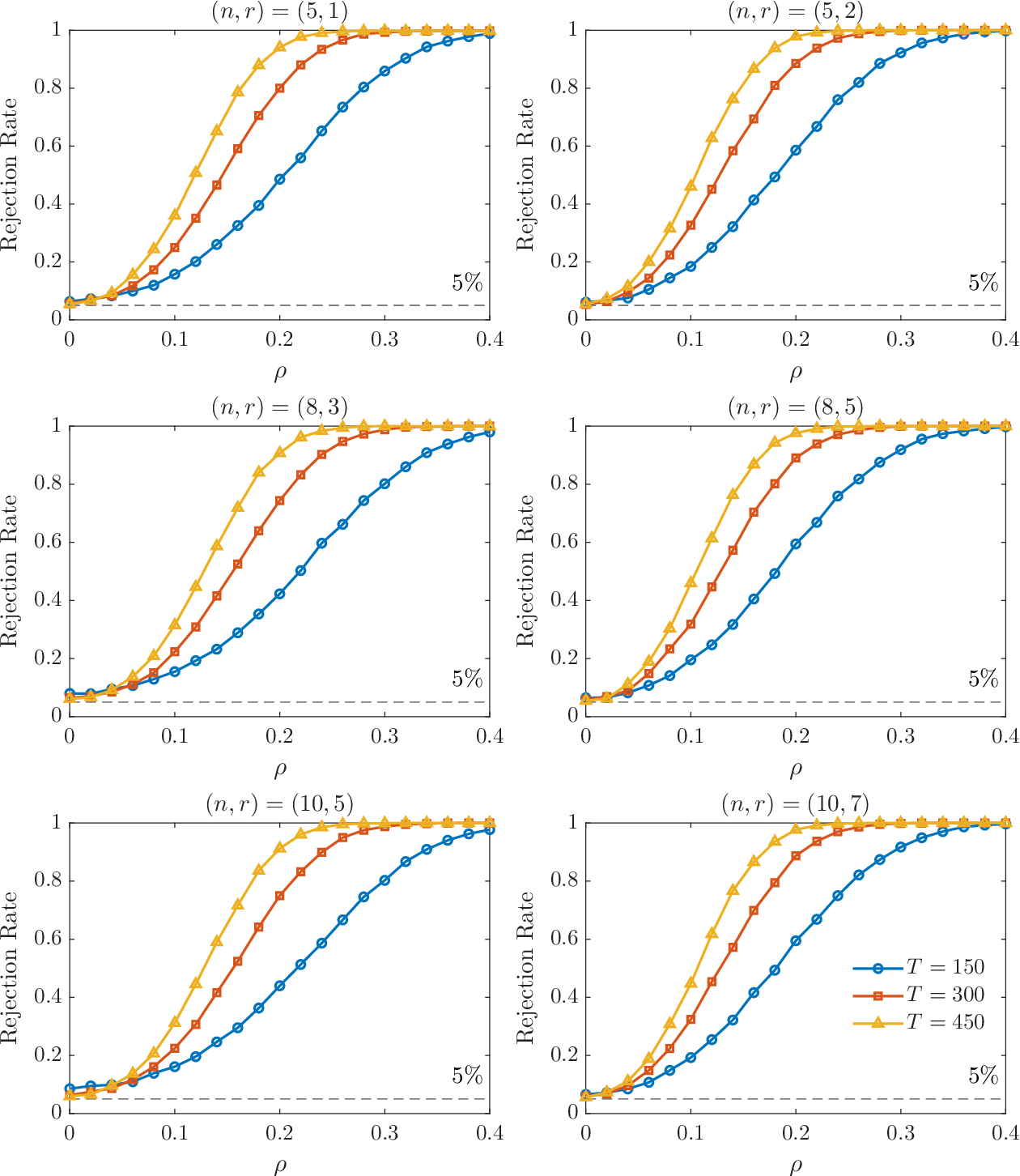} 
\begin{minipage}{0.95\textwidth}
	\footnotesize
	\noindent \textit{Notes.} This figure plots the empirical rejection probabilities of the $J$-test at the 5\% nominal significance level in the baseline design ($d_x=0$), based on $10{,}000$ Monte Carlo replications. Each panel corresponds to one of the six configurations $(n,r)$ listed in~\eqref{MC_nr_pair}, and within each panel three curves correspond to the sample sizes $T\in \{150,300,450\}$. The horizontal axis is $\rho$ and the vertical axis is the rejection rate. The horizontal dashed line marks the nominal $5\%$ level. Power is reported for the feasible GIV estimator, which selects $\bar{r}$ using the BIC criterion in~\eqref{BIC}--\eqref{r_hat} and is implemented according to Algorithm~1 in Online Appendix~\ref{APP_0}.
\end{minipage}
\end{figure}

Finally, we examine the performance of the $J$-test for assessing the
validity of the moment conditions constructed using the GIVs. Table~\ref%
{tab:j_size} reports the empirical rejection probabilities at the $5\%$
nominal significance level under correct specification. The size behavior is
broadly similar to that of the $t$-tests reported in Table~\ref{tab:t_size}:
the empirical rejection probabilities approach the nominal level as the
sample size $T$ increases, while modest over-rejection is observed in small
samples, particularly when the number of moment conditions is relatively
large. For example, in the baseline design with $(n,r)=(10,5)$, the
empirical rejection probabilities of the oracle and feasible $J$-tests are $%
0.069$ and $0.080$, respectively, at $T=150$. In the corresponding extended
design, the rejection probabilities are $0.074$ and $0.083$, respectively.
Similar patterns are observed for $(n,r)=(8,3)$, where the feasible
rejection probabilities are $0.082$ in both the baseline and extended
designs at $T=150$. As the sample size increases, the empirical rejection
probabilities move steadily toward the nominal $5\%$ level across all
configurations.

As discussed earlier for the $t$-tests, the observed small-sample
over-rejection does not appear to be primarily driven by estimation error in
the number of factors or in the null space of $(\mathbf{1}_{n},\lambda)$,
since the oracle and feasible procedures display very similar finite-sample
behavior. Instead, the distortion is likely related to the many-moment
nature of the GMM problem.

We next examine the power of the feasible $J$-test under misspecification.
Figure~\ref{fig:j_power_baseline} plots the empirical rejection
probabilities of the $J$-test as a function of $\rho$, which controls the
severity of the violation of the GIV moment conditions in the baseline
design.\footnote{%
The corresponding results for the extended design are reported in Online
Appendix~\ref{APP_6} and exhibit similar qualitative patterns.} By
construction, each power curve begins near the nominal $5\%$ level when $%
\rho=0$. The rejection probabilities then increase monotonically with $\rho$%
, and the power curves become substantially steeper as the sample size $T$
increases. For example, in the baseline design with $(n,r)=(10,7)$, the test
achieves approximately $80\%$ power at $\rho \approx0.29$ when $T=150$, $%
\rho \approx0.22$ when $T=300$, and $\rho \approx0.16$ when $T=450$. Similar
patterns are observed across all configurations. The feasible $J$%
-test exhibits good power in detecting moderate violations of the moment
restrictions at empirically relevant sample sizes.

\section{Empirical Application: Aggregate Market Multiplier\label{sec:emp}}

A central question in asset pricing is how strongly the aggregate stock
market responds to shifts in investor demand for equities. This response is
summarized by the aggregate market multiplier, denoted by $%
\kappa\equiv-\phi^{-1}$, where $\phi$ is the aggregate demand elasticity.
Economically, $\kappa$ measures the change in aggregate equity value induced
by a one-dollar demand shock. The aggregate demand elasticity has become a
central object of interest in asset pricing and macro-finance because
investor demand, portfolio reallocation, and market segmentation can have
important effects on equilibrium asset prices; see, among others, %
\citet{piazzesi2007asset}, \citet{koijen2019demand}, and %
\citet{gabaix2021search}. In frictionless benchmark models, aggregate demand
is highly elastic, implying a multiplier close to zero. By contrast, the
inelastic-markets hypothesis predicts a substantially larger value of $\kappa
$.

The size of the aggregate multiplier remains controversial. Standard
asset-pricing models imply a macro elasticity of roughly 10 to 20,
corresponding to a multiplier of only 0.05 to 0.1.\footnote{%
See Appendices F and I of \citet{gabaix2021search} for computations of the
macro elasticity implied by the model of \citet{lucas1978asset}, the
rare-disaster models of \citet{barro2006rare} and \citet{gabaix2012variable}%
, and the long-run risks model of \citet{bansal2004risks}.} Yet empirical
estimates of stock-level, factor-level, and aggregate demand elasticities
generally point to substantially less elastic demand. For example, %
\citet{lou2012flow} estimates a stock-level multiplier of about 1.2, %
\citet{pavlova2023benchmarking} report multipliers between 0.3 and 0.5, and \citet{gabaix2021search} report substantially larger multipliers at the aggregate level. These
findings are consistent with the economic intuition that aggregate equity
demand should be less elastic than demand for individual stocks, since
stocks are closer substitutes for one another than for alternative asset
classes such as bonds. At the same time, the estimated multipliers are an
order of magnitude larger than those implied by standard asset-pricing
models, posing a challenge for conventional theories of asset demand.
Resolving this discrepancy requires credible identification of the aggregate
demand elasticity, a key parameter for quantifying the effects of capital
flows, institutional demand shocks, and policy interventions on asset prices.

Obtaining such identification is challenging because prices and quantities
are jointly determined in equilibrium. Demand shocks affect prices, while
prices simultaneously enter investors' demand equations, rendering simple
regressions of demand on prices generally inconsistent. Building on the
demand-based asset-pricing framework of \citet{koijen2019demand}, %
\citet{gabaix2021search} address this endogeneity problem by exploiting the
latent-factor structure in investor demand and estimate aggregate
multipliers ranging from 4.73 to 5.85, with a median estimate close to five.
Their approach, however, relies on consistent estimation of the latent
demand factors and therefore on asymptotic arguments in which the
cross-sectional dimension diverges; see, for example, %
\citet{bai2003inferential}. In this section, we revisit the aggregate
multiplier using the GIV framework developed in Section~\ref{sec: G_model},
which permits valid estimation and inference without requiring the number of
sectors to grow with the sample size.

Following \cite{gabaix2021search}, we study the aggregate multiplier through
a demand system for U.S. equity holdings. Investors are grouped into $n$
equity-holding sectors.\ Let $\Delta q_{i,t}$ denote the fractional
quarterly change in investor $i$'s equity holdings at quarter $t$, with its
empirical counterpart defined in\ \eqref{eq:dq} below; see Online Appendix %
\ref{subsec:variable} for construction details. The demand equation is given
by 
\begin{equation}
\Delta q_{i,t}=\phi \Delta p_{t}+\lambda_{i}^{\top}\eta_{t}+u_{i,t} , 
\label{eq:gk-demand}
\end{equation}
where $\Delta p_{t}$ denotes the quarterly equity market return, $%
\eta_{t}\in \mathbb{R}^{r}$ represents latent aggregate demand factors, and $%
u_{i,t}$ is an idiosyncratic demand shock. The parameter $\phi$ captures the
aggregate demand elasticity and is the primary object of interest.\footnote{
	Unlike \citet{gabaix2021search}, we do not explicitly include aggregate macroeconomic variables, such as GDP growth, in (\ref{eq:gk-demand}). Aggregate variables that enter the demand equation with homogeneous loadings across sectors are eliminated by the orthogonality condition defining the GIVs, since the instruments are constructed to be orthogonal to $\mathbf{1}_n$. Aggregate variables with heterogeneous loadings are absorbed into the latent factor component $\lambda\eta_t$. Under the maintained assumption that the idiosyncratic shocks $u_{i,t}$ are orthogonal to these aggregate components, their omission does not affect identification, estimation, or inference for the demand elasticity $\phi$ and the aggregate multiplier $\kappa$.
}

Because the aggregate supply of equity is approximately fixed in the short
run, market clearing implies that size-weighted net demand equals zero: 
\begin{equation}
q_{S,t}\equiv S_{t}^{\top}\Delta q_{t}=0,   \label{eq:gk-market_clearing}
\end{equation}
where $\Delta q_{t}=(\Delta q_{i,t})_{i\le n}$ and $S_{t}$ denotes the
vector of predetermined market shares. Combining \eqref{eq:gk-demand} and %
\eqref{eq:gk-market_clearing} yields the equilibrium price equation 
\begin{equation}
\Delta p_{t} = \kappa \left( S_{t}^{\top}\lambda \eta_{t}+u_{S,t}\right),   \label{eq:gk-equil}
\end{equation}
where $u_{S,t}\equiv S_{t}^{\top}u_{t}$. Thus, the aggregate multiplier $%
\kappa$ measures the equilibrium price response to aggregate demand shocks,
with a less elastic demand (smaller $|\phi|$) corresponding to a larger multiplier.

Our data are drawn from the Financial Accounts of the United States,
Table~L.224, which reports the equity holdings of major investor sectors.%
\footnote{%
We use the June 2026 vintage of the Financial Accounts, in  which corporate
equity holdings by sector are reported in Table~L.224.} Following \cite%
{gabaix2021search}, our benchmark analysis uses the sample period
1993Q1--2018Q4 and includes twelve sectors that hold U.S. equities
continuously throughout this period. Table~\ref{tab:sectors} in Online
Appendix~\ref{app:sectors} lists these sectors together with their average
market shares. In the data, the fractional change in sector $i$'s equity
holdings is measured as 
\begin{equation}
\Delta q_{i,t}\equiv \frac{w_{i,t}}{w_{i,t-1}R_{t}}-1,   \label{eq:dq}
\end{equation}
where $w_{i,t}$ denotes the value of sector $i$'s equity holdings, $R_{t}
$ is the gross capital-appreciation return on the aggregate stock market,
and $\Delta p_{t}$ is measured by the quarterly simple return on the CRSP
value-weighted index excluding dividends.\footnote{%
Online Appendix~\ref{subsec:variable} details  the construction of $\Delta
q_{i,t}$ and $\Delta p_{t}$, and Online  Appendix~\ref{subsec:sector}
describes the sector classification and market-share weights $S_{t}$.
Following  \cite{gabaix2021search}, pooled sector-level demand growth is
winsorized at  the 5th and 95th percentiles.} Beyond the benchmark sample,
we consider an extended sample spanning 1988Q4--2025Q4, which is the longest
period over which all twelve sectors are continuously observed.

\begin{table}[!t]
\caption{Benchmark Estimates of the Aggregate Equity Market Multiplier}
\label{tab:multiplier}
\centering
{\small \setlength{\tabcolsep}{6pt} \renewcommand{\arraystretch}{1.05}  
\begin{threeparttable}
			\begin{tabular*}{\textwidth}{@{\extracolsep{\fill}}l cc cc@{}}
				\toprule
				& \multicolumn{2}{c}{1993Q1 -- 2018Q4} & \multicolumn{2}{c}{1988Q4 -- 2025Q4} \\
				\cmidrule(lr){2-3}\cmidrule(lr){4-5}
				& $n=12$ & $n=6$ & $n=12$ & $n=6$ \\
				\midrule
				OLS & $5.42^{***}$ & $10.05^{***}$ & $5.66^{***}$  & $11.42^{***}$ \\
				& (0.43)       & (0.78)        & (0.52)        & (1.11)        \\
				FIV & $4.42^{***}$ & $-7.13^{**}$ &  $4.39^{***}$ &  $-2.15$ \\
				&  (0.87)      & (2.83)  & (1.08)       & (2.76) \\
				GIV & $5.05^{***}$ & $8.70^{***}$  & $4.46^{***}$  & $9.42^{***}$  \\
				& (0.30)       & (0.82)        & (0.30)        & (0.93)        \\
				\midrule
				$J$-test ($p$-value)   & $< 0.001$ & 0.817 & $<0.001$ & 0.594 \\
				Estimated $\bar{r}$ & 1     & 1     & 1     & 1     \\
				$T$                  & 104   & 104   & 149   & 149   \\
				\bottomrule
			\end{tabular*}
			\begin{tablenotes}[flushleft]
				\footnotesize
				\item \textit{Notes.} The table reports estimates of the aggregate equity market multiplier, $\widehat{\kappa}=-1/\widehat{\phi}$, for the full twelve-sector panel ($n=12$) and the six-sector granular core ($n=6$). OLS is obtained from a regression of equally weighted demand, $n^{-1}\sum_{i\le n}\Delta q_{i,t}$, on the market return $\Delta p_t$. FIV denotes the factor-residual instrumental-variable estimator of \citet{gabaix2021search}, implemented using Algorithm~3 in Online Appendix~\ref{subsec:gk-replication} with two latent factors and four observed factors (GDP growth, size, value, and momentum). GIV denotes the demand-only granular-IV estimator based on the moment conditions in \eqref{Moment_phi}. The reported $J$-test $p$-value corresponds to the over-identification test, whose asymptotic null distribution is $\chi^2_{n-\bar r-1}$, where $\bar r$ is selected using the BIC criterion in \eqref{BIC}--\eqref{r_hat}. Standard errors are reported in parentheses and are computed using the Newey--West estimator. For the OLS and GIV estimators, $\widehat{\kappa}$ is obtained from $\widehat{\phi}$ through the transformation $\widehat{\kappa}=-1/\widehat{\phi}$ and the corresponding standard errors are computed using the delta method. Significance levels correspond to two-sided tests of $H_0:\kappa=0$. Significance levels: $^{***}\,1\%$, $^{**}\,5\%$, $^{*}\,10\%$.
			\end{tablenotes}
		\end{threeparttable}
}
\end{table}

For comparison, we also report estimates from the factor-residual IV (FIV) estimator of \citet{gabaix2021search}, together with the OLS estimator. The FIV estimator constructs instruments from estimated idiosyncratic demand shocks obtained after removing observed and latent demand factors.\footnote{Online Appendix~\ref{subsec:gk-replication} provides implementation details for the FIV estimator.} The OLS estimator is obtained from a regression of equally weighted demand, $n^{-1}\sum_{i\leq n}\Delta q_{i,t}$, on the market return $\Delta p_t$. Table~\ref{tab:multiplier} reports the resulting estimates and standard errors.

Across all specifications, the OLS estimate of the aggregate multiplier
exceeds its FIV and GIV counterparts. This pattern is consistent with the
endogeneity problem discussed earlier. Positive demand shocks increase both
holdings and prices, causing an uninstrumented regression to attribute part
of the demand shock to the price response. As a result, the demand
elasticity is biased toward zero and the implied multiplier is biased
upward. By exploiting granular demand variation that is orthogonal to common
demand factors, the GIV estimator corrects this source of bias.

When all twelve sectors are used to construct the GIVs, the estimated aggregate multiplier is $5.05$ in the benchmark sample 1993Q1--2018Q4 and $4.46$ in the extended sample 1988Q4--2025Q4, close to the corresponding FIV estimates. However, the over-identification test strongly rejects the associated moment restrictions in both samples, with $p$-values effectively equal to zero. One traditional interpretation of this result is that the instruments are invalid. Alternatively, following a perspective common in the treatment-effects literature, rejection of the over-identification test may reflect heterogeneity in the underlying causal parameters. From this perspective, the evidence suggests that the homogeneous-elasticity specification imposed on all twelve sectors is too restrictive and that demand elasticities may differ substantially across investor sectors.

To investigate this possibility, we restrict attention to the six largest
sectors, which together account for more than 97\% of total equity holdings
in the sample: households, mutual funds and ETFs, the foreign sector,
private pension funds, state and local pension funds, and life insurance
companies. Relative to the twelve-sector specification, both the OLS and GIV
estimates increase substantially. For example, the GIV estimate rises from $%
5.05$ to $8.70$ in the benchmark sample and from $4.46$ to $9.42$ in the
extended sample. More importantly, the over-identification test no longer
rejects, yielding $p$-values of $0.817$ and $0.594$ in the two samples.
These findings suggest that the six largest sectors exhibit more homogeneous
demand behavior and therefore provide a more credible basis for estimating a
common aggregate demand elasticity and the corresponding market multiplier.

\begin{table}[!t]
\caption{Heterogeneous Demand Multipliers by Sector}
\label{tab:hetero}
\centering
{\small \setlength{\tabcolsep}{6pt} \renewcommand{\arraystretch}{1.05}  
\begin{threeparttable}
			\begin{tabular*}{\textwidth}{@{\extracolsep{\fill}}l cc cc@{}}
				\toprule
				& \multicolumn{2}{c}{1993Q1--2018Q4} & \multicolumn{2}{c}{1988Q4--2025Q4} \\
				\cmidrule(lr){2-3}\cmidrule(lr){4-5}
				Sector & $\hat{\kappa}$ & $J$-test ($p$-value) & $\hat{\kappa}$ & $J$-test ($p$-value) \\
				\midrule
				Granular core ($n=6$)
				& $8.70^{***}$ & 0.817 & $9.42^{***}$ & 0.594 \\
				& (0.82)       &       & (0.93)       &       \\
				\addlinespace
				Property and casualty insurers
				& $3.57^{***}$ & 0.094 & $3.96^{***}$ & 0.927 \\
				& (0.31)       &       & (0.38)       &       \\
				\addlinespace
				Federal government retirement funds
				& $-18.15$ & 0.010 & $-17.32$     & $< 0.001$ \\
				& (9.06)       &       & (12.99)     &       \\
				\addlinespace
				State and local governments
				& $3.70^{***}$ & 0.419 & $3.65^{***}$ & 0.202 \\
				& (0.45)       &       & (0.33)       &       \\
				\addlinespace
				Closed-end funds
				& $6.63^{***}$ & 0.410 & $6.52^{***}$ & 0.356 \\
				& (2.08)       &       & (1.71)       &       \\
				\addlinespace
				Banks
				& $6.68^{***}$ & 0.347 & $10.68^{**}$ & 0.786 \\
				& (1.70)       &       & (4.29)       &       \\
				\addlinespace
				Broker-dealers
				& $-40.97$     & 0.064 & $-251.05$    & 0.011 \\
				& (90.85)      &       & (3058.61)    &       \\
				\bottomrule
			\end{tabular*}
			\begin{tablenotes}[flushleft]
				\footnotesize
				\item \textit{Notes.} The table reports sector-specific demand multipliers, $\hat{\kappa}_j=-1/\hat{\phi}_j$, estimated using the GIV moment conditions in \eqref{GIV_Form_n0_3}. The GIVs are constructed from the six largest equity-holding sectors (the granular core), which are assumed to share a common demand elasticity. For each sample period, the first row reproduces the corresponding six-sector estimate from Table~\ref{tab:multiplier}. Standard errors are reported in parentheses and are computed using the Newey--West estimator. The reported $p$-values correspond to the over-identification $J$-test. Rejection of the $J$-test indicates that the moment conditions constructed from the granular core are invalid for the corresponding sector. Standard errors for $\hat{\kappa}_j$ are computed using the delta method, and significance levels correspond to two-sided tests of $H_0:\kappa_j=0$. For sectors whose $J$-test rejects at the 5\% level, point estimates and standard errors are reported for completeness only, and statistical significance is not indicated. Significance levels: $^{***}\,1\%$, $^{**}\,5\%$, $^{*}\,10\%$.
			\end{tablenotes}
		\end{threeparttable}
}
\end{table}

The comparison between the FIV and GIV estimators highlights the importance
of the fixed-$n$ approach. When all twelve sectors are included, the two
estimators deliver qualitatively similar conclusions. The FIV estimates of
the aggregate multiplier are $4.42$ in the benchmark sample and $4.39$ in
the extended sample, close to the corresponding GIV estimates of $5.05$ and $%
4.46$, respectively, and broadly consistent with the estimates reported by %
\citet{gabaix2021search}. Both instrumental-variable estimators also yield
smaller multipliers than OLS, as expected from the endogeneity bias
discussed above.

The contrast becomes much sharper when attention is restricted to the
six-sector granular core. In this case, the FIV estimator produces negative
multiplier estimates of $-7.13$ and $-2.15$, whereas the GIV estimator
yields stable and economically meaningful estimates of $8.70$ and $9.42$.
This divergence reflects the different identification strategies underlying
the two procedures. The FIV estimator constructs instruments from estimated
idiosyncratic demand shocks obtained after removing latent factors through
principal-components analysis and therefore relies on consistent estimation
of those factors. With only six sectors, the cross-sectional dimension is
too small for this approach to be reliable. By contrast, the fixed-$n$ GIV
estimator does not require consistent estimation of latent factors and
remains valid when the number of sectors is small. The resulting GIV
estimates are therefore more credible for the six-sector specification,
which is favored by the over-identification test and appears consistent with
the homogeneous-elasticity restriction underlying the aggregate multiplier.

Table \ref{tab:hetero} sheds light on the source of the rejection of the
twelve-sector specification. Using the GIVs constructed from the six-sector
core, we estimate sector-specific demand elasticities, $\phi_{j}$, and the
corresponding implied market multipliers, $\kappa_{j}\equiv-\phi_{j}^{-1}$,
for the remaining sectors. Several sectors exhibit substantially smaller
multipliers than the six-sector core. Property and casualty insurers and
state and local governments have multipliers between $3.5$ and $4.0$, less
than half the core estimate. Closed-end funds have intermediate multipliers
of roughly $6.5$, while banks exhibit larger and less precisely estimated
multipliers. For all of these sectors, the $J$-test provides little evidence
against the validity of the moment conditions constructed from the
six-sector core.

The two remaining sectors, federal government retirement funds and
broker-dealers, exhibit markedly different behavior. Their estimated multipliers are negative in both samples, implying non-positive estimates of the corresponding demand elasticities. Moreover, these estimates are highly
imprecise and economically difficult to interpret. More importantly, the $J$%
-test rejects the validity of the moment conditions for federal government
retirement funds in both samples. For broker-dealers, the $J$-test is close
to rejection at the 5\% level in the shorter sample ($p$-value $=0.064$) and
is strongly rejected in the longer sample ($p$-value $=0.011$). Taken
together, these findings suggest that the moment conditions constructed from
the six-sector core do not provide valid identifying restrictions for these
two sectors. Consequently, the corresponding multiplier estimates should not
be given an economic interpretation.

Taken together, Tables ~\ref{tab:multiplier} and \ref{tab:hetero} provide
strong evidence that demand elasticities differ substantially across
investor sectors. Among the ten sectors for which the $J$-test does not
reject (treating broker-dealers as invalid given the decisive rejection in
the longer sample and the borderline $p$-value of $0.064$ in the shorter
sample), the estimated multipliers range from approximately $3.6$ to $10.7$,
with the upper end reflecting the imprecisely estimated bank sector. This
heterogeneity explains both the rejection of the twelve-sector specification
and the substantially smaller multiplier obtained when all sectors are
pooled together. By contrast, the six-sector core appears considerably more
homogeneous and yields a stable aggregate multiplier of roughly nine across
both sample periods. This estimate is an order of magnitude larger than the
frictionless benchmark multiplier of $0.05$ to $0.1$, providing evidence consistent with the inelastic-markets hypothesis.

\section{Conclusion\label{sec:conclusion}}

This paper develops an estimation and inference framework for structural
models identified by granular instrumental variables (GIVs). The key insight
is that, under suitable cross-sectional restrictions on the idiosyncratic
shocks, valid GIVs are characterized by the orthogonal complement of the
factor-loading space associated with latent aggregate shocks. This
characterization provides a transparent foundation for GIV-based
identification and allows demand and supply elasticities to be estimated
without conventional excluded instruments or direct observation of the
latent factors.

A central contribution of the paper is to show that the relevant orthogonal
complement can be identified and consistently estimated directly from the
covariance structure of the observables, without first estimating the latent
factors themselves. As a result, the proposed framework remains applicable
even when the number of entities is fixed and does not require the
cross-sectional dimension to diverge with the sample size. Building on this
result, we develop feasible procedures for estimation, inference, and
specification testing based on estimated GIVs. Monte Carlo evidence shows
that the feasible estimator performs similarly to an oracle procedure that
knows the true factor-loading space, while the empirical application yields
evidence consistent with highly inelastic aggregate equity demand.

The analysis also clarifies several identification issues that arise when
factor loadings are unknown. In particular, we show that certain
restrictions commonly imposed in existing implementations of the GIV
methodology should be interpreted as substantive assumptions on the latent
factors rather than innocuous normalizations. We further show that
identification can fail for existing moment-condition approaches when the
factor-loading space is unknown. The framework developed here avoids these
difficulties by focusing directly on the geometry of the factor-loading
space and exploiting the eigenspace structure of the covariance matrix of
the observables.

More broadly, the results demonstrate that cross-sectional heterogeneity can
be used not only to construct granular instruments but also to conduct valid
estimation, inference, and specification testing in models with latent
aggregate shocks. We hope that the framework developed in this paper will
facilitate the use of GIV methods in empirical work and provide a foundation
for future research in settings where latent aggregate forces and granular
heterogeneity interact.

\bigskip

{\small
\input{Granular_IV_v7_3.bbl}

}

\clearpage
\counterwithin{assumption}{section}
\counterwithin{theorem}{section}
\counterwithin{lemma}{section}
\counterwithin{proposition}{section}
\counterwithin{corollary}{section}
\counterwithin{equation}{section}
\counterwithin{table}{section}
\counterwithin{figure}{section}
\renewcommand{\thetable}{\Alph{section}\arabic{table}}
\appendix

\begin{center}
{\huge Online Appendix}

\bigskip
\end{center}

\setcounter{page}{1}
\renewcommand{\thepage}{A-\arabic{page}}
\setcounter{equation}{0}
\setcounter{theorem}{0}
\setcounter{lemma}{0}
\setcounter{proposition}{0}
\setcounter{corollary}{0}

This Online Appendix contains supplementary theoretical, computational, and empirical material for ``Granular Instrumental Variables: Estimation and Inference.'' Section \ref{APP_0} provides implementation details for the proposed GIV procedures. Sections \ref{APP_2} -- \ref{APP_5} contain proofs of the theoretical results and additional technical derivations. Section \ref{APP_6} reports supplementary Monte Carlo evidence, and \ref{app:sectors} presents additional empirical results. References cited only in the Online Appendix are collected at the end of this document.

\section{Implementation Details\label{APP_0}}

\noindent \textsc{Algorithm 1 (GIV Estimation with Balanced Data)}

\noindent \textbf{Step 0.} If exogenous regressors $x_t$ are present, compute the LS estimator
\[
\hat{\beta}\equiv \Bigl(\sum_{t\leq T}\tilde{x}_{t}^{\top}\tilde{x}%
_{t}\Bigr)^{-1}\Bigl(\sum_{t\leq T}\tilde{x}_{t}^{\top}\tilde{y}_{t}\Bigr),
\]
and define $\hat{y}_{t}\equiv y_{t}-x_{t}\hat{\beta}$. Otherwise, set $\hat
{y}_{t}=y_{t}$. If $\bar r$ is unknown, estimate it using the BIC criterion in (\ref{BIC})--(\ref{r_hat}) and replace $\bar r$ by the resulting estimator $\hat r$ throughout the algorithm. 

\noindent \textbf{Step 1.} Compute $y_{S,t}\equiv S_{t}^{\top}y_{t}\ $and
$y_{e,t}\equiv e^{\top}\hat{y}_{t}$, where $\hat{y}_{t}\equiv(\hat{y}%
_{i,t})_{i\leq n}$, $S_{t}=(s_{i,t})_{i\leq n}$, and $e\equiv n^{-1}%
\mathbf{1}_{n}$. Compute
\[
Q_{-1}^{\top}\Bigl(T^{-1}\sum_{t\leq T}\hat{y}_{t}\hat{y}_{t}^{\top
}\Bigr)Q_{-1},
\]
and obtain ordered eigenvalues $\{ \hat{\mu}_{j}\}_{j\leq n-1}$ such that
$\hat{\mu}_{j}\leq \hat{\mu}_{j+1}$. Define
\[
\hat{\Upsilon}\equiv Q_{-1}\hat{A}_{0,\bot}\bigl(\hat{\sigma}_{u}%
^{2}\mathbf{I}_{\bar{r}-1}-\hat{\Lambda}_{\bot}\bigr)^{-1}\hat{A}_{0,\bot
}^{\top}Q_{-1}^{\top},
\]
where $\hat{\sigma}_{u}^{2}\equiv \hat{\mu}_{1}$, $\hat{\Lambda}_{\bot}%
\equiv \mathrm{diag}((\hat{\mu}_{j})_{n-\bar{r}+1\leq j\leq n-1})$, and
$\hat{A}_{0}$ and $\hat{A}_{0,\bot}$ collect the eigenvectors associated with
the smallest $n-\bar{r}$ and largest $\bar{r}-1$ eigenvalues, respectively.

\noindent \textbf{Step 2.} Set $\hat{A}=Q_{-1}\hat{A}_{0}$ and compute the
preliminary GIV estimator $\hat{\theta}_{0}(\hat{A})$ using (\ref{GIV_Form_1})
together with the appropriate moment specification:%
\[%
\begin{array}
[c]{cc}%
\text{Without }w_{t}\text{:} & \text{use (\ref{GIV_Form_2}) and }W_{0,T}%
(\hat{A})=\mathbf{I}_{2(n-\bar{r})}\\
\text{With }w_{t}\text{:} & \text{use (\ref{GIV_Form_4}) and }W_{0,T}(\hat
{A})=\mathbf{I}_{2(n-\bar{r})+d_{w}}%
\end{array}
.
\]

\noindent \textbf{Step 3.} Construct residuals
\[
\hat{v}_{t}\equiv y_{e,t}-\hat{\phi}_{0}(\hat{A})p_{t},
\]
and
\[
\hat{\varepsilon}_{t}\equiv%
\begin{cases}
p_{t}-\hat{\psi}_{0}(\hat{A})y_{S,t}, & \text{without }w_{t},\\
p_{t}-\hat{\psi}_{0}(\hat{A})y_{S,t}-w_{t}^{\top}\hat{\gamma}_{0}(\hat{A}), &
\text{with }w_{t}.
\end{cases}
\]
For $b\in \{v,\varepsilon \}$, define%

\begin{equation}
\hat{\xi}_{b,t}\equiv \hat{y}_{t}\hat{b}_{t}-T^{-1}\sum_{t\leq T}\hat{y}%
_{t}\hat{b}_{t}+\Bigl(\hat{y}_{t}\hat{y}_{t}^{\top}-T^{-1}\sum_{t\leq T}%
\hat{y}_{t}\hat{y}_{t}^{\top}\Bigr)\hat{\Upsilon}\Bigl(T^{-1}\sum_{t\leq
T}\hat{y}_{t}\hat{b}_{t}\Bigr). \label{Zeta_hat}%
\end{equation}

\noindent \textbf{Step 4.} Define
\[
\hat{\xi}_{t}\equiv%
\begin{cases}
(\hat{\xi}_{v,t}^{\top},\hat{\xi}_{\varepsilon,t}^{\top})^{\top}, &
\text{without }w_{t},\\
(\hat{\xi}_{v,t}^{\top},\hat{\xi}_{\varepsilon,t}^{\top},w_{t}^{\top}%
\hat{\varepsilon}_{t})^{\top}, & \text{with }w_{t}.
\end{cases}
\]
Compute
\begin{equation}
\hat{V}\equiv T^{-1}\sum_{t\leq T}\hat{\xi}_{t}\hat{\xi}_{t}^{\top
}-\Bigl(T^{-1}\sum_{t\leq T}\hat{\xi}_{t}\Bigr)\Bigl(T^{-1}\sum_{t\leq T}%
\hat{\xi}_{t}^{\top}\Bigr). \label{V_est}%
\end{equation}
Use a HAC version if $\xi_{t}$ is serially correlated.

\noindent \textbf{Step 5.} Compute the optimal GIV estimator $\hat{\theta
}^{\ast}(\hat{A})$ using (\ref{GIV_Form_1}) with weight matrix
\[
W_{\ast,T}(\hat{A})\equiv%
\begin{cases}
\bigl((\mathbf{I}_{2}\otimes \hat{A}^{\top})\hat{V}(\mathbf{I}_{2}\otimes
\hat{A})\bigr)^{-1}, & \text{without }w_{t},\\
\bigl(\mathrm{diag}(\hat{A}^{\top},\hat{A}^{\top},\mathbf{I}_{d_{w}})\hat
{V}\mathrm{diag}(\hat{A},\hat{A},\mathbf{I}_{d_{w}})\bigr)^{-1}, & \text{with
}w_{t}.
\end{cases}
\]

\noindent \textbf{Step 6.} Report standard errors
\[
\left(  \mathrm{diag}\! \left(  \bigl(TD_{1,T}(\hat{A})^{\top}W_{\ast,T}%
(\hat{A})D_{1,T}(\hat{A})\bigr)^{-1}\right)  \right)  ^{1/2},
\]
and the $p$-value based on the $J$-statistic and the CDF of $\chi^{2}\bigl(2(n-\bar{r}-1)\bigr).$

\bigskip

\noindent \textsc{Algorithm 2 (GIV Estimation with Unbalanced Data)}

\noindent \textbf{Step 1.} If $\bar r$ is unknown, estimate it using the BIC
criterion in (\ref{BIC})--(\ref{r_hat}); otherwise treat $\bar r$ as known. Compute $y_{S,t}\equiv S_{t}^{\top}y_{t}\ $and
$y_{e,t}\equiv e_{t}^{\top}y_{t}$,\ where $y_{t}\equiv(y_{i,t})_{i\leq n_{t}}%
$,\ $S_{t}=(s_{i,t})_{i\leq n_{t}}$ and $e_{t}\equiv n_{t}^{-1}\mathbf{1}%
_{n_{t}}$.

\noindent \textbf{Step 2.} Let $y_{t}^{0}$ denote the $n_{0}\times1$ subvector
of $y_{t}$ observed for all $t$. Let $Q_{0,-1}\equiv(q_{0,2},\ldots
,q_{0,n_{0}})$ be defined analogously to (\ref{ortho_q_j}) with $n$ replaced
by $n_{0}$. Compute
\[
Q_{0,-1}^{\top}\Bigl(T^{-1}\sum_{t\leq T}y_{t}^{0}y_{t}^{0\top}\Bigr)Q_{0,-1}%
,
\]
and obtain its ordered eigenvalues $\{ \hat{\mu}_{0,j}\}_{j\leq n_{0}-1}$.
Obtain
\[
\hat{\Upsilon}_{0}\equiv Q_{0,-1}\hat{A}_{0,\bot}\bigl(\hat{\sigma}_{u}%
^{2}\mathbf{I}_{\bar{r}_{0}-1}-\hat{\Lambda}_{0,\bot}\bigr)^{-1}\hat
{A}_{0,\bot}^{\top}Q_{0,-1}^{\top},
\]
where $\hat{\sigma}_{u}^{2}\equiv \hat{\mu}_{0,1}$, $\hat{\Lambda}_{0,\bot
}\equiv \mathrm{diag}\bigl((\hat{\mu}_{0,j})_{n_{0}-\bar{r}_{0}+1\leq j\leq
n_{0}-1}\bigr)$,\ and $\hat{A}_{0}$ and $\hat{A}_{0,\bot}$ collect the
eigenvectors associated with the smallest $n_{0}-\bar{r}_{0}$ eigenvalues and
the largest $\bar{r}_{0}-1$ eigenvalues, respectively.

\noindent \textbf{Step 3.} Set $\hat{A}=Q_{0,-1}\hat{A}_{0}$ and $W_{0,T}%
(\hat{A})=\mathbf{I}_{2(n_{0}-\bar{r}_{0})}$. Compute the preliminary GIV
estimator $\hat{\theta}_{0}(\hat{A})\equiv \bigl(\hat{\phi}_{0}(\hat{A}%
),\hat{\psi}_{0}(\hat{A})\bigr)^{\top}$ using (\ref{GIV_Form_n0_1}) and
(\ref{GIV_Form_n0_2}). Construct the residuals
\[
\hat{v}_{t}\equiv y_{e,t}-\hat{\phi}_{0}(\hat{A})p_{t},\qquad \hat{\varepsilon
}_{t}\equiv p_{t}-\hat{\psi}_{0}(\hat{A})y_{S,t},
\]
and define, for $b\in \{v,\varepsilon \}$,
\[
\hat{\xi}_{b,t}^{0}\equiv y_{t}^{0}\hat{b}_{t}-T^{-1}\sum_{t\leq T}y_{t}%
^{0}\hat{b}_{t}+\Bigl(y_{t}^{0}y_{t}^{0\top}-T^{-1}\sum_{t\leq T}y_{t}%
^{0}y_{t}^{0\top}\Bigr)\hat{\Upsilon}_{0}\Bigl(T^{-1}\sum_{t\leq T}y_{t}%
^{0}\hat{b}_{t}\Bigr).
\]

\noindent \textbf{Step 4.} Let $\hat{\xi}_{t}^{0}\equiv \bigl(\hat{\xi}%
_{v,t}^{0\top},\hat{\xi}_{\varepsilon,t}^{0\top}\bigr)^{\top}$.\ Define
\[
\hat{V}^{0}\equiv T^{-1}\sum_{t\leq T}\hat{\xi}_{t}^{0}\hat{\xi}_{t}^{0\top
}-\Bigl(T^{-1}\sum_{t\leq T}\hat{\xi}_{t}^{0}\Bigr)\Bigl(T^{-1}\sum_{t\leq
T}\hat{\xi}_{t}^{0}\Bigr)^{\top}.
\]
Use a HAC version of $\hat{V}^{0}$ if $\xi_{t}^{0}$ is serially correlated.

\noindent \textbf{Step 5.} Compute the optimal GIV estimator $\hat{\theta
}^{\ast}(\hat{A})$ using (\ref{GIV_Form_n0_1}) and (\ref{GIV_Form_n0_2}) with
$W_{0,T}(\hat{A})$ replaced by
\[
W_{\ast,T}(\hat{A})\equiv \bigl((\mathbf{I}_{2}\otimes \hat{A}^{\top})\hat
{V}^{0}(\mathbf{I}_{2}\otimes \hat{A})\bigr)^{-1}.
\]

\noindent \textbf{Step 6.} Report the standard errors
\[
\sqrt{\mathrm{Diag}\! \left(  \bigl(TD_{1,T}^{0}(\hat{A})^{\top}W_{\ast
,T}(\hat{A})D_{1,T}^{0}(\hat{A})\bigr)^{-1}\right)  },
\]
and the $p$-value based on $T\bar{g}_{T}^{0}(\hat{\theta}^{\ast}(\hat{A}%
);\hat{A})^{\top}W_{\ast,T}(\hat{A})\bar{g}_{T}^{0}(\hat{\theta}^{\ast}%
(\hat{A});\hat{A})$ and the CDF of $\chi^{2}\bigl(2(n_{0}-\bar{r}%
_{0}-1)\bigr).$

\section{Proof of the Main Results \label{APP_2}}

\noindent \textsc{Proof of Lemma \ref{L1_eq_moment}.} Multiplying both sides of
(\ref{eq_moment_1}) by $Q$ yields
\begin{equation}
\mathbb{E}[Q^{\top}(y_{t}-\phi p_{t}\mathbf{1}_{n})(y_{t}-\phi p_{t}%
\mathbf{1}_{n})^{\top}Q]=(\mathbb{E}[\eta_{t}^{2}]+2\sigma_{\eta u})Q^{\top
}\mathbf{1}_{n}\mathbf{1}_{n}^{\top}Q+\sigma_{u}^{2}\mathbf{I}_{n}.
\label{P_L1_eq_moment_1}%
\end{equation}
Since $Q$ is orthonormal, this can be written as
\begin{equation}
\mathbb{E}\! \left[
\begin{pmatrix}
n(y_{e,t}-\phi p_{t})^{2} & n^{1/2}(y_{e,t}-\phi p_{t})y_{t}^{\top}Q_{-1}\\
n^{1/2}Q_{-1}^{\top}y_{t}(y_{e,t}-\phi p_{t}) & Q_{-1}^{\top}y_{t}y_{t}^{\top
}Q_{-1}%
\end{pmatrix}
\right]  =%
\begin{pmatrix}
n(\mathbb{E}[\eta_{t}^{2}]+2\sigma_{\eta u})+\sigma_{u}^{2} & \mathbf{0}%
_{1\times(n-1)}\\
\mathbf{0}_{(n-1)\times1} & \sigma_{u}^{2}\mathbf{I}_{n-1}%
\end{pmatrix}
. \label{P_L1_eq_moment_2}%
\end{equation}
Matching the upper-right block, the lower-right block, and the $(1,1)$ entry
of the matrices on both sides of (\ref{P_L1_eq_moment_2}) yields the moment
conditions in (\ref{L1_eq_moment_d1})--(\ref{L1_eq_moment_d3}). The moment
conditions in (\ref{L1_eq_moment_s1})--(\ref{L1_eq_moment_s2}) can be obtained
similarly by multiplying $Q^{\top}$ on the left-hand side of both terms in
(\ref{eq_moment_2}).\hfill$Q.E.D.$

\bigskip

\noindent \textsc{Proof of Lemma \ref{L2_eq_moment}.} Multiplying
$(\bar{\lambda},\bar{\lambda}_{\bot})^{\top}$ from the left and $(\bar
{\lambda},\bar{\lambda}_{\bot})$ from the right on both sides of
(\ref{G_moment_1}) yields
\begin{align}
\mathbb{E}\! \left[  \bar{\lambda}^{\top}(y_{t}-\phi p_{t}\mathbf{1}_{n}%
)y_{t}^{\top}\bar{\lambda}_{\bot}\right]   &  =\mathbf{0}_{\bar{r}%
\times(n-\bar{r})},\label{P_L2_eq_moment_1}\\
\mathbb{E}\! \left[  \bar{\lambda}_{\bot}^{\top}y_{t}y_{t}^{\top}\bar{\lambda
}_{\bot}\right]  -\sigma_{u,t}^{2}\mathbf{I}_{n-\bar{r}}  &  =\mathbf{0}%
_{(n-\bar{r})\times(n-\bar{r})},\label{P_L2_eq_moment_2}\\
\mathbb{E}\! \left[  \bar{\lambda}^{\top}(y_{t}-\phi p_{t}\mathbf{1}%
_{n})(y_{t}-\phi p_{t}\mathbf{1}_{n})^{\top}\bar{\lambda}\right]   &
=\bar{\lambda}^{\top}\Big(\lambda \mathbb{E}[\eta_{t}\eta_{t}^{\top}%
]\lambda^{\top}+\lambda \Gamma_{\eta u,t}\mathbf{1}_{n}^{\top}+\mathbf{1}%
_{n}\Gamma_{\eta u,t}^{\top}\lambda^{\top}+\sigma_{u,t}^{2}\mathbf{I}%
_{n}\Big)\bar{\lambda}. \label{P_L2_eq_moment_3}%
\end{align}
By the definition of $\bar{\lambda}$,
\[
\bar{\lambda}^{\top}(y_{t}-\phi p_{t}\mathbf{1}_{n})y_{t}^{\top}\bar{\lambda
}_{\bot}=\left(
\begin{array}
[c]{c}%
n^{-1/2}\mathbf{1}_{n}^{\top}(y_{t}-\phi p_{t}\mathbf{1}_{n})y_{t}^{\top}%
\bar{\lambda}_{\bot}\\
\bar{\lambda}_{-1}^{\top}(y_{t}-\phi p_{t}\mathbf{1}_{n})y_{t}^{\top}%
\bar{\lambda}_{\bot}%
\end{array}
\right)  =\left(
\begin{array}
[c]{c}%
n^{1/2}(y_{e,t}-\phi p_{t})y_{t}^{\top}\bar{\lambda}_{\bot}\\
\bar{\lambda}_{-1}^{\top}y_{t}y_{t}^{\top}\bar{\lambda}_{\bot}%
\end{array}
\right)  .
\]
Therefore, (\ref{L2_eq_moment_1}) and (\ref{L2_eq_moment_2}) follow directly
from (\ref{P_L2_eq_moment_1}). Likewise, (\ref{L2_eq_moment_3}) and
(\ref{L2_eq_moment_4}) follow from (\ref{P_L2_eq_moment_2}) and
(\ref{P_L2_eq_moment_3}), respectively.

Next, multiplying $(\bar{\lambda},\bar{\lambda}_{\bot})^{\top}$ from the left
on both sides of (\ref{G_moment_2}) yields
\begin{align}
\mathbb{E}\! \left[  \bar{\lambda}^{\top}(p_{t}-\psi y_{S,t})(y_{t}-\phi
p_{t}\mathbf{1}_{n})\right]   &  =\bar{\lambda}^{\top}\lambda \mathbb{E}%
[\varepsilon_{t}\eta_{t}]+\sigma_{\varepsilon u,t}\bar{\lambda}^{\top
}\mathbf{1}_{n},\label{P_L2_eq_moment_4}\\
\mathbb{E}\! \left[  \bar{\lambda}_{\bot}^{\top}(p_{t}-\psi y_{S,t}%
)(y_{t}-\phi p_{t}\mathbf{1}_{n})\right]   &  =\mathbf{0}_{n-\bar{r}}.
\label{P_L2_eq_moment_5}%
\end{align}
Since
\[
\bar{\lambda}_{\bot}^{\top}(p_{t}-\psi y_{S,t})(y_{t}-\phi p_{t}\mathbf{1}%
_{n})=(p_{t}-\psi y_{S,t})\bar{\lambda}_{\bot}^{\top}y_{t},
\]
(\ref{L2_eq_moment_5}) follows from (\ref{P_L2_eq_moment_5}), while
(\ref{L2_eq_moment_6}) follows from (\ref{P_L2_eq_moment_4}). \hfill$Q.E.D.$

\bigskip

\noindent \textsc{Proof of Lemma\  \ref{ID_G_GIV_Weight}.} Let $\tilde{r}%
\equiv \mathrm{rank}(\tilde{\lambda})$. Since $\mathrm{rank}((\mathbf{1}%
_{n},M_{\mathbf{1}_{n}}\lambda))=\mathrm{rank}((\mathbf{1}_{n},\lambda))$ and
$\lambda^{\top}M_{\mathbf{1}_{n}}\mathbf{1}_{n}=\mathbf{0}_{r}$, we have
\begin{equation}
\tilde{r}=\mathrm{rank}((\mathbf{1}_{n},\lambda))-1=\bar{r}-1\text{. }
\label{P_ID_G_GIV_Weight_1}%
\end{equation}
This together with the assumption that $T^{-1}\sum_{t\leq T}\mathbb{E}%
[\eta_{t}\eta_{t}^{\top}]$ is nonsingular implies that the matrix
$\tilde{\lambda}\left(  T^{-1}\sum_{t\leq T}\mathbb{E}[\eta_{t}\eta_{t}^{\top
}]\right)  \tilde{\lambda}^{\top}$ has rank $\bar{r}-1$. Hence, for
eigenvectors in $\mathcal{B}_{\mathbf{1}_{n}}$, it has $\bar{r}-1$ strictly
positive eigenvalues, denoted by $\tilde{\mu}_{j}>0$ for $j=1,\ldots,\bar
{r}-1$, and $n-\bar{r}$ zero eigenvalues. For any $x\in \mathcal{B}%
_{\mathbf{1}_{n}}$, we have $x^{\top}M_{\mathbf{1}_{n}}x=1$, so restricting
(\ref{G_demand_factor}) to $\mathcal{B}_{\mathbf{1}_{n}}$ yields%
\[
x^{\top}\bar{\Sigma}_{\tilde{y}}x=x^{\top}\tilde{\lambda}\Big(T^{-1}%
\sum_{t\leq T}\mathbb{E}[\eta_{t}\eta_{t}^{\top}]\Big)\tilde{\lambda}^{\top
}x+\bar{\sigma}_{u}^{2}x^{\top}M_{\mathbf{1}_{n}}x=x^{\top}\tilde{\lambda
}\Big(T^{-1}\sum_{t\leq T}\mathbb{E}[\eta_{t}\eta_{t}^{\top}]\Big)\tilde
{\lambda}^{\top}x+\bar{\sigma}_{u}^{2}.
\]
Therefore, the restricted eigenvalues of $\bar{\Sigma}_{\tilde{y}}$ on
$\mathcal{B}_{\mathbf{1}_{n}}$ are given by $\tilde{\mu}_{j}+T^{-1}\sum_{t\leq
T}\sigma_{u,t}^{2}$, for $j=1,\ldots,\bar{r}-1$, and $\bar{\sigma}_{u}^{2}$
with multiplicity $n-\bar{r}$. Since $n>\bar{r}$, the minimum eigenvalue of
$\bar{\Sigma}_{\tilde{y}}$ on $\mathcal{B}_{\mathbf{1}_{n}}$ is $\bar{\sigma
}_{u}^{2}$ with multiplicity $n-\bar{r}$. This establishes
(\ref{G_demand_sigma_u}).

Finally, the minimizers of (\ref{G_demand_sigma_u}) are the unit eigenvectors
associated with the eigenvalue $\bar{\sigma}_{u}^{2}$, which are precisely the
vectors in $\mathcal{B}_{\mathbf{1}_{n}}$ orthogonal to the column space of
$\tilde{\lambda}$. Equivalently, they are orthogonal to both $\mathbf{1}_{n}$
and $\operatorname{col}(\tilde{\lambda})$. Since $\tilde{\lambda
}=M_{\mathbf{1}_{n}}\lambda$, we have
\[
\operatorname{col}((\mathbf{1}_{n},\tilde{\lambda}))=\operatorname{col}%
((\mathbf{1}_{n},\lambda)).
\]
Hence these eigenvectors are exactly the orthogonal complement of
$\operatorname{col}((\mathbf{1}_{n},\lambda))$, and therefore they span
$\bar{\lambda}_{\bot}$. \hfill$Q.E.D.$

\bigskip

\noindent \textsc{Proof of Lemma \ref{G_GIV_Solutions}.} Recall that
$Q=(n^{-1/2}\mathbf{1}_{n},Q_{-1})$ is an orthonormal matrix, so the columns
of $Q_{-1}$ form an orthonormal basis for the subspace orthogonal to
$\mathbf{1}_{n}$. Therefore, the mapping $\tilde{a}\mapsto a=Q_{-1}\tilde{a}$
is a bijection from $B_{n-1}\ $onto $\mathcal{B}_{\mathbf{1}_{n}}$. In
particular, $a\in \mathcal{B}_{\mathbf{1}_{n}}$ if and only if $a=Q_{-1}%
\tilde{a}$ for some $\tilde{a}\in B_{n-1}$.

Hence, the minimization problem in (\ref{G_demand_sigma_u}) is equivalent to
\[
\min_{a\in \mathcal{B}_{\mathbf{1}_{n}}}a^{\top}\bar{\Sigma}_{\tilde{y}}%
a=\min_{\tilde{a}\in B_{n-1}}\tilde{a}^{\top}Q_{-1}^{\top}\bar{\Sigma}%
_{\tilde{y}}Q_{-1}\tilde{a}.
\]
Next, using $\tilde{y}_{t}=M_{\mathbf{1}_{n}}y_{t}$ and the fact that
$M_{\mathbf{1}_{n}}Q_{-1}=Q_{-1}$ (equivalently, $Q_{-1}^{\top}M_{\mathbf{1}%
_{n}}=Q_{-1}^{\top}$), we obtain
\[
Q_{-1}^{\top}\mathbb{E}[\tilde{y}_{t}\tilde{y}_{t}^{\top}]Q_{-1}=Q_{-1}^{\top
}M_{\mathbf{1}_{n}}\mathbb{E}[y_{t}y_{t}^{\top}]M_{\mathbf{1}_{n}}%
Q_{-1}=Q_{-1}^{\top}\mathbb{E}[y_{t}y_{t}^{\top}]Q_{-1}.
\]
Substituting this into the objective function yields (\ref{G_GIV_Weights}).

Therefore, $a\in \mathcal{B}_{\mathbf{1}_{n}}$ is a minimizer of
(\ref{G_demand_sigma_u}) if and only if $a=Q_{-1}\tilde{a}$ for some minimizer
$\tilde{a}\in B_{n-1}$ of (\ref{G_GIV_Weights}). This completes the
proof.\hfill$Q.E.D.$

\bigskip

\noindent \textsc{Proof of Theorem\  \ref{Asy_Dist}.} We first establish the
claims of the theorem with $\hat{A}$ replaced by $\tilde{A}$ where $\tilde
{A}\equiv \hat{A}\hat{H}_{0}^{\top}$ and $\hat{H}_{0}$ is a rotation matrix
defined in (\ref{Rotation_H}) of Lemma \ref{Asy_Ahat_L1}. We then show that%
\begin{equation}
\hat{\theta}(\tilde{A})=\hat{\theta}(\hat{A}) \label{P_Asy_Dist_0a}%
\end{equation}
and
\begin{equation}
\Gamma(D_{1,T},W_{0,T},\hat{A})\hat{V}(\hat{A})\Gamma(D_{1,T},W_{0,T},\hat
{A})^{\top}=\Gamma(D_{1,T},W_{0,T},\tilde{A})\hat{V}(\tilde{A})\Gamma
(D_{1,T},W_{0,T},\tilde{A})^{\top} \label{P_Asy_Dist_0b}%
\end{equation}
wpa1, where\ $\hat{V}(\tilde{A})\equiv({\mathbf{I}_{2}\otimes \tilde{A}^{\top}%
})\hat{V}({\mathbf{I}_{2}}\otimes \tilde{A})$. The claims of the theorem
therefore follow.

By Assumption \ref{S}(ii), $\max_{t\leq T}\left \Vert S_{t}\right \Vert \leq
K$.\ Therefore, from Assumptions \ref{Asy_Cond_1}(i, iv), (\ref{P_Asy_L0_1})
and (\ref{P_Asy_L0_2}) in the proof of Lemma \ref{Asy_L0}, it follows that%
\begin{equation}
\max_{t\leq T}\left(  ||{\mathbb{E}}[{y_{t}p_{t}]||}+||{\mathbb{E}}[y_{t}%
y_{t}^{\top}{]||}\right)  \leq K. \label{P_Asy_Dist_1a}%
\end{equation}
This along with Lemmas \ref{Asy_L0} and \ref{Asy_L0b} implies that%
\begin{equation}
\Big \lVert T^{-1}\sum_{t\leq T}{y_{t}p_{t}}\Big \rVert+\Big \lVert T^{-1}%
\sum_{t\leq T}{y_{t}y_{S,t}}\Big \rVert \leq K, \label{P_Asy_Dist_1b}%
\end{equation}
wpa1. From Lemmas \ref{Asy_L0} and \ref{Asy_L0b}, {(\ref{P_Asy_Dist_1b})
and\ (\ref{P_Asy_L5_2}) in the proof of Lemma \ref{Asy_L5}, it follows that}%
\begin{equation}
D_{1,T}(\tilde{A})=(\mathbf{I}_{2}\otimes{\tilde{A}^{\top}})T^{-1}\sum_{t\leq
T}\left(
\begin{array}
[c]{cc}%
{y_{t}p_{t}} & \mathbf{0}_{n}\\
\mathbf{0}_{n} & {y_{t}y_{S,t}}%
\end{array}
\right)  =D_{1}(A)+O_{p}(T^{-1/2})=O_{p}(1). \label{P_Asy_Dist_2}%
\end{equation}
Similarly, by {Assumption \ref{Asy_Cond_2}(ii) and (\ref{P_Asy_L5_2}),}%
\begin{equation}
W_{0,T}(\tilde{A})=(({\mathbf{I}_{2}\otimes \tilde{A}^{\top}})W_{0,T}%
({\mathbf{I}_{2}\otimes}\tilde{A}))^{-1}=W_{0}(A)+o_{p}(1)=O_{p}(1),
\label{P_Asy_Dist_3}%
\end{equation}
where\ $W_{0}(A)\equiv(({\mathbf{I}_{2}\otimes A^{\top}})W_{0}({\mathbf{I}%
_{2}}\otimes A))^{-1}$. Combining {(\ref{P_Asy_Dist_2}) and
(\ref{P_Asy_Dist_3}) yields}%
\begin{equation}
D_{1,T}(\tilde{A})^{\top}W_{0,T}(\tilde{A})D_{1,T}(\tilde{A})=D_{1}(A)^{\top
}W_{0}(A)D_{1}(A)+o_{p}(1). \label{P_Asy_Dist_4}%
\end{equation}
Since $A^{\top}A=A_{0}^{\top}Q_{-1}^{\top}Q_{-1}A_{0}={\mathbf{I}}_{n-\bar{r}%
}$, by Assumption {\ref{Asy_Cond_2}(ii), }%
\[
\rho_{\min}(W_{0}(A))\geq(\rho_{\max}(W_{0}))^{-1}(\rho_{\max}({\mathbf{I}%
_{2}\otimes A^{\top}}A))^{-1}\geq K^{-1}.
\]
This together with Assumption {\ref{Asy_Cond_2}(iii) implies }%
\begin{align}
\rho_{\min}\left(  D_{1}(A)^{\top}W_{0}(A)D_{1}(A)\right)   &  \geq \rho_{\min
}(W_{0}(A))\rho_{\min}\left(  D_{1}(A)^{\top}D_{1}(A)\right) \nonumber \\
&  \geq K^{-1}\min \left \{  \Big \lVert T^{-1}\sum_{t\leq T}A^{\top}%
{\mathbb{E}}[{y_{t}p_{t}]}\Big \rVert^{2}{,}\text{ }\Big \lVert T^{-1}%
\sum_{t\leq T}A^{\top}{\mathbb{E}}[{y_{t}y_{S,t}]}\Big \rVert^{2}\right \}
\nonumber \\
&  \geq K^{-1}. \label{P_Asy_Dist_5}%
\end{align}
From\ {(\ref{P_Asy_Dist_4})\ and\ (\ref{P_Asy_Dist_5}), it follows that }%
\begin{equation}
\rho_{\min}\left(  D_{1,T}(\tilde{A})^{\top}W_{0,T}(\tilde{A})D_{1,T}%
(\tilde{A})\right)  \geq(2K)^{-1},\text{ \ wpa1.} \label{P_Asy_Dist_6}%
\end{equation}

By the definition of $D_{2,T}(\tilde{A})$, we can write%
\begin{align}
D_{2,T}(\tilde{A})  &  ={(\mathbf{I}_{2}\otimes \tilde{A}^{\top})}T^{-1}%
\sum_{t\leq T}\left(
\begin{array}
[c]{c}%
{\phi y_{t}p_{t}+y_{t}(y_{e,t}-\phi p_{t})}\\
{\psi y_{t}y_{S,t}+y_{t}(p_{t}-\psi y_{S,t})}%
\end{array}
\right) \nonumber \\
&  ={(\mathbf{I}_{2}\otimes \tilde{A}^{\top})}T^{-1}\sum_{t\leq T}\left(
\begin{array}
[c]{c}%
{\phi y_{t}p_{t}}\\
{\psi y_{t}y_{S,t}}%
\end{array}
\right)  +{(\mathbf{I}_{2}\otimes \tilde{A}^{\top})}T^{-1}\sum_{t\leq T}\left(
\begin{array}
[c]{c}%
{y_{t}(y_{e,t}-\phi p_{t})}\\
{y_{t}(p_{t}-\psi y_{S,t})}%
\end{array}
\right) \nonumber \\
&  =D_{1,T}(\tilde{A})\theta+T^{-1}\sum_{t\leq T}\left(
\begin{array}
[c]{c}%
\tilde{A}^{\top}y_{t}v_{t}\\
\tilde{A}^{\top}y_{t}\varepsilon_{t}%
\end{array}
\right)  . \label{P_Asy_Dist_7}%
\end{align}
By Assumption\ {\ref{Asy_Cond_2}(i)} and Lemma {\ref{Asy_L5}},
\begin{equation}
T^{-1/2}\sum_{t\leq T}\left(
\begin{array}
[c]{c}%
\tilde{A}^{\top}y_{t}v_{t}\\
\tilde{A}^{\top}y_{t}\varepsilon_{t}%
\end{array}
\right)  =T^{-1/2}\sum_{t\leq T}\left(
\begin{array}
[c]{c}%
A^{\top}\xi_{v,t}\\
A^{\top}\xi_{\varepsilon,t}%
\end{array}
\right)  +O_{p}(T^{-1/2})=O_{p}(1). \label{P_Asy_Dist_8}%
\end{equation}
From the definition of $\hat{\theta}(\tilde{A})$,\ {(\ref{P_Asy_Dist_6}) and
(\ref{P_Asy_Dist_7}), it follows that}%
\begin{equation}
\hat{\theta}(\tilde{A})=\theta+(D_{1,T}(\tilde{A})^{\top}W_{0,T}(\tilde
{A})D_{1,T}(\tilde{A}))^{-1}D_{1,T}(\tilde{A})^{\top}W_{0,T}(\tilde{A}%
)T^{-1}\sum_{t\leq T}\left(
\begin{array}
[c]{c}%
\tilde{A}^{\top}y_{t}v_{t}\\
\tilde{A}^{\top}y_{t}\varepsilon_{t}%
\end{array}
\right)  . \label{P_Asy_Dist_9}%
\end{equation}
This along with\ {(\ref{P_Asy_Dist_2}),\ (\ref{P_Asy_Dist_3}),}%
\ {(\ref{P_Asy_Dist_4}) and (\ref{P_Asy_Dist_8}), and (\ref{P_Asy_L5_2})
implies that }
\begin{align}
T^{1/2}(\hat{\theta}(\tilde{A})-\theta)  &  =(D_{1,T}(\tilde{A})^{\top}%
W_{0,T}(\tilde{A})D_{1,T}(\tilde{A}))^{-1}D_{1,T}(\tilde{A})^{\top}%
W_{0,T}(\tilde{A})T^{-1/2}\sum_{t\leq T}\left(
\begin{array}
[c]{c}%
{\tilde{A}^{\top}}y_{t}v_{t}\\
{\tilde{A}^{\top}}y_{t}\varepsilon_{t}%
\end{array}
\right) \nonumber \\
&  =\Gamma(D_{1},W_{0},A)({\mathbf{I}_{2}\otimes A^{\top}})T^{-1/2}\sum_{t\leq
T}\xi_{t}+o_{p}(1). \label{P_Asy_Dist_10}%
\end{align}

For any vector $a\in \mathbb{R}^{2}$ with $a^{\top}a=1$, define
\[
a_{T}\equiv V^{1/2}({\mathbf{I}_{2}\otimes A})\Gamma(D_{1},W_{0},A)^{\top
}(\Gamma(D_{1},W_{0},A)V(A)\Gamma(D_{1},W_{0},A)^{\top})^{-1/2}a.
\]
Then by the definition of $\Gamma(D_{1},W_{0},A)$, we have $a_{T}^{\top}%
a_{T}=1$. From {(\ref{P_Asy_Dist_10}) it follows that }%
\begin{equation}
a^{\top}(\Gamma(D_{1},W_{0},A)V(A)\Gamma(D_{1},W_{0},A)^{\top})^{-1/2}%
T^{1/2}(\hat{\theta}(\tilde{A})-\theta)=a_{T}^{\top}V^{-1/2}T^{-1/2}%
\sum_{t\leq T}\xi_{t}+o_{p}(1). \label{P_Asy_Dist_11}%
\end{equation}
Under Assumption {\ref{Asy_Cond_2}(i), }%
\begin{equation}
V^{-1/2}T^{-1/2}\sum_{t\leq T}\xi_{t}\rightarrow_{d}N(0,{\mathbf{I}}_{2n}).
\label{P_Asy_Dist_12}%
\end{equation}
Since the set $\{a\in \mathbb{R}^{2n}:a^{\top}a=1\}$ is compact, for any
subsequence of $\{T\}$, we can extract a further subsequence $\{T_{k}\}$ such
that
\[
a_{T_{k}}\rightarrow a_{\infty}\in \left \{  a\in \mathbb{R}^{2n}:a^{\top
}a=1\right \}  .
\]
From {(\ref{P_Asy_Dist_12}), it is clear that }$V^{-1/2}T_{k}^{-1/2}%
\sum_{t\leq T_{k}}\xi_{t}=O_{p}(1)$. Therefore, { }%
\[
a_{T_{k}}^{\top}V^{-1/2}T_{k}^{-1/2}\sum_{t\leq T_{k}}\xi_{t}=a_{\infty}%
^{\top}V^{-1/2}T_{k}^{-1/2}\sum_{t\leq T_{k}}\xi_{t}+(a_{T_{k}}-a_{\infty
})^{\top}V^{-1/2}T_{k}^{-1/2}\sum_{t\leq T_{k}}\xi_{t}\rightarrow_{d}N(0,1),
\]
from which, we can conclude that
\begin{equation}
a_{T}^{\top}V^{-1/2}T^{-1/2}\sum_{t\leq T}\xi_{t}\rightarrow_{d}N(0,1).
\label{P_Asy_Dist_13}%
\end{equation}

Combining {(\ref{P_Asy_Dist_11}) and (\ref{P_Asy_Dist_13}), and applying
Slutsky's theorem and the Cramer-Wold device, we deduce that }%
\begin{equation}
(\Gamma(D_{1},W_{0},A)V(A)\Gamma(D_{1},W_{0},A)^{\top})^{-1/2}T^{1/2}%
(\hat{\theta}(\tilde{A})-\theta)\rightarrow_{d}N(0,{\mathbf{I}}_{2}).
\label{P_Asy_Dist_14}%
\end{equation}
This establishes {(\ref{Asy_Dist_1}) with }$\hat{A}$ replaced by $\tilde{A}$.
By {(\ref{P_Asy_Dist_2}),\ (\ref{P_Asy_Dist_3}), (\ref{P_Asy_Dist_4}) and
(\ref{P_Asy_Dist_5})}%
\begin{align}
\Gamma(D_{1,T},W_{0,T},\tilde{A})  &  =(D_{1,T}(\tilde{A})^{\top}%
W_{0,T}(\tilde{A})D_{1,T}(\tilde{A}))^{-1}D_{1,T}(\tilde{A})^{\top}%
W_{0,T}(\tilde{A})\nonumber \\
&  =(D_{1}(A)^{\top}W_{0}(A)D_{1}(A))^{-1}D_{1}(A)^{\top}W_{0}(A)+o_{p}%
(1)\nonumber \\
&  =\Gamma(D_{1},W_{0},A)+o_{p}(1)=O_{p}(1). \label{P_Asy_Dist_15}%
\end{align}
By similar arguments for deriving {(\ref{P_Asy_Dist_3}), we can show that}%
\begin{equation}
\hat{V}(\tilde{A})=({\mathbf{I}_{2}\otimes \tilde{A}^{\top}})\hat
{V}({\mathbf{I}_{2}\otimes}\tilde{A})=V(A)+o_{p}(1)=O_{p}(1),
\label{P_Asy_Dist_15b}%
\end{equation}
which along with {(\ref{P_Asy_Dist_15}) establishes (\ref{Asy_Dist_2}) with
}$\hat{A}$ replaced by $\tilde{A}$.

Using the definition of $J_{T}(\theta;\cdot,\cdot)$ in {(\ref{J_T}), the GMM
problem in (\ref{GMM_Criterion}) }can be written as%
\begin{equation}
\hat{\theta}(\hat{A})\equiv \arg \min_{\theta \in \Theta}J_{T}(\theta;\hat
{A},W_{0,T}). \label{P_Asy_Dist_16}%
\end{equation}
The sufficient conditions of Lemma {\ref{Invariance} hold for }$A_{1}=\hat{A}%
$, $W_{1}=W_{0,T}$ and $C_{1}=\hat{H}_{0}^{\top}$ because: {(i) }$\hat
{A}^{\top}\hat{A}=\hat{A}_{0}^{\top}Q_{-1}^{\top}Q_{-1}\hat{A}_{0}%
=\mathbf{I}_{n-\bar{r}}${ and thus }$\hat{A}$ has full rank by construction;
(ii) $W_{0,T}$ is positive definite wpa1 by Assumption \ref{Asy_Cond_2}(ii);
and (iii) $\hat{H}_{0}^{\top}$ is nonsingular wpa1 by Lemma {\ref{Asy_Ahat_L1}%
. Therefore, applying }Lemma {\ref{Invariance} leads to }%
\begin{equation}
J_{T}(\theta;\hat{A},W_{0,T})=J_{T}(\theta;\hat{A}\hat{H}_{0}^{\top}%
,W_{0,T})=J_{T}(\theta;{\tilde{A}},W_{0,T}) \label{P_Asy_Dist_16b}%
\end{equation}
wpa1, which together with ({\ref{P_Asy_Dist_16}}) shows that $\hat{\theta
}(\hat{A})=\hat{\theta}({\tilde{A}})$ wpa1.

To verify the claim in ({\ref{P_Asy_Dist_0b}}), we first observe that%
\begin{equation}
D_{1,T}(\tilde{A})=D_{1,T}(\hat{A}\hat{H}_{0}^{\top})=({\mathbf{I}_{2}}%
\otimes \hat{H}_{0})D_{1,T}(\hat{A}) \label{P_Asy_Dist_17}%
\end{equation}
and
\begin{equation}
W_{0,T}(\tilde{A})=W_{0,T}(\hat{A}\hat{H}_{0}^{\top})=({\mathbf{I}_{2}}%
\otimes \hat{H}_{0}^{\top})^{-1}W_{0,T}(\hat{A})({\mathbf{I}_{2}}\otimes \hat
{H}_{0})^{-1}, \label{P_Asy_Dist_18}%
\end{equation}
where $\hat{H}_{0}$ is nonsingular wpa1 by Lemma \ref{Asy_Ahat_L1}.
Therefore,
\begin{align}
&  D_{1,T}(\tilde{A})^{\top}W_{0,T}(\tilde{A})D_{1,T}(\tilde{A})\nonumber \\
&  =D_{1,T}(\hat{A})^{\top}({\mathbf{I}_{2}}\otimes \hat{H}_{0}^{\top
})({\mathbf{I}_{2}}\otimes \hat{H}_{0}^{\top})^{-1}W_{0,T}(\hat{A}%
)({\mathbf{I}_{2}}\otimes \hat{H}_{0})^{-1}({\mathbf{I}_{2}}\otimes \hat{H}%
_{0})D_{1,T}(\hat{A})\nonumber \\
&  =D_{1,T}(\hat{A})^{\top}W_{0,T}(\hat{A})D_{1,T}(\hat{A})\text{, \ wpa1.}
\label{P_Asy_Dist_19}%
\end{align}
Next note that%
\[
\hat{V}(\tilde{A})=\hat{V}(\hat{A}\hat{H}_{0}^{\top})=({\mathbf{I}_{2}}%
\otimes \hat{H}_{0})\hat{V}(\hat{A})({\mathbf{I}_{2}}\otimes \hat{H}_{0}^{\top
}),
\]
which along with ({\ref{P_Asy_Dist_17}}) and ({\ref{P_Asy_Dist_18}}) {implies
that}%
\begin{align}
&  D_{1,T}(\tilde{A})^{\top}W_{0,T}(\tilde{A})\hat{V}(\tilde{A})W_{0,T}%
(\tilde{A})D_{1,T}(\tilde{A})\nonumber \\
&  =D_{1,T}(\hat{A})^{\top}({\mathbf{I}_{2}}\otimes \hat{H}_{0}^{\top
})({\mathbf{I}_{2}}\otimes \hat{H}_{0}^{\top})^{-1}W_{0,T}(\hat{A}%
)({\mathbf{I}_{2}}\otimes \hat{H}_{0})^{-1}\nonumber \\
&  \times({\mathbf{I}_{2}}\otimes \hat{H}_{0})\hat{V}(\hat{A})({\mathbf{I}_{2}%
}\otimes \hat{H}_{0}^{\top})\nonumber \\
&  \times({\mathbf{I}_{2}}\otimes \hat{H}_{0}^{\top})^{-1}W_{0,T}(\hat
{A})({\mathbf{I}_{2}}\otimes \hat{H}_{0})^{-1}({\mathbf{I}_{2}}\otimes \hat
{H}_{0})D_{1,T}(\hat{A})\nonumber \\
&  =D_{1,T}(\hat{A})^{\top}W_{0,T}(\hat{A})\hat{V}(\hat{A})W_{0,T}(\hat
{A})D_{1,T}(\hat{A})\text{, \ wpa1.} \label{P_Asy_Dist_20}%
\end{align}
{ }The claim in ({\ref{P_Asy_Dist_0b}}) now follows from\ {(\ref{P_Asy_Dist_6}%
), }({\ref{P_Asy_Dist_19}}), ({\ref{P_Asy_Dist_20}}) and the definitions of
$\Gamma(D_{1,T},W_{0,T},\hat{A})$, $\Gamma(D_{1,T},W_{0,T},\tilde{A})$%
,\ $\hat{V}(\hat{A})$ and $\hat{V}(\tilde{A})$.\hfill$Q.E.D.$

\bigskip

\noindent \textsc{Proof of Theorem \ref{J_Test}.} By similar arguments in
deriving ({\ref{P_Asy_Dist_16b}), we can verify the sufficient conditions of
}Lemma {\ref{Invariance} for }$A_{1}=\hat{A}$, $W_{1}=\hat{V}$ and $C_{1}%
=\hat{H}_{0}^{\top}$. Hence, we can apply Lemma {\ref{Invariance} to show
that\ }%
\[
J_{T}(\theta;\hat{A},\hat{V})=J_{T}(\theta;\hat{A}\hat{H}_{0}^{\top},\hat
{V})=J_{T}(\theta;{\tilde{A}},\hat{V})
\]
wpa1, from which the claim of the theorem follows if
\begin{equation}
J_{T}(\hat{\theta}^{\ast}(\tilde{A});{\tilde{A}},\hat{V})\rightarrow_{d}%
\chi^{2}(2(n-\bar{r}-1)). \label{P_J_Test_0}%
\end{equation}

Note that\ $\hat{\theta}^{\ast}(\tilde{A})^{\top}=(\hat{\phi}^{\ast}(\tilde
{A}),{\hat{\psi}}^{\ast}(\tilde{A}))$. To prove ({\ref{P_J_Test_0}}), we first
observe that%
\begin{align}
T^{1/2}\bar{g}_{T}(\hat{\theta}^{\ast}(\tilde{A});\tilde{A})  &  =T^{-1/2}%
\sum_{t\leq T}\left(
\begin{array}
[c]{c}%
\tilde{A}^{\top}y_{t}(y_{e,t}-\hat{\phi}^{\ast}(\tilde{A})p_{t})\\
\tilde{A}^{\top}y_{t}(p_{t}-{\hat{\psi}}^{\ast}(\tilde{A}){y_{S,t}})
\end{array}
\right) \nonumber \\
&  =T^{-1/2}\sum_{t\leq T}\left(
\begin{array}
[c]{c}%
\tilde{A}^{\top}y_{t}v_{t}\\
\tilde{A}^{\top}y_{t}\varepsilon_{t}%
\end{array}
\right)  -D_{1,T}(\tilde{A})T^{1/2}{(\hat{\theta}^{\ast}(\tilde{A})-\theta)}.
\label{P_J_Test_1}%
\end{align}
By the same arguments for showing\ {(\ref{P_Asy_Dist_10}) but replacing
}$W_{0,T}(\tilde{A})$ by $W_{\ast,T}(\tilde{A})$, we obtain%
\begin{align}
T^{1/2}({\hat{\theta}^{\ast}}(\tilde{A})-\theta)  &  =(D_{1}(A)^{\top
}V(A)^{-1}D_{1}(A))^{-1}D_{1}(A)^{\top}V(A)^{-1}({\mathbf{I}_{2}\otimes
A^{\top}})T^{-1/2}\sum_{t\leq T}\xi_{t}+o_{p}(1)\nonumber \\
&  =\Gamma(D_{1},V,A)V(A)^{1/2}T^{-1/2}\sum_{t\leq T}\xi_{t}^{\ast}+o_{p}(1),
\label{P_J_Test_2}%
\end{align}
where\ $\xi_{t}^{\ast}\equiv V(A)^{-1/2}({\mathbf{I}_{2}\otimes A^{\top}}%
)\xi_{t}$. This along with Assumptions {\ref{Asy_Cond_2}(i, iii) shows that }%
\begin{equation}
T^{1/2}({\hat{\theta}^{\ast}}(\tilde{A})-\theta)=O_{p}(1). \label{P_J_Test_2b}%
\end{equation}
From ({\ref{P_Asy_Dist_2}) and }({\ref{P_J_Test_2b}), it follows that}%
\begin{equation}
D_{1,T}(\tilde{A})T^{1/2}({\hat{\theta}^{\ast}}(\tilde{A})-\theta
)=D_{1}(A)\Gamma(D_{1},V,A)V(A)^{1/2}T^{-1/2}\sum_{t\leq T}\xi_{t}^{\ast
}+o_{p}(1). \label{P_J_Test_3}%
\end{equation}

Combining {(\ref{P_Asy_Dist_8}),} {(\ref{P_J_Test_1}) and (\ref{P_J_Test_3})
yields }%
\begin{align}
T^{1/2}\bar{g}_{T}(\hat{\theta}^{\ast}(\tilde{A});\tilde{A})  &
=({\mathbf{I}_{2}\otimes A^{\top}})T^{-1/2}\sum_{t\leq T}\xi_{t}\nonumber \\
&  -D_{1}(A)\Gamma(D_{1},V,A)V(A)^{1/2}T^{-1/2}\sum_{t\leq T}\xi_{t}^{\ast
}+o_{p}(1)\nonumber \\
&  =V(A)^{1/2}T^{-1/2}\sum_{t\leq T}\xi_{t}^{\ast}-D_{1}(A)\Gamma
(D_{1},V,A)V(A)^{1/2}T^{-1/2}\sum_{t\leq T}\xi_{t}^{\ast}+o_{p}(1)\nonumber \\
&  =V(A)^{1/2}\Pi(A)T^{-1/2}\sum_{t\leq T}\xi_{t}^{\ast}+o_{p}(1)=O_{p}(1),
\label{P_J_Test_4}%
\end{align}
where
\[
\Pi(A)\equiv{\mathbf{I}}_{2(n-\bar{r})}-V(A)^{-1/2}D_{1}(A)(D_{1}(A)^{\top
}V(A)^{-1}D_{1}(A))^{-1}D_{1}(A)^{\top}V(A)^{-1/2}.
\]
By the definition of $W_{\ast,T}(\tilde{A})$ and (\ref{P_Asy_Dist_15b}),
\begin{equation}
(W_{\ast,T}(\tilde{A}))^{-1}=\hat{V}(\tilde{A})=V(A)+o_{p}(1)
\label{P_J_Test_5a}%
\end{equation}
which together with Assumption \ref{Asy_Cond_2}(i) implies that
\begin{equation}
\rho_{\max}(W_{\ast,T}(\tilde{A}))\leq2K, \label{P_J_Test_5b}%
\end{equation}
wpa1.\ Using {(\ref{P_J_Test_4}), (\ref{P_J_Test_5a}) and (\ref{P_J_Test_5b}%
),\ we conclude that}%
\begin{align}
J_{T}(\hat{\theta}^{\ast}(\tilde{A});{\tilde{A}},\hat{V})  &  =T\bar{g}%
_{T}(\hat{\theta}^{\ast}(\tilde{A});\tilde{A})^{\top}W_{\ast,T}(\tilde{A}%
)\bar{g}_{T}(\hat{\theta}^{\ast}(\tilde{A});\tilde{A})\nonumber \\
&  =\Big(T^{-1/2}\sum_{t\leq T}\xi_{t}^{\ast}\Big)^{\top}\Pi(A)\Big(T^{-1/2}%
\sum_{t\leq T}\xi_{t}^{\ast}\Big)+o_{p}(1). \label{P_J_Test_6}%
\end{align}
It is clear that $\Pi(A)$ is an idempotent matrix with%
\begin{equation}
\mathrm{rank}(\Pi(A))=2(n-\bar{r})-2. \label{P_J_Test_7}%
\end{equation}
By Assumption \ref{Asy_Cond_2}(i) and similar arguments for
showing\ {(\ref{P_Asy_Dist_13}), we have}%
\begin{equation}
T^{-1/2}\sum_{t\leq T}\xi_{t}^{\ast}\rightarrow_{d}N(0,\mathbf{I}_{2(n-\bar
{r})}). \label{P_J_Test_8}%
\end{equation}
The claim in {(\ref{P_J_Test_0})} follows from {(\ref{P_J_Test_6}%
)-(\ref{P_J_Test_8}) and Slutsky's Theorem}.\hfill$Q.E.D.$

\bigskip

\noindent \textsc{Proof of Theorem \ref{r_hat_consistency}.} Under Assumptions
\ref{Asy_Cond_1}(i, ii, iii) and \ref{S}, we can apply Lemma \ref{Asy_L0} to
obtain%
\begin{equation}
\left \Vert \hat{S}_{y}-S_{y}\right \Vert =O_{p}(T^{-1/2}),
\label{P_r_hat_consistency_0}%
\end{equation}
which implies that%
\begin{equation}
\max_{j\in \mathcal{J}}|\hat{\mu}_{j}-\mu_{j}|=O_{p}(T^{-1/2}).
\label{P_r_hat_consistency_1}%
\end{equation}

We first consider the case $j>\bar{r}$. For any $j\in \{ \bar{r}+1,\ldots
,n-1\}$,%
\begin{align*}
\frac{\mathrm{BIC}_{T}(j)-\mathrm{BIC}_{T}(\bar{r})}{\log(T)}  &  =j-\bar
{r}+(n-j)^{-1}\sum_{s=1}^{n-j}\frac{T(\hat{\mu}_{s}-\hat{\mu}_{1})^{2}}%
{2\hat{\mu}_{s}^{2}\log(T)}-(n-\bar{r})^{-1}\sum_{s=1}^{n-\bar{r}}\frac
{T(\hat{\mu}_{s}-\hat{\mu}_{1})^{2}}{2\hat{\mu}_{s}^{2}\log(T)}\\
&  \geq j-\bar{r}-(n-\bar{r})^{-1}\sum_{s=n-j+1}^{n-\bar{r}}\frac{T(\hat{\mu
}_{s}-\hat{\mu}_{1})^{2}}{\hat{\mu}_{s}^{2}\log(T)}\\
&  =j-\bar{r}-(n-\bar{r})^{-1}\sum_{s=n-j+1}^{n-\bar{r}}\frac{(T^{1/2}%
(\hat{\mu}_{s}-\mu_{s})-T^{1/2}(\hat{\mu}_{1}-\bar{\sigma}_{u}^{2}))^{2}%
}{(\bar{\sigma}_{u}^{2}+\hat{\mu}_{s}-\bar{\sigma}_{u}^{2})^{2}\log(T)}\\
&  =j-\bar{r}+O_{p}(\log(T)^{-1}),
\end{align*}
where the equality in the third line follows from the fact that $\mu_{j}%
=\bar{\sigma}_{u}^{2}$ for $j\in \{1,\ldots,n-\bar{r}\}$, and the last equality
follows from Assumption \ref{Asy_Cond_1}(iii) together with
(\ref{P_r_hat_consistency_1}). Consequently,
\begin{equation}
\min_{j\in \{ \bar{r}+1,\ldots,n-1\}}\frac{\mathrm{BIC}_{T}(j)-\mathrm{BIC}%
_{T}(\bar{r})}{j-\bar{r}}>\frac{\log(T)}{2},\text{ \ wpa1.}
\label{P_r_hat_consistency_3}%
\end{equation}

Next, consider the case $j<\bar{r}$. For any $j\in \{1,\ldots,\bar{r}-1\}$,
\begin{align}
\frac{\mathrm{BIC}_{T}(j)-\mathrm{BIC}_{T}(\bar{r})}{T}  &  =(n-j)^{-1}%
\sum_{s=1}^{n-j}\frac{(\hat{\mu}_{s}-\hat{\mu}_{1})^{2}}{2\hat{\mu}_{s}^{2}%
}-(n-\bar{r})^{-1}\sum_{s=1}^{n-\bar{r}}\frac{(\hat{\mu}_{s}-\hat{\mu}%
_{1})^{2}}{2\hat{\mu}_{s}^{2}}+\frac{(j-\bar{r})\log(T)}{T}\nonumber \\
&  =(n-j)^{-1}\sum_{s=n-\bar{r}+1}^{n-j}\frac{(\hat{\mu}_{s}-\hat{\mu}%
_{1})^{2}}{2\hat{\mu}_{s}^{2}}\nonumber \\
&  \text{ \  \  \  \  \  \  \  \  \  \  \  \  \  \  \  \  \ }+\frac{(j-\bar{r})}%
{(n-j)(n-\bar{r})}\sum_{s=1}^{n-\bar{r}}\frac{(\hat{\mu}_{s}-\hat{\mu}%
_{1})^{2}}{2\hat{\mu}_{s}^{2}}+\frac{(j-\bar{r})\log(T)}{T}.
\label{P_r_hat_consistency_4a}%
\end{align}
From Assumption \ref{Asy_Cond_1}(iii) and (\ref{P_r_hat_consistency_1}), we
have
\begin{equation}
\frac{(j-\bar{r})}{(n-j)(n-\bar{r})}\sum_{s=1}^{n-\bar{r}}\frac{(\hat{\mu}%
_{s}-\hat{\mu}_{1})^{2}}{2\hat{\mu}_{s}^{2}}=O_{p}(T^{-1}).
\label{P_r_hat_consistency_4b}%
\end{equation}
Similarly,
\[
(n-j)^{-1}\sum_{s=n-\bar{r}+1}^{n-j}\frac{(\hat{\mu}_{s}-\hat{\mu}_{1})^{2}%
}{2\hat{\mu}_{s}^{2}}\geq \frac{\bar{r}-j}{n-j}\frac{(\hat{\mu}_{n-\bar{r}%
+1}-\hat{\mu}_{1})^{2}}{2\hat{\mu}_{n-j}^{2}}=\frac{\bar{r}-j}{n-j}\frac
{(\mu_{n-\bar{r}+1}-\bar{\sigma}_{u}^{2})^{2}}{2\mu_{n-j}^{2}}+O_{p}%
(T^{-1/2}).
\]
Combining this result with (\ref{P_r_hat_consistency_4a}) and
(\ref{P_r_hat_consistency_4b}) yields
\begin{equation}
\frac{\mathrm{BIC}_{T}(j)-\mathrm{BIC}_{T}(\bar{r})}{T}\geq \frac{\bar{r}%
-j}{n-j}\frac{(\mu_{n-\bar{r}+1}-\bar{\sigma}_{u}^{2})^{2}}{2\mu_{n-j}^{2}%
}+O_{p}(T^{-1/2}). \label{P_r_hat_consistency_4c}%
\end{equation}
By (\ref{P_Asy_Dist_1a}), $\mu_{n-j}^{2}\leq K$, while Assumption
\ref{Asy_Cond_1}(iii) implies that $\mu_{n-\bar{r}+1}-\bar{\sigma}_{u}%
^{2}>K^{-1}$. Therefore, (\ref{P_r_hat_consistency_4c}) implies that
\begin{equation}
\min_{j\in \{1,\ldots,\bar{r}-1\}}\frac{\mathrm{BIC}_{T}(j)-\mathrm{BIC}%
_{T}(\bar{r})}{\bar{r}-j}>K^{-1}T,\text{ \ wpa1.}
\label{P_r_hat_consistency_5}%
\end{equation}

The claim of the theorem now follows directly from
(\ref{P_r_hat_consistency_3}) and (\ref{P_r_hat_consistency_5}).\hfill$Q.E.D.$

\bigskip

\noindent \textsc{Proof of Lemma\  \ref{GK_Non_ID}.} Since $M_{\mathbf{1}_{n}%
}\lambda_{-1}=\lambda_{-1}$, the moment condition (\ref{GK_F_7}) is equivalent
to
\begin{equation}
M_{\mathbf{1}_{n}}\mathbb{E}[y_{t}y_{t}^{\top}]M_{\mathbf{1}_{n}}\lambda
_{-1}=\lambda_{-1}(\lambda_{-1}^{\top}\lambda_{-1})^{-1}\lambda_{-1}^{\top
}\mathbb{E}[y_{t}y_{t}^{\top}]\lambda_{-1}. \label{P_GK_Non_ID_1}%
\end{equation}
Since $\mathbb{E}[y_{t}y_{t}^{\top}]$ is nonsingular while $M_{\mathbf{1}_{n}%
}$ has rank $n-1$, $M_{\mathbf{1}_{n}}\mathbb{E}[y_{t}y_{t}^{\top
}]M_{\mathbf{1}_{n}}$ has rank $n-1$. Hence its null space is spanned by
$\mathbf{1}_{n}$, and its range is contained in
\[
\mathcal{H}_{1_{n}}\equiv \{a\in \mathbb{R}^{n}:\mathbf{1}_{n}^{\top}a=0\}.
\]
Therefore, there exists an orthonormal basis $D_{M}\equiv(d_{1},\ldots
,d_{n-1})$ of $\mathcal{H}_{1_{n}}$ such that
\begin{equation}
M_{\mathbf{1}_{n}}\mathbb{E}[y_{t}y_{t}^{\top}]M_{\mathbf{1}_{n}}D_{M}%
=D_{M}\operatorname{diag}(\rho_{1},\ldots,\rho_{n-1}), \label{P_GK_Non_ID_2}%
\end{equation}
where $\rho_{1},\ldots,\rho_{n-1}$ are the nonzero eigenvalues of
$M_{\mathbf{1}_{n}}\mathbb{E}[y_{t}y_{t}^{\top}]M_{\mathbf{1}_{n}}$.

Now fix any subset $J\subset \{1,\ldots,n-1\}$ with $|J|=r-1$, and let $D_{J}$
collect the corresponding columns of $D_{M}$. Then
\begin{equation}
\mathbf{1}_{n}^{\top}D_{J}=0,\qquad D_{J}^{\top}D_{J}=\mathbf{I}_{r-1},
\label{P_GK_Non_ID_2b}%
\end{equation}
and
\begin{equation}
M_{\mathbf{1}_{n}}\mathbb{E}[y_{t}y_{t}^{\top}]M_{\mathbf{1}_{n}}D_{J}%
=D_{J}\operatorname{diag}\{(\rho_{j})_{j\in J}\}. \label{P_GK_Non_ID_3}%
\end{equation}
Because $M_{\mathbf{1}_{n}}D_{J}=D_{J}$, we also have
\begin{equation}
D_{J}^{\top}\mathbb{E}[y_{t}y_{t}^{\top}]D_{J}=D_{J}^{\top}M_{\mathbf{1}_{n}%
}\mathbb{E}[y_{t}y_{t}^{\top}]M_{\mathbf{1}_{n}}D_{J}=\operatorname{diag}%
\{(\rho_{j})_{j\in J}\}. \label{P_GK_Non_ID_4}%
\end{equation}
Combining (\ref{P_GK_Non_ID_3}) and (\ref{P_GK_Non_ID_4}) yields
\[
M_{\mathbf{1}_{n}}\mathbb{E}[y_{t}y_{t}^{\top}]M_{\mathbf{1}_{n}}D_{J}%
=D_{J}(D_{J}^{\top}D_{J})^{-1}D_{J}^{\top}\mathbb{E}[y_{t}y_{t}^{\top}]D_{J}.
\]
This is precisely (\ref{P_GK_Non_ID_1}) with $\lambda_{-1}$ replaced by
$D_{J}$. Hence $D_{J}$ satisfies (\ref{GK_F_7}).

Finally, since $D_{J}^{\top}D_{J}=\mathbf{I}_{r-1}$ and $\mathbf{1}_{n}^{\top
}D_{J}=0$, the scaled matrix $n^{1/2}D_{J}$ satisfies the normalization in
(\ref{GK_F_0a}). The condition (\ref{GK_F_7}) is invariant to nonsingular
right multiplication of $D_{J}$, and therefore it is also satisfied by
$n^{1/2}D_{J}$. This proves the claim.\hfill$Q.E.D.$

\bigskip

\noindent \textsc{Proof of Lemma \ref{GK_Rotation}.} Since $C_{1}$ and $C_{2}$
are orthogonal, we have
\[
\mathbf{1}_{n}^{\top}\lambda_{C_{1}}=\mathbf{1}_{n}^{\top}\lambda_{-1}%
C_{1}=\mathbf{0}_{r-1}^{\top}\text{ \  \ and \  \ }n^{-1}\lambda_{C_{1}}^{\top
}\lambda_{C_{1}}=n^{-1}C_{1}^{\top}\lambda_{-1}^{\top}\lambda_{-1}%
C_{1}=\mathbf{I}_{r-1}.
\]
Moreover,
\[
\lambda_{C_{2}}^{\top}(\mathbf{1}_{n},\lambda_{C_{1}})=C_{2}^{\top}%
\lambda_{\perp}^{\top}(\mathbf{1}_{n},\lambda_{-1}C_{1})=\mathbf{0}%
_{(n-r)\times r}.
\]
This shows that $\lambda_{C_{1}}$ and $\lambda_{C_{2}}$ satisfy the same
restrictions of $\lambda_{-1}$ and $\lambda_{\perp}$.

We next verify the moment conditions. Since
\[
\lambda_{C_{1}}(C_{1}^{\top}b_{y})=\lambda_{-1}C_{1}C_{1}^{\top}b_{y}%
=\lambda_{-1}b_{y},\text{ \  \ }\lambda_{C_{1}}^{\top}y_{t}=C_{1}^{\top}%
\lambda_{-1}^{\top}y_{t},\text{ \ and \  \ }\lambda_{C_{2}}^{\top}y_{t}%
=C_{2}^{\top}\lambda_{\perp}^{\top}y_{t},
\]
we have%
\begin{align*}
&  \mathbb{E}\left[  (y_{e,t}-\phi p_{t}-y_{t}^{\top}\lambda_{C_{1}}%
C_{1}^{\top}b_{y})%
\begin{pmatrix}
\lambda_{C_{2}}^{\top}y_{t}\\
\lambda_{C_{1}}^{\top}y_{t}%
\end{pmatrix}
\right] \\
&  \qquad=\mathrm{diag}(C_{2}^{\top},C_{1}^{\top})\mathbb{E}\left[
(y_{e,t}-\phi p_{t}-y_{t}^{\top}\lambda_{-1}b_{y})%
\begin{pmatrix}
\lambda_{\perp}^{\top}y_{t}\\
\lambda_{-1}^{\top}y_{t}%
\end{pmatrix}
\right]  =\mathbf{0}_{n-1}.
\end{align*}
Thus (\ref{GK_F_5}) continues to hold after the rotation.

Similarly, since $\lambda_{C_{1}}(C_{1}^{\top}b_{p})=\lambda_{-1}b_{p}$, we
obtain
\begin{align*}
&  \mathbb{E}\left[  (p_{t}-\psi y_{S,t}-y_{t}^{\top}\lambda_{C_{1}}%
C_{1}^{\top}b_{p})%
\begin{pmatrix}
\lambda_{C_{2}}^{\top}y_{t}\\
\lambda_{C_{1}}^{\top}y_{t}%
\end{pmatrix}
\right] \\
&  \qquad=\mathrm{diag}(C_{2}^{\top},C_{1}^{\top})\mathbb{E}\left[
(p_{t}-\psi y_{S,t}-y_{t}^{\top}\lambda_{-1}b_{p})%
\begin{pmatrix}
\lambda_{\perp}^{\top}y_{t}\\
\lambda_{-1}^{\top}y_{t}%
\end{pmatrix}
\right]  =\mathbf{0}_{n-1}.
\end{align*}
Therefore, (\ref{GK_F_6}) also continues to hold.

Finally, since
\[
\lambda_{C_{1}}(\lambda_{C_{1}}^{\top}\lambda_{C_{1}})^{-1}\lambda_{C_{1}%
}^{\top}=\lambda_{-1}(\lambda_{-1}^{\top}\lambda_{-1})^{-1}\lambda_{-1}^{\top
},
\]
the projection matrix appearing in (\ref{GK_F_7}) is unchanged when
$\lambda_{-1}$ is replaced by $\lambda_{C_{1}}$. Moreover,
\[
y_{t}y_{t}^{\top}\lambda_{C_{1}}=y_{t}y_{t}^{\top}\lambda_{-1}C_{1}.
\]
Hence the left-hand side of (\ref{GK_F_7}) under $\lambda_{C_{1}}$ equals the
original left-hand side multiplied by $C_{1}$, and is therefore zero. Thus
(\ref{GK_F_7}) holds for $\lambda_{C_{1}}$ as well.\hfill$Q.E.D.$

\section{Auxiliary Lemmas \label{APP_3}}

This subsection presents a set of lemmas that verify Assumptions
\ref{Asy_Cond_1}(iii) and \ref{Asy_Cond_2}(iii) under primitive conditions, as
well as auxiliary results used in the proofs of the main results in Sections
\ref{sec: S_model} and \ref{sec: G_model}. We begin with lemmas that establish
Assumptions \ref{Asy_Cond_1}(iii) and \ref{Asy_Cond_2}(iii).

\begin{lemma}
\label{Suff_Cond1_iii}\ Suppose that $\rho_{\min}(T^{-1}\sum_{t\leq
T}\mathbb{E}[\eta_{t}\eta_{t}^{\top}])\geq K^{-1}$, and either (i)\ $\rho
_{\min}((\mathbf{1}_{n},\lambda)^{\top}(\mathbf{1}_{n},\lambda))\geq K^{-1}$;
or (ii) $\mathbf{1}_{n}\in \operatorname{col}(\lambda)$ and $K^{-1}\leq
\rho_{\min}(\lambda^{\top}\lambda/n)\leq \rho_{\max}(\lambda^{\top}%
\lambda/n)\leq K$. Then Assumption \ref{Asy_Cond_1}(iii) holds.
\end{lemma}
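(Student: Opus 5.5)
The plan is to reduce Assumption \ref{Asy_Cond_1}(iii) to a lower bound on the smallest \emph{nonzero} eigenvalue of the factor part of $S_y$, and then to bound that eigenvalue via the two alternative conditions on $\lambda$. The condition $\bar\sigma_u^2>K^{-1}$ in Assumption \ref{Asy_Cond_1}(iii) is a condition on the idiosyncratic variances and does not involve $\lambda$. I therefore read the lemma as delivering the eigen-gap $\mu_{n-\bar r+1}-\bar\sigma_u^2>K^{-1}$, with $\bar\sigma_u^2>K^{-1}$ maintained separately.

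\textbf{Step 1 (reduction).} Write $\bar\Sigma_\eta\equiv T^{-1}\sum_{t\le T}\mathbb{E}[\eta_t\eta_t^\top]$. As in the proof of Lemma \ref{G_GIV_Solutions}, we have $Q_{-1}^\top M_{\mathbf{1}_n}=Q_{-1}^\top$. Combining this with (\ref{G_demand_factor}) gives
\begin{equation*}
S_y=Q_{-1}^\top\bar\Sigma_{\tilde y}Q_{-1}=Q_{-1}^\top\tilde\lambda\bar\Sigma_\eta\tilde\lambda^\top Q_{-1}+\bar\sigma_u^2\mathbf{I}_{n-1}.
\end{equation*}
Hence $\mu_j-\bar\sigma_u^2$ are the eigenvalues of $B\equiv Q_{-1}^\top\tilde\lambda\bar\Sigma_\eta\tilde\lambda^\top Q_{-1}$. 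By the proof of Lemma \ref{ID_G_GIV_Weight}, $B$ has rank $\bar r-1$. So $\mu_{n-\bar r+1}-\bar\sigma_u^2$ is the smallest nonzero eigenvalue of $B$. Since $Q_{-1}$ has orthonormal columns spanning a space containing $\mathrm{col}(\tilde\lambda)$, this equals the smallest nonzero eigenvalue of $\tilde\lambda\bar\Sigma_\eta\tilde\lambda^\top$.

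Take any unit $w\in\mathrm{col}(\tilde\lambda)$ and write $w=\tilde\lambda x$ with $x\perp\ker(\tilde\lambda)$. Then
\begin{equation*}
w^\top\tilde\lambda\bar\Sigma_\eta\tilde\lambda^\top w\ge K^{-1}\Vert\tilde\lambda^\top w\Vert^2\ge K^{-1}s_{\min}^2,
\end{equation*}
where $s_{\min}$ is the smallest nonzero singular value of $\tilde\lambda=M_{\mathbf{1}_n}\lambda$. Everything therefore reduces to showing $s_{\min}^2\ge K^{-1}$.

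\textbf{Step 2, case (i).} Here $(\mathbf{1}_n,\lambda)$ has full column rank, so $\bar r=r+1$ and $\ker\tilde\lambda=\{0\}$. For a unit vector $x$,
\begin{equation*}
\Vert M_{\mathbf{1}_n}\lambda x\Vert^2=\min_{c}\Vert(\mathbf{1}_n,\lambda)(-c,x^\top)^\top\Vert^2\ge K^{-1}\min_c(1+c^2)=K^{-1}.
\end{equation*}
This gives $s_{\min}^2\ge K^{-1}$ directly.

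\textbf{Step 3, case (ii).} Here $\mathbf{1}_n=\lambda c$ for some $c$, and $\lambda$ has full column rank, so $\bar r=r$. The null space of $\tilde\lambda$ is $\{x:\lambda x\in\mathrm{span}(\mathbf{1}_n)\}=\mathrm{span}(c)$. Set $G\equiv\lambda^\top\lambda/n$. Then $c^\top Gc=\mathbf{1}_n^\top\mathbf{1}_n/n=1$, and for unit $x\perp c$,
\begin{equation*}
\Vert\tilde\lambda x\Vert^2=n\bigl(x^\top Gx-(c^\top Gx)^2\bigr)=n\,x^\top Gx\,(1-\cos_G^2),
\end{equation*}
where $\cos_G$ is the cosine of the angle between $x$ and $c$ in the $G$-inner product.

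This is the step I expect to be the main obstacle. Euclidean orthogonality of $x$ and $c$ must be converted into a $G$-angle bounded away from zero, using only the bounded condition number of $G$. I would bound
\begin{equation*}
1-\cos_G^2=\min_a\frac{(x-ac)^\top G(x-ac)}{x^\top Gx}\ge\frac{\rho_{\min}(G)}{\rho_{\max}(G)}\min_a\Vert x-ac\Vert^2=\frac{\rho_{\min}(G)}{\rho_{\max}(G)},
\end{equation*}
where the last equality uses $x\perp c$ and $\Vert x\Vert=1$. Combining, $s_{\min}^2\ge n\rho_{\min}(G)^2/\rho_{\max}(G)\ge K^{-3}$. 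Together with Step 1, this yields $\mu_{n-\bar r+1}-\bar\sigma_u^2\ge K^{-1}$ in both cases, after renaming $K$.
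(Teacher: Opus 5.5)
Your proof is correct. Your Step 1 and case (i) match the paper in substance. The paper also writes $S_y=Q_{-1}^\top\lambda\bar\Sigma_\eta\lambda^\top Q_{-1}+\bar\sigma_u^2\mathbf{I}_{n-1}$. It then bounds the smallest positive eigenvalue of the first term below by $\rho_{\min}(\bar\Sigma_\eta)$ times the smallest positive eigenvalue of $Q_{-1}^\top\lambda\lambda^\top Q_{-1}$; since $Q_{-1}Q_{-1}^\top=M_{\mathbf{1}_n}$, that eigenvalue is exactly your $s_{\min}^2$. In case (i) it uses the block-inverse (Schur complement) formula for $\lambda^\top M_{\mathbf{1}_n}\lambda$, and your minimization over $c$ is simply its variational form.

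Case (ii) is genuinely different. The paper builds an explicit change of basis $C_\lambda=(\alpha_1,U)$, where $\alpha_1=(\lambda^\top\lambda)^{-1}\lambda^\top\mathbf{1}_n$ and $U$ is an orthonormal basis of $\{x:\mathbf{1}_n^\top\lambda x=0\}$. This gives $\lambda C_\lambda=(\mathbf{1}_n,\lambda_\perp)$ with $\mathbf{1}_n^\top\lambda_\perp=0$. The paper then bounds $\rho_{\min}(\lambda_\perp^\top\lambda_\perp)\geq\rho_{\min}(C_\lambda^\top C_\lambda)\,\rho_{\min}(\lambda^\top\lambda)$. It controls $\rho_{\min}(C_\lambda^\top C_\lambda)$ through an explicit block inverse of $C_\lambda^\top C_\lambda$, together with separate bounds on $\Vert\alpha_1\Vert$ and on $\mathbf{1}_n^\top\lambda\lambda^\top\mathbf{1}_n$.

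You instead identify $\ker\tilde\lambda=\mathrm{span}(c)$. You then turn Euclidean orthogonality to $c$ into a lower bound on the sine of the angle in the $G$-inner product, using only the condition number of $G=\lambda^\top\lambda/n$. This yields $s_{\min}^2\geq n\rho_{\min}(G)^2/\rho_{\max}(G)$ in a few lines.

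Both bounds are of order $n$ under the lemma's conditions, and both rest on the bounded condition number of $\lambda^\top\lambda/n$. Yours is shorter and coordinate-free. The paper's construction has the side benefit of exhibiting an explicit rotation of the factors under which the first factor loads on $\mathbf{1}_n$ and the remaining loadings are demeaned.

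Your caveat that $\bar\sigma_u^2>K^{-1}$ is not delivered by the lemma is also accurate: the paper's proof likewise establishes only the eigen-gap.
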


\noindent \textsc{Proof of Lemma\  \ref{Suff_Cond1_iii}.} We begin by noting
that since $Q_{-1}^{\top}Q_{-1}=\mathbf{I}_{n-1}$ and $Q_{-1}^{\top}%
\mathbf{1}_{n}=\mathbf{0}_{n-1}$,
\[
Q_{-1}Q_{-1}^{\top}=Q_{-1}(Q_{-1}^{\top}Q_{-1})^{-1}Q_{-1}^{\top}%
\]
is the projection matrix onto the subspace orthogonal to $\mathbf{1}_{n}$.
Therefore, $\mathbf{I}_{n}-Q_{-1}Q_{-1}^{\top}$ is the projection matrix onto
$\operatorname{col}(\mathbf{1}_{n})$, which implies
\[
\mathbf{I}_{n}-Q_{-1}Q_{-1}^{\top}=n^{-1}\mathbf{1}_{n}\mathbf{1}_{n}^{\top}.
\]
Hence,
\begin{equation}
Q_{-1}Q_{-1}^{\top}=M_{\mathbf{1}_{n}}. \label{P_Suff_Cond1_iii_1}%
\end{equation}

Let $\Sigma_{\eta}\equiv T^{-1}\sum_{t\leq T}\mathbb{E}[\eta_{t}\eta_{t}%
^{\top}]$. The decomposition in (\ref{G_demand_factor}) can be written as
\begin{equation}
\bar{\Sigma}_{\tilde{y}}=\tilde{\lambda}\Sigma_{\eta}\tilde{\lambda}^{\top
}+\bar{\sigma}_{u}^{2}M_{\mathbf{1}_{n}}, \label{P_Suff_Cond1_iii_2}%
\end{equation}
where $\tilde{\lambda}\equiv M_{\mathbf{1}_{n}}\lambda$. Since $\bar{\Sigma
}_{\tilde{y}}=M_{\mathbf{1}_{n}}\bar{\Sigma}_{y}M_{\mathbf{1}_{n}}$ and
$Q_{-1}^{\top}M_{\mathbf{1}_{n}}=Q_{-1}^{\top}$, it follows that
\[
Q_{-1}^{\top}\bar{\Sigma}_{\tilde{y}}Q_{-1}=Q_{-1}^{\top}\bar{\Sigma}%
_{y}Q_{-1},\qquad Q_{-1}^{\top}\tilde{\lambda}\Sigma_{\eta}\tilde{\lambda
}^{\top}Q_{-1}=Q_{-1}^{\top}\lambda \Sigma_{\eta}\lambda^{\top}Q_{-1}.
\]
Therefore, multiplying $Q_{-1}^{\top}$ and $Q_{-1}$ from the left and right of
(\ref{P_Suff_Cond1_iii_2}) yields
\begin{equation}
Q_{-1}^{\top}\bar{\Sigma}_{y}Q_{-1}=Q_{-1}^{\top}\lambda \Sigma_{\eta}%
\lambda^{\top}Q_{-1}+\bar{\sigma}_{u}^{2}\mathbf{I}_{n-1}.
\label{P_Suff_Cond1_iii_3}%
\end{equation}

Since $\mathbf{1}_{n}^{\top}\tilde{\lambda}=0$, we have $\tilde{\lambda}%
\in \operatorname{col}(Q_{-1})$, and hence
\[
\mathrm{rank}(Q_{-1}^{\top}\tilde{\lambda})=\mathrm{rank}(\tilde{\lambda}).
\]
Noting that $Q_{-1}^{\top}\tilde{\lambda}=Q_{-1}^{\top}M_{\mathbf{1}_{n}%
}\lambda=Q_{-1}^{\top}\lambda$, it follows that
\begin{equation}
\mathrm{rank}(Q_{-1}^{\top}\lambda)=\mathrm{rank}(\tilde{\lambda})=\bar{r}-1.
\label{P_Suff_Cond1_iii_4}%
\end{equation}
Because $\rho_{\min}(\Sigma_{\eta})\geq K^{-1}$, (\ref{P_Suff_Cond1_iii_4})
implies that $Q_{-1}^{\top}\lambda \Sigma_{\eta}\lambda^{\top}Q_{-1}$ has
$n-\bar{r}$ zero eigenvalues and $\bar{r}-1$ strictly positive eigenvalues.
From (\ref{P_Suff_Cond1_iii_3}), the $(n-\bar{r}+1)$th eigenvalue of
$Q_{-1}^{\top}\bar{\Sigma}_{y}Q_{-1}$, denoted by $\mu_{n-\bar{r}+1}$,
satisfies
\[
\mu_{n-\bar{r}+1}=\rho_{\min}^{+}\! \left(  Q_{-1}^{\top}\lambda \Sigma_{\eta
}\lambda^{\top}Q_{-1}\right)  +\bar{\sigma}_{u}^{2},
\]
where $\rho_{\min}^{+}(Q_{-1}^{\top}\lambda \Sigma_{\eta}\lambda^{\top}Q_{-1})$
denotes the smallest positive eigenvalue of\ $Q_{-1}^{\top}\lambda \Sigma
_{\eta}\lambda^{\top}Q_{-1}$. Therefore, Assumption \ref{Asy_Cond_1}(iii)
holds provided that
\begin{equation}
\rho_{\min}^{+}\! \left(  Q_{-1}^{\top}\lambda \Sigma_{\eta}\lambda^{\top
}Q_{-1}\right)  \geq K^{-1}. \label{P_Suff_Cond1_iii_5}%
\end{equation}

For any $x\in \mathbb{R}^{n-1}$,
\[
x^{\top}Q_{-1}^{\top}\lambda \Sigma_{\eta}\lambda^{\top}Q_{-1}x\geq \rho_{\min
}(\Sigma_{\eta})\,x^{\top}Q_{-1}^{\top}\lambda \lambda^{\top}Q_{-1}x.
\]
Taking the minimum over all unit vectors $x\in \operatorname{col}(Q_{-1}^{\top
}\lambda)$ yields
\begin{equation}
\rho_{\min}^{+}\! \left(  Q_{-1}^{\top}\lambda \Sigma_{\eta}\lambda^{\top
}Q_{-1}\right)  \geq \rho_{\min}(\Sigma_{\eta})\, \rho_{\min}^{+}\! \left(
Q_{-1}^{\top}\lambda \lambda^{\top}Q_{-1}\right)  , \label{P_Suff_Cond1_iii_6}%
\end{equation}
where $\rho_{\min}^{+}(Q_{-1}^{\top}\lambda \lambda^{\top}Q_{-1})$ denotes the
smallest positive eigenvalue of\ $Q_{-1}^{\top}\lambda \lambda^{\top}Q_{-1}$.
Since $\rho_{\min}(\Sigma_{\eta})\geq K^{-1}$, it suffices to show that
\begin{equation}
\rho_{\min}^{+}\! \left(  Q_{-1}^{\top}\lambda \lambda^{\top}Q_{-1}\right)
\geq K^{-1}. \label{P_Suff_Cond1_iii_7}%
\end{equation}
We next verify (\ref{P_Suff_Cond1_iii_7}) in two separate cases.

\medskip \noindent \textit{Case (i).} Suppose $\rho_{\min}((\mathbf{1}%
_{n},\lambda)^{\top}(\mathbf{1}_{n},\lambda))\geq K^{-1}$. By
(\ref{P_Suff_Cond1_iii_1}),
\[
\lambda^{\top}Q_{-1}Q_{-1}^{\top}\lambda=\lambda^{\top}M_{\mathbf{1}_{n}%
}\lambda.
\]
By the block inverse formula, $(\lambda^{\top}M_{\mathbf{1}_{n}}\lambda)^{-1}$
is the lower-right $r\times r$ block of $((\mathbf{1}_{n},\lambda)^{\top
}(\mathbf{1}_{n},\lambda))^{-1}$. Therefore,
\[
\rho_{\min}(\lambda^{\top}M_{\mathbf{1}_{n}}\lambda)\geq \rho_{\min
}((\mathbf{1}_{n},\lambda)^{\top}(\mathbf{1}_{n},\lambda))\geq K^{-1},
\]
which implies (\ref{P_Suff_Cond1_iii_7}).

\medskip \noindent \textit{Case (ii).} Suppose $\mathbf{1}_{n}\in
\operatorname{col}(\lambda)$ and $\rho_{\min}(\lambda^{\top}\lambda)\geq
K^{-1}$. Define $\alpha_{1}\equiv(\lambda^{\top}\lambda)^{-1}\lambda^{\top
}\mathbf{1}_{n}$ and $\mathcal{A}_{1}\equiv \{x\in \mathbb{R}^{r}:\mathbf{1}%
_{n}^{\top}\lambda x=0\}$. Since $\mathbf{1}_{n}\in \operatorname{col}%
(\lambda)$, we have $\lambda(\lambda^{\top}\lambda)^{-1}\lambda^{\top
}\mathbf{1}_{n}=\mathbf{1}_{n}$ and%
\begin{equation}
(\mathbf{1}_{n}^{\top}\lambda)\alpha_{1}=\mathbf{1}_{n}^{\top}\lambda
(\lambda^{\top}\lambda)^{-1}\lambda^{\top}\mathbf{1}_{n}=\mathbf{1}_{n}^{\top
}\mathbf{1}_{n}=n, \label{P_Suff_Cond1_iii_8a}%
\end{equation}
which implies that $\mathbf{1}_{n}^{\top}\lambda \neq0$ and $\alpha_{1}%
\notin \mathcal{A}_{1}$. Therefore, $\mathcal{A}_{1}$ is a linear subspace of
dimension $r-1$. Let $U\in \mathbb{R}^{r\times(r-1)}$ be a matrix whose columns
form an orthonormal basis of $\mathcal{A}_{1}$. Then the matrix $C_{\lambda
}\equiv(\alpha_{1},U)$ is invertible and it satisfies%
\begin{equation}
\lambda C_{\lambda}=(\mathbf{1}_{n},\lambda_{\perp}),\qquad \mathbf{1}%
_{n}^{\top}\lambda_{\perp}=0\text{,} \label{P_Suff_Cond1_iii_8}%
\end{equation}
\ where $\lambda_{\perp}\equiv \lambda U$. Since the columns of $U$ are
orthonormal,%
\begin{equation}
C_{\lambda}^{\top}C_{\lambda}=\left(
\begin{array}
[c]{cc}%
\Vert \alpha_{1}\Vert^{2} & \alpha_{1}^{\top}U\\
U^{\top}\alpha_{1} & \mathbf{I}_{r-1}%
\end{array}
\right)  \text{, \  \  \  \ where }c_{1}\equiv U^{\top}\alpha_{1}.
\label{P_Suff_Cond1_iii_9}%
\end{equation}
Using the block inverse formula, we obtain%
\[
(C_{\lambda}^{\top}C_{\lambda})^{-1}=\left(
\begin{array}
[c]{cc}%
\delta_{1}^{-2} & -\delta_{1}^{-2}c_{1}^{\top}\\
-\delta_{1}^{-2}c_{1} & \mathbf{I}_{r-1}+\delta_{1}^{-2}c_{1}c_{1}^{\top}%
\end{array}
\right)  ,
\]
where $\delta_{1}\equiv(\Vert \alpha_{1}\Vert^{2}-\Vert c_{1}\Vert^{2})^{1/2}$.
Since $U^{\top}U=\mathbf{I}_{r-1}$,
\[
\Vert c_{1}\Vert^{2}=\Vert U^{\top}\alpha_{1}\Vert^{2}=\Vert UU^{\top}%
\alpha_{1}\Vert^{2}.
\]
Here $UU^{\top}\alpha_{1}=U(U^{\top}U)^{-1}U^{\top}\alpha_{1}$ is the
projection of $\alpha_{1}$ onto $\mathcal{A}_{1}$, which is the subspace
orthogonal to $\lambda^{\top}\mathbf{1}_{n}$. By the Pythagorean theorem,
\[
\delta_{1}^{2}=\Vert \alpha_{1}\Vert^{2}-\Vert UU^{\top}\alpha_{1}\Vert^{2}%
\]
is the squared length of the projection of $\alpha_{1}$ onto the subspace
spanned by $\lambda^{\top}\mathbf{1}_{n}$. Therefore,%
\begin{equation}
\delta_{1}^{2}=\alpha_{1}^{\top}\lambda^{\top}\mathbf{1}_{n}(\mathbf{1}%
_{n}^{\top}\lambda \lambda^{\top}\mathbf{1}_{n})^{-1}\mathbf{1}_{n}^{\top
}\lambda \alpha_{1}=\frac{(\mathbf{1}_{n}^{\top}\lambda(\lambda^{\top}%
\lambda)^{-1}\lambda^{\top}\mathbf{1}_{n})^{2}}{\mathbf{1}_{n}^{\top}%
\lambda \lambda^{\top}\mathbf{1}_{n}}=\frac{n^{2}}{\mathbf{1}_{n}^{\top}%
\lambda \lambda^{\top}\mathbf{1}_{n}}. \label{P_Suff_Cond1_iii_9b}%
\end{equation}
Here the last equality is by (\ref{P_Suff_Cond1_iii_8a}). Using
(\ref{P_Suff_Cond1_iii_8}) and (\ref{P_Suff_Cond1_iii_9}), we have
\begin{align*}
Q_{-1}^{\top}\lambda \lambda^{\top}Q_{-1}  &  =Q_{-1}^{\top}\lambda C_{\lambda
}(C_{\lambda}^{\top}C_{\lambda})^{-1}C_{\lambda}^{\top}\lambda^{\top}Q_{-1}\\
&  =(\mathbf{0}_{n},Q_{-1}^{\top}\lambda_{\perp})\left(
\begin{array}
[c]{cc}%
\delta_{1}^{-2} & -\delta_{1}^{-2}c_{1}^{\top}\\
-\delta_{1}^{-2}c_{1} & \mathbf{I}_{r-1}+\delta_{1}^{-2}c_{1}c_{1}^{\top}%
\end{array}
\right)  (\mathbf{0}_{n},Q_{-1}^{\top}\lambda_{\perp})^{\top}\\
&  =Q_{-1}^{\top}\lambda_{\perp}(\mathbf{I}_{r-1}+\delta_{1}^{-2}c_{1}%
c_{1}^{\top})\lambda_{\perp}^{\top}Q_{-1}.
\end{align*}
It then follows that
\begin{equation}
\rho_{\min}^{+}\! \left(  Q_{-1}^{\top}\lambda \lambda^{\top}Q_{-1}\right)
\geq \rho_{\min}^{+}\! \left(  Q_{-1}^{\top}\lambda_{\perp}\lambda_{\perp
}^{\top}Q_{-1}\right)  \rho_{\min}(\mathbf{I}_{r-1}+\delta_{1}^{-2}c_{1}%
c_{1}^{\top})\geq \rho_{\min}^{+}\! \left(  Q_{-1}^{\top}\lambda_{\perp}%
\lambda_{\perp}^{\top}Q_{-1}\right)  . \label{P_Suff_Cond1_iii_10}%
\end{equation}
Since the positive eigenvalues of $Q_{-1}^{\top}\lambda_{\perp}\lambda_{\perp
}^{\top}Q_{-1}$ are the same as those of $\lambda_{\perp}^{\top}Q_{-1}%
Q_{-1}^{\top}\lambda_{\perp}$, and the latter satisfies%
\[
\lambda_{\perp}^{\top}Q_{-1}Q_{-1}^{\top}\lambda_{\perp}=\lambda_{\perp}%
^{\top}M_{\mathbf{1}_{n}}\lambda_{\perp}=\lambda_{\perp}^{\top}\lambda_{\perp}%
\]
by (\ref{P_Suff_Cond1_iii_1}), it follows that the (\ref{P_Suff_Cond1_iii_7})
holds if
\begin{equation}
\rho_{\min}(\lambda_{\perp}^{\top}\lambda_{\perp})\geq K^{-1}.
\label{P_Suff_Cond1_iii_11}%
\end{equation}
We next verify (\ref{P_Suff_Cond1_iii_11}).

From the first equation in (\ref{P_Suff_Cond1_iii_8}), we have
\[
C_{\lambda}^{\top}\lambda^{\top}\lambda C_{\lambda}=(\lambda C_{\lambda
})^{\top}(\lambda C_{\lambda})=%
\begin{pmatrix}
n & 0\\
0 & \lambda_{\perp}^{\top}\lambda_{\perp}%
\end{pmatrix}
,
\]
which implies that
\begin{equation}
\rho_{\min}(\lambda_{\perp}^{\top}\lambda_{\perp})\geq \rho_{\min}(C_{\lambda
}^{\top}\lambda^{\top}\lambda C_{\lambda})\geq \rho_{\min}(C_{\lambda}^{\top
}C_{\lambda})\rho_{\min}(\lambda^{\top}\lambda)\geq K^{-1}\rho_{\min
}(C_{\lambda}^{\top}C_{\lambda}). \label{P_Suff_Cond1_iii_12}%
\end{equation}
Let $x=(x_{1},x_{2}^{\top})^{\top}\in \mathbb{R}^{r}$, where $x_{1}%
\in \mathbb{R}$ and $x_{2}\in \mathbb{R}^{r-1}$. Then%
\begin{align*}
x^{\top}(C_{\lambda}^{\top}C_{\lambda})^{-1}x  &  =x_{1}^{2}\delta_{1}%
^{-2}-2\delta_{1}^{-2}x_{1}x_{2}^{\top}c_{1}+x_{2}^{\top}x_{2}+\delta_{1}%
^{-2}x_{2}^{\top}c_{1}c_{1}^{\top}x_{2}\\
&  \leq \delta_{1}^{-2}(|x_{1}|+\Vert x_{2}\Vert \Vert c_{1}\Vert)^{2}+\Vert
x_{2}\Vert^{2}\\
&  \leq \left(  1+\delta_{1}^{-2}(1+\Vert c_{1}\Vert)^{2}\right)  \Vert
x\Vert^{2},
\end{align*}
which together with $\Vert c_{1}\Vert \leq \Vert \alpha_{1}\Vert$ implies that
\begin{equation}
\rho_{\min}(C_{\lambda}^{\top}C_{\lambda})=(\rho_{\max}\! \left(  (C_{\lambda
}^{\top}C_{\lambda})^{-1}\right)  )^{-1}\geq \frac{1}{1+\delta_{1}^{-2}%
(1+\Vert \alpha_{1}\Vert)^{2}}. \label{P_Suff_Cond1_iii_13}%
\end{equation}
From (\ref{P_Suff_Cond1_iii_9b}) and $\Vert \alpha_{1}\Vert^{2}=\mathbf{1}%
_{n}^{\top}\lambda(\lambda^{\top}\lambda)^{-2}\lambda^{\top}\mathbf{1}_{n}$,
it follows that%
\begin{align}
\delta_{1}^{-2}(1+\Vert \alpha_{1}\Vert)^{2}  &  \leq2\delta_{1}^{-2}%
(1+\Vert \alpha_{1}\Vert^{2})=\frac{2(\mathbf{1}_{n}^{\top}\lambda \lambda
^{\top}\mathbf{1}_{n})(1+\mathbf{1}_{n}^{\top}\lambda(\lambda^{\top}%
\lambda)^{-2}\lambda^{\top}\mathbf{1}_{n})}{n^{2}}\nonumber \\
&  \leq2\rho_{\max}(\lambda^{\top}\lambda)\frac{\mathbf{1}_{n}^{\top}%
\lambda(\lambda^{\top}\lambda)^{-1}\lambda^{\top}\mathbf{1}_{n}}{n^{2}}\left(
1+\frac{\mathbf{1}_{n}^{\top}\lambda(\lambda^{\top}\lambda)^{-1}\lambda^{\top
}\mathbf{1}_{n}}{\rho_{\min}(\lambda^{\top}\lambda)}\right) \nonumber \\
&  =\frac{2\rho_{\max}(\lambda^{\top}\lambda)}{\rho_{\min}(\lambda^{\top
}\lambda)}\left(  1+\frac{\rho_{\min}(\lambda^{\top}\lambda)}{n}\right)  \leq
K, \label{P_Suff_Cond1_iii_14}%
\end{align}
where the second inequality follows from
\[
\mathbf{1}_{n}^{\top}\lambda \lambda^{\top}\mathbf{1}_{n}=\mathbf{1}_{n}^{\top
}\lambda(\lambda^{\top}\lambda)^{-1/2}(\lambda^{\top}\lambda)(\lambda^{\top
}\lambda)^{-1/2}\lambda^{\top}\mathbf{1}_{n}\leq \rho_{\max}(\lambda^{\top
}\lambda)\mathbf{1}_{n}^{\top}\lambda(\lambda^{\top}\lambda)^{-1}\lambda
^{\top}\mathbf{1}_{n}%
\]
and
\begin{align*}
\mathbf{1}_{n}^{\top}\lambda(\lambda^{\top}\lambda)^{-2}\lambda^{\top
}\mathbf{1}_{n}  &  =\mathbf{1}_{n}^{\top}\lambda(\lambda^{\top}%
\lambda)^{-1/2}(\lambda^{\top}\lambda)^{-1}(\lambda^{\top}\lambda
)^{-1/2}\lambda^{\top}\mathbf{1}_{n}\\
&  \leq \rho_{\max}((\lambda^{\top}\lambda)^{-1})\mathbf{1}_{n}^{\top}%
\lambda(\lambda^{\top}\lambda)^{-1}\lambda^{\top}\mathbf{1}_{n}=\frac
{\mathbf{1}_{n}^{\top}\lambda(\lambda^{\top}\lambda)^{-1}\lambda^{\top
}\mathbf{1}_{n}}{\rho_{\min}(\lambda^{\top}\lambda)},
\end{align*}
and the second equality follows from (\ref{P_Suff_Cond1_iii_8a}). The desired
result in (\ref{P_Suff_Cond1_iii_11}) then follows from
(\ref{P_Suff_Cond1_iii_12}), (\ref{P_Suff_Cond1_iii_13}) and
(\ref{P_Suff_Cond1_iii_14}).\hfill$Q.E.D.$

\bigskip

\begin{lemma}
\label{Suff_Cond2_iii}\ Suppose that $\left \vert \psi \right \vert \geq K^{-1}$,
$0<\left \vert 1-\phi \psi \right \vert \leq K$, $S_{t}$ is independent of $u_{t}$
and $\eta_{t}$, and either (i)\ $\rho_{\min}((\mathbf{1}_{n},\lambda)^{\top
}(\mathbf{1}_{n},\lambda))>0$ and $S_{u}^{\top}M_{(\mathbf{1}_{n},\lambda
)}S_{u}\geq K^{-1}$; or (ii) $\mathbf{1}_{n}\in \operatorname{col}(\lambda)$,
$\rho_{\min}(\lambda^{\top}\lambda)>0$ and $S_{u}^{\top}M_{\lambda}S_{u}\geq
K^{-1}$, where $S_{u}\equiv T^{-1}\sum_{t\leq T}\sigma_{u,t}^{2}%
\mathbb{E}[S_{t}]$. Then Assumption \ref{Asy_Cond_2}(iii) holds.
\end{lemma}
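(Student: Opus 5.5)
The plan is to compute the two population moments in Assumption \ref{Asy_Cond_2}(iii) in closed form from the reduced form of the system (\ref{G_demand})--(\ref{G_supply}). I will show that each of them equals a nonzero scalar multiple of $A^{\top}S_{u}$, and then bound $\Vert A^{\top}S_{u}\Vert$ from below using the assumed lower bound on $S_{u}^{\top}M_{(\mathbf{1}_{n},\lambda)}S_{u}$ (or on $S_{u}^{\top}M_{\lambda}S_{u}$).

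\emph{Step 1: reduced form.} Solving the two equations as in (\ref{F_Reduced_Demand})--(\ref{F_Reduced_Supply}), with $\beta=0$ and using Assumption \ref{S}(ii), gives
\[
y_{t}=\Bigl(\mathbf{I}_{n}+\tfrac{\phi\psi}{1-\phi\psi}\mathbf{1}_{n}S_{t}^{\top}\Bigr)(\lambda\eta_{t}+u_{t})+\tfrac{\phi}{1-\phi\psi}\mathbf{1}_{n}\varepsilon_{t},
\qquad
p_{t}=\tfrac{\psi}{1-\phi\psi}S_{t}^{\top}(\lambda\eta_{t}+u_{t})+\tfrac{\varepsilon_{t}}{1-\phi\psi}.
\]
It follows that $y_{S,t}=S_{t}^{\top}y_{t}=(1-\phi\psi)^{-1}\bigl(S_{t}^{\top}(\lambda\eta_{t}+u_{t})+\phi\varepsilon_{t}\bigr)$.

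\emph{Step 2: projecting with $A$.} By Lemmas \ref{ID_G_GIV_Weight}--\ref{G_GIV_Solutions}, the columns of $A=Q_{-1}A_{0}$ are orthonormal and span $\operatorname{col}(\bar{\lambda}_{\bot})$. Hence $A^{\top}\mathbf{1}_{n}=0$, $A^{\top}\lambda=0$ and $A^{\top}y_{t}=A^{\top}u_{t}$. Therefore
\[
A^{\top}\mathbb{E}[y_{t}p_{t}]=\tfrac{\psi}{1-\phi\psi}A^{\top}\bigl(\mathbb{E}[u_{t}u_{t}^{\top}S_{t}]+\mathbb{E}[u_{t}\eta_{t}^{\top}\lambda^{\top}S_{t}]\bigr)+\tfrac{1}{1-\phi\psi}A^{\top}\mathbb{E}[u_{t}\varepsilon_{t}].
\]
Since $S_{t}$ is independent of $(u_{t},\eta_{t})$, and using Assumption \ref{ID}(i)--(iii), the three expectations equal $\sigma_{u,t}^{2}\mathbb{E}[S_{t}]$, $\mathbf{1}_{n}\Gamma_{\eta u,t}^{\top}\lambda^{\top}\mathbb{E}[S_{t}]$ and $\sigma_{\varepsilon u,t}\mathbf{1}_{n}$, respectively. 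The last two are annihilated by $A^{\top}$. Averaging over $t$ gives
\[
T^{-1}\sum_{t\leq T}A^{\top}\mathbb{E}[y_{t}p_{t}]=\tfrac{\psi}{1-\phi\psi}A^{\top}S_{u}.
\]
The same computation with $y_{S,t}$ from Step 1 gives
\[
T^{-1}\sum_{t\leq T}A^{\top}\mathbb{E}[y_{t}y_{S,t}]=(1-\phi\psi)^{-1}A^{\top}S_{u}.
\]

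\emph{Step 3: lower bound.} Because $A^{\top}A=\mathbf{I}_{n-\bar{r}}$ and $\operatorname{col}(A)$ is the orthogonal complement of $\operatorname{col}((\mathbf{1}_{n},\lambda))$, we have $AA^{\top}=M_{(\mathbf{1}_{n},\lambda)}$. Hence $\Vert A^{\top}S_{u}\Vert^{2}=S_{u}^{\top}M_{(\mathbf{1}_{n},\lambda)}S_{u}\geq K^{-1}$ in case (i). In case (ii), $\mathbf{1}_{n}\in\operatorname{col}(\lambda)$ gives $M_{(\mathbf{1}_{n},\lambda)}=M_{\lambda}$, and the same bound follows. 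The rank conditions $\rho_{\min}(\cdot)>0$ are only used to make these projections well defined. Combining this with $|\psi|\geq K^{-1}$ and $|1-\phi\psi|\leq K$ yields both norms in Assumption \ref{Asy_Cond_2}(iii) bounded below by $K^{-1}\cdot K^{-1}\cdot K^{-1/2}$, which is again of the form $K^{-1}$.

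The main point needing care is Step 2. The interaction terms $\mathbb{E}[u_{t}u_{t}^{\top}S_{t}]$ and $\mathbb{E}[u_{t}\eta_{t}^{\top}\lambda^{\top}S_{t}]$ factor only because of the assumed independence of $S_{t}$ from $(u_{t},\eta_{t})$. Once they factor, the nuisance covariances $\Gamma_{\eta u,t}$ and $\sigma_{\varepsilon u,t}$ drop out exactly because they load only on $\mathbf{1}_{n}$, which $A^{\top}$ annihilates.
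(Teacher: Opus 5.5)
Your proposal is correct and follows the same route as the paper's proof. Both arguments use the reduced form to get $A^{\top}y_t=A^{\top}u_t$, then factor the expectations using the independence of $S_t$ and Assumption \ref{ID}, so that only $\sigma_{u,t}^{2}A^{\top}\mathbb{E}[S_t]$ survives. Both then identify $\Vert A^{\top}S_u\Vert^{2}$ with $S_u^{\top}M_{(\mathbf{1}_n,\lambda)}S_u$, which equals $S_u^{\top}M_{\lambda}S_u$ in case (ii). You also note that factoring $\mathbb{E}[u_t\eta_t^{\top}\lambda^{\top}S_t]$ needs $S_t$ independent of $(u_t,\eta_t)$ jointly, a step the paper leaves implicit.
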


\noindent \textsc{Proof of Lemma\  \ref{Suff_Cond2_iii}.} Since $A\equiv
Q_{-1}A_{0}$ spans the same subspace as $\bar{\lambda}_{\bot}$, which is
orthogonal to $(\mathbf{1}_{n},\lambda)$, we have
\begin{equation}
A^{\top}\mathbf{1}_{n}=\mathbf{0}_{n-\bar{r}}\text{ \  \  \ and \  \  \ }A^{\top
}\lambda=\mathbf{0}_{(n-\bar{r})\times r}. \label{P_Suff_Cond2_iii_1}%
\end{equation}
This, together with the reduced form expression of $y_{t}$ in
(\ref{P_Asy_L0_1}), implies that
\begin{equation}
A^{\top}y_{t}=A^{\top}(u_{t}+\lambda \eta_{t})=A^{\top}u_{t}.
\label{P_Suff_Cond2_iii_2}%
\end{equation}
Using (\ref{P_Suff_Cond2_iii_1}), the reduced-form expression of $p_{t}$ in
(\ref{P_Asy_L0_2}), the expression for $A^{\top}y_{t}$ in
(\ref{P_Suff_Cond2_iii_2}), and Assumption \ref{ID}, we obtain%
\[
\mathbb{E}[A^{\top}y_{t}p_{t}]=\frac{\psi}{1-\phi \psi}A^{\top}\mathbb{E}%
[u_{t}(u_{t}+\lambda \eta_{t})^{\top}S_{t}]+\frac{1}{1-\phi \psi}A^{\top
}\mathbb{E}[u_{t}\varepsilon_{t}]=\frac{\psi \sigma_{u,t}^{2}}{1-\phi \psi
}A^{\top}\mathbb{E}[S_{t}].
\]
Therefore,
\begin{equation}
\Big \lVert T^{-1}\sum_{t\leq T}\mathbb{E}[A^{\top}y_{t}p_{t}]\Big \rVert^{2}%
=\frac{\psi^{2}}{(1-\phi \psi)^{2}}||A^{\top}S_{u}||^{2}.
\label{P_Suff_Cond2_iii_3}%
\end{equation}

Next by the reduced form expression of $y_{t}$ in (\ref{P_Asy_L0_1})%
\[
y_{S,t}\equiv S_{t}^{\top}y_{t}=\frac{1}{1-\phi \psi}S_{t}^{\top}(u_{t}%
+\lambda \eta_{t})+\frac{\phi}{1-\phi \psi}\varepsilon_{t},
\]
which, together with Assumption \ref{ID}, (\ref{P_Suff_Cond2_iii_1}) and
(\ref{P_Suff_Cond2_iii_2}), implies that
\[
\mathbb{E}[A^{\top}y_{t}y_{S,t}]=\frac{1}{1-\phi \psi}A^{\top}\mathbb{E}%
[u_{t}(u_{t}+\lambda \eta_{t})^{\top}S_{t}]+\frac{\phi}{1-\phi \psi}A^{\top
}\mathbb{E}[u_{t}\varepsilon_{t}]=\frac{\sigma_{u,t}^{2}}{1-\phi \psi}A^{\top
}\mathbb{E}[S_{t}].
\]
Therefore,
\begin{equation}
\Big \lVert T^{-1}\sum_{t\leq T}\mathbb{E}[A^{\top}y_{t}y_{S,t}%
]\Big \rVert^{2}=\frac{1}{(1-\phi \psi)^{2}}||A^{\top}S_{u}||^{2}.
\label{P_Suff_Cond2_iii_4}%
\end{equation}

Since $A^{\top}A=\mathbf{I}_{n-\bar{r}}$, we have $\Vert A^{\top}S_{u}%
\Vert=\Vert AA^{\top}S_{u}\Vert$, so $\Vert A^{\top}S_{u}\Vert$ is the length
of the projection of $S_{u}$ onto $\operatorname{col}(A)$. Because
$\operatorname{col}(A)$ is orthogonal to\ $\operatorname{col}(\mathbf{1}%
_{n},\lambda)$, it follows that
\begin{equation}
||A^{\top}S_{u}||^{2}=\left \{
\begin{array}
[c]{cc}%
S_{u}^{\top}M_{(\mathbf{1}_{n},\lambda)}S_{u}, & \text{if }\rho_{\min
}((\mathbf{1}_{n},\lambda)^{\top}(\mathbf{1}_{n},\lambda))>0\\
S_{u}^{\top}M_{\lambda}S_{u}, & \text{if }\mathbf{1}_{n}\in \operatorname{col}%
(\lambda)\text{ and }\rho_{\min}(\lambda^{\top}\lambda)>0
\end{array}
\right.  . \label{P_Suff_Cond2_iii_5}%
\end{equation}
Combining (\ref{P_Suff_Cond2_iii_3}), (\ref{P_Suff_Cond2_iii_4}) and
(\ref{P_Suff_Cond2_iii_5}), as well as the maintained conditions of the lemma
gives%
\[
\Big \lVert T^{-1}\sum_{t\leq T}\mathbb{E}[A^{\top}y_{t}p_{t}]\Big \rVert \geq
K^{-1}\text{ \  \  \ and \  \  \ }\Big \lVert T^{-1}\sum_{t\leq T}\mathbb{E}%
[A^{\top}y_{t}y_{S,t}]\Big \rVert \geq K^{-1}%
\]
This verifies Assumption \ref{Asy_Cond_2}(iii).\hfill$Q.E.D.$

\bigskip

\begin{lemma}
\label{Invariance}\ For any $n\times(n-\bar{r})$ full rank matrix $A_{1}$, any
$2n\times2n$ positive definite matrix $W_{1}$, and any nonsingular $(n-\bar
{r})\times(n-\bar{r})$ matrix $C_{1}$, we have
\[
J_{T}(\theta;A_{1},W_{1})=J_{T}(\theta;A_{1}C_{1},W_{1}),
\]
where
\begin{equation}
J_{T}(\theta;A_{1},W_{1})\equiv \bar{g}_{T}(\theta;A_{1})^{\top}((\mathbf{I}%
_{2}\otimes A_{1}^{\top})W_{1}(\mathbf{I}_{2}\otimes A_{1}))^{-1}\bar{g}%
_{T}(\theta;A_{1}), \label{J_T}%
\end{equation}
and $J_{T}(\theta;A_{1}C_{1},W_{1})$ is defined analogously to $J_{T}%
(\theta;A_{1},W_{1})$ with $A_{1}$ replaced by $A_{1}C_{1}$.
\end{lemma}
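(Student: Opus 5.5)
The plan is a direct algebraic computation. First I will show that the moment vector transforms linearly under the change of instruments, and then that the weighting matrix transforms in the compensating way, so that the quadratic form is unchanged. Both steps come from the Kronecker structure of $\bar g_T$ in (\ref{Moment_Func}).

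Write $\bar g_T(\theta;A_1)=(\mathbf{I}_2\otimes A_1^{\top})m_T(\theta)$. Here
\[
m_T(\theta)\equiv T^{-1}\sum_{t\leq T}\bigl(y_t^{\top}(y_{e,t}-\phi p_t),\,y_t^{\top}(p_t-\psi y_{S,t})\bigr)^{\top}\in\mathbb{R}^{2n}
\]
does not depend on the instrument matrix. By the mixed-product rule,
\[
\mathbf{I}_2\otimes (A_1C_1)^{\top}=(\mathbf{I}_2\otimes C_1^{\top})(\mathbf{I}_2\otimes A_1^{\top}).
\]
Let $P\equiv \mathbf{I}_2\otimes C_1^{\top}$. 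Then
\[
\bar g_T(\theta;A_1C_1)=P\,\bar g_T(\theta;A_1).
\]
The same identity gives the weighting matrix:
\[
(\mathbf{I}_2\otimes (A_1C_1)^{\top})W_1(\mathbf{I}_2\otimes A_1C_1)=P\,G\,P^{\top},\qquad G\equiv(\mathbf{I}_2\otimes A_1^{\top})W_1(\mathbf{I}_2\otimes A_1).
\]

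Next I must check that the inverses exist. Since $A_1$ has full column rank $n-\bar r$, $\mathbf{I}_2\otimes A_1$ has full column rank $2(n-\bar r)$. Since $W_1$ is positive definite, $G$ is positive definite and therefore invertible. Because $C_1$ is nonsingular, $P$ is nonsingular, so $PGP^{\top}$ is invertible as well, with
\[
(PGP^{\top})^{-1}=P^{-\top}G^{-1}P^{-1}.
\]

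Substituting these into (\ref{J_T}) gives
\[
J_T(\theta;A_1C_1,W_1)=\bar g_T(\theta;A_1)^{\top}P^{\top}P^{-\top}G^{-1}P^{-1}P\,\bar g_T(\theta;A_1)=J_T(\theta;A_1,W_1).
\]
This holds for every $\theta$.

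There is no real obstacle. The only points needing care are the mixed-product identity for the Kronecker factors and the invertibility of $G$, which rests on the full-rank assumption on $A_1$ together with the positive definiteness of $W_1$.
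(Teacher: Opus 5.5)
Your proof is correct and follows essentially the same route as the paper: you factor $\mathbf{I}_2\otimes C_1^{\top}A_1^{\top}$ through $\mathbf{I}_2\otimes C_1^{\top}$, invert the sandwiched weighting matrix as $(\mathbf{I}_2\otimes C_1)^{-1}G^{-1}(\mathbf{I}_2\otimes C_1^{\top})^{-1}$, and cancel. You also check explicitly that $G$ is invertible, using the full column rank of $A_1$ and the positive definiteness of $W_1$, which the paper leaves implicit.
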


\noindent \textsc{Proof of Lemma\  \ref{Invariance}.} By definition of the GMM
criterion,
\begin{equation}
J_{T}(\theta;A_{1}C_{1},W_{1})=\bar{g}_{0,T}(\theta)^{\top}(\mathbf{I}%
_{2}\otimes A_{1}C_{1})((\mathbf{I}_{2}\otimes C_{1}^{\top}A_{1}^{\top}%
)W_{1}(\mathbf{I}_{2}\otimes A_{1}C_{1}))^{-1}(\mathbf{I}_{2}\otimes
C_{1}^{\top}A_{1}^{\top})\bar{g}_{0,T}(\theta) \label{P_Invariance_1}%
\end{equation}
where
\[
\bar{g}_{0,T}(\theta)\equiv%
\begin{pmatrix}
T^{-1}\sum_{t\leq T}y_{t}(y_{e,t}-\phi p_{t})\\
T^{-1}\sum_{t\leq T}y_{t}(p_{t}-\psi y_{S,t})
\end{pmatrix}
.
\]
Using the identity
\[
(\mathbf{I}_{2}\otimes C_{1}^{\top}A_{1}^{\top})=(\mathbf{I}_{2}\otimes
C_{1}^{\top})(\mathbf{I}_{2}\otimes A_{1}^{\top})\  \text{and }(\mathbf{I}%
_{2}\otimes A_{1}C_{1})=(\mathbf{I}_{2}\otimes A_{1})(\mathbf{I}_{2}\otimes
C_{1}),
\]
we obtain
\begin{equation}
((\mathbf{I}_{2}\otimes C_{1}^{\top}A_{1}^{\top})W_{1}(\mathbf{I}_{2}\otimes
A_{1}C_{1}))^{-1}=(\mathbf{I}_{2}\otimes C_{1})^{-1}((\mathbf{I}_{2}\otimes
A_{1}^{\top})W_{1}(\mathbf{I}_{2}\otimes A_{1}))^{-1}(\mathbf{I}_{2}\otimes
C_{1}^{\top})^{-1}. \label{P_Invariance_2}%
\end{equation}
Substituting ({\ref{P_Invariance_2}}) into ({\ref{P_Invariance_1}}), we get%
\begin{align*}
J_{T}(\theta;A_{1}C_{1},W_{1})  &  =\bar{g}_{0,T}(\theta)^{\top}%
(\mathbf{I}_{2}\otimes A_{1})(\mathbf{I}_{2}\otimes C_{1})\\
&  \times(\mathbf{I}_{2}\otimes C_{1})^{-1}((\mathbf{I}_{2}\otimes A_{1}%
^{\top})W_{1}(\mathbf{I}_{2}\otimes A_{1}))^{-1}(\mathbf{I}_{2}\otimes
C_{1}^{\top})^{-1}\\
&  \times(\mathbf{I}_{2}\otimes C_{1}^{\top})(\mathbf{I}_{2}\otimes
A_{1}^{\top})\bar{g}_{0,T}(\theta)\\
&  =\bar{g}_{0,T}(\theta)^{\top}(\mathbf{I}_{2}\otimes A_{1})((\mathbf{I}%
_{2}\otimes A_{1}^{\top})W_{1}(\mathbf{I}_{2}\otimes A_{1}))^{-1}%
(\mathbf{I}_{2}\otimes A_{1}^{\top})\bar{g}_{0,T}(\theta)\\
&  =J_{T}(\theta;A_{1},W_{1}),
\end{align*}
which establishes the claim of the lemma.\hfill$Q.E.D.$

\bigskip

\begin{lemma}
\label{Asy_L0} Under Assumptions {\ref{Asy_Cond_1}(i, ii) and \ref{S}, we
have:}%
\[
T^{-1/2}\sum_{t\leq T}(y_{t}y_{t}^{\top}-{\mathbb{E}[}y_{t}y_{t}^{\top
}])=O_{p}(1)\text{ \ and \ }T^{-1/2}\sum_{t\leq T}(p_{t}y_{t}-{\mathbb{E}%
[}p_{t}y_{t}])=O_{p}(1).
\]

\end{lemma}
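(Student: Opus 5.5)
The plan is to write $y_t$ and $p_t$ in reduced form as linear combinations of the primitive shocks and of $S_t$. Every product $y_ty_t^\top$ and $p_ty_t$ then becomes a finite sum of terms of the form (bounded nonrandom matrix) $\times\, a_tb_t^\top$ with $a_t,b_t$ among the objects in Assumptions \ref{Asy_Cond_1}(ii) and \ref{S}(i). Solving (\ref{G_demand})--(\ref{G_supply}) using $S_t^\top\mathbf{1}_n=1$ (Assumption \ref{S}(ii)) gives
\begin{align*}
y_t &= \lambda\eta_t+u_t+\frac{\phi\psi}{1-\phi\psi}\mathbf{1}_n S_t^\top(\lambda\eta_t+u_t)+\frac{\phi}{1-\phi\psi}\mathbf{1}_n\varepsilon_t,\\
p_t &= \frac{\psi}{1-\phi\psi}S_t^\top(\lambda\eta_t+u_t)+\frac{1}{1-\phi\psi}\varepsilon_t .
\end{align*}
By Assumption \ref{Asy_Cond_1}(i), all coefficients $\phi\psi/(1-\phi\psi)$, $\phi/(1-\phi\psi)$, $\psi/(1-\phi\psi)$ and $\lambda$ are bounded by $K$. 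These are presumably the displays (\ref{P_Asy_L0_1})--(\ref{P_Asy_L0_2}) referred to earlier.

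The next step rewrites $S_t^\top\lambda\eta_t$ as $\mathrm{vec}(\lambda)^\top(\eta_t\otimes S_t)$, which is linear in $\eta_t\otimes S_t$ with a fixed coefficient. This shows that $y_t=B\,w_t$ and $p_t=c^\top w_t$ for a fixed bounded matrix $B$ and vector $c$. Here
\[
w_t\equiv\bigl(u_t^\top,\ \eta_t^\top,\ S_t^\top u_t,\ (\eta_t\otimes S_t)^\top,\ \varepsilon_t\bigr)^\top .
\]
Consequently
\[
T^{-1/2}\sum_{t\leq T}\bigl(y_ty_t^\top-\mathbb{E}[y_ty_t^\top]\bigr)=B\Bigl(T^{-1/2}\sum_{t\leq T}(w_tw_t^\top-\mathbb{E}[w_tw_t^\top])\Bigr)B^\top,
\]
and similarly $T^{-1/2}\sum_{t\leq T}(p_ty_t-\mathbb{E}[p_ty_t])=B\bigl(T^{-1/2}\sum_{t\leq T}(w_tw_t^\top-\mathbb{E}[w_tw_t^\top])\bigr)c$. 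It therefore suffices to show that each block of $T^{-1/2}\sum_{t\leq T}(w_tw_t^\top-\mathbb{E}[w_tw_t^\top])$ is $O_p(1)$, since a finite sum of $O_p(1)$ terms multiplied by bounded constants is $O_p(1)$.

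The final step matches each block to an assumption. Blocks pairing elements of $\{u_t,\eta_t,\varepsilon_t\}$ with each other are covered by Assumption \ref{Asy_Cond_1}(ii). Blocks pairing elements of $\{S_t^\top u_t,\eta_t\otimes S_t,\varepsilon_t\}$ with each other, and blocks pairing $u_t$ or $\eta_t$ with $S_t^\top u_t$ or $\eta_t\otimes S_t$, are covered by Assumption \ref{S}(i); transposed blocks follow by symmetry.

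The main obstacle is this bookkeeping. One must verify that no cross product falls outside the listed pairs, which is exactly why the reduction to $\eta_t\otimes S_t$ is needed. Without it, random $S_t$ would give terms like $S_tS_t^\top$ multiplied by shocks, and these are not directly listed. After the Kronecker rewriting, every pair is accounted for.
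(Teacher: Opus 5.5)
Your proposal is correct and follows essentially the same route as the paper. The paper substitutes the reduced form of $y_t$ and $p_t$, expands $T^{-1}\sum_{t\leq T}y_ty_t^\top$ and $T^{-1}\sum_{t\leq T}p_ty_t$ term by term, and applies Assumption \ref{Asy_Cond_1}(ii) to products of $u_t+\lambda\eta_t$ and $\varepsilon_t$ and Assumption \ref{S}(i) to the terms involving $S_t^\top(u_t+\lambda\eta_t)$. Your stacked vector $w_t$, together with the rewriting $S_t^\top\lambda\eta_t=\mathrm{vec}(\lambda)^\top(\eta_t\otimes S_t)$, is a tidier bookkeeping of that same expansion, and it makes explicit the step the paper leaves implicit when it invokes Assumption \ref{S}(i).
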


\noindent \textsc{Proof of Lemma \ref{Asy_L0}.} From (\ref{S_supply}) and
(\ref{G_demand}), we obtain%
\begin{align}
y_{t}  &  =\left(  \mathbf{I}_{n}+\frac{\phi \psi}{1-\phi \psi}\mathbf{1}%
_{n}S_{t}^{\top}\right)  (u_{t}+\lambda \eta_{t})+\frac{\phi}{1-\phi \psi
}\mathbf{1}_{n}\varepsilon_{t},\label{P_Asy_L0_1}\\
p_{t}  &  =\frac{\psi}{1-\phi \psi}S_{t}^{\top}(u_{t}+\lambda \eta_{t}%
)+\frac{\varepsilon_{t}}{1-\phi \psi}. \label{P_Asy_L0_2}%
\end{align}
Therefore,%
\begin{align}
T^{-1}\sum_{t\leq T}y_{t}y_{t}^{\top}  &  =T^{-1}\sum_{t\leq T}\left(
\mathbf{I}_{n}+\frac{\phi \psi}{1-\phi \psi}\mathbf{1}_{n}S_{t}^{\top}\right)
(u_{t}+\lambda \eta_{t})(u_{t}+\lambda \eta_{t})^{\top}\left(  \mathbf{I}%
_{n}+\frac{\phi \psi}{1-\phi \psi}S_{t}\mathbf{1}_{n}^{\top}\right) \nonumber \\
&  \quad+\frac{\phi}{1-\phi \psi}T^{-1}\sum_{t\leq T}\left(  \mathbf{I}%
_{n}+\frac{\phi \psi}{1-\phi \psi}\mathbf{1}_{n}S_{t}^{\top}\right)
(u_{t}+\lambda \eta_{t})\varepsilon_{t}\mathbf{1}_{n}^{\top}\nonumber \\
&  \quad+\frac{\phi}{1-\phi \psi}\mathbf{1}_{n}T^{-1}\sum_{t\leq T}%
\varepsilon_{t}(u_{t}+\lambda \eta_{t})^{\top}\left(  \mathbf{I}_{n}+\frac
{\phi \psi}{1-\phi \psi}S_{t}\mathbf{1}_{n}^{\top}\right) \nonumber \\
&  \quad+\frac{\phi^{2}}{(1-\phi \psi)^{2}}\mathbf{1}_{n}\mathbf{1}_{n}^{\top
}T^{-1}\sum_{t\leq T}\varepsilon_{t}^{2}, \label{P_Asy_L0_3}%
\end{align}
and%
\begin{align}
T^{-1}\sum_{t\leq T}p_{t}y_{t}  &  =\frac{\psi}{1-\phi \psi}T^{-1}\sum_{t\leq
T}\left(  \mathbf{I}_{n}+\frac{\phi \psi}{1-\phi \psi}\mathbf{1}_{n}S_{t}^{\top
}\right)  (u_{t}+\lambda \eta_{t})(u_{t}+\lambda \eta_{t})^{\top}S_{t}%
\nonumber \\
&  \quad+\frac{1}{1-\phi \psi}T^{-1}\sum_{t\leq T}\left(  \mathbf{I}_{n}%
+\frac{\phi \psi}{1-\phi \psi}\mathbf{1}_{n}S_{t}^{\top}\right)  (u_{t}%
+\lambda \eta_{t})\varepsilon_{t}\nonumber \\
&  \quad+\frac{\phi \psi}{(1-\phi \psi)^{2}}\mathbf{1}_{n}T^{-1}\sum_{t\leq
T}\varepsilon_{t}(u_{t}+\lambda \eta_{t})^{\top}S_{t}+\frac{\phi}{(1-\phi
\psi)^{2}}\mathbf{1}_{n}T^{-1}\sum_{t\leq T}\varepsilon_{t}^{2}.
\label{P_Asy_L0_4}%
\end{align}
From Assumptions {\ref{Asy_Cond_1}(i, ii), we have }%
\begin{align}
T^{-1}\sum_{t\leq T}(u_{t}+\lambda \eta_{t})(u_{t}+\lambda \eta_{t})^{\top}  &
=T^{-1}\sum_{t\leq T}{\mathbb{E}[}(u_{t}+\lambda \eta_{t})(u_{t}+\lambda
\eta_{t})^{\top}]+O_{p}(T^{-1/2}),\label{P_Asy_L0_5}\\
T^{-1}\sum_{t\leq T}(u_{t}+\lambda \eta_{t})\varepsilon_{t}  &  =T^{-1}%
\sum_{t\leq T}{\mathbb{E}[}(u_{t}+\lambda \eta_{t})\varepsilon_{t}%
]+O_{p}(T^{-1/2}),\label{P_Asy_L0_6}\\
T^{-1}\sum_{t\leq T}\varepsilon_{t}^{2}  &  =T^{-1}\sum_{t\leq T}{\mathbb{E}%
[}\varepsilon_{t}^{2}]+O_{p}(T^{-1/2}). \label{P_Asy_L0_7}%
\end{align}
Similarly, from\ Assumptions {\ref{Asy_Cond_1}(i) and \ref{S}(i), it follows
that}%
\begin{align}
T^{-1}\sum_{t\leq T}S_{t}^{\top}(u_{t}+\lambda \eta_{t})(u_{t}+\lambda \eta
_{t})^{\top}S_{t}  &  =T^{-1}\sum_{t\leq T}{\mathbb{E}[}S_{t}^{\top}%
(u_{t}+\lambda \eta_{t})(u_{t}+\lambda \eta_{t})^{\top}S_{t}]+O_{p}%
(T^{-1/2}),\label{P_Asy_L0_8}\\
T^{-1}\sum_{t\leq T}S_{t}^{\top}(u_{t}+\lambda \eta_{t})(u_{t}+\lambda \eta
_{t})^{\top}  &  =T^{-1}\sum_{t\leq T}{\mathbb{E}[}S_{t}^{\top}(u_{t}%
+\lambda \eta_{t})(u_{t}+\lambda \eta_{t})^{\top}]+O_{p}(T^{-1/2}%
),\label{P_Asy_L0_9}\\
T^{-1}\sum_{t\leq T}S_{t}^{\top}(u_{t}+\lambda \eta_{t})\varepsilon_{t}  &
=T^{-1}\sum_{t\leq T}{\mathbb{E}[}S_{t}^{\top}(u_{t}+\lambda \eta
_{t})\varepsilon_{t}]+O_{p}(T^{-1/2}). \label{P_Asy_L0_10}%
\end{align}
The claim of the lemma follows from Assumption {\ref{Asy_Cond_1}(i), and
}({\ref{P_Asy_L0_3})-}({\ref{P_Asy_L0_10}).}\hfill$Q.E.D.$

\bigskip

\begin{lemma}
\label{Asy_L0b}\ Under Assumptions {\ref{Asy_Cond_1}(i, ii) and \ref{S}, we
have:}%
\[
T^{-1/2}\sum_{t\leq T}(y_{S,t}y_{t}-{\mathbb{E}[}y_{S,t}y_{t}])=O_{p}(1).
\]

\end{lemma}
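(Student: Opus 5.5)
The plan mirrors the proof of Lemma \ref{Asy_L0}. Since $y_{S,t}y_t=y_ty_t^{\top}S_t$, I would first derive an explicit reduced form for $y_{S,t}$. Premultiplying (\ref{P_Asy_L0_1}) by $S_t^{\top}$ and using Assumption \ref{S}(ii), $S_t^{\top}\mathbf{1}_n=1$, gives
\[
y_{S,t}=\frac{1}{1-\phi\psi}S_t^{\top}(u_t+\lambda\eta_t)+\frac{\phi}{1-\phi\psi}\varepsilon_t .
\]
Assumption \ref{Asy_Cond_1}(i) guarantees that $|1-\phi\psi|^{-1}\le K$, so all coefficients are bounded.

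Next I would multiply this expression by (\ref{P_Asy_L0_1}) and expand $T^{-1}\sum_{t\le T}y_{S,t}y_t$ into finitely many terms, each a bounded nonrandom coefficient times one sample average. Writing $w_t\equiv u_t+\lambda\eta_t$, these averages are of the following types:
\begin{itemize}
\item $T^{-1}\sum_t w_t\,S_t^{\top}w_t$ and $T^{-1}\sum_t \mathbf{1}_n (S_t^{\top}w_t)^2$, coming from the product of $(\mathbf{I}_n+\frac{\phi\psi}{1-\phi\psi}\mathbf{1}_nS_t^{\top})w_t$ with $S_t^{\top}w_t$;
\item $T^{-1}\sum_t w_t\varepsilon_t$ and $T^{-1}\sum_t \mathbf{1}_n S_t^{\top}w_t\varepsilon_t$;
\item $T^{-1}\sum_t \mathbf{1}_n\varepsilon_t S_t^{\top}w_t$ and $T^{-1}\sum_t\mathbf{1}_n\varepsilon_t^2$.
\end{itemize}

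The centering step handles each average separately. Expanding $S_t^{\top}w_t=S_t^{\top}u_t+(\eta_t\otimes S_t)^{\top}\mathrm{vec}(\lambda)$ expresses $S_t^{\top}\lambda\eta_t$ linearly in $\eta_t\otimes S_t$ with the bounded matrix $\lambda$. After this, every sample average is a bounded linear combination of averages of $a_tb_t^{\top}$ with the following pairs:
\begin{itemize}
\item $a_t,b_t\in\{S_t^{\top}u_t,\eta_t\otimes S_t,\varepsilon_t\}$, which covers $(S_t^{\top}w_t)^2$, $S_t^{\top}w_t\varepsilon_t$ and $\varepsilon_t^2$;
\item $a_t\in\{u_t,\eta_t\}$ with $b_t\in\{S_t^{\top}u_t,\eta_t\otimes S_t\}$, which covers $w_tS_t^{\top}w_t$;
\item $a_t,b_t\in\{u,\eta,\varepsilon\}$, which covers $w_t\varepsilon_t$.
\end{itemize}
Assumption \ref{S}(i) and Assumption \ref{Asy_Cond_1}(ii) then give the $T^{-1/2}$ rate for each term, exactly as in (\ref{P_Asy_L0_5})--(\ref{P_Asy_L0_10}).

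Subtracting expectations term by term and summing the finitely many bounded pieces yields the claim. The only delicate point is the product $w_t\,S_t^{\top}w_t$ with $w_t$ uncontracted. Here one must check that the term $\lambda\eta_t\,(\eta_t\otimes S_t)^{\top}$ is indeed of the form $a_t\in\{\eta_t\}$, $b_t\in\{\eta_t\otimes S_t\}$ permitted by Assumption \ref{S}(i), rather than something requiring a new condition. Once that bookkeeping is done, the remaining steps are routine.
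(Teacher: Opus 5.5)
Your proposal is correct and follows the paper's proof: both use $\mathbf{1}_n^{\top}S_t=1$ to obtain $y_{S,t}=\frac{1}{1-\phi\psi}S_t^{\top}(u_t+\lambda\eta_t)+\frac{\phi}{1-\phi\psi}\varepsilon_t$, expand $y_{S,t}y_t$ into the same finitely many bounded-coefficient terms, and conclude from the $T^{-1/2}$ bounds in (\ref{P_Asy_L0_6})--(\ref{P_Asy_L0_10}). Your rewriting $S_t^{\top}\lambda\eta_t=(\eta_t\otimes S_t)^{\top}\mathrm{vec}(\lambda)$ simply makes explicit the step the paper leaves implicit when it cites Assumption \ref{S}(i). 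The term $\lambda\eta_t(\eta_t\otimes S_t)^{\top}\mathrm{vec}(\lambda)$ that you flag is indeed covered by the second clause of that assumption.
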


\noindent \textsc{Proof of Lemma \ref{Asy_L0b}.} Since\ $\mathbf{1}_{n}^{\top
}S_{t}=1$, from ({\ref{P_Asy_L0_1}}) we have
\[
y_{S,t}=\frac{1}{1-\phi \psi}S_{t}^{\top}(u_{t}+\lambda \eta_{t})+\frac{\phi
}{1-\phi \psi}\varepsilon_{t}.
\]
Therefore,
\begin{align*}
y_{S,t}y_{t}  &  =\frac{1}{1-\phi \psi}\left(  \mathbf{I}_{n}+\frac{\phi \psi
}{1-\phi \psi}\mathbf{1}_{n}S_{t}^{\top}\right)  (u_{t}+\lambda \eta_{t}%
)(u_{t}+\lambda \eta_{t})^{\top}S_{t}\\
&  +\frac{\phi}{1-\phi \psi}\left(  \mathbf{I}_{n}+\frac{\phi \psi}{1-\phi \psi
}\mathbf{1}_{n}S_{t}^{\top}\right)  (u_{t}+\lambda \eta_{t})\varepsilon_{t}\\
&  +\frac{\phi}{(1-\phi \psi)^{2}}S_{t}^{\top}(u_{t}+\lambda \eta_{t}%
)\varepsilon_{t}\mathbf{1}_{n}+\frac{\phi^{2}}{(1-\phi \psi)^{2}}\mathbf{1}%
_{n}\varepsilon_{t}^{2}.
\end{align*}
The claim of the lemma thus follows from ({\ref{P_Asy_L0_6}}%
)-({\ref{P_Asy_L0_10}}).\hfill$Q.E.D.$

\bigskip

The GIV estimator depends on $\hat{A}\equiv Q_{-1}\hat{A}_{0}$, which serves
as an estimator of $Q_{-1}A_{0}$. Since $A_{0}$ collects the eigenvectors
associated with the smallest $n-\bar{r}$ eigenvalues of $S_{y}$, it is not
uniquely identified when $n-\bar{r}>1$. Therefore, it is generally unrealistic
to expect that $\hat{A}_{0}$ (and hence $\hat{A}$) converges to $A_{0}$ as
$T\rightarrow \infty$. Instead, the lemma below shows that a suitably rotated
version of $\hat{A}_{0}$ is a consistent estimator of $A_{0}$.

\begin{lemma}
\label{Asy_Ahat_L1} Define $K_{0}\equiv(A_{0,\bot}^{\top}\hat{A}_{0}%
)(A_{0}^{\top}\hat{A}_{0})^{-1}$ and
\begin{equation}
\hat{H}_{0}\equiv \left(  \mathbf{I}_{n-\bar{r}}+K_{0}^{\top}K_{0}\right)
^{-1/2}(A_{0}^{\top}\hat{A}_{0}+K_{0}^{\top}A_{0,\bot}^{\top}\hat{A}_{0}).
\label{Rotation_H}%
\end{equation}
Under Assumptions \ref{Asy_Cond_1} and {\ref{S}}, we have
\begin{equation}
\hat{A}_{0}\hat{H}_{0}^{\top}-A_{0}=A_{0,\bot}(\bar{\sigma}_{u}^{2}%
\mathbf{I}_{\bar{r}-1}-\Lambda_{\bot})^{-1}A_{0,\bot}^{\top}\hat{\Delta}%
_{y}A_{0}+O_{p}(T^{-1}), \label{Est_A_tilda}%
\end{equation}
where $\hat{\Delta}_{y}\equiv \hat{S}_{y}-S_{y}$ . Moreover, $\hat{H}_{0}$
satisfies $\hat{H}_{0}^{\top}\hat{H}_{0}=\mathbf{I}_{n-\bar{r}}$ wpa1.
\end{lemma}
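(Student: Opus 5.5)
The approach is to decompose $\hat{A}_{0}$ in the orthonormal eigenbasis $(A_{0},A_{0,\bot})$ of $S_{y}$, verify algebraically that $\hat{H}_{0}$ is orthogonal, and then linearize the eigen-equation for $\hat{S}_{y}$. Write
\[
X\equiv A_{0}^{\top}\hat{A}_{0},\qquad Y\equiv A_{0,\bot}^{\top}\hat{A}_{0},\qquad \hat{A}_{0}=A_{0}X+A_{0,\bot}Y,
\]
so that $X^{\top}X+Y^{\top}Y=\mathbf{I}_{n-\bar{r}}$ and $K_{0}=YX^{-1}$. The first input is $\Vert\hat{\Delta}_{y}\Vert=O_{p}(T^{-1/2})$, which follows from Lemma \ref{Asy_L0} because $Q_{-1}$ is fixed. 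By Weyl's inequality, $\max_{j}|\hat{\mu}_{j}-\mu_{j}|=O_{p}(T^{-1/2})$. The eigenvalue gap in Assumption \ref{Asy_Cond_1}(iii), together with the Davis--Kahan $\sin\Theta$ theorem, then gives $\Vert Y\Vert=O_{p}(T^{-1/2})$. Hence $X^{\top}X=\mathbf{I}-Y^{\top}Y$ is nonsingular wpa1, so $K_{0}$ is well defined wpa1 and $K_{0}=O_{p}(T^{-1/2})$.

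Orthogonality of $\hat{H}_{0}$ is purely algebraic. Since $X+K_{0}^{\top}Y=(\mathbf{I}+K_{0}^{\top}K_{0})X$, we have $\hat{H}_{0}=(\mathbf{I}+K_{0}^{\top}K_{0})^{1/2}X$. Therefore
\[
\hat{H}_{0}^{\top}\hat{H}_{0}=X^{\top}X+X^{\top}K_{0}^{\top}K_{0}X=X^{\top}X+Y^{\top}Y=\mathbf{I}_{n-\bar{r}}
\]
wpa1. Because $\hat{H}_{0}$ is square, it is orthogonal, so $\hat{H}_{0}\hat{H}_{0}^{\top}=\mathbf{I}$ as well. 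Substituting the formula for $\hat{H}_{0}$ into this identity yields $XX^{\top}=(\mathbf{I}+K_{0}^{\top}K_{0})^{-1}$. Using $\hat{A}_{0}=(A_{0}+A_{0,\bot}K_{0})X$, it follows that
\[
\hat{A}_{0}\hat{H}_{0}^{\top}=(A_{0}+A_{0,\bot}K_{0})XX^{\top}(\mathbf{I}+K_{0}^{\top}K_{0})^{1/2}=(A_{0}+A_{0,\bot}K_{0})(\mathbf{I}+K_{0}^{\top}K_{0})^{-1/2}=A_{0}+A_{0,\bot}K_{0}+O_{p}(T^{-1}).
\]
The remainder is $O_{p}(\Vert K_{0}\Vert^{2})=O_{p}(T^{-1})$, so it remains to linearize $K_{0}$.

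The linearization comes from the eigen-equation $\hat{S}_{y}\hat{A}_{0}=\hat{A}_{0}\hat{\Lambda}_{0}$, where $\hat{\Lambda}_{0}$ is the diagonal matrix of the smallest $n-\bar{r}$ eigenvalues. Since $S_{y}=A_{0}\bar{\sigma}_{u}^{2}A_{0}^{\top}+A_{0,\bot}\Lambda_{\bot}A_{0,\bot}^{\top}$, premultiplying by $A_{0,\bot}^{\top}$ gives
\[
\Lambda_{\bot}Y+A_{0,\bot}^{\top}\hat{\Delta}_{y}\hat{A}_{0}=Y\hat{\Lambda}_{0}.
\]
By Weyl's inequality and Lemma \ref{ID_G_GIV_Weight}, $\hat{\Lambda}_{0}=\bar{\sigma}_{u}^{2}\mathbf{I}+O_{p}(T^{-1/2})$. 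In addition, $\hat{\Delta}_{y}\hat{A}_{0}=\hat{\Delta}_{y}A_{0}X+O_{p}(T^{-1})$ because $\Vert Y\Vert=O_{p}(T^{-1/2})$. Combining these,
\[
(\bar{\sigma}_{u}^{2}\mathbf{I}_{\bar{r}-1}-\Lambda_{\bot})Y=A_{0,\bot}^{\top}\hat{\Delta}_{y}A_{0}X+O_{p}(T^{-1}).
\]
The matrix $\bar{\sigma}_{u}^{2}\mathbf{I}-\Lambda_{\bot}$ has inverse bounded by $K$ by Assumption \ref{Asy_Cond_1}(iii), and $X^{-1}=O_{p}(1)$. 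Inverting both and right-multiplying by $X^{-1}$ gives $K_{0}=(\bar{\sigma}_{u}^{2}\mathbf{I}-\Lambda_{\bot})^{-1}A_{0,\bot}^{\top}\hat{\Delta}_{y}A_{0}+O_{p}(T^{-1})$. Substituting into the display above yields (\ref{Est_A_tilda}).

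The main obstacle is the first-order bound $\Vert Y\Vert=O_{p}(T^{-1/2})$. It must hold even though $\hat{A}_{0}$ is an arbitrary eigenbasis and the bottom eigenvalue $\bar{\sigma}_{u}^{2}$ of $S_{y}$ is repeated, so individual eigenvectors are not identified. I would handle it with the subspace (rather than vector) form of Davis--Kahan, which depends only on the gap $\mu_{n-\bar{r}+1}-\bar{\sigma}_{u}^{2}>K^{-1}$. Once that bound is in place, every later step is either exact algebra or discards terms of order $O_{p}(T^{-1})$.
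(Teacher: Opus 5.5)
Your proposal is correct and follows essentially the same route as the paper's proof via Lemmas \ref{Asy_L1}--\ref{Asy_L4}. Both arguments use a Davis--Kahan bound on $A_{0,\bot}^{\top}\hat{A}_{0}$, the block decomposition in $(A_{0},A_{0,\bot})$ that gives $\hat{A}_{0}\hat{H}_{0}^{\top}=(A_{0}+A_{0,\bot}K_{0})(\mathbf{I}+K_{0}^{\top}K_{0})^{-1/2}$, and a linearization of the eigen-equation for $K_{0}$. The differences are only streamlining: you obtain orthogonality of $\hat{H}_{0}$ directly from $\hat{H}_{0}=(\mathbf{I}+K_{0}^{\top}K_{0})^{1/2}A_{0}^{\top}\hat{A}_{0}$ rather than through the paper's $U_{0}R_{0}$ factorization, and you replace $\hat{\Lambda}_{0}$ by $\bar{\sigma}_{u}^{2}\mathbf{I}$ via Weyl's inequality instead of deriving the exact Riccati-type equation from both blocks.
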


\noindent \textsc{Proof of Lemma \ref{Asy_Ahat_L1}.} The proof follows directly
from Lemmas {\ref{Asy_L1} - \ref{Asy_L4}}.\hfill$Q.E.D.$

\bigskip

\begin{lemma}
\label{Asy_L1} Under Assumptions \ref{Asy_Cond_1} and {\ref{S}}, we have
\begin{equation}
||\hat{A}_{0}^{\top}A_{0}||_{o}\geq1-O_{p}(T^{-1/2})\text{ \  \ and
\  \ }\left \Vert K_{0}\right \Vert =O_{p}(T^{-1/2}), \label{Asy_L1_1}%
\end{equation}
where $\left \Vert \cdot \right \Vert _{o}$ denotes the matrix operator norm.
\end{lemma}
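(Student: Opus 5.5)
The plan is a Davis--Kahan type perturbation argument for the eigenspace of $S_y$ associated with the $n-\bar r$ smallest eigenvalues, which all equal $\bar\sigma_u^2$. First, by Lemma \ref{Asy_L0} together with $\hat S_y - S_y = Q_{-1}^{\top}(\hat\Sigma_y-\bar\Sigma_y)Q_{-1}$, we obtain $\Vert\hat\Delta_y\Vert=O_p(T^{-1/2})$. Weyl's inequality then gives $\max_j|\hat\mu_j-\mu_j|=O_p(T^{-1/2})$. By Assumption \ref{Asy_Cond_1}(iii), we have $\mu_{n-\bar r+1}-\bar\sigma_u^2>K^{-1}$. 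Hence, wpa1,
\[
\hat\mu_{n-\bar r}\le \bar\sigma_u^2+O_p(T^{-1/2})
\quad\text{and}\quad
\hat\mu_{n-\bar r+1}\ge \bar\sigma_u^2+K^{-1}-O_p(T^{-1/2}),
\]
so the perturbed spectral gap is at least $(2K)^{-1}$ wpa1.

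Next, I would use the fact that $(A_0,A_{0,\bot})$ is an orthonormal basis of $\mathbb{R}^{n-1}$. This allows the decomposition $\hat A_0 = A_0(A_0^{\top}\hat A_0)+A_{0,\bot}(A_{0,\bot}^{\top}\hat A_0)$. Premultiply the eigen-equation $\hat S_y\hat A_0=\hat A_0\hat\Lambda_0$ by $A_{0,\bot}^{\top}$, where $\hat\Lambda_0=\mathrm{diag}(\hat\mu_1,\ldots,\hat\mu_{n-\bar r})$. Writing $\hat S_y=S_y+\hat\Delta_y$ and using $A_{0,\bot}^{\top}S_y=\Lambda_\bot A_{0,\bot}^{\top}$, this yields the Sylvester equation
\[
\Lambda_\bot X - X\hat\Lambda_0 = -A_{0,\bot}^{\top}\hat\Delta_y\hat A_0,\qquad X\equiv A_{0,\bot}^{\top}\hat A_0 .
\]
The eigenvalues of $\Lambda_\bot$ and of $\hat\Lambda_0$ are separated by at least $(2K)^{-1}$ wpa1. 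The Sylvester operator is therefore invertible with inverse bounded by $2K$, which gives $\Vert A_{0,\bot}^{\top}\hat A_0\Vert\le 2K\Vert\hat\Delta_y\Vert=O_p(T^{-1/2})$.

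From $\hat A_0^{\top}\hat A_0=\mathbf{I}$ we get $(A_0^{\top}\hat A_0)^{\top}(A_0^{\top}\hat A_0)=\mathbf{I}-X^{\top}X=\mathbf{I}-O_p(T^{-1})$. Consequently all singular values of $A_0^{\top}\hat A_0$ equal $1-O_p(T^{-1})$. This gives $\Vert\hat A_0^{\top}A_0\Vert_o\ge 1-O_p(T^{-1/2})$ (indeed, something stronger). It also shows that $A_0^{\top}\hat A_0$ is invertible wpa1 with $\Vert(A_0^{\top}\hat A_0)^{-1}\Vert_o\le 2$. Therefore $\Vert K_0\Vert\le \Vert X\Vert\,\Vert(A_0^{\top}\hat A_0)^{-1}\Vert_o=O_p(T^{-1/2})$.

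The main obstacle is the eigengap and Sylvester-equation step. The difficulty is that $\hat\Lambda_0$ is random and $\hat A_0$ is not unique when $n-\bar r>1$. The argument must therefore bound $X$ uniformly over any choice of orthonormal eigenvectors. It does so because the bound depends only on the spectral separation and on $\Vert\hat\Delta_y\Vert$, not on the particular basis chosen. If needed, I would instead invoke Davis--Kahan's $\sin\Theta$ theorem directly, since $\Vert X\Vert$ is exactly the $\sin\Theta$ distance between $\mathrm{col}(\hat A_0)$ and $\mathrm{col}(A_0)$.
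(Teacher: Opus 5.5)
Your proof is correct and follows essentially the paper's argument. The paper obtains $\Vert A_{0,\bot}^{\top}\hat A_0\Vert = O_p(T^{-1/2})$ by citing Theorem 2 of Yu, Wang and Samworth (2015), whose bound uses only the population gap $\mu_{n-\bar r+1}-\bar\sigma_u^2$ and so needs no Weyl step; you instead rederive the classical Davis--Kahan bound from the Sylvester equation, using the mixed gap between $\Lambda_\bot$ and $\hat\Lambda_0$. The remaining steps coincide with the paper's: the identity $\mathbf{I}=\hat A_0^{\top}A_0A_0^{\top}\hat A_0+X^{\top}X$, invertibility of $A_0^{\top}\hat A_0$, and the bound on $K_0$. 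Your explicit lower bound on all singular values of $A_0^{\top}\hat A_0$ is exactly what controls $(A_0^{\top}\hat A_0)^{-1}$ in $K_0$.
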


\noindent \textsc{Proof of Lemma \ref{Asy_L1}.} By Lemma {\ref{Asy_L0}, }%
\begin{equation}
\hat{\Delta}_{y}\equiv \hat{S}_{y}-S_{y}=O_{p}(T^{-1/2}). \label{S_y_Rate}%
\end{equation}
Recall that $\hat{A}_{0}$ collects the orthonormal eigenvectors associated
with the smallest $n-\bar{r}$ eigenvalues of $\hat{S}_{y}$. We next apply the
Davis--Kahan $\sin \Theta$ theorem by verifying the conditions of Theorem 2 in
\cite{yu2015useful}. Specifically, we aim to establish that
\begin{equation}
\big \lVert A_{0,\bot}^{\top}\hat{A}_{0}\big \rVert \leq \frac{2\min((n-\bar
{r})^{1/2}||\hat{\Delta}_{y}||_{o},||\hat{\Delta}_{y}||)}{\mu_{n-\bar{r}%
+1}-\bar{\sigma}_{u}^{2}}. \label{P_Asy_L1_1}%
\end{equation}
Let $\Theta(A_{0},\hat{A}_{0})$ be the $(n-\bar{r})\times(n-\bar{r})$ diagonal
matrix of principal angles between the true and estimated eigenspaces, with
$j$-th diagonal entry $\arccos(\sigma_{j})$, where $\sigma_{1}\geq \cdots
\geq \sigma_{n-\bar{r}}$ are the singular values of $A_{0}^{\top}\hat{A}_{0}$.
Define $\sin(\Theta(A_{0},\hat{A}_{0}))$ componentwise. Set the parameters of
Theorem 2 in \cite{yu2015useful} as%
\[
p=n-1\text{, \  \ }r=\bar{r}\text{, \ }s=n-1\text{ \ and \  \ }d=s-r+1=n-\bar
{r}.
\]
Let\ $\tilde{\mu}_{j}\equiv \mu_{n-j}$ for $j=1,\ldots,n-1$, and $\tilde{\mu
}_{n}=-\infty$, such that $\tilde{\mu}_{j}\geq \tilde{\mu}_{j+1}$. By
Assumption \ref{Asy_Cond_1}{(iii)},%
\begin{equation}
\min \{ \tilde{\mu}_{r-1}-\tilde{\mu}_{r},\tilde{\mu}_{s}-\tilde{\mu}%
_{s+1}\}=\min \{ \tilde{\mu}_{\bar{r}-1}-\tilde{\mu}_{\bar{r}},\tilde{\mu
}_{n-1}-\tilde{\mu}_{n}\}=\mu_{n-\bar{r}+1}-\mu_{n-\bar{r}}=\mu_{n-\bar{r}%
+1}-\bar{\sigma}_{u}^{2}>K^{-1}. \label{P_Asy_L1_1a}%
\end{equation}
Since\ $A_{0}$ and $\hat{A}_{0}$ consist of the eigenvectors corresponding to
the smallest $n-\bar{r}$ eigenvalues of $S_{y}$ and $\hat{S}_{y}$,
respectively, Theorem 2 of \cite{yu2015useful} yields
\begin{equation}
\big \lVert \sin(\Theta(A_{0},\hat{A}_{0}))\big \rVert \leq \frac{2\min \left \{
d^{1/2}||\hat{\Delta}_{y}||_{o},||\hat{\Delta}_{y}||\right \}  }{\mu_{n-\bar
{r}+1}-\bar{\sigma}_{u}^{2}}. \label{P_Asy_L1_1b}%
\end{equation}
Finally, because $||\sin(\Theta(A_{0},\hat{A}_{0}))||=||A_{0,\bot}^{\top}%
\hat{A}_{0}||$ (see the expression below (A4) in \cite{yu2015useful}), the
desired bound in ({\ref{P_Asy_L1_1}}) follows directly from
({\ref{P_Asy_L1_1b}}).

Using ({\ref{P_Asy_L1_1}}) along with\ ({\ref{S_y_Rate}}) and Assumption
\ref{Asy_Cond_1}{(iii)}, we obtain
\begin{equation}
\big \lVert A_{0,\bot}^{\top}\hat{A}_{0}\big \rVert=O_{p}(T^{-1/2}).
\label{P_Asy_L1_2}%
\end{equation}
Let $P_{0}\equiv(A_{0},A_{0,\bot})$. Since $P_{0}$ and $\hat{A}_{0}$ are
orthonormal by construction,%
\[
(P_{0}^{\top}\hat{A}_{0})^{\top}(P_{0}^{\top}\hat{A}_{0})=\mathbf{I}%
_{n-\bar{r}}.
\]
Expanding the left-hand side,
\begin{equation}
\mathbf{I}_{n-\bar{r}}=((A_{0},A_{0,\bot})^{\top}\hat{A}_{0})^{\top}%
(A_{0},A_{0,\bot})^{\top}\hat{A}_{0}=\hat{A}_{0}^{\top}A_{0}A_{0}^{\top}%
\hat{A}_{0}+\hat{A}_{0}^{\top}A_{0,\bot}A_{0,\bot}^{\top}\hat{A}_{0}.
\label{P_Asy_L1_2b}%
\end{equation}
Taking operator norms and applying ({\ref{P_Asy_L1_2}}),%
\begin{equation}
\big \lVert \hat{A}_{0}^{\top}A_{0}\big \rVert_{o}^{2}\geq1-\big \lVert \hat
{A}_{0}^{\top}A_{0,\bot}\big \rVert_{o}^{2}\geq1-\big \lVert \hat{A}_{0}%
^{\top}A_{0,\bot}\big \rVert^{2}=1-O_{p}(T^{-1}). \label{P_Asy_L1_3}%
\end{equation}
This proves the first claim of ({\ref{Asy_L1_1}}).

From ({\ref{P_Asy_L1_2b}}) and ({\ref{P_Asy_L1_3}}), $\hat{A}_{0}^{\top}A_{0}$ is nonsingular wpa1.
Hence $K_{0}\equiv(A_{0,\bot}^{\top}\hat{A}_{0})(A_{0}^{\top}\hat{A}_{0}%
)^{-1}$ is well defined wpa1. This, together with ({\ref{P_Asy_L1_2}}) and
({\ref{P_Asy_L1_3}}) shows%
\[
\left \Vert K_{0}\right \Vert =O_{p}(T^{-1/2}),
\]
which establishes the second claim.\hfill$Q.E.D.$

\bigskip

\begin{lemma}
\label{Asy_L2} Under Assumptions \ref{Asy_Cond_1} and {\ref{S}}, we have wpa1,%
\begin{equation}
\hat{A}_{0}\hat{H}_{0}^{\top}=P_{0}\left(
\begin{array}
[c]{c}%
\mathbf{I}_{n-\bar{r}}\\
K_{0}%
\end{array}
\right)  (\mathbf{I}_{n-\bar{r}}+K_{0}^{\top}K_{0})^{-1/2}\text{ \ and \ }%
\hat{H}_{0}^{\top}\hat{H}_{0}=\mathbf{I}_{n-\bar{r}}, \label{Asy_L2_1}%
\end{equation}
where $P_{0}\equiv(A_{0},A_{0,\bot})$.
\end{lemma}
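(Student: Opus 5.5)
The proof is a direct algebraic computation carried out on the event where $A_{0}^{\top}\hat{A}_{0}$ is nonsingular. By Lemma \ref{Asy_L1}, $\Vert\hat{A}_{0}^{\top}A_{0}\Vert_{o}\geq 1-O_{p}(T^{-1/2})$, and (\ref{P_Asy_L1_2b})--(\ref{P_Asy_L1_3}) show that this event has probability approaching one. On it, $K_{0}$ is well defined. Write $B\equiv A_{0}^{\top}\hat{A}_{0}$, so that $A_{0,\bot}^{\top}\hat{A}_{0}=K_{0}B$ by the definition of $K_{0}$.

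The first step expands $\hat{A}_{0}$ in the orthonormal basis $P_{0}=(A_{0},A_{0,\bot})$ of $\mathbb{R}^{n-1}$. Since $P_{0}P_{0}^{\top}=\mathbf{I}_{n-1}$,
\begin{equation*}
\hat{A}_{0}=A_{0}A_{0}^{\top}\hat{A}_{0}+A_{0,\bot}A_{0,\bot}^{\top}\hat{A}_{0}=P_{0}\begin{pmatrix}\mathbf{I}_{n-\bar{r}}\\ K_{0}\end{pmatrix}B .
\end{equation*}
Substituting $A_{0,\bot}^{\top}\hat{A}_{0}=K_{0}B$ into (\ref{Rotation_H}) simplifies the rotation to
\begin{equation*}
\hat{H}_{0}=(\mathbf{I}_{n-\bar{r}}+K_{0}^{\top}K_{0})^{-1/2}(B+K_{0}^{\top}K_{0}B)=(\mathbf{I}_{n-\bar{r}}+K_{0}^{\top}K_{0})^{1/2}B .
\end{equation*}

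The second step uses the orthonormality of $\hat{A}_{0}$, which is the only nontrivial input. Since $P_{0}^{\top}P_{0}=\mathbf{I}_{n-1}$, the expansion above gives
\begin{equation*}
\mathbf{I}_{n-\bar{r}}=\hat{A}_{0}^{\top}\hat{A}_{0}=B^{\top}(\mathbf{I}_{n-\bar{r}}+K_{0}^{\top}K_{0})B .
\end{equation*}
This identity yields both claims.

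\begin{itemize}
\item For the second claim, note that $\hat{H}_{0}^{\top}\hat{H}_{0}=B^{\top}(\mathbf{I}_{n-\bar{r}}+K_{0}^{\top}K_{0})B$, which equals $\mathbf{I}_{n-\bar{r}}$ by the identity.
\item For the first claim, invertibility of $B$ lets us rewrite the identity as $(\mathbf{I}_{n-\bar{r}}+K_{0}^{\top}K_{0})^{-1}=BB^{\top}$. Hence
\begin{equation*}
\hat{A}_{0}\hat{H}_{0}^{\top}=P_{0}\begin{pmatrix}\mathbf{I}_{n-\bar{r}}\\ K_{0}\end{pmatrix}BB^{\top}(\mathbf{I}_{n-\bar{r}}+K_{0}^{\top}K_{0})^{1/2}=P_{0}\begin{pmatrix}\mathbf{I}_{n-\bar{r}}\\ K_{0}\end{pmatrix}(\mathbf{I}_{n-\bar{r}}+K_{0}^{\top}K_{0})^{-1/2},
\end{equation*}
which is (\ref{Asy_L2_1}).
\end{itemize}

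The only point needing care is the justification that $B$ is invertible wpa1, and this is supplied by Lemma \ref{Asy_L1}. Everything else is exact matrix algebra, using that the symmetric square root of $\mathbf{I}_{n-\bar{r}}+K_{0}^{\top}K_{0}$ commutes with its own powers.
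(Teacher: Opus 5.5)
Your proof is correct and follows essentially the paper's route. On the event that $B=A_{0}^{\top}\hat{A}_{0}$ is invertible, you factor $P_{0}^{\top}\hat{A}_{0}=(\mathbf{I}_{n-\bar{r}},K_{0}^{\top})^{\top}B$ and exploit $\hat{A}_{0}^{\top}\hat{A}_{0}=\mathbf{I}_{n-\bar{r}}$. Your closed form $\hat{H}_{0}=(\mathbf{I}_{n-\bar{r}}+K_{0}^{\top}K_{0})^{1/2}B$ is exactly the paper's orthogonal factor $R_{0}$, so your identity $B^{\top}(\mathbf{I}_{n-\bar{r}}+K_{0}^{\top}K_{0})B=\mathbf{I}_{n-\bar{r}}$ just replaces the paper's column-space argument.

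One small tightening: invertibility of $B$ wpa1 should be attributed to $\Vert\mathbf{I}_{n-\bar{r}}-B^{\top}B\Vert\leq\Vert A_{0,\bot}^{\top}\hat{A}_{0}\Vert^{2}=O_{p}(T^{-1})$, as in (\ref{P_Asy_L2_0}). The lower bound on $\Vert\hat{A}_{0}^{\top}A_{0}\Vert_{o}$ does not suffice by itself, since it only controls the largest singular value.
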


\noindent \textsc{Proof of Lemma \ref{Asy_L2}.} Under Assumption
\ref{Asy_Cond_1}, we can use ({\ref{P_Asy_L1_2}}) and ({\ref{P_Asy_L1_2b}}) in
the proof of Lemma {\ref{Asy_L1} to show that }%
\begin{equation}
\left \Vert \mathbf{I}_{n-\bar{r}}-\hat{A}_{0}^{\top}A_{0}A_{0}^{\top}\hat
{A}_{0}\right \Vert =O_{p}(T^{-1}), \label{P_Asy_L2_0}%
\end{equation}
which implies that $A_{0}^{\top}\hat{A}_{0}$ is invertible wpa1. Conditioning
on the event that $A_{0}^{\top}\hat{A}_{0}$ is invertible, we may decompose
$P_{0}^{\top}\hat{A}_{0}=(A_{0},A_{0,\bot})^{\top}\hat{A}_{0}$ as
\begin{equation}
P_{0}^{\top}\hat{A}_{0}=\left(
\begin{array}
[c]{c}%
A_{0}^{\top}\hat{A}_{0}\\
A_{0,\bot}^{\top}\hat{A}_{0}%
\end{array}
\right)  =\left(
\begin{array}
[c]{c}%
\mathbf{I}_{n-\bar{r}}\\
(A_{0,\bot}^{\top}\hat{A}_{0})(A_{0}^{\top}\hat{A}_{0})^{-1}%
\end{array}
\right)  A_{0}^{\top}\hat{A}_{0}=\left(
\begin{array}
[c]{c}%
\mathbf{I}_{n-\bar{r}}\\
K_{0}%
\end{array}
\right)  A_{0}^{\top}\hat{A}_{0}. \label{P_Asy_L2_1}%
\end{equation}
Because both $P_{0}$ and $\hat{A}_{0}$ have orthonormal columns,
\[
(P_{0}^{\top}\hat{A}_{0})^{\top}(P_{0}^{\top}\hat{A}_{0})=\hat{A}_{0}^{\top
}P_{0}P_{0}^{\top}\hat{A}_{0}=\mathbf{I}_{n-\bar{r}},
\]
so $P_{0}^{\top}\hat{A}_{0}$ itself has orthonormal columns. Define
\begin{equation}
U_{0}\equiv \left(
\begin{array}
[c]{c}%
\mathbf{I}_{n-\bar{r}}\\
K_{0}%
\end{array}
\right)  (\mathbf{I}_{n-\bar{r}}+K_{0}^{\top}K_{0})^{-1/2}. \label{P_Asy_L2_2}%
\end{equation}
Then\ $U_{0}^{\top}U_{0}=\mathbf{I}_{n-\bar{r}}$, so $U_{0}$ also has
orthonormal columns.

Since $A_{0}^{\top}\hat{A}_{0}$ and $\mathbf{I}_{n-\bar{r}}+K_{0}^{\top}K_{0}$
are nonsingular,\ ({\ref{P_Asy_L2_1}}) and ({\ref{P_Asy_L2_2}}) imply that
\begin{equation}
\mathrm{\operatorname{col}}(P_{0}^{\top}\hat{A}_{0}%
)=\mathrm{\operatorname{col}}(U_{0})\text{, \ wpa1.} \label{P_Asy_L2_3}%
\end{equation}
Hence there exists a square matrix $R_{0}\ $such that%
\begin{equation}
P_{0}^{\top}\hat{A}_{0}=U_{0}R_{0}. \label{P_Asy_L2_4}%
\end{equation}
Because both $P_{0}^{\top}\hat{A}_{0}$ and $U_{0}$ have\ orthonormal columns,
and $R_{0}$ is a square matrix, we have
\begin{equation}
R_{0}^{\top}R_{0}=R_{0}R_{0}^{\top}=\mathbf{I}_{n-\bar{r}}. \label{P_Asy_L2_5}%
\end{equation}
It follows that:
\begin{equation}
(P_{0}^{\top}\hat{A}_{0})(P_{0}^{\top}\hat{A}_{0})^{\top}=(U_{0}R_{0}%
)(U_{0}R_{0})^{\top}=U_{0}R_{0}R_{0}^{\top}U_{0}^{\top}=U_{0}U_{0}^{\top}.
\label{P_Asy_L2_6}%
\end{equation}

By definition of $\hat{H}_{0}$ (see Lemma \ref{Asy_Ahat_L1}),%
\begin{equation}
\hat{H}_{0}=U_{0}^{\top}(P_{0}^{\top}\hat{A}_{0}), \label{P_Asy_L2_7}%
\end{equation}
based on which we obtain
\begin{equation}
\hat{A}_{0}\hat{H}_{0}^{\top}=\hat{A}_{0}\hat{A}_{0}^{\top}P_{0}U_{0}%
=P_{0}(P_{0}^{\top}\hat{A}_{0})(P_{0}^{\top}\hat{A}_{0})^{\top}U_{0}%
=P_{0}U_{0}U_{0}^{\top}U_{0}=P_{0}U_{0}, \label{P_Asy_L2_8}%
\end{equation}
where we use $P_{0}P_{0}^{\top}=I_{n-1}$, ({\ref{P_Asy_L2_6}}) and
$U_{0}^{\top}U_{0}=\mathbf{I}_{n-\bar{r}}$. This proves the first claim of the lemma.

For the second claim,%
\[
\hat{H}_{0}^{\top}\hat{H}_{0}=(P_{0}^{\top}\hat{A}_{0})^{\top}U_{0}U_{0}%
^{\top}(P_{0}^{\top}\hat{A}_{0})=(U_{0}R_{0})^{\top}U_{0}U_{0}^{\top}%
(U_{0}R_{0})=R_{0}^{\top}(U_{0}^{\top}U_{0})^{2}R_{0}=\mathbf{I}_{n-\bar{r}},
\]
using $U_{0}^{\top}U_{0}=\mathbf{I}_{n-\bar{r}}$, ({\ref{P_Asy_L2_4}}),
({\ref{P_Asy_L2_5}}) and ({\ref{P_Asy_L2_7}}). Both statements in the lemma
now follow.\hfill$Q.E.D.$

\bigskip

\begin{lemma}
\label{Asy_L3} Under Assumptions \ref{Asy_Cond_1} and {\ref{S}}, we have%
\begin{equation}
\tilde{A}_{0}-A_{0}=A_{0,\bot}K_{0}+O_{p}(T^{-1}), \label{Asy_L3_1}%
\end{equation}
where%
\begin{equation}
\tilde{A}_{0}\equiv P_{0}\left(
\begin{array}
[c]{c}%
\mathbf{I}_{n-\bar{r}}\\
K_{0}%
\end{array}
\right)  (\mathbf{I}_{n-\bar{r}}+K_{0}^{\top}K_{0})^{-1/2}. \label{Asy_L3_2}%
\end{equation}

\end{lemma}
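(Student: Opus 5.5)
The plan is to expand $\tilde{A}_{0}$ directly in powers of $K_{0}$, using the bound $\Vert K_{0}\Vert =O_{p}(T^{-1/2})$ from Lemma \ref{Asy_L1}. By definition of $P_{0}=(A_{0},A_{0,\bot})$,
\begin{equation*}
\tilde{A}_{0}=\left( A_{0}+A_{0,\bot}K_{0}\right) (\mathbf{I}_{n-\bar{r}}+K_{0}^{\top}K_{0})^{-1/2},
\end{equation*}
which holds on the event that $A_{0}^{\top}\hat{A}_{0}$ is invertible. By Lemma \ref{Asy_L1} this event has probability approaching one, so $K_{0}$ is well defined wpa1 and it suffices to argue there.

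Write $B\equiv K_{0}^{\top}K_{0}$. Then $B$ is symmetric positive semidefinite with $\Vert B\Vert \leq \Vert K_{0}\Vert ^{2}=O_{p}(T^{-1})$. I will show that
\begin{equation*}
\Vert (\mathbf{I}+B)^{-1/2}-\mathbf{I}\Vert =O_{p}(T^{-1}).
\end{equation*}
Diagonalize $B=O\,\mathrm{diag}(b_{j})\,O^{\top}$ with $b_{j}\geq 0$. Then $(\mathbf{I}+B)^{-1/2}-\mathbf{I}=O\,\mathrm{diag}((1+b_{j})^{-1/2}-1)\,O^{\top}$. The scalar inequality $|(1+b)^{-1/2}-1|\leq b/2$ for $b\geq 0$ then gives the Frobenius bound $\Vert (\mathbf{I}+B)^{-1/2}-\mathbf{I}\Vert \leq \Vert B\Vert /2$. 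This is the only step needing care, and it is elementary.

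Next decompose
\begin{equation*}
\tilde{A}_{0}-A_{0}=A_{0,\bot}K_{0}+(A_{0}+A_{0,\bot}K_{0})\big((\mathbf{I}+B)^{-1/2}-\mathbf{I}\big).
\end{equation*}
Since $A_{0}$ and $A_{0,\bot}$ have orthonormal columns, $\Vert A_{0}+A_{0,\bot}K_{0}\Vert \leq (n-\bar{r})^{1/2}+\Vert K_{0}\Vert =O_{p}(1)$. Submultiplicativity of the Frobenius norm then makes the second term $O_{p}(1)\cdot O_{p}(T^{-1})=O_{p}(T^{-1})$. This yields (\ref{Asy_L3_1}).

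No real obstacle is expected. The only points to handle are that $K_{0}$ exists only wpa1 and the matrix square-root expansion, and the spectral argument above settles the latter.
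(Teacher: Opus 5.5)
Your proposal is correct and follows the paper's proof: both use the decomposition $\tilde{A}_{0}-A_{0}=A_{0,\bot}K_{0}+(A_{0}+A_{0,\bot}K_{0})\big((\mathbf{I}_{n-\bar{r}}+K_{0}^{\top}K_{0})^{-1/2}-\mathbf{I}_{n-\bar{r}}\big)$ together with $\Vert K_{0}\Vert=O_{p}(T^{-1/2})$ from Lemma \ref{Asy_L1}. The only difference is the square-root step: you bound $(\mathbf{I}+B)^{-1/2}-\mathbf{I}$ directly by diagonalization and the inequality $|(1+b)^{-1/2}-1|\leq b/2$, whereas the paper bounds $(\mathbf{I}+B)^{1/2}-\mathbf{I}$ with a cited perturbation inequality for positive definite matrices and then factors out $(A_{0}+A_{0,\bot}K_{0})(\mathbf{I}+B)^{-1/2}$, so your version is slightly more self-contained.
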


\noindent \textsc{Proof of Lemma \ref{Asy_L3}. }The proof uses $K_{0}%
=(A_{0,\bot}^{\top}\hat{A}_{0})(A_{0}^{\top}\hat{A}_{0})^{-1}$, which holds
wpa1 by ({\ref{P_Asy_L2_0}}) in the proof of Lemma {\ref{Asy_L2}. From the
definition of }$\tilde{A}_{0}$,
\begin{align}
\tilde{A}_{0}-A_{0}  &  =(A_{0}+A_{0,\bot}K_{0})(\mathbf{I}_{n-\bar{r}}%
+K_{0}^{\top}K_{0})^{-1/2}-A_{0}\nonumber \\
&  =A_{0,\bot}K_{0}+\left(  A_{0}+A_{0,\bot}K_{0}\right)  \left(  \left(
\mathbf{I}_{n-\bar{r}}+K_{0}^{\top}K_{0}\right)  ^{-1/2}-\mathbf{I}_{n-\bar
{r}}\right)  . \label{P_Asy_L3_1}%
\end{align}
By the perturbation inequality for positive definite matrices (see, e.g.,
Lemma A.2 in \cite{BCCK2015}, applied with $A=\mathbf{I}_{n-\bar{r}}%
+K_{0}^{\top}K_{0}$ and $B=\mathbf{I}_{n-\bar{r}}$ in that lemma),%
\begin{equation}
\big \lVert(\mathbf{I}_{n-\bar{r}}+K_{0}^{\top}K_{0})^{1/2}-\mathbf{I}%
_{n-\bar{r}}\big \rVert_{o}\leq||K_{0}^{\top}K_{0}||_{o}. \label{P_Asy_L3_2}%
\end{equation}
Since $||K_{0}^{\top}K_{0}||_{o}=O_{p}(T^{-1})$ by Lemma {\ref{Asy_L1}}, we
obtain
\begin{equation}
\big \lVert(\mathbf{I}_{n-\bar{r}}+K_{0}^{\top}K_{0})^{1/2}-\mathbf{I}%
_{n-\bar{r}}\big \rVert_{o}\leq||K_{0}^{\top}K_{0}||_{o}=O_{p}(T^{-1}).
\label{P_Asy_L3_3}%
\end{equation}
Next, Lemma {\ref{Asy_L1} along with the triangle inequality and
}({\ref{P_Asy_L3_3}}) implies%
\begin{align}
&  ||(A_{0}+A_{0,\bot}K_{0})(\mathbf{I}_{n-\bar{r}}+K_{0}^{\top}K_{0}%
)^{-1/2}||_{o}\leq(\rho_{\max}((\mathbf{I}_{n-\bar{r}}+K_{0}^{\top}K_{0}%
)^{-1}))^{1/2}||A_{0}+A_{0,\bot}K_{0}||_{o}\nonumber \\
&  \leq \left \Vert A_{0}\right \Vert _{o}+\left \Vert A_{0,\bot}\right \Vert
_{o}\left \Vert K_{0}\right \Vert _{o}=1+O_{p}(T^{-1/2})\leq2, \label{P_Asy_L3_4}%
\end{align}
wpa1. Combining ({\ref{P_Asy_L3_3}}) and ({\ref{P_Asy_L3_4}}) yields
\begin{align}
&  \big \lVert(A_{0}+A_{0,\bot}K_{0})((\mathbf{I}_{n-\bar{r}}+K_{0}^{\top
}K_{0})^{-1/2}-\mathbf{I}_{n-\bar{r}})\big \rVert_{o}\nonumber \\
&  \leq \big \lVert(A_{0}+A_{0,\bot}K_{0})(\mathbf{I}_{n-\bar{r}}+K_{0}^{\top
}K_{0})^{-1/2}\big \rVert_{o}\big \lVert(\mathbf{I}_{n-\bar{r}}+K_{0}^{\top
}K_{0})^{1/2}-\mathbf{I}_{n-\bar{r}}\big \rVert_{o}=O_{p}(T^{-1}).
\label{P_Asy_L3_5}%
\end{align}
The statement of the lemma follows from ({\ref{P_Asy_L3_1}}) and
({\ref{P_Asy_L3_5}}).\hfill$Q.E.D.$

\bigskip

\begin{lemma}
\label{Asy_L4} Under Assumptions \ref{Asy_Cond_1} and {\ref{S}}, we have%
\[
K_{0}=(\bar{\sigma}_{u}^{2}\mathbf{I}_{\bar{r}-1}-\Lambda_{\bot}%
)^{-1}A_{0,\bot}^{\top}\hat{\Delta}_{y}A_{0}+O_{p}(T^{-1}).
\]

\end{lemma}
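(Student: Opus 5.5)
The plan is to work directly from the eigen-equation satisfied by $\hat{A}_{0}$ and project it onto $\operatorname{col}(A_{0,\bot})$. By construction $\hat{S}_{y}\hat{A}_{0}=\hat{A}_{0}\hat{\Lambda}_{0}$, where $\hat{\Lambda}_{0}\equiv\mathrm{diag}((\hat{\mu}_{j})_{j\leq n-\bar{r}})$. Writing $\hat{S}_{y}=S_{y}+\hat{\Delta}_{y}$ and premultiplying by $A_{0,\bot}^{\top}$, I will use $A_{0,\bot}^{\top}S_{y}=\Lambda_{\bot}A_{0,\bot}^{\top}$. This gives
\begin{equation*}
\Lambda_{\bot}A_{0,\bot}^{\top}\hat{A}_{0}+A_{0,\bot}^{\top}\hat{\Delta}_{y}\hat{A}_{0}=A_{0,\bot}^{\top}\hat{A}_{0}\hat{\Lambda}_{0}.
\end{equation*}
Since $\mu_{j}=\bar{\sigma}_{u}^{2}$ for $j\leq n-\bar{r}$ by Lemma \ref{ID_G_GIV_Weight}, and since (\ref{P_r_hat_consistency_1}) (via Lemma \ref{Asy_L0}) gives $\max_{j}|\hat{\mu}_{j}-\mu_{j}|=O_{p}(T^{-1/2})$, I will write $\hat{\Lambda}_{0}=\bar{\sigma}_{u}^{2}\mathbf{I}_{n-\bar{r}}+O_{p}(T^{-1/2})$. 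Lemma \ref{Asy_L1} gives $\Vert A_{0,\bot}^{\top}\hat{A}_{0}\Vert=O_{p}(T^{-1/2})$, so
\begin{equation*}
(\bar{\sigma}_{u}^{2}\mathbf{I}_{\bar{r}-1}-\Lambda_{\bot})A_{0,\bot}^{\top}\hat{A}_{0}=A_{0,\bot}^{\top}\hat{\Delta}_{y}\hat{A}_{0}+O_{p}(T^{-1}).
\end{equation*}
By Assumption \ref{Asy_Cond_1}(iii), $\bar{\sigma}_{u}^{2}\mathbf{I}_{\bar{r}-1}-\Lambda_{\bot}$ has eigenvalues at most $-K^{-1}$. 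Its inverse therefore has bounded norm, and I can solve for $A_{0,\bot}^{\top}\hat{A}_{0}$ up to an $O_{p}(T^{-1})$ error.

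Next, I will right-multiply by $(A_{0}^{\top}\hat{A}_{0})^{-1}$, which exists wpa1 by (\ref{P_Asy_L2_0}) and has bounded operator norm by (\ref{P_Asy_L1_3}). This gives
\begin{equation*}
K_{0}=(\bar{\sigma}_{u}^{2}\mathbf{I}_{\bar{r}-1}-\Lambda_{\bot})^{-1}A_{0,\bot}^{\top}\hat{\Delta}_{y}\hat{A}_{0}(A_{0}^{\top}\hat{A}_{0})^{-1}+O_{p}(T^{-1}).
\end{equation*}
It remains to replace $\hat{A}_{0}(A_{0}^{\top}\hat{A}_{0})^{-1}$ by $A_{0}$. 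Using $P_{0}P_{0}^{\top}=\mathbf{I}_{n-1}$,
\begin{equation*}
\hat{A}_{0}=A_{0}A_{0}^{\top}\hat{A}_{0}+A_{0,\bot}A_{0,\bot}^{\top}\hat{A}_{0},
\end{equation*}
so $\hat{A}_{0}(A_{0}^{\top}\hat{A}_{0})^{-1}=A_{0}+A_{0,\bot}K_{0}$. Because $\Vert K_{0}\Vert=O_{p}(T^{-1/2})$ by Lemma \ref{Asy_L1} and $\hat{\Delta}_{y}=O_{p}(T^{-1/2})$ by (\ref{S_y_Rate}), the term $A_{0,\bot}^{\top}\hat{\Delta}_{y}A_{0,\bot}K_{0}$ is $O_{p}(T^{-1})$. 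This yields the claimed expansion.

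The main obstacle is bookkeeping: every remainder must be a product of two $O_{p}(T^{-1/2})$ factors multiplied by quantities of bounded norm. Concretely, this requires
\begin{equation*}
\Vert(\hat{\Lambda}_{0}-\bar{\sigma}_{u}^{2}\mathbf{I})\Vert\,\Vert A_{0,\bot}^{\top}\hat{A}_{0}\Vert=O_{p}(T^{-1}),
\end{equation*}
together with bounded inverses of $A_{0}^{\top}\hat{A}_{0}$ and $\bar{\sigma}_{u}^{2}\mathbf{I}_{\bar{r}-1}-\Lambda_{\bot}$ wpa1. Each of these has already been supplied by Lemmas \ref{Asy_L0} and \ref{Asy_L1} and Assumption \ref{Asy_Cond_1}(iii).
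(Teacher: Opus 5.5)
Your proof is correct, and it differs from the paper's only in how the eigenvalue matrix $\hat{\Lambda}_{0}$ is handled.

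\emph{Your route.} You use only the $A_{0,\bot}$-row of the projected eigen-equation. You replace $\hat{\Lambda}_{0}$ by $\bar{\sigma}_{u}^{2}\mathbf{I}_{n-\bar{r}}$ via Weyl's inequality, which the paper records as (\ref{P_r_hat_consistency_1}).

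\emph{The paper's route.} The paper also uses the $A_{0}$-row, namely $A_{0}^{\top}\hat{A}_{0}\hat{\Lambda}_{0}=(\bar{\sigma}_{u}^{2}\mathbf{I}_{n-\bar{r}}+\hat{\Delta}_{y,11})A_{0}^{\top}\hat{A}_{0}+\hat{\Delta}_{y,12}A_{0,\bot}^{\top}\hat{A}_{0}$, where $\hat{\Delta}_{y,11}\equiv A_{0}^{\top}\hat{\Delta}_{y}A_{0}$ and $\hat{\Delta}_{y,12}\equiv A_{0}^{\top}\hat{\Delta}_{y}A_{0,\bot}$. This eliminates $\hat{\Lambda}_{0}$ exactly and gives the identity $(\bar{\sigma}_{u}^{2}\mathbf{I}_{\bar{r}-1}-\Lambda_{\bot})K_{0}=\hat{\Delta}_{y,12}^{\top}-K_{0}\hat{\Delta}_{y,11}+A_{0,\bot}^{\top}\hat{\Delta}_{y}A_{0,\bot}K_{0}-K_{0}\hat{\Delta}_{y,12}K_{0}$.

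\emph{Comparison.} The two arguments do the same bookkeeping in different form. Your remainder $A_{0,\bot}^{\top}\hat{A}_{0}(\hat{\Lambda}_{0}-\bar{\sigma}_{u}^{2}\mathbf{I}_{n-\bar{r}})(A_{0}^{\top}\hat{A}_{0})^{-1}$ equals exactly $K_{0}\hat{\Delta}_{y,11}+K_{0}\hat{\Delta}_{y,12}K_{0}$. Your version is slightly shorter but relies on an eigenvalue-perturbation bound. The paper's version is purely algebraic and gives an exact equation that could be iterated to higher order.

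One small correction: the bound on $\Vert(A_{0}^{\top}\hat{A}_{0})^{-1}\Vert_{o}$ should come from (\ref{P_Asy_L2_0}), which shows all singular values of $A_{0}^{\top}\hat{A}_{0}$ are $1+O_{p}(T^{-1})$. It does not follow from (\ref{P_Asy_L1_3}), which only bounds the largest singular value from below.
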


\noindent \textsc{Proof of Lemma \ref{Asy_L4}. }The proof uses $(A_{0}^{\top
}\hat{A}_{0})^{-}=(A_{0}^{\top}\hat{A}_{0})^{-1}$, which holds wpa1 by
({\ref{P_Asy_L2_0}}) in the proof of Lemma {\ref{Asy_L2}. }Let $\hat{\Lambda}$
be the diagonal matrix of the $n-\bar{r}$ smallest eigenvalues of $\hat{S}%
_{y}$. By definition of $\hat{A}_{0}$, $\hat{S}_{y}\hat{A}_{0}=\hat{A}_{0}%
\hat{\Lambda}$. Premultiplying by $P_{0}^{\top}$ and using $\hat{S}_{y}%
=S_{y}+\hat{\Delta}_{y}$
\begin{equation}
P_{0}^{\top}\hat{A}_{0}\hat{\Lambda}=P_{0}^{\top}\hat{S}_{y}\hat{A}_{0}%
=P_{0}^{\top}\hat{S}_{y}P_{0}P_{0}^{\top}\hat{A}_{0}=(P_{0}^{\top}S_{y}%
P_{0}+P_{0}^{\top}\hat{\Delta}_{y}P_{0})P_{0}^{\top}\hat{A}_{0}.
\label{P_Asy_L4_1}%
\end{equation}
Since
\[
P_{0}^{\top}\hat{A}_{0}=\left(
\begin{array}
[c]{c}%
A_{0}^{\top}\hat{A}_{0}\\
A_{0,\bot}^{\top}\hat{A}_{0}%
\end{array}
\right)  \text{, }P_{0}^{\top}S_{y}P_{0}=\left(
\begin{array}
[c]{cc}%
\bar{\sigma}_{u}^{2}\mathbf{I}_{n-\bar{r}} & 0\\
0 & \Lambda_{\bot}%
\end{array}
\right)  \text{ and }P_{0}^{\top}\hat{\Delta}_{y}P_{0}=\left(
\begin{array}
[c]{cc}%
\hat{\Delta}_{y,11} & \hat{\Delta}_{y,12}\\
\hat{\Delta}_{y,21} & \hat{\Delta}_{y,22}%
\end{array}
\right)  ,
\]
where $\hat{\Delta}_{y,11}\equiv A_{0}^{\top}\hat{\Delta}_{y}A_{0}$,
$\hat{\Delta}_{y,12}\equiv A_{0}^{\top}\hat{\Delta}_{y}A_{0,\bot}$,
$\hat{\Delta}_{y,22}\equiv A_{0,\bot}^{\top}\hat{\Delta}_{y}A_{0,\bot}$ and
$\hat{\Delta}_{y,21}^{\top}=\hat{\Delta}_{y,12}$, equation ({\ref{P_Asy_L4_1}%
}) can be written as
\begin{equation}
\left(
\begin{array}
[c]{c}%
A_{0}^{\top}\hat{A}_{0}\hat{\Lambda}\\
A_{0,\bot}^{\top}\hat{A}_{0}\hat{\Lambda}%
\end{array}
\right)  =\left(
\begin{array}
[c]{c}%
(\bar{\sigma}_{u}^{2}\mathbf{I}_{n-\bar{r}}+\hat{\Delta}_{y,11})A_{0}^{\top
}\hat{A}_{0}+\hat{\Delta}_{y,12}A_{0,\bot}^{\top}\hat{A}_{0}\\
\hat{\Delta}_{y,21}A_{0}^{\top}\hat{A}_{0}+(\Lambda_{\bot}+\hat{\Delta}%
_{y,22})A_{0,\bot}^{\top}\hat{A}_{0}%
\end{array}
\right)  . \label{P_Asy_L4_3}%
\end{equation}
The top block of ({\ref{P_Asy_L4_3}}) implies that%
\[
A_{0}^{\top}\hat{A}_{0}\hat{\Lambda}=(\bar{\sigma}_{u}^{2}\mathbf{I}%
_{n-\bar{r}}+\hat{\Delta}_{y,11})A_{0}^{\top}\hat{A}_{0}+\hat{\Delta}%
_{y,12}A_{0,\bot}^{\top}\hat{A}_{0}.
\]
Premultiplying by $(A_{0}^{\top}\hat{A}_{0})^{-1}$, we obtain
\begin{equation}
\hat{\Lambda}=(A_{0}^{\top}\hat{A}_{0})^{-1}(\bar{\sigma}_{u}^{2}%
\mathbf{I}_{n-\bar{r}}+\hat{\Delta}_{y,11})A_{0}^{\top}\hat{A}_{0}%
+(A_{0}^{\top}\hat{A}_{0})^{-1}\hat{\Delta}_{y,12}A_{0,\bot}^{\top}\hat{A}%
_{0}. \label{P_Asy_L4_4}%
\end{equation}
The second block of ({\ref{P_Asy_L4_3}}) implies that%
\[
A_{0,\bot}^{\top}\hat{A}_{0}\hat{\Lambda}=\hat{\Delta}_{y,21}A_{0}^{\top}%
\hat{A}_{0}+(\Lambda_{\bot}+\hat{\Delta}_{y,22})A_{0,\bot}^{\top}\hat{A}_{0}.
\]
Using ({\ref{P_Asy_L4_4}}) and $K_{0}=(A_{0,\bot}^{\top}\hat{A}_{0}%
)(A_{0}^{\top}\hat{A}_{0})^{-1}$, we obtain%
\[
K_{0}(\bar{\sigma}_{u}^{2}\mathbf{I}_{n-\bar{r}}+\hat{\Delta}_{y,11}%
)+K_{0}\hat{\Delta}_{y,12}K_{0}=\hat{\Delta}_{y,21}+(\Lambda_{\bot}%
+\hat{\Delta}_{y,22})K_{0}.
\]
Rearranging,%
\begin{equation}
(\bar{\sigma}_{u}^{2}\mathbf{I}_{\bar{r}-1}-\Lambda_{\bot})K_{0}=\hat{\Delta
}_{y,21}-K_{0}\hat{\Delta}_{y,11}+\hat{\Delta}_{y,22}K_{0}-K_{0}\hat{\Delta
}_{y,12}K_{0}. \label{P_Asy_L4_5}%
\end{equation}
From ({\ref{S_y_Rate}}) and Lemma {\ref{Asy_L1}, we have}%
\begin{equation}
\big \lVert K_{0}\hat{\Delta}_{y,11}\big \rVert=\big \lVert K_{0}A_{0}^{\top
}\hat{\Delta}_{y}A_{0}\big \rVert \leq \left \Vert K_{0}\right \Vert \left \Vert
A_{0}\right \Vert ^{2}\big \rVert \hat{\Delta}_{y}\big \rVert=O_{p}(T^{-1}).
\label{P_Asy_L4_6}%
\end{equation}
Similarly,
\begin{equation}
\big \lVert \hat{\Delta}_{y,22}K_{0}\big \rVert=\big \lVert A_{0,\bot}^{\top
}\hat{\Delta}_{y}A_{0,\bot}K_{0}\big \rVert \leq \left \Vert K_{0}\right \Vert
\left \Vert A_{0,\bot}\right \Vert ^{2}\big \rVert \hat{\Delta}_{y}%
\big \rVert=O_{p}(T^{-1}), \label{P_Asy_L4_7}%
\end{equation}
and
\begin{equation}
\big \lVert K_{0}\hat{\Delta}_{y,12}K_{0}\big \rVert=\big \lVert K_{0}%
A_{0}^{\top}\hat{\Delta}_{y}A_{0,\bot}K_{0}\big \rVert \leq \left \Vert
K_{0}\right \Vert ^{2}\left \Vert A_{0,\bot}\right \Vert \left \Vert
A_{0}\right \Vert \big \rVert \hat{\Delta}_{y}\big \rVert=O_{p}(T^{-3/2}).
\label{P_Asy_L4_8}%
\end{equation}
Substituting ({\ref{P_Asy_L4_6}})-({\ref{P_Asy_L4_8}}) into ({\ref{P_Asy_L4_5}%
}),%
\[
(\bar{\sigma}_{u}^{2}\mathbf{I}_{\bar{r}-1}-\Lambda_{\bot})K_{0}=\hat{\Delta
}_{y,21}+O_{p}(T^{-1}),
\]
which together with Assumption \ref{Asy_Cond_1}(iii) establishes the
claim.\hfill$Q.E.D.$

\bigskip

\begin{lemma}
\label{Asy_L5}\ Under Assumptions {\ref{ID}, \ref{Asy_Cond_1} and \ref{S}, we
have}%
\begin{equation}
T^{-1/2}\sum_{t\leq T}{\tilde{A}^{\top}}y_{t}b_{t}=A^{\top}T^{-1/2}\sum_{t\leq
T}\xi_{b,t}+O_{p}(T^{-1/2})\text{, \ for }b\in \{v,\varepsilon \},
\label{Asy_L5_1}%
\end{equation}
where $\tilde{A}\equiv Q_{-1}\tilde{A}_{0}$ and $\tilde{A}_{0}=\hat{A}_{0}%
\hat{H}_{0}^{\top}$ by (\ref{Asy_L2_1}) and (\ref{Asy_L3_2}).
\end{lemma}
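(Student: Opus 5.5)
We start from the decomposition
\[
\tilde{A}=Q_{-1}\tilde{A}_{0}=A+Q_{-1}(\tilde{A}_{0}-A_{0})=A+Q_{-1}A_{0,\bot}K_{0}+O_{p}(T^{-1}),
\]
which follows from Lemma \ref{Asy_L3}. By Lemma \ref{Asy_L4}, $K_{0}=(\bar{\sigma}_{u}^{2}\mathbf{I}_{\bar{r}-1}-\Lambda_{\bot})^{-1}A_{0,\bot}^{\top}\hat{\Delta}_{y}A_{0}+O_{p}(T^{-1})$. Since $\hat{\Delta}_{y}=Q_{-1}^{\top}(\hat{\Sigma}_{y}-\bar{\Sigma}_{y})Q_{-1}$ and $Q_{-1}A_{0}=A$, this gives
\[
\tilde{A}-A=\Upsilon(\hat{\Sigma}_{y}-\bar{\Sigma}_{y})A+O_{p}(T^{-1}).
\]
Substituting, we split
\[
T^{-1/2}\sum_{t\le T}\tilde{A}^{\top}y_{t}b_{t}=A^{\top}T^{-1/2}\sum_{t\le T}y_{t}b_{t}+(\tilde{A}-A)^{\top}T^{-1/2}\sum_{t\le T}y_{t}b_{t},
\]
and treat the two pieces in turn.

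For the first piece, we use the population moment condition. By Lemma \ref{L2_eq_moment}, together with $\operatorname{col}(A)=\operatorname{col}(\bar{\lambda}_{\bot})$ from Lemmas \ref{ID_G_GIV_Weight}--\ref{G_GIV_Solutions}, we have $A^{\top}\mathbb{E}[y_{t}b_{t}]=0$ for each $t$ and for $b\in\{v,\varepsilon\}$. This holds because $A^{\top}y_{t}=A^{\top}u_{t}$, and Assumption \ref{ID}(i)--(iii) kills the cross moments with $\eta_{t}$, $\varepsilon_{t}$ and $u_{e,t}$. Hence
\[
A^{\top}T^{-1/2}\sum_{t\le T}y_{t}b_{t}=A^{\top}T^{-1/2}\sum_{t\le T}\bigl(y_{t}b_{t}-\mathbb{E}[y_{t}b_{t}]\bigr),
\]
which is exactly the first component of $A^{\top}\xi_{b,t}$.

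For the second piece, we write $T^{-1/2}\sum_{t}y_{t}b_{t}=T^{1/2}\,T^{-1}\sum_{t}\mathbb{E}[y_{t}b_{t}]+O_{p}(1)$. The $O_{p}(1)$ term follows from Lemmas \ref{Asy_L0}--\ref{Asy_L0b}, after expanding $v_{t}=y_{e,t}-\phi p_{t}$ and $\varepsilon_{t}=p_{t}-\psi y_{S,t}$ in terms of $y_{t}y_{t}^{\top}$, $y_{t}p_{t}$ and $y_{t}y_{S,t}$. Since $\tilde{A}-A=O_{p}(T^{-1/2})$, the product of the $O_{p}(1)$ fluctuation with $\tilde{A}-A$ is $O_{p}(T^{-1/2})$. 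The leading term is
\[
A^{\top}(\hat{\Sigma}_{y}-\bar{\Sigma}_{y})\Upsilon\, T^{1/2}\,T^{-1}\sum_{t\le T}\mathbb{E}[y_{t}b_{t}],
\]
using the symmetry of $\Upsilon$ and of $\hat{\Sigma}_{y}-\bar{\Sigma}_{y}$. We rewrite $T^{1/2}(\hat{\Sigma}_{y}-\bar{\Sigma}_{y})=T^{-1/2}\sum_{t}(y_{t}y_{t}^{\top}-\mathbb{E}[y_{t}y_{t}^{\top}])$, which reproduces the second component of $\xi_{b,t}$ in (\ref{zeta_b}). The $O_{p}(T^{-1})$ remainders from Lemmas \ref{Asy_L3}--\ref{Asy_L4}, multiplied by $T^{-1/2}\sum_{t}y_{t}b_{t}=O_{p}(T^{1/2})$, contribute only $O_{p}(T^{-1/2})$.

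The main obstacle is the bookkeeping of orders. In particular, we must verify that $T^{-1}\sum_{t}\mathbb{E}[y_{t}b_{t}]$ is bounded, which follows from (\ref{P_Asy_Dist_1a}) and Assumption \ref{Asy_Cond_1}(iv). We must also check that every cross term, such as $O_{p}(T^{-1/2})\times O_{p}(1)$ and $O_{p}(T^{-1})\times O_{p}(T^{1/2})$, is genuinely $O_{p}(T^{-1/2})$ after scaling. Combining the two pieces yields (\ref{Asy_L5_1}).
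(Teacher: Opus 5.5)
Your proposal is correct and follows essentially the same route as the paper. You expand $\tilde{A}-A=\Upsilon(\hat{\Sigma}_{y}-\bar{\Sigma}_{y})A+O_{p}(T^{-1})$ via Lemmas \ref{Asy_L3}--\ref{Asy_L4}, which together are Lemma \ref{Asy_Ahat_L1}; you replace $T^{-1}\sum_{t}y_{t}b_{t}$ by its mean up to $O_{p}(T^{-1/2})$ using Lemmas \ref{Asy_L0}--\ref{Asy_L0b}; and you eliminate $A^{\top}\mathbb{E}[y_{t}b_{t}]$ using $A^{\top}y_{t}=A^{\top}u_{t}$ and Assumption \ref{ID}. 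Your explicit check that $\Upsilon$ (via Assumption \ref{Asy_Cond_1}(iii)) and $T^{-1}\sum_{t}\mathbb{E}[y_{t}b_{t}]$ are bounded fills in a step the paper leaves implicit.
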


\noindent \textsc{Proof of Lemma \ref{Asy_L5}.} Under Assumptions
\ref{Asy_Cond_1}(i, ii) and \ref{S}, we can apply Lemmas \ref{Asy_L0} and \ref{Asy_L0b} to
obtain
\begin{equation}
T^{-1}\sum_{t\leq T}(y_{t}v_{t}-\mathbb{E}[y_{t}v_{t}])=O_{p}(T^{-1/2}%
)\quad \text{and}\quad T^{-1}\sum_{t\leq T}(y_{t}\varepsilon_{t}-\mathbb{E}%
[y_{t}\varepsilon_{t}])=O_{p}(T^{-1/2}). \label{P_Asy_L5_1}%
\end{equation}
From Lemma \ref{Asy_Ahat_L1}, we have%
\begin{align}
\tilde{A}-A  &  =Q_{-1}(\hat{A}_{0}\hat{H}_{0}^{\top}-A_{0})\nonumber \\
&  =Q_{-1}A_{0,\bot}(\overline{\sigma}_{u}^{2}\mathbf{I}_{\bar{r}-1}%
-\Lambda_{\bot})^{-1}A_{0,\bot}^{\top}\hat{\Delta}_{y}A_{0}+O_{p}%
(T^{-1})\nonumber \\
&  =Q_{-1}A_{0,\bot}(\overline{\sigma}_{u}^{2}\mathbf{I}_{\bar{r}-1}%
-\Lambda_{\bot})^{-1}A_{0,\bot}^{\top}Q_{-1}^{\top}\left(  T^{-1}\sum_{t\leq
T}(y_{t}y_{t}^{\top}-\mathbb{E}[y_{t}y_{t}^{\top}])\right)  Q_{-1}A_{0}%
+O_{p}(T^{-1})\nonumber \\
&  =\Upsilon \,T^{-1}\sum_{t\leq T}(y_{t}y_{t}^{\top}-\mathbb{E}[y_{t}%
y_{t}^{\top}])A+O_{p}(T^{-1})=O_{p}(T^{-1/2}), \label{P_Asy_L5_2}%
\end{align}
where the last equality follows from Assumption \ref{Asy_Cond_1}(i, iii, iv)
and Lemma \ref{Asy_L0}. Combining (\ref{P_Asy_L5_1}) and (\ref{P_Asy_L5_2}),
we obtain
\begin{align}
T^{-1/2}\sum_{t\leq T}\tilde{A}^{\top}y_{t}v_{t}  &  =T^{-1/2}\sum_{t\leq
T}A^{\top}y_{t}v_{t}+T^{1/2}(\tilde{A}-A)^{\top}T^{-1}\sum_{t\leq T}y_{t}%
v_{t}\nonumber \\
&  =T^{-1/2}\sum_{t\leq T}A^{\top}y_{t}v_{t}+T^{1/2}(\tilde{A}-A)^{\top}%
T^{-1}\sum_{t\leq T}\mathbb{E}[y_{t}v_{t}]+O_{p}(T^{-1/2})\nonumber \\
&  =A^{\top}T^{-1/2}\sum_{t\leq T}y_{t}v_{t}\nonumber \\
&  \quad+A^{\top}T^{-1/2}\sum_{t\leq T}(y_{t}y_{t}^{\top}-\mathbb{E}%
[y_{t}y_{t}^{\top}])\, \Upsilon \,T^{-1}\sum_{t\leq T}\mathbb{E}[y_{t}%
v_{t}]+O_{p}(T^{-1/2}). \label{P_Asy_L5_3}%
\end{align}

Next, note that $A\equiv Q_{-1}A_{0}$ is orthogonal to $(\mathbf{1}%
_{n},\lambda)$. Using (\ref{G_demand}), we can write
\[
A^{\top}y_{t}=A_{0}^{\top}Q_{-1}^{\top}(\phi p_{t}\mathbf{1}_{n}+\lambda
\eta_{t}+u_{t})=A_{0}^{\top}Q_{-1}^{\top}u_{t}.
\]
Together with the definition of $v_{t}$, this implies
\begin{equation}
A^{\top}\mathbb{E}[y_{t}v_{t}]=A_{0}^{\top}Q_{-1}^{\top}\mathbb{E}[u_{t}%
v_{t}]=A_{0}^{\top}Q_{-1}^{\top}\mathbb{E}[u_{t}(y_{e,t}-\phi p_{t})].
\label{P_Asy_L5_4}%
\end{equation}
Since $y_{e,t}=e^{\top}y_{t}$ and $y_{t}=\phi p_{t}\mathbf{1}_{n}+\lambda
\eta_{t}+u_{t}$, we have
\[
\mathbb{E}[u_{t}(y_{e,t}-\phi p_{t})]=\mathbb{E}[u_{t}(\eta_{t}^{\top}%
\lambda^{\top}+u_{t}^{\top})]e.
\]
By Assumption \ref{ID}, it follows that
\begin{equation}
Q_{-1}^{\top}\mathbb{E}[(u_{t}\eta_{t}^{\top}\lambda^{\top}+u_{t}u_{t}^{\top
})]e=n^{-1}Q_{-1}^{\top}(\mathbf{1}_{n}\Gamma_{\eta u,t}^{\top}\lambda^{\top
}+\sigma_{u,t}^{2}\mathbf{I}_{n})\mathbf{1}_{n}=\mathbf{0}_{n-1}.
\label{P_Asy_L5_5}%
\end{equation}
Combining (\ref{P_Asy_L5_4}) and (\ref{P_Asy_L5_5}) yields
\begin{equation}
A^{\top}\mathbb{E}[y_{t}v_{t}]=0. \label{P_Asy_L5_6}%
\end{equation}
Substituting this into (\ref{P_Asy_L5_3}) and using the definition of
$\xi_{v,t}$ establishes (\ref{Asy_L5_1}) for $b=v$. The proof for
$b=\varepsilon$ follows by analogous arguments.\hfill$Q.E.D.$

\section{Proof of Consistency of the Variance Estimator\label{APP_4}}

In this section, we establish the consistency of the variance matrix estimator
$\hat{V}$ proposed in the algorithm under the assumption that $\xi_{t}$ is
uncorrelated across $t$. The following conditions are useful for proving this result.

\begin{assumption}
\label{Asy_Cond_3} (i) For any two distinct eigenvalues $\mu_{j_{1}}$ and
$\mu_{j_{2}}$ of $S_{y}$,\ $\left \vert \mu_{j_{1}}-\mu_{j_{2}}\right \vert
>K^{-1}$; (ii) for any sequence $\{c_{t}\}_{t\leq T}\subset \mathbb{R}$ with
$\max_{t\leq T}|c_{t}|\leq K$, and $a,b\in \{u,\eta,\varepsilon \}$,
\[
T^{-1}\sum_{t\leq T}c_{t}\big(a_{t}b_{t}^{\top}-\mathbb{E}[a_{t}b_{t}^{\top
}]\big)=o_{p}(1);
\]
(iii) for every random variable of the form $b_{t}=\prod_{j=1}^{4}z_{j,t}$,%
\[
T^{-1}\sum_{t\leq T}(b_{t}-\mathbb{E}[b_{t}])=o_{p}(1),
\]
where each $z_{j,t}$ belongs to $\{u_{i,t},\eta_{k,t},\varepsilon_{t}%
,S_{t}^{\top}u_{t},s_{i,t}\eta_{k,t}:i\leq n,k\leq r\}$; (iv) $\max_{t\leq
T}\mathbb{E}[(u_{t}^{\top}u_{t})^{2}+\varepsilon_{t}^{4}+(\eta_{t}^{\top}%
\eta_{t})^{2}]\leq K$.
\end{assumption}

Assumption \ref{Asy_Cond_3}(i) imposes a uniform separation (gap) between
distinct eigenvalues of $S_{y}$, which is useful for establishing the
consistency of $\hat{\Upsilon}$ used in constructing $\hat{\xi}_{b,t}$ for
$b\in \{v,\varepsilon \}$. Since $\mu_{1}=\ldots=\mu_{n-\bar{r}}=\bar{\sigma
}_{u}^{2}<\mu_{n-\bar{r}+1}$, it is clear that Assumption \ref{Asy_Cond_1}%
(iii) is implied by Assumption \ref{Asy_Cond_3}(i). Assumption
\ref{Asy_Cond_3}(ii) is analogous to Assumption \ref{Asy_Cond_1}(ii) and can
be verified using a law of large numbers. Assumption \ref{Asy_Cond_3}(iii)
imposes a law of large numbers for fourth-order products formed from
$\varepsilon_{t}$ and components of $u_{t}$ and $\eta_{t}$. Finally,
Assumption \ref{Asy_Cond_3}(iv) imposes uniform fourth-moment bounds on the
shocks in the demand and supply equations.

\begin{theorem}
\label{V_Est} Suppose that $\xi_{t}$ is uncorrelated across $t$. Under
Assumptions \ref{ID}, \ref{Asy_Cond_1}, \ref{S}, \ref{Asy_Cond_2}(i, ii, iii),
and \ref{Asy_Cond_3}, we have
\[
\hat{V}=V+o_{p}(1),
\]
where $\hat{V}$ is defined in (\ref{V_est}).
\end{theorem}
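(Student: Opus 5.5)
The approach is to compare $\hat{V}$ with the infeasible sample second moment $V_T\equiv T^{-1}\sum_{t\leq T}\xi_t\xi_t^{\top}$. Because $\xi_t$ is mean zero and uncorrelated across $t$, we have $V=T^{-1}\sum_{t\leq T}\mathbb{E}[\xi_t\xi_t^{\top}]$. The argument then has two parts. First show $V_T-V=o_p(1)$, and also $T^{-1}\sum_{t\leq T}\xi_t=o_p(1)$; the latter follows from Assumption \ref{Asy_Cond_2}(i) together with the bound $\rho_{\max}(V)\leq K$. Second, show that replacing $(\phi,\psi,\Upsilon,\mathbb{E}[y_tb_t],\mathbb{E}[y_ty_t^{\top}])$ by their estimates changes the centered second moment by $o_p(1)$.

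For the first part, I will expand $\xi_{b,t}$ using the reduced forms (\ref{P_Asy_L0_1})--(\ref{P_Asy_L0_2}). Each entry of $\xi_t\xi_t^{\top}$ then becomes a linear combination, with coefficients bounded by Assumptions \ref{Asy_Cond_1}(i) and \ref{S}(ii), of terms of three kinds. The first kind is fourth-order products of the elements listed in Assumption \ref{Asy_Cond_3}(iii). The second kind is second-order products multiplied by the deterministic bounded quantities $\mathbb{E}[y_tb_t]$ and $\mathbb{E}[y_ty_t^{\top}]$, which are handled by Assumption \ref{Asy_Cond_3}(ii). The third kind is products of such expectations. The factors of $S_t$ are absorbed because the list in Assumption \ref{Asy_Cond_3}(iii) includes $S_t^{\top}u_t$ and $s_{i,t}\eta_{k,t}$. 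The relevant LLNs then give $V_T-\mathbb{E}V_T=o_p(1)$. Assumption \ref{Asy_Cond_3}(iv) is used only to ensure that the expectations involved are finite and bounded uniformly in $t$.

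For the second part, I will write $\hat{\xi}_{b,t}=\xi_{b,t}+r_{b,t}$, where $r_{b,t}$ collects four pieces:
\begin{itemize}
\item[(a)] the term $y_t(\hat b_t-b_t)$, where $\hat v_t-v_t=-(\hat\phi_0-\phi)p_t$ and similarly for $\hat\varepsilon_t$;
\item[(b)] the replacement of the centering $\mathbb{E}$-averages by sample averages;
\item[(c)] the term $(y_ty_t^{\top}-\mathbb{E}[y_ty_t^{\top}])(\hat\Upsilon-\Upsilon)c$;
\item[(d)] the effect of replacing $T^{-1}\sum_{t\leq T}\mathbb{E}[y_tb_t]$ by $T^{-1}\sum_{t\leq T}\hat y_t\hat b_t$.
\end{itemize}
The preliminary estimator $\hat\theta_0$ is $T^{1/2}$-consistent by Theorem \ref{Asy_Dist} with identity weight, and Lemmas \ref{Asy_L0}--\ref{Asy_L0b} give $O_p(T^{-1/2})$ errors for the sample averages. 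Each piece is therefore a product of an $o_p(1)$ scalar or matrix and a random vector whose second moment is bounded on average, for instance $\|y_t\|\,|p_t|$ or $\|y_ty_t^{\top}\|$; these have bounded averages by Assumption \ref{Asy_Cond_3}(iii)--(iv). By Cauchy--Schwarz,
\[
\|\hat V-V_T\|\leq 2\Big(T^{-1}\sum_{t\leq T}\|\xi_t\|^2\Big)^{1/2}\Big(T^{-1}\sum_{t\leq T}\|r_t\|^2\Big)^{1/2}+T^{-1}\sum_{t\leq T}\|r_t\|^2+o_p(1).
\]
Here $T^{-1}\sum_{t\leq T}\|r_t\|^2=o_p(1)$, the last $o_p(1)$ covers the mean-correction terms, and the right-hand side is $o_p(1)$.

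The main obstacle is the consistency of $\hat\Upsilon$. The population eigenvector matrix $A_{0,\bot}$ is not unique when $\Lambda_{\bot}$ has repeated eigenvalues, so $\hat A_{0,\bot}$ need not converge to it. I will avoid this by rewriting
\[
\Upsilon=Q_{-1}\sum_{k}(\bar\sigma_u^2-\mu_{(k)})^{-1}P_k\,Q_{-1}^{\top},
\]
where the sum runs over the distinct eigenvalues $\mu_{(k)}$ above $\bar\sigma_u^2$ and $P_k$ is the spectral projector onto the eigenspace of $\mu_{(k)}$. The estimator $\hat\Upsilon$ has the same form with estimated eigenvalues. Under the uniform gap in Assumption \ref{Asy_Cond_3}(i), the estimated eigenvalues cluster near the distinct $\mu_{(k)}$ with probability approaching one, and applying Davis--Kahan \citep{yu2015useful} to each cluster (as in Lemma \ref{Asy_L1}) gives consistent projectors. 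The clustering ambiguity is harmless: $\hat\Upsilon$ groups eigenvectors inside a cluster with slightly different weights $(\hat\sigma_u^2-\hat\mu_j)^{-1}$, but each weight is within $O_p(T^{-1/2})$ of the cluster value. Combined with $\hat\mu_1\to_p\bar\sigma_u^2$, the gap $\mu_{n-\bar r+1}-\bar\sigma_u^2>K^{-1}$ keeps the inverses bounded, so $\hat\Upsilon-\Upsilon=o_p(1)$. This completes step (c).
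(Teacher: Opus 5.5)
Your route is the paper's own. The LLN step for $V_T$ is Lemma \ref{L_V_4} (built on Lemmas \ref{L_V_1b}--\ref{L_V_3}). Your remainder bound $T^{-1}\sum_{t\leq T}\|r_t\|^2=o_p(1)$ with pieces (a)--(d) is Lemma \ref{L_V_5}. Your spectral-projector treatment of $\hat\Upsilon$ is equivalent to the cluster-wise Davis--Kahan bound with a block-diagonal rotation $\hat{H}_{0,\bot}$ commuting with $\Lambda_{\bot}$ in Lemmas \ref{L_A_0_ortho}--\ref{L_Upsilon}.

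However, step (b) does not go through under the stated assumptions. The paper's decomposition (\ref{P_L_V_5_2}) has the same defect, since its $\Delta_T$ is exact only when the moments are time-invariant. In (\ref{zeta_b}), $\xi_{b,t}$ is centered at the period-$t$ moments $\mathbb{E}[y_tb_t]$ and $\mathbb{E}[y_ty_t^{\top}]$; this is what gives $\mathbb{E}[\xi_t]=0$ and $V=T^{-1}\sum_{t\leq T}\mathbb{E}[\xi_t\xi_t^{\top}]$. By contrast, $\hat\xi_{b,t}$ in (\ref{Zeta_hat}) is centered at full-sample averages. Lemmas \ref{Asy_L0}--\ref{Asy_L0b} only control sample averages minus \emph{time-averaged} expectations. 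So once the pieces that really are $o_p(1)$ are removed, $r_{b,t}$ still contains the deterministic term
\[
d_{b,t}\equiv \mathbb{E}[y_tb_t]-T^{-1}\sum_{s\leq T}\mathbb{E}[y_sb_s]+\Big(\mathbb{E}[y_ty_t^{\top}]-T^{-1}\sum_{s\leq T}\mathbb{E}[y_sy_s^{\top}]\Big)\gamma_b,\qquad \gamma_b\equiv \Upsilon\, T^{-1}\sum_{s\leq T}\mathbb{E}[y_sb_s],
\]
which is $O(1)$ unless these moments do not depend on $t$. Assumption \ref{ID} explicitly lets $\sigma_{u,t}^2$, $\Gamma_{\eta u,t}$, $\sigma_{\varepsilon u,t}$ and $\mathbb{E}[\eta_t\eta_t^{\top}]$ vary over time. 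Neither Assumption \ref{Asy_Cond_3} nor the uncorrelatedness of $\xi_t$ rules this out. Your own decomposition then yields $\hat V=V+T^{-1}\sum_{t\leq T}d_td_t^{\top}+o_p(1)$, with $d_t\equiv(d_{v,t}^{\top},d_{\varepsilon,t}^{\top})^{\top}$; the cross terms $T^{-1}\sum_{t\leq T}\xi_td_t^{\top}$ are $o_p(1)$ by the same weighted LLNs. So the claim fails in general for the full $2n\times 2n$ matrix.

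To close the gap you need one of two fixes. The first is an extra condition such as $T^{-1}\sum_{t\leq T}\|d_t\|^2\rightarrow 0$ (e.g.\ time-invariant second moments). The second is a weaker conclusion, which is still all that (\ref{Asy_Dist_2}) and Theorem \ref{J_Test} actually use. To see that the weaker conclusion holds, note that $A^{\top}\mathbf{1}_n=0$ and $A^{\top}\lambda=0$. Also $\mathbf{1}_n^{\top}\gamma_b=0$ and $A^{\top}\gamma_b=0$, because the range of $\Upsilon$ lies in the column space of $Q_{-1}A_{0,\bot}$. Assumption \ref{ID} then gives, for every $t$, $A^{\top}\mathbb{E}[y_tb_t]=0$ and $A^{\top}\mathbb{E}[y_ty_t^{\top}]\gamma_b=\sigma_{u,t}^2A^{\top}\gamma_b=0$. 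Hence $(\mathbf{I}_2\otimes A^{\top})d_t=0$, so $(\mathbf{I}_2\otimes A^{\top})\hat V(\mathbf{I}_2\otimes A)=V(A)+o_p(1)$. The same holds with $\tilde A=A+O_p(T^{-1/2})$ in place of $A$. With either fix, the rest of your argument, including the projector treatment of $\hat\Upsilon$, goes through as written.
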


\noindent \textsc{Proof of Theorem \ref{V_Est}.} Let $\gamma \equiv
\Upsilon \big(T^{-1}\sum_{t\leq T}\mathbb{E}[y_{t}v_{t}]\big)$.\ We then have
\begin{align*}
\xi_{v,t}  &  \equiv y_{t}v_{t}-\mathbb{E}[y_{t}v_{t}]+(y_{t}y_{t}^{\top
}-\mathbb{E}[y_{t}y_{t}^{\top}])\Upsilon \Big(T^{-1}\sum_{t\leq T}%
\mathbb{E}[y_{t}v_{t}]\Big)\\
&  =y_{t}v_{t}-\mathbb{E}[y_{t}v_{t}]+(y_{t}y_{t}^{\top}-\mathbb{E}[y_{t}%
y_{t}^{\top}])\gamma \\
&  =y_{t}v_{t}-\mathbb{E}[y_{t}v_{t}]+(y_{\, \gamma,t}y_{t}-\mathbb{E}[y_{\,
\gamma,t}y_{t}]),
\end{align*}
which implies that
\begin{equation}
\xi_{v,t}^{\top}\xi_{v,t}\leq K\Big(v_{t}^{2}y_{t}^{\top}y_{t}+y_{\gamma
,t}^{2}y_{t}^{\top}y_{t}+\Vert \mathbb{E}[v_{t}y_{t}]\Vert^{2}+\Vert
\mathbb{E}[y_{\gamma,t}y_{t}]\Vert^{2}\Big). \label{P_V_Est_1}%
\end{equation}
Under Assumption \ref{Asy_Cond_1}(i), we can apply (\ref{P_Asy_Dist_1a}) to
show that%
\begin{equation}
\max_{t\leq T}\left \Vert \mathbb{E}[y_{t}v_{t}]\right \Vert \leq \max_{t\leq
T}||\mathbb{E}[y_{t}y_{t}^{\top}]||+\left \vert \phi \right \vert \max_{t\leq
T}\left \Vert \mathbb{E}[y_{t}p_{t}]\right \Vert \leq K. \label{P_V_Est_1b}%
\end{equation}
This together with (\ref{P_L_Upsilon_7}), the triangle
inequality\ and\ Cauchy-Schwarz inequality implies
\begin{equation}
\left \Vert \gamma \right \Vert \leq K \label{P_V_Est_1c}%
\end{equation}
Note that $v_{t}=y_{e,t}-\phi p_{t}$. By (\ref{P_Asy_L0_1}) and
(\ref{P_Asy_L0_2}), both $y_{t}$ and $p_{t}$ are linear functions of $u_{t}$,
$\eta_{t}$, and $\varepsilon_{t}$. It then follows from Assumptions
\ref{Asy_Cond_1}(i), \ref{S}(ii) and \ref{Asy_Cond_3}(iv) that
\begin{equation}
\max_{t\leq T}\Big(\mathbb{E}[v_{t}^{2}y_{t}^{\top}y_{t}]+\mathbb{E}%
[y_{\gamma,t}^{2}y_{t}^{\top}y_{t}]\Big)\leq K. \label{P_V_Est_2}%
\end{equation}
Combining (\ref{P_V_Est_1}) and (\ref{P_V_Est_2}) with Markov's inequality
yields
\begin{equation}
T^{-1}\sum_{t\leq T}\xi_{v,t}^{\top}\xi_{v,t}=O_{p}(1). \label{P_V_Est_3}%
\end{equation}
Similarly, we can show that $T^{-1}\sum_{t\leq T}\xi_{\varepsilon,t}^{\top}%
\xi_{\varepsilon,t}=O_{p}(1)$. Together with (\ref{P_V_Est_3}), this implies
\begin{equation}
T^{-1}\sum_{t\leq T}\xi_{t}^{\top}\xi_{t}=O_{p}(1). \label{P_V_Est_4}%
\end{equation}

By the triangle inequality, the Cauchy--Schwarz inequality, Lemma \ref{L_V_5},
and (\ref{P_V_Est_4}), we have
\[
\big \lVert T^{-1}\sum_{t\leq T}(\hat{\xi}_{t}-\xi_{t})(\hat{\xi}_{t}-\xi
_{t})^{\top}\big \rVert \leq T^{-1}\sum_{t\leq T}\Vert \hat{\xi}_{t}-\xi
_{t}\Vert^{2}=o_{p}(1),
\]
and
\[
\big \lVert T^{-1}\sum_{t\leq T}(\hat{\xi}_{t}-\xi_{t})\xi_{t}^{\top
}\big \rVert \leq T^{-1}\sum_{t\leq T}\Vert \hat{\xi}_{t}-\xi_{t}\Vert \,
\Vert \xi_{t}\Vert=o_{p}(1),
\]
which, along with Lemma \ref{L_V_4}, implies
\begin{align}
T^{-1}\sum_{t\leq T}\hat{\xi}_{t}\hat{\xi}_{t}^{\top}-T^{-1}\sum_{t\leq
T}\mathbb{E}[\xi_{t}\xi_{t}^{\top}]  &  =T^{-1}\sum_{t\leq T}(\xi_{t}\xi
_{t}^{\top}-\mathbb{E}[\xi_{t}\xi_{t}^{\top}])+T^{-1}\sum_{t\leq T}(\hat{\xi
}_{t}-\xi_{t})(\hat{\xi}_{t}-\xi_{t})^{\top}\nonumber \\
&  \quad+T^{-1}\sum_{t\leq T}(\hat{\xi}_{t}-\xi_{t})\xi_{t}^{\top}+T^{-1}%
\sum_{t\leq T}\xi_{t}(\hat{\xi}_{t}-\xi_{t})^{\top}\nonumber \\
&  =o_{p}(1). \label{P_V_Est_5}%
\end{align}

Since $\xi_{t}$ is uncorrelated across $t$, we have $V=T^{-1}\sum_{t\leq
T}\mathbb{E}[\xi_{t}\xi_{t}^{\top}]$. Thus, the result follows from
(\ref{P_V_Est_5}) once we show that
\begin{equation}
T^{-1}\sum_{t\leq T}\hat{\xi}_{t}=o_{p}(1). \label{P_V_Est_6}%
\end{equation}
To verify (\ref{P_V_Est_6}), note that by the triangle inequality,
\begin{align*}
\big \lVert T^{-1}\sum_{t\leq T}\hat{\xi}_{t}\big \rVert  &  \leq
\big \lVert T^{-1}\sum_{t\leq T}(\hat{\xi}_{t}-\xi_{t}%
)\big \rVert+\big \lVert T^{-1}\sum_{t\leq T}\xi_{t}\big \rVert \\
&  \leq \Big(T^{-1}\sum_{t\leq T}\Vert \hat{\xi}_{t}-\xi_{t}\Vert^{2}%
\Big)^{1/2}+O_{p}(T^{-1/2})=o_{p}(1),
\end{align*}
where the second inequality follows from the Cauchy--Schwarz inequality and
Assumption \ref{Asy_Cond_2}(i), and the last equality follows from Lemma
\ref{L_V_5}. This completes the proof.\hfill$Q.E.D.$

\subsection{Useful Lemmas for consistency of variance estimator}

\begin{lemma}
\label{L_A_0_ortho}\ Suppose that the eigenvalues $\{ \mu_{j}\}_{n-\bar
{r}+1\leq j\leq n-1}$ take $m$ distinct values $\{ \mu_{j}^{\ast}\}_{1\leq
j\leq m}$, ordered increasingly, with multiplicities $\{s_{j}\}_{1\leq j\leq
m}$. That is,
\begin{equation}
\mu_{n-\bar{r}+\bar{s}_{j-1}+1}=\cdots=\mu_{n-\bar{r}+\bar{s}_{j}}=\mu
_{j}^{\ast}, \label{L_A_0_ortho_1}%
\end{equation}
for $j\in \{1,\ldots,m\}$, where $\bar{s}_{j-1}=\sum_{l=1}^{j-1}s_{l}$.\ Let
$A_{j}$ collect the orthonormal eigenvectors of $S_{y}$ associated with the
eigenvalues $\{ \mu_{n-\bar{r}+\bar{s}_{j-1}+l}\}_{l=1}^{s_{j}}$, and let
$\hat{A}_{j}$ collect the orthonormal eigenvectors of $\hat{S}_{y}$ associated
with $\{ \hat{\mu}_{n-\bar{r}+\bar{s}_{j-1}+l}\}_{l=1}^{s_{j}}$. Then under
Assumptions\  \ref{Asy_Cond_1}(i, ii),\ {\ref{S}, }\ref{Asy_Cond_2}(i, ii), and
\ref{Asy_Cond_3}(i), there exists an orthogonal matrix $\hat{O}_{j}%
\in \mathbb{R}^{s_{j}\times s_{j}}$ such that for each $j\in \{1,\ldots,m\}$,
\[
\big \lVert \hat{A}_{j}\hat{O}_{j}-A_{j}\big \lVert=O_{p}(T^{-1/2}).
\]

\end{lemma}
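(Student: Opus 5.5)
The plan is to apply the Davis--Kahan $\sin\Theta$ theorem (Theorem 2 of \cite{yu2015useful}) to each eigen-block separately, exactly as in the proof of Lemma \ref{Asy_L1}. Then I convert the resulting bound on principal angles into the existence of an aligning orthogonal matrix $\hat{O}_{j}$.

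\textbf{Step 1 (perturbation rate).} By Lemma \ref{Asy_L0} under Assumptions \ref{Asy_Cond_1}(i, ii) and \ref{S}, $\hat{\Delta}_{y}\equiv\hat{S}_{y}-S_{y}=O_{p}(T^{-1/2})$. By Weyl's inequality, $\max_{j}|\hat{\mu}_{j}-\mu_{j}|=O_{p}(T^{-1/2})$.

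\textbf{Step 2 (eigengap for each block).} Fix $j\in\{1,\ldots,m\}$. The block $A_{j}$ corresponds to the eigenvalue $\mu_{j}^{\ast}$, occupying positions $n-\bar{r}+\bar{s}_{j-1}+1,\ldots,n-\bar{r}+\bar{s}_{j}$ in the increasing ordering. Its neighbouring eigenvalues are $\mu_{j-1}^{\ast}$ (or $\bar{\sigma}_{u}^{2}$ when $j=1$) and $\mu_{j+1}^{\ast}$ (or $+\infty$ when $j=m$). By Assumption \ref{Asy_Cond_3}(i), each is separated from $\mu_{j}^{\ast}$ by at least $K^{-1}$; for $j=1$ one can also use Assumption \ref{Asy_Cond_1}(iii). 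I would relabel the eigenvalues in decreasing order, as in the proof of Lemma \ref{Asy_L1}, with $p=n-1$ and $(r,s)$ equal to the block's endpoints, so that $d=s_{j}$. Theorem 2 of \cite{yu2015useful} then gives
\[
\big\lVert\sin\Theta(A_{j},\hat{A}_{j})\big\rVert\leq\frac{2\min\{s_{j}^{1/2}\Vert\hat{\Delta}_{y}\Vert_{o},\Vert\hat{\Delta}_{y}\Vert\}}{K^{-1}}=O_{p}(T^{-1/2}).
\]
That theorem requires only the population gap, so no event on the sample eigenvalues is needed. Step 1 does guarantee, wpa1, that the sample eigenvalues keep the same grouping.

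\textbf{Step 3 (aligning rotation).} The same theorem of \cite{yu2015useful} also asserts the existence of an orthogonal $\hat{O}_{j}$ with $\Vert\hat{A}_{j}\hat{O}_{j}-A_{j}\Vert\leq2^{3/2}\min\{\cdot\}/\mathrm{gap}$. Alternatively, I would construct $\hat{O}_{j}$ directly from the SVD $\hat{A}_{j}^{\top}A_{j}=U\Sigma W^{\top}$ by setting $\hat{O}_{j}\equiv UW^{\top}$. Then
\[
\Vert\hat{A}_{j}\hat{O}_{j}-A_{j}\Vert^{2}=2\,\mathrm{tr}(\mathbf{I}_{s_{j}}-\Sigma)\leq2\,\mathrm{tr}(\mathbf{I}_{s_{j}}-\Sigma^{2})=2\Vert\sin\Theta\Vert^{2},
\]
because the singular values $\sigma_{l}\in[0,1]$ satisfy $1-\sigma_{l}\leq1-\sigma_{l}^{2}$. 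This yields the $O_{p}(T^{-1/2})$ claim.

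\textbf{Main obstacle.} The main care point is the index bookkeeping in Step 2. The middle blocks are interior eigenspaces with gaps on both sides. For $j=m$ the upper neighbour is $-\infty$ in the decreasing relabeling, as in Lemma \ref{Asy_L1}. For $j=1$ the lower neighbour is the multiplicity-$(n-\bar{r})$ eigenvalue $\bar{\sigma}_{u}^{2}$. Assumption \ref{Asy_Cond_3}(i) covers all three cases uniformly.
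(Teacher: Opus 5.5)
Your proposal is correct and follows the paper's proof essentially step for step. Like the paper, you apply Theorem 2 of \cite{yu2015useful} to each eigen-block, use Assumption \ref{Asy_Cond_3}(i) to bound each block's two-sided gap away from zero (lower neighbour $\bar{\sigma}_u^2$ for $j=1$, infinite gap for $j=m$), and conclude from $\Vert\hat{S}_y-S_y\Vert=O_p(T^{-1/2})$; your SVD construction of $\hat{O}_j$ just reproduces the second part of that theorem. One harmless slip: in the decreasing relabeling the top block ($j=m$) occupies indices $1,\ldots,s_m$, so the relevant convention is $\lambda_0=+\infty$, not $-\infty$ (the paper instead applies the theorem to $-S_y$ and sets $\mu_n=+\infty$), but the gap is infinite either way.
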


\noindent \textsc{Proof of Lemma \ref{L_A_0_ortho}.} Fix $j\in \{1,\ldots,m\}$,
and define
\[
r_{j}\equiv n-\bar{r}+\bar{s}_{j-1}+1,\qquad s_{j}^{\ast}\equiv n-\bar{r}%
+\bar{s}_{j}.
\]
Then $r_{j}\leq s_{j}^{\ast}$, and by (\ref{L_A_0_ortho_1}), the columns of
$A_{j}$ and $\hat{A}_{j}$ are the orthonormal eigenvectors of $S_{y}$ and
$\hat{S}_{y}$, respectively, associated with the block of eigenvalues
\[
\mu_{r_{j}}=\cdots=\mu_{s_{j}^{\ast}}=\mu_{j}^{\ast}.
\]

Since Theorem 2 of \cite{yu2015useful} is stated for eigenvalues ordered in
decreasing order, we apply it to $-S_{y}$ and $-\hat{S}_{y}$.\ By
construction, $\mu_{r_{1}-1}=\mu_{n-\bar{r}}=\bar{\sigma}_{u}^{2}$. Following
\cite{yu2015useful}, we set $\mu_{n}=\infty$, which implies $\mu_{s_{j}^{\ast
}+1}=\infty$. It then follows that there exists an orthogonal matrix $\hat
{O}_{j}\in \mathbb{R}^{s_{j}\times s_{j}}$ such that
\begin{equation}
\big \lVert \hat{A}_{j}\hat{O}_{j}-A_{j}\big \lVert \leq \frac{2^{3/2}%
\min \left \{  s_{j}^{1/2}\big \lVert \hat{S}_{y}-S_{y}\big \lVert_{o},\,
\big \lVert \hat{S}_{y}-S_{y}\big \lVert \right \}  }{\min \left \{  \mu_{r_{j}%
}-\mu_{r_{j}-1},\, \text{\ }\mu_{s_{j}^{\ast}+1}-\mu_{s_{j}^{\ast}}\right \}
}. \label{P_L_A_0_ortho_1}%
\end{equation}

We next show that the denominator in (\ref{P_L_A_0_ortho_1}) is bounded away
from zero uniformly in $j$. For $j\in \{2,\ldots,m\}$, the eigenvalues
$\mu_{r_{j}-1}$ and $\mu_{r_{j}}$ are distinct, and hence Assumption
\ref{Asy_Cond_3}(i) implies
\[
\mu_{r_{j}}-\mu_{r_{j}-1}\geq K^{-1}.
\]
Similarly, for $j\in \{1,\ldots,m-1\}$, the eigenvalues $\mu_{s_{j}^{\ast}}$
and $\mu_{s_{j}^{\ast}+1}$ are distinct, so Assumption \ref{Asy_Cond_3}(i)
implies
\[
\mu_{s_{j}^{\ast}+1}-\mu_{s_{j}^{\ast}}\geq K^{-1}.
\]
For the boundary block $j=1$, the left gap is
\[
\mu_{r_{1}}-\mu_{r_{1}-1}=\mu_{n-\bar{r}+1}-\mu_{n-\bar{r}},
\]
where $\mu_{n-\bar{r}}=\bar{\sigma}_{u}^{2}$ is not contained in the
collection $\{ \mu_{j}\}_{n-\bar{r}+1\leq j\leq n-1}$. By Assumption
\ref{Asy_Cond_3}(i), this gap is bounded away from zero. Therefore,
\begin{equation}
\min \{ \mu_{r_{j}}-\mu_{r_{j}-1},\, \mu_{s_{j}^{\ast}+1}-\mu_{s_{j}^{\ast}}\}
\geq K^{-1} \label{P_L_A_0_ortho_2}%
\end{equation}
for all $j\in \{1,\ldots,m\}$.

Finally, by (\ref{P_r_hat_consistency_0}), $\Vert \hat{S}_{y}-S_{y}\Vert
=O_{p}(T^{-1/2})$. Combining this with (\ref{P_L_A_0_ortho_1}) and
(\ref{P_L_A_0_ortho_2}) establishes the claim of the lemma.\hfill$Q.E.D.$

\bigskip

\begin{lemma}
\label{L_Upsilon} Under Assumptions \ref{Asy_Cond_1}(i, ii, iii, iv),\ {\ref{S},
}\ref{Asy_Cond_2}(i, ii) and \ref{Asy_Cond_3}(i), we have
\begin{equation}
\hat{\Upsilon}=\Upsilon+O_{p}(T^{-1/2})=O_{p}(1), \label{L_Upsilon_1}%
\end{equation}
where $\hat{\Upsilon}\equiv Q_{-1}\hat{A}_{0,\bot}(\hat{\sigma}_{u}%
^{2}\mathbf{I}_{\bar{r}-1}-\hat{\Lambda}_{\bot})^{-1}\hat{A}_{0,\bot}^{\top
}Q_{-1}^{\top}$ and $\Upsilon \equiv Q_{-1}A_{0,\bot}(\bar{\sigma}_{u}%
^{2}\mathbf{I}_{\bar{r}-1}-\Lambda_{\bot})^{-1}A_{0,\bot}^{\top}Q_{-1}^{\top}$.
\end{lemma}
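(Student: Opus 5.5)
The plan is to decompose $\hat{\Upsilon}$ over the distinct-eigenvalue blocks of Lemma \ref{L_A_0_ortho}. Because $(\bar{\sigma}_u^2\mathbf{I}_{\bar r-1}-\Lambda_\bot)^{-1}$ is diagonal and constant within each block of equal eigenvalues $\mu_j^\ast$, we can write
\[
\Upsilon=\sum_{j=1}^{m}(\bar{\sigma}_u^2-\mu_j^\ast)^{-1}Q_{-1}A_jA_j^\top Q_{-1}^\top .
\]
On the sample side, we will compare $\hat{\Upsilon}$ with the block-averaged surrogate
\[
\check{\Upsilon}\equiv\sum_{j=1}^{m}(\hat{\sigma}_u^2-\mu_j^\ast)^{-1}Q_{-1}\hat A_j\hat A_j^\top Q_{-1}^\top .
\]
The argument then splits into three bounds: $\|\hat{\Upsilon}-\check{\Upsilon}\|$, the projector differences $\hat A_j\hat A_j^\top-A_jA_j^\top$, and the scalar differences in the reciprocals.

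\textbf{Eigenvalues and scalars.} From (\ref{P_r_hat_consistency_0}) and Weyl's inequality we have $\max_j|\hat\mu_j-\mu_j|=O_p(T^{-1/2})$. In particular $\hat\sigma_u^2=\hat\mu_1=\bar\sigma_u^2+O_p(T^{-1/2})$, because $\mu_1=\bar\sigma_u^2$ by Lemma \ref{ID_G_GIV_Weight}. Assumption \ref{Asy_Cond_1}(iii) gives $\mu_j-\bar\sigma_u^2>K^{-1}$ for $j\ge n-\bar r+1$. Hence, wpa1, every denominator $\hat\sigma_u^2-\hat\mu_j$ and $\hat\sigma_u^2-\mu_j^\ast$ is bounded away from zero in absolute value by $(2K)^{-1}$. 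It follows that $|(\hat\sigma_u^2-\hat\mu_j)^{-1}-(\bar\sigma_u^2-\mu_j)^{-1}|=O_p(T^{-1/2})$ uniformly in $j$.

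\textbf{Projectors.} Lemma \ref{L_A_0_ortho} supplies orthogonal $\hat O_j$ with $\|\hat A_j\hat O_j-A_j\|=O_p(T^{-1/2})$. Since $\hat A_j\hat A_j^\top=(\hat A_j\hat O_j)(\hat A_j\hat O_j)^\top$, we get $\|\hat A_j\hat A_j^\top-A_jA_j^\top\|\le 2\|\hat A_j\hat O_j-A_j\|+\|\hat A_j\hat O_j-A_j\|^2=O_p(T^{-1/2})$. The key observation is that this projector formulation makes the within-block rotation ambiguity of $\hat A_{0,\bot}$ irrelevant. Combining the three pieces with the bounded quantities $\|Q_{-1}\|$, $\|A_j\|$, and $m\le\bar r-1$ gives $\check\Upsilon-\Upsilon=O_p(T^{-1/2})$.

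\textbf{Main obstacle.} The hard step is that $\hat\Upsilon$ weights each column of $\hat A_j$ by its own $\hat\mu_{l}$ rather than the common $\mu_j^\ast$, so $\hat{\Upsilon}$ is not literally a sum of projectors. We handle this by writing
\[
\hat\Upsilon-\check\Upsilon=\sum_j Q_{-1}\hat A_j\,\mathrm{diag}\bigl((\hat\sigma_u^2-\hat\mu_l)^{-1}-(\hat\sigma_u^2-\mu_j^\ast)^{-1}\bigr)_{l\in\text{block }j}\hat A_j^\top Q_{-1}^\top .
\]
Here $\hat A_j$ has orthonormal columns, and each diagonal entry is $O_p(T^{-1/2})$ by the eigenvalue step, since $\hat\mu_l-\mu_j^\ast=\hat\mu_l-\mu_l$ within the block. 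This yields $\hat\Upsilon-\check\Upsilon=O_p(T^{-1/2})$.

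Finally, $\Upsilon=O(1)$ because its nonzero eigenvalues are $(\bar\sigma_u^2-\mu_j)^{-1}$, bounded by $K$. Together with the three bounds above, this gives $\hat\Upsilon=\Upsilon+O_p(T^{-1/2})=O_p(1)$.
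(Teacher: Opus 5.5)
Your proof is correct and follows essentially the same route as the paper. Both arguments rest on Lemma \ref{L_A_0_ortho}, Weyl-type eigenvalue consistency, and the eigengap to keep the inverses bounded. The only difference is bookkeeping: the paper absorbs the within-block rotation ambiguity by conjugating with the block-diagonal orthogonal matrix $\hat H_{0,\bot}$ and using $\hat H_{0,\bot}\Lambda_\bot\hat H_{0,\bot}^\top=\Lambda_\bot$, whereas you absorb it through the block projectors $\hat A_j\hat A_j^\top$ and the surrogate $\check\Upsilon$, which makes it slightly more transparent why the rotation drops out.
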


\noindent \textsc{Proof of Lemma \ref{L_Upsilon}.} Suppose that the eigenvalues
$\{ \mu_{j}\}_{n-\bar{r}+1\leq j\leq n-1}$ take $m$ distinct values $\{
\mu_{j}^{\ast}\}_{1\leq j\leq m}$, with multiplicities $\{s_{j}\}_{1\leq j\leq
m}$. Then we can write
\[
A_{0,\bot}=(A_{1},\ldots,A_{m})\qquad \text{and}\qquad \hat{A}_{0,\bot}=(\hat
{A}_{1},\ldots,\hat{A}_{m}),
\]
where $A_{j}$ and $\hat{A}_{j}$ collect the orthonormal eigenvectors of
$S_{y}$ and $\hat{S}_{y}$, respectively, associated with the eigenvalue
$\mu_{j}^{\ast}$ of multiplicity $s_{j}$. By Lemma \ref{L_A_0_ortho}, there
exists a block diagonal matrix $\hat{H}_{0,\bot}^{\top}=\mathrm{diag}(\hat
{O}_{1},\ldots,\hat{O}_{m})$,\ where each $\hat{O}_{j}\in \mathbb{R}%
^{s_{j}\times s_{j}}$ is orthogonal, such that
\begin{equation}
\hat{A}_{0,\bot}\hat{H}_{0,\bot}^{\top}=A_{0,\bot}+O_{p}(T^{-1/2})=O_{p}(1).
\label{P_L_Upsilon_1}%
\end{equation}

From $\Vert \hat{S}_{y}-S_{y}\Vert=O_{p}(T^{-1/2})$, we have
\begin{equation}
\hat{\sigma}_{u}^{2}=\bar{\sigma}_{u}^{2}+O_{p}(T^{-1/2}),\qquad \hat{\Lambda
}_{\bot}=\Lambda_{\bot}+O_{p}(T^{-1/2}), \label{P_L_Upsilon_2}%
\end{equation}
where $\bar{\sigma}_{u}^{2}$ and $\Lambda_{\bot}$ are bounded by
(\ref{P_Asy_Dist_1a}). Since
\[
\Lambda_{\bot}=\mathrm{diag}(\mu_{1}^{\ast}I_{s_{1}},\ldots,\mu_{m}^{\ast
}I_{s_{m}}),
\]
and $\hat{H}_{0,\bot}^{\top}$ is block diagonal with orthogonal blocks, we
have
\begin{equation}
\hat{H}_{0,\bot}\Lambda_{\bot}\hat{H}_{0,\bot}^{\top}=\Lambda_{\bot}.
\label{P_L_Upsilon_3}%
\end{equation}
Therefore,
\begin{equation}
(\hat{\sigma}_{u}^{2}I_{\bar{r}-1}-\hat{H}_{0,\bot}\hat{\Lambda}_{\bot}\hat
{H}_{0,\bot}^{\top})-(\bar{\sigma}_{u}^{2}I_{\bar{r}-1}-\Lambda_{\bot}%
)=(\hat{\sigma}_{u}^{2}-\bar{\sigma}_{u}^{2})I_{\bar{r}-1}-\hat{H}_{0,\bot
}(\hat{\Lambda}_{\bot}-\Lambda_{\bot})\hat{H}_{0,\bot}^{\top}=O_{p}(T^{-1/2}).
\label{P_L_Upsilon_4}%
\end{equation}

By Assumption \ref{Asy_Cond_3}(i), the eigenvalues of $\Lambda_{\bot}%
-\bar{\sigma}_{u}^{2}I_{\bar{r}-1}$ are bounded below by $K^{-1}$. Hence, by
(\ref{P_L_Upsilon_4}),
\begin{equation}
\rho_{\min}\bigl(\hat{H}_{0,\bot}\hat{\Lambda}_{\bot}\hat{H}_{0,\bot}^{\top
}-\hat{\sigma}_{u}^{2}I_{\bar{r}-1}\bigr)\geq(2K)^{-1},\qquad \text{wpa1.}
\label{P_L_Upsilon_5}%
\end{equation}
Since $\hat{H}_{0,\bot}^{\top}\hat{H}_{0,\bot}=I_{\bar{r}-1}$, we have
\[
(\hat{\sigma}_{u}^{2}I_{\bar{r}-1}-\hat{\Lambda}_{\bot})^{-1}=\hat{H}_{0,\bot
}^{\top}(\hat{\sigma}_{u}^{2}I_{\bar{r}-1}-\hat{H}_{0,\bot}\hat{\Lambda}%
_{\bot}\hat{H}_{0,\bot}^{\top})^{-1}\hat{H}_{0,\bot}.
\]
Notice that
\begin{align*}
\hat{\Upsilon}-\Upsilon &  =Q_{-1}\hat{A}_{0,\bot}\hat{H}_{0,\bot}^{\top}%
(\hat{\sigma}_{u}^{2}I_{\bar{r}-1}-\hat{H}_{0,\bot}\hat{\Lambda}_{\bot}\hat
{H}_{0,\bot}^{\top})^{-1}\hat{H}_{0,\bot}\hat{A}_{0,\bot}^{\top}Q_{-1}^{\top
}\\
&  \quad-Q_{-1}A_{0,\bot}(\bar{\sigma}_{u}^{2}I_{\bar{r}-1}-\Lambda_{\bot
})^{-1}A_{0,\bot}^{\top}Q_{-1}^{\top}\\
&  =Q_{-1}(\hat{A}_{0,\bot}\hat{H}_{0,\bot}^{\top}-A_{0,\bot})(\hat{\sigma
}_{u}^{2}I_{\bar{r}-1}-\hat{H}_{0,\bot}\hat{\Lambda}_{\bot}\hat{H}_{0,\bot
}^{\top})^{-1}\hat{H}_{0,\bot}\hat{A}_{0,\bot}^{\top}Q_{-1}^{\top}\\
&  \quad+Q_{-1}A_{0,\bot}\Bigl[(\hat{\sigma}_{u}^{2}I_{\bar{r}-1}-\hat
{H}_{0,\bot}\hat{\Lambda}_{\bot}\hat{H}_{0,\bot}^{\top})^{-1}-(\bar{\sigma
}_{u}^{2}I_{\bar{r}-1}-\Lambda_{\bot})^{-1}\Bigr]\hat{H}_{0,\bot}\hat
{A}_{0,\bot}^{\top}Q_{-1}^{\top}\\
&  \quad+Q_{-1}A_{0,\bot}(\bar{\sigma}_{u}^{2}I_{\bar{r}-1}-\Lambda_{\bot
})^{-1}(\hat{H}_{0,\bot}\hat{A}_{0,\bot}^{\top}-A_{0,\bot}^{\top})Q_{-1}%
^{\top}.
\end{align*}
By (\ref{P_L_Upsilon_1})--(\ref{P_L_Upsilon_5}), each term on the right-hand
side is $O_{p}(T^{-1/2})$. Therefore,
\begin{equation}
\hat{\Upsilon}-\Upsilon=O_{p}(T^{-1/2}), \label{P_L_Upsilon_6}%
\end{equation}
which establishes the first equality in\ (\ref{L_Upsilon_1}).

Since $Q_{-1}^{\top}Q_{-1}=\mathbf{I}_{n-1}$, $A_{0,\bot}^{\top}A_{0,\bot
}=\mathbf{I}_{\bar{r}-1}$ and $\Lambda_{\bot}\equiv \mathrm{diag}((\mu
_{j})_{n-\bar{r}+1\leq j\leq n-1})$,\ from the Cauchy-Schwarz inequality and
Assumption \ref{Asy_Cond_1}(iii), it follows that
\begin{equation}
\left \Vert \Upsilon \right \Vert =\big \lVert Q_{-1}A_{0,\bot}(\bar{\sigma}%
_{u}^{2}\mathbf{I}_{\bar{r}-1}-\Lambda_{\bot})^{-1}A_{0,\bot}^{\top}%
Q_{-1}^{\top}\big \lVert \leq \frac{\left \Vert Q_{-1}\right \Vert ^{2}\left \Vert
A_{0,\bot}\right \Vert ^{2}(\bar{r}-1)}{\mu_{n-\bar{r}+1}-\bar{\sigma}_{u}^{2}%
}\leq K, \label{P_L_Upsilon_7}%
\end{equation}
which together with (\ref{P_L_Upsilon_6}) shows the second equality
in\ (\ref{L_Upsilon_1}).\hfill$Q.E.D.$

\bigskip

\begin{lemma}
\label{L_V_1a} Under Assumptions \ref{ID}, \ref{Asy_Cond_1}, \ref{S} and
\ref{Asy_Cond_2}(i, iii), we have for $b\in \{v,\varepsilon \}$%
\[
T^{-1}\sum_{t\leq T}y_{t}\hat{b}_{t}=T^{-1}\sum_{t\leq T}\mathbb{E}[y_{t}%
b_{t}]+O_{p}(T^{-1/2})=O_{p}(1).
\]

\end{lemma}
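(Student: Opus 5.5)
The plan is to split $\hat b_t$ into the true residual $b_t$ plus a correction driven by the preliminary estimation error, and to control each piece with results already available. Here $\hat v_t=y_{e,t}-\hat\phi_0(\hat A)p_t$ and $\hat\varepsilon_t=p_t-\hat\psi_0(\hat A)y_{S,t}$, where $\hat\theta_0(\hat A)$ is the preliminary GIV estimator with identity weighting. This gives the decompositions
\[
T^{-1}\sum_{t\leq T}y_t\hat v_t=T^{-1}\sum_{t\leq T}y_tv_t-(\hat\phi_0(\hat A)-\phi)\,T^{-1}\sum_{t\leq T}y_tp_t
\]
and
\[
T^{-1}\sum_{t\leq T}y_t\hat\varepsilon_t=T^{-1}\sum_{t\leq T}y_t\varepsilon_t-(\hat\psi_0(\hat A)-\psi)\,T^{-1}\sum_{t\leq T}y_ty_{S,t}.
\]

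The first term in each decomposition is handled by (\ref{P_Asy_L5_1}), which follows from Lemmas \ref{Asy_L0} and \ref{Asy_L0b}. It gives $T^{-1}\sum_{t\leq T}y_tb_t=T^{-1}\sum_{t\leq T}\mathbb{E}[y_tb_t]+O_p(T^{-1/2})$ for $b\in\{v,\varepsilon\}$.

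For the second term, note that $T^{-1}\sum_{t\leq T}y_tp_t$ and $T^{-1}\sum_{t\leq T}y_ty_{S,t}$ are $O_p(1)$ by (\ref{P_Asy_Dist_1b}). It therefore suffices to show that $\hat\theta_0(\hat A)-\theta=O_p(T^{-1/2})$.

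To obtain this rate, I would rerun the argument in the proof of Theorem \ref{Asy_Dist} with $W_{0,T}=\mathbf{I}_{2n}$. This weighting trivially satisfies Assumption \ref{Asy_Cond_2}(ii), and its implied weight $W_{0,T}(\hat A)$ equals $\mathbf{I}_{2(n-\bar r)}$ because $\hat A^\top\hat A=\mathbf{I}$. The steps are as follows.
\begin{itemize}
\item Pass to $\tilde A=\hat A\hat H_0^\top$ via Lemma \ref{Invariance}, which shows the estimator is unchanged.
\item Use (\ref{P_Asy_Dist_6}) to bound the inverse Hessian, which relies on Assumption \ref{Asy_Cond_2}(iii).
\item Use (\ref{P_Asy_Dist_8}) to obtain $T^{-1/2}\sum_{t\leq T}(\tilde A^\top y_tv_t,\tilde A^\top y_t\varepsilon_t)=O_p(1)$, which relies on Assumption \ref{Asy_Cond_2}(i) and Lemma \ref{Asy_L5}.
\end{itemize}
Together with the representation (\ref{P_Asy_Dist_9}), these give $T^{1/2}(\hat\theta_0(\hat A)-\theta)=O_p(1)$.

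The final claim, that the whole expression is $O_p(1)$, follows from $\max_t\|\mathbb{E}[y_tb_t]\|\leq K$. This bound comes from (\ref{P_Asy_Dist_1a}) combined with $|\phi|,|\psi|\leq K$ and $\|S_t\|\leq K$.

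The main obstacle is the $\sqrt T$-rate of the preliminary estimator. It is not stated separately for the identity weight, so the proof must check that every step of Theorem \ref{Asy_Dist}'s proof goes through with that weight; only the rate is needed, not the full distribution. Once this rate is secured, the rest is bookkeeping.
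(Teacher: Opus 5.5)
Your proposal is correct and matches the paper's proof. Both use the same decomposition of $T^{-1}\sum_{t\leq T}y_t\hat b_t$, Lemmas \ref{Asy_L0} and \ref{Asy_L0b} for the centered sample moments, (\ref{P_Asy_Dist_1b}) for the $O_p(1)$ multiplier, and the $T^{-1/2}$ rate of $\hat\theta_0(\hat A)$ from Theorem \ref{Asy_Dist} with $W_{0,T}=\mathbf{I}_{2n}$; the only cosmetic difference is that the paper cites (\ref{Asy_Dist_1}) directly for that rate, whereas you read it off the intermediate steps (\ref{P_Asy_Dist_6}), (\ref{P_Asy_Dist_8}) and (\ref{P_Asy_Dist_9}) of the theorem's proof.
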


\noindent \textsc{Proof of Lemma \ref{L_V_1a}.} Because $v_{t}\equiv
y_{e,t}-\phi p_{t}$, we can write
\begin{align}
T^{-1}\sum_{t\leq T}y_{t}v_{t}  &  =T^{-1}\sum_{t\leq T}\mathbb{E}[y_{t}%
v_{t}]+T^{-1}\sum_{t\leq T}\bigl(y_{t}v_{t}-\mathbb{E}[y_{t}v_{t}%
]\bigr)\nonumber \\
&  =T^{-1}\sum_{t\leq T}\mathbb{E}[y_{t}v_{t}]+T^{-1}\sum_{t\leq T}%
\bigl(y_{t}y_{e,t}-\mathbb{E}[y_{t}y_{e,t}]\bigr)-\phi \,T^{-1}\sum_{t\leq
T}\bigl(y_{t}p_{t}-\mathbb{E}[y_{t}p_{t}]\bigr)\nonumber \\
&  =T^{-1}\sum_{t\leq T}\mathbb{E}[y_{t}v_{t}]+O_{p}(T^{-1/2})=O_{p}(1),
\label{PL_V1a_1}%
\end{align}
where the third equality follows from Lemma \ref{Asy_L0}, and the final
equality uses (\ref{P_V_Est_1b}).

Since $\hat{\theta}_{0}(\hat{A})$ denotes the GIV estimator with weighting
matrix $W_{0,T}=\mathbf{I}_{2n}$ (see Step 2 of the Implementation Algorithm 1
in Appendix \ref{APP_0}), Assumption \ref{Asy_Cond_2}(ii) holds with
$W_{0}=\mathbf{I}_{2n}$. Therefore, under Assumptions \ref{ID},
\ref{Asy_Cond_1}, \ref{S}, and \ref{Asy_Cond_2}(i, iii), we can apply
(\ref{Asy_Dist_1}) of Theorem \ref{Asy_Dist} to obtain
\begin{equation}
\hat{\theta}_{0}(\hat{A})-\theta=O_{p}(T^{-1/2}). \label{PL_V1a_2}%
\end{equation}
Combining this with (\ref{P_Asy_Dist_1b}) yields
\begin{equation}
T^{-1}\sum_{t\leq T}y_{t}\hat{v}_{t}-T^{-1}\sum_{t\leq T}y_{t}v_{t}%
=-\bigl(\hat{\phi}_{0}(\hat{A})-\phi \bigr)T^{-1}\sum_{t\leq T}y_{t}p_{t}%
=O_{p}(T^{-1/2}). \label{PL_V1a_3}%
\end{equation}
The result for $b=v$ follows from (\ref{PL_V1a_1}) and (\ref{PL_V1a_3}). The
case $b=\varepsilon$ can be established analogously and is therefore
omitted.\hfill$Q.E.D.$

\bigskip

\begin{lemma}
\label{L_V_1b} Under Assumptions \ref{Asy_Cond_1}(i), \ref{S} and
\ref{Asy_Cond_3}(ii), we have for any\ $c_{t}\in \mathbb{R}$ with $\max_{t\leq
T}|c_{t}|\leq K$,%
\[
T^{-1}\sum_{t\leq T}c_{t}(y_{t}y_{t}^{\top}-{\mathbb{E}[}y_{t}y_{t}^{\top
}])=o_{p}(1)\text{ \  \  \ and \  \  \ }T^{-1}\sum_{t\leq T}c_{t}(p_{t}%
y_{t}-{\mathbb{E}[}p_{t}y_{t}])=o_{p}(1).
\]

\end{lemma}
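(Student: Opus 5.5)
The plan is to mirror the proof of Lemma \ref{Asy_L0}, replacing the $O_p(T^{-1/2})$ concentration of Assumption \ref{Asy_Cond_1}(ii) with the weighted $o_p(1)$ law of large numbers in Assumption \ref{Asy_Cond_3}(ii). Substitute the reduced forms (\ref{P_Asy_L0_1})--(\ref{P_Asy_L0_2}) into $c_t y_t y_t^{\top}$ and $c_t p_t y_t$. Each then becomes a finite sum of terms. Each term is a fixed coefficient (a polynomial in $\phi\psi/(1-\phi\psi)$, $\phi$, $\psi$, and entries of $\lambda$, bounded by Assumption \ref{Asy_Cond_1}(i)) multiplied by a weighted average of one of three kinds of product:
\begin{itemize}
\item $c_t\,a_tb_t^{\top}$ with $a,b\in\{u,\eta,\varepsilon\}$;
\item $c_t\,S_t^{\top}(u_t+\lambda\eta_t)\,b_t^{\top}$, possibly premultiplied by $\mathbf{1}_n$;
\item $c_t\,S_t^{\top}(u_t+\lambda\eta_t)(u_t+\lambda\eta_t)^{\top}S_t$.
\end{itemize}
The expansion is exactly the one displayed in (\ref{P_Asy_L0_3})--(\ref{P_Asy_L0_4}), with each summand carrying the factor $c_t$.

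For terms of the first kind, Assumption \ref{Asy_Cond_3}(ii) applies directly, since the weights $c_t$ are bounded, and gives $o_p(1)$ after centering.

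For terms involving $S_t$, write $S_t^{\top}u_t$ and $S_t^{\top}\lambda\eta_t=(\eta_t\otimes S_t)^{\top}\mathrm{vec}(\lambda)$. The products then become bounded linear combinations of entries of $S_t^{\top}u_t$ and $\eta_t\otimes S_t$, multiplied by entries of $u_t$, $\eta_t$, $\varepsilon_t$, or by each other. Here I would use Assumption \ref{S}(i). That assumption gives the unweighted versions at rate $T^{-1/2}$, so it does not cover the weights $c_t$ directly. To handle the weights, absorb $c_t$ into the $S$-dependent factor: define $\tilde S_t \equiv c_t S_t$. This is still bounded, since $\|S_t\|\le K$ by Assumption \ref{S}(ii) together with nonnegativity. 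Alternatively, note that $c_t$ is deterministic and bounded, so for a term such as $c_t S_t^{\top}u_t\,u_t^{\top}$, one can expand $S_t^{\top}u_t=\sum_i s_{i,t}u_{i,t}$ and treat $c_t s_{i,t}$ as a new bounded weight.

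Handling these $S_t$-terms is the main obstacle. Assumption \ref{Asy_Cond_3}(ii) only covers products of $u,\eta,\varepsilon$ with deterministic weights, whereas $s_{i,t}$ may be random. My first attempt would be to read Assumption \ref{S}(i) as stated for the relevant family of products, and to argue that inserting a bounded deterministic $c_t$ preserves an $o_p(1)$ LLN under the same mixing or martingale-difference structure that underlies \ref{S}(i). If that is not acceptable, the fallback is a truncation/Chebyshev argument:
\begin{itemize}
\item bound the variance of $T^{-1}\sum_t c_t(z_t-\mathbb{E}z_t)$ by $K^2$ times the variance of the unweighted sum, which is valid when the summands are uncorrelated across $t$ (as in Theorem \ref{V_Est});
\item use the second moments from Assumption \ref{Asy_Cond_3}(iv) to conclude.
\end{itemize}

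Finally, collect the terms. Since the coefficients are bounded and there are finitely many terms, the sum of the $o_p(1)$ pieces is $o_p(1)$, which gives both claims. The claim for $p_ty_t$ follows identically from the expansion (\ref{P_Asy_L0_4}).
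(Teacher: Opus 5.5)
\textbf{Where you match the paper and where both fall short.} Your route is the paper's own. Its proof is a single sentence: substitute the reduced forms (\ref{P_Asy_L0_1})--(\ref{P_Asy_L0_2}) and argue as in Lemma \ref{Asy_L0}, with Assumption \ref{Asy_Cond_1}(ii) replaced by Assumption \ref{Asy_Cond_3}(ii). Your handling of the pure $u,\eta,\varepsilon$ products is exactly that.

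You are also right that this substitution does not reach the $S_t$-terms. Lemma \ref{Asy_L0} handles those with Assumption \ref{S}(i), an unweighted $T^{-1/2}$ condition, and the lemma's hypotheses contain no weighted counterpart. The paper's proof passes over this silently.

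However, none of your repairs closes the gap when $S_t$ is random:
\begin{itemize}
\item Assumption \ref{S}(i) is a statement about the particular sequence $S_t$, so it says nothing about $\tilde S_t=c_tS_t$.
\item Using $c_ts_{i,t}$ or $c_ts_{i,t}s_{j,t}$ as weights in Assumption \ref{Asy_Cond_3}(ii) is legitimate only if those weights are nonrandom.
\item The fallback imports two things the lemma does not assume: uncorrelatedness across $t$ of products such as $s_{i,t}u_{i,t}u_{j,t}$, and the fourth moments of Assumption \ref{Asy_Cond_3}(iv). Theorem \ref{V_Est} assumes uncorrelatedness of $\xi_t$ only, and that does not transfer to these products.
\item An appeal to an unstated mixing structure is not a derivation from the hypotheses.
\end{itemize}

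\textbf{Why an extra hypothesis is unavoidable.} The weighted law for the $S_t$-products cannot be obtained from Assumption \ref{S}(i) at all. Take i.i.d.\ shocks with finite fourth moments and $\mathbb{E}[\eta_t^2]=1$. Take $n=2$ with the predetermined share $s_{1,t}=\tfrac12+\tfrac14(-1)^t\zeta$ and $s_{2,t}=1-s_{1,t}$, where $\zeta$ is a Rademacher draw independent of the shocks. This design satisfies Assumptions \ref{S}(i) and \ref{Asy_Cond_3}(ii), because partial sums of $(-1)^t$ are bounded. Yet for $c_t=(-1)^t$,
\[
T^{-1}\sum_{t\leq T}c_t\bigl(s_{1,t}^2\eta_t^2-\mathbb{E}[s_{1,t}^2\eta_t^2]\bigr)\rightarrow_p \tfrac14\zeta ,
\]
which is not $o_p(1)$.

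\textbf{How to close it.} Either of two additions suffices:
\begin{itemize}
\item If $S_t$ is nonrandom, as in the Monte Carlo design, your second device is exactly right. It finishes the proof from Assumption \ref{Asy_Cond_3}(ii) alone, since $|c_ts_{i,t}|$ and $|c_ts_{i,t}s_{j,t}|$ are at most $K$ by Assumption \ref{S}(ii) and nonnegativity.
\item Otherwise, Assumption \ref{Asy_Cond_3}(ii) must let $a,b$ range also over $\{S_t^{\top}u_t,\eta_t\otimes S_t\}$. This mirrors how Assumption \ref{S}(i) augments Assumption \ref{Asy_Cond_1}(ii); Assumption \ref{Asy_Cond_3}(iii) already lists $S_t^{\top}u_t$ and $s_{i,t}\eta_{k,t}$.
\end{itemize}
With either addition, your expansion goes through term by term.
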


\noindent \textsc{Proof of Lemma \ref{L_V_1b}.} Using the reduced-form
representations of $y_{t}$ and $p_{t}$ in (\ref{P_Asy_L0_1}) and
(\ref{P_Asy_L0_2}), the stated results follow by the same argument as in the
proof of Lemma \ref{Asy_L0}, with Assumption \ref{Asy_Cond_1}(ii) replaced by
Assumption \ref{Asy_Cond_3}(ii).\hfill$Q.E.D.$

\bigskip

\begin{lemma}
\label{L_V_2} Under Assumptions \ref{Asy_Cond_1}(i), \ref{S}(ii) and
\ref{Asy_Cond_3}(iii, iv), we have
\[
T^{-1}\sum_{t\leq T}(y_{a_{1},t}y_{a_{2},t}y_{t}y_{t}^{\top}-\mathbb{E}%
[y_{a_{1},t}y_{a_{2},t}y_{t}y_{t}^{\top}])=o_{p}(1),
\]
for any\ $a_{1},a_{2}\in \{a\in \mathbb{R}^{n}:$ $\left \Vert a\right \Vert \leq
K\}$.
\end{lemma}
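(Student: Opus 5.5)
The plan is to reduce the statement to the fourth-order law of large numbers in Assumption \ref{Asy_Cond_3}(iii), using the reduced forms (\ref{P_Asy_L0_1})--(\ref{P_Asy_L0_2}). By (\ref{P_Asy_L0_1}), write $y_t=B_t w_t$. Here $w_t\equiv(u_t^{\top},\eta_t^{\top},\varepsilon_t)^{\top}$ collects the primitive shocks, and
\[
B_t\equiv\Big(\big(\mathbf{I}_n+\tfrac{\phi\psi}{1-\phi\psi}\mathbf{1}_nS_t^{\top}\big),\ \big(\mathbf{I}_n+\tfrac{\phi\psi}{1-\phi\psi}\mathbf{1}_nS_t^{\top}\big)\lambda,\ \tfrac{\phi}{1-\phi\psi}\mathbf{1}_n\Big).
\]
The only time-varying, data-dependent parts of $B_t$ enter through $S_t$. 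It is therefore better to expand each coordinate $y_{i,t}$ directly as
\[
y_{i,t}=u_{i,t}+\lambda_i^{\top}\eta_t+c_1\,S_t^{\top}u_t+c_1\sum_{k\le r}\sum_{j\le n}\lambda_{jk}\,s_{j,t}\eta_{k,t}+c_2\,\varepsilon_t,
\]
where $c_1=\phi\psi/(1-\phi\psi)$ and $c_2=\phi/(1-\phi\psi)$. Each summand is a fixed coefficient times one of the generators in $\{u_{i,t},\eta_{k,t},\varepsilon_t,S_t^{\top}u_t,s_{i,t}\eta_{k,t}\}$. The number of summands is finite because $n$ and $r$ are fixed. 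The coefficients are bounded by Assumption \ref{Asy_Cond_1}(i), since $|1-\phi\psi|\ge K^{-1}$ and $|\phi|,|\psi|,\|\lambda\|\le K$.

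Next I expand the $(i,l)$ entry of the matrix in question. For fixed $a_1,a_2$ with $\|a\|\le K$, the scalar $y_{a,t}=a^{\top}y_t$ is again a finite linear combination of the same generators, with coefficients bounded by $K$. So the entry $y_{a_1,t}y_{a_2,t}y_{i,t}y_{l,t}$ equals a finite sum $\sum_m \kappa_m b_{m,t}$. The coefficients $\kappa_m$ are nonrandom, time-invariant and bounded. Each $b_{m,t}$ is a product of four generators, so it has exactly the form in Assumption \ref{Asy_Cond_3}(iii). Applying that assumption term by term gives $T^{-1}\sum_t(b_{m,t}-\mathbb{E}[b_{m,t}])=o_p(1)$. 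Because the number of terms is fixed, summing preserves $o_p(1)$, and this holds for each of the $n^2$ entries, which gives the claim.

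Two bookkeeping points need care. First, the expectations must be finite so that centering is meaningful. By H\"older's inequality and Assumption \ref{Asy_Cond_3}(iv), each $\mathbb{E}|b_{m,t}|$ is bounded by the product of the fourth moments of its generators. By Assumption \ref{S}(ii) we have $s_{i,t}\ge0$ and $\mathbf{1}_n^{\top}S_t=1$, so $|s_{i,t}|\le1$ and $|S_t^{\top}u_t|\le\|u_t\|$. These bounds keep the fourth moments of the $S_t$-weighted generators bounded uniformly in $t$. Second, the coefficients $\kappa_m$ must not depend on $t$. This holds because all $t$-dependence of $B_t$ has been absorbed into the generators $S_t^{\top}u_t$ and $s_{j,t}\eta_{k,t}$, rather than being left in the coefficients.

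The main obstacle is that $S_t$ appears nonlinearly in the quartic product. Consider, for example, $(S_t^{\top}u_t)(s_{j,t}\eta_{k,t})\,u_{i,t}\,\varepsilon_t$. I must check that every such product really factors into at most four listed generators, with no leftover factor of $s_{j,t}$ multiplying a generator. The expansion above shows that each coordinate of $y_t$ is linear in the generators, with $S_t$ appearing only inside them. Hence any quartic product of $y$-coordinates is a sum of products of exactly four generators. I would verify this carefully for the $\mathbf{1}_nS_t^{\top}\lambda\eta_t$ term, which is where the generator $s_{j,t}\eta_{k,t}$ is needed.
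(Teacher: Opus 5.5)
Your proposal is correct and follows essentially the same route as the paper: write $y_t$ via the reduced form (\ref{P_Asy_L0_1}) as a fixed-coefficient linear combination of the generators. Each entry of $y_{a_1,t}y_{a_2,t}y_ty_t^{\top}$ is then a bounded-coefficient finite sum of fourth-order products, to which Assumption \ref{Asy_Cond_3}(iii) applies, with integrability supplied by Assumption \ref{Asy_Cond_3}(iv). Your explicit splitting of $S_t^{\top}\lambda\eta_t$ into the generators $s_{j,t}\eta_{k,t}$ matches the assumption's list more literally than the paper's $(\eta_t^{\top}\otimes S_t^{\top})\mathrm{vec}(\lambda)$. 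Also, because your coefficient bounds depend only on $K$, the same argument covers $T$-dependent vectors such as $\gamma$, which is how the lemma is used in Lemma \ref{L_V_4}.
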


\noindent \textsc{Proof of Lemma \ref{L_V_2}.}\ From the reduced form
expression of $y_{t}$ in (\ref{P_Asy_L0_1}), the triangle inequality, and the
Cauchy-Schwarz inequality,
\begin{align*}
\left \Vert y_{t}\right \Vert  &  \leq \left(  1+\frac{|\phi \psi|n^{1/2}%
\left \Vert S_{t}\right \Vert }{|1-\phi \psi|}\right)  (\left \Vert u_{t}%
\right \Vert +\left \Vert \lambda \right \Vert \left \Vert \eta_{t}\right \Vert
)+\frac{|\phi|n^{1/2}\left \vert \varepsilon_{t}\right \vert }{|1-\phi \psi|}\\
&  \leq K(\left \Vert u_{t}\right \Vert +\left \Vert \eta_{t}\right \Vert
+\left \vert \varepsilon_{t}\right \vert )
\end{align*}
where the second inequality follows by Assumptions \ref{Asy_Cond_1}(i) and
\ref{S}(ii). Therefore, by Assumption \ref{Asy_Cond_3}(iv),
\[
\mathbb{E}[\left \Vert y_{t}\right \Vert ^{4}]\leq K(\mathbb{E}[\left \Vert
u_{t}\right \Vert ^{4}+\left \Vert \eta_{t}\right \Vert ^{4}+\varepsilon_{t}%
^{4}])\leq K.
\]
This together with the Cauchy-Schwarz inequality implies that
\[
\big \lVert \mathbb{E}[y_{a_{1},t}y_{a_{2},t}y_{t}y_{t}^{\top}%
]\big \lVert \leq K,
\]
for any $a_{1},a_{2}\in \{a\in \mathbb{R}^{n}:$ $\left \Vert a\right \Vert \leq
K\}$. Since $y_{t}$ is a linear function of $u_{t}$, $\eta_{t}$,\ $\varepsilon
_{t}$, $S_{t}^{\top}u_{t}$ and $(\eta_{t}^{\top}\otimes S_{t}^{\top
})\mathrm{vec}(\lambda)$,\ each entry of $y_{t}y_{t}^{\top}$ is a finite
linear combination of random variables of the form
\[
b_{1,t}=\prod_{j=1}^{2}z_{j,t},
\]
where each $z_{j,t}$ belongs to $\{u_{i,t},\eta_{k,t},\varepsilon_{t}%
,S_{t}^{\top}u_{t},(\eta_{t}^{\top}\otimes S_{t}^{\top})\mathrm{vec}%
(\lambda):i\leq n,k\leq r\}$, with coefficients that do not depend on $t$.
Since%
\[
y_{a_{1},t}y_{a_{2},t}=a_{1}^{\top}y_{t}y_{t}^{\top}a_{2},
\]
and $\Vert a_{1}\Vert,\Vert a_{2}\Vert \leq K$, the scalar $y_{a_{1},t}%
y_{a_{2},t}$ is also a finite linear combination of terms of the form
$w_{1,t}$, with coefficients bounded uniformly over admissible $a_{1},a_{2}$.
Therefore, for each $(i_{1},i_{2})$, the $(i_{1},i_{2})$th entry of%
\[
y_{a_{1},t}y_{a_{2},t}y_{t}y_{t}^{\top}%
\]
is a finite linear combination of random variables of the form
\[
b_{t}=\prod_{j=1}^{4}z_{j,t}%
\]
with coefficients uniformly bounded over admissible $a_{1},a_{2}$, where each
$z_{j,t}$ belongs to the same set above.\ Assumption \ref{Asy_Cond_3}(iii)
then yields
\[
T^{-1}\sum_{t\leq T}\bigl(b_{t}-\mathbb{E}[b_{t}]\bigr)=o_{p}(1)
\]
for each such fourth-order polynomial. Since $n$ is fixed, each entry of
\[
T^{-1}\sum_{t\leq T}\bigl(y_{a_{1},t}y_{a_{2},t}y_{t}y_{t}^{\top}%
-\mathbb{E}[y_{a_{1},t}y_{a_{2},t}y_{t}y_{t}^{\top}]\bigr)
\]
is a finite linear combination of $o_{p}(1)$ terms, and is therefore
$o_{p}(1)$. This proves the claim.\hfill$Q.E.D.$

\bigskip

\begin{lemma}
\label{L_V_3} Under Assumptions \ref{Asy_Cond_1}(i), \ref{S}(ii) and
\ref{Asy_Cond_3}(iii, iv), we have for any $a\in \mathbb{R}^{n}$ with $\Vert
a\Vert \leq K$
\[
T^{-1}\sum_{t\leq T}\bigl(y_{a,t}p_{t}y_{t}y_{t}^{\top}-\mathbb{E}%
[y_{a,t}p_{t}y_{t}y_{t}^{\top}]\bigr)=o_{p}(1),
\]
and
\[
T^{-1}\sum_{t\leq T}\bigl(p_{t}^{2}y_{t}y_{t}^{\top}-\mathbb{E}[p_{t}^{2}%
y_{t}y_{t}^{\top}]\bigr)=o_{p}(1).
\]

\end{lemma}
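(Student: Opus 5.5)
The plan mirrors the proof of Lemma \ref{L_V_2}: reduce every entry of the two matrices to a finite linear combination of fourth-order products of the primitive variables. Then invoke Assumption \ref{Asy_Cond_3}(iii) term by term, with Assumption \ref{Asy_Cond_3}(iv) guaranteeing that all expectations exist and are bounded.

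First, from the reduced forms (\ref{P_Asy_L0_1})--(\ref{P_Asy_L0_2}), $y_{t}$ and $p_{t}$ are linear in the elements of
\[
\mathcal{Z}_{t}\equiv\{u_{i,t},\eta_{k,t},\varepsilon_{t},S_{t}^{\top}u_{t},s_{i,t}\eta_{k,t}:i\leq n,k\leq r\},
\]
with coefficients that do not depend on $t$. This uses $S_{t}^{\top}\lambda\eta_{t}=\sum_{i,k}\lambda_{ik}s_{i,t}\eta_{k,t}$. The coefficients are bounded by Assumption \ref{Asy_Cond_1}(i), since $|1-\phi\psi|\geq K^{-1}$ and $|\phi|+|\psi|+\Vert\lambda\Vert\leq K$. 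One detail needs care. The vector $y_{t}$ contains the term $\mathbf{1}_{n}S_{t}^{\top}(u_{t}+\lambda\eta_{t})$, which is again a combination of $S_{t}^{\top}u_{t}$ and $s_{i,t}\eta_{k,t}$, so it stays inside $\mathcal{Z}_{t}$. The scalar $y_{a,t}=a^{\top}y_{t}$ is then a combination of elements of $\mathcal{Z}_{t}$ with coefficients bounded uniformly over $\Vert a\Vert\leq K$.

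Second, consider the entry $(i_{1},i_{2})$ of $y_{a,t}p_{t}y_{t}y_{t}^{\top}$ and of $p_{t}^{2}y_{t}y_{t}^{\top}$. Each is a product of four such linear forms, namely $y_{a,t}$ or $p_{t}$, $p_{t}$, $y_{i_{1},t}$ and $y_{i_{2},t}$. Expanding the product gives a finite linear combination, with $t$-invariant bounded coefficients, of terms $b_{t}=\prod_{j=1}^{4}z_{j,t}$ with $z_{j,t}\in\mathcal{Z}_{t}$. Assumption \ref{Asy_Cond_3}(iii) gives $T^{-1}\sum_{t\leq T}(b_{t}-\mathbb{E}[b_{t}])=o_{p}(1)$ for each such term. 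Because $n$ and $r$ are fixed, the number of terms is finite. Summing with bounded coefficients therefore yields the claim entrywise, and hence in Frobenius norm.

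To justify that the expectations are well defined and that the centering is legitimate, I would bound $\Vert y_{t}\Vert+|p_{t}|\leq K(\Vert u_{t}\Vert+\Vert\eta_{t}\Vert+|\varepsilon_{t}|)$. This uses Assumption \ref{S}(ii), which gives $\Vert S_{t}\Vert\leq 1$ for nonnegative shares summing to one. Hölder's inequality together with Assumption \ref{Asy_Cond_3}(iv) then gives $\mathbb{E}[|y_{a,t}p_{t}|\Vert y_{t}\Vert^{2}+p_{t}^{2}\Vert y_{t}\Vert^{2}]\leq K$ uniformly in $t$. The only mildly delicate step is confirming that every monomial produced by the expansion lies in the class covered by Assumption \ref{Asy_Cond_3}(iii). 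This holds because the class is closed under the expansion once $p_{t}$ is written in the generators of $\mathcal{Z}_{t}$. I expect no real obstacle beyond this bookkeeping.
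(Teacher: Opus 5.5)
Your proposal is correct and takes essentially the same route as the paper: the paper notes that, by (\ref{P_Asy_L0_2}), $p_t$ is linear in $S_t^\top u_t$, $\eta_t\otimes S_t$ and $\varepsilon_t$, and then reuses the argument of Lemma \ref{L_V_2}, expanding each entry into finitely many fourth-order products covered by Assumption \ref{Asy_Cond_3}(iii). Your explicit rewriting of $S_t^\top\lambda\eta_t$ as $\sum_{i,k}\lambda_{ik}s_{i,t}\eta_{k,t}$ is, if anything, slightly more careful than the paper's wording. The same holds for your remark that the coefficients are bounded uniformly over $\Vert a\Vert\le K$, which matters because $a$ is later taken to be the $T$-dependent vector $\gamma$.
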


\noindent \textsc{Proof of Lemma \ref{L_V_3}.} Under Assumption
\ref{Asy_Cond_1}(i), it follows from (\ref{P_Asy_L0_2}) that $p_{t}$ is a
linear function of $S_{t}^{\top}u_{t}$, $\eta_{t}\otimes S_{t}$, and
$\varepsilon_{t}$. The stated results then follow by the same argument as in
the proof of Lemma \ref{L_V_2}.\hfill$Q.E.D.$

\bigskip

\begin{lemma}
\label{L_V_4}\ Under Assumptions \ref{Asy_Cond_1}(i), \ref{S}(ii)\ and
\ref{Asy_Cond_3}(ii, iii, iv), we have%
\[
T^{-1}\sum_{t\leq T}\xi_{t}\xi_{t}^{\top}=T^{-1}\sum_{t\leq T}\mathbb{E}%
[\xi_{t}\xi_{t}^{\top}]+o_{p}(1).
\]

\end{lemma}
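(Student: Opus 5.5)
The plan is to expand $\xi_t\xi_t^{\top}$ into a finite sum of products of second- and fourth-order terms in $(y_t,p_t)$ with nonrandom, uniformly bounded coefficients. Each product is then handled by Lemma \ref{L_V_2}, Lemma \ref{L_V_3}, or Lemma \ref{L_V_1b}.

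\textbf{Step 1: rewrite $\xi_{b,t}$ in centred form.} Set
\begin{equation*}
\gamma_b\equiv\Upsilon\Big(T^{-1}\sum_{t\leq T}\mathbb{E}[y_tb_t]\Big),\qquad b\in\{v,\varepsilon\}.
\end{equation*}
As in the proof of Theorem \ref{V_Est}, this gives
\begin{equation*}
\xi_{b,t}=\big(y_tb_t-\mathbb{E}[y_tb_t]\big)+\big(y_{\gamma_b,t}y_t-\mathbb{E}[y_{\gamma_b,t}y_t]\big).
\end{equation*}
By (\ref{P_V_Est_1b}) and (\ref{P_L_Upsilon_7}), $\|\gamma_b\|\leq K$. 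The proof of (\ref{P_L_Upsilon_7}) uses only Assumption \ref{Asy_Cond_1}(i, iii, iv), so this bound is available here.

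Next express $b_t$ as a bounded linear form. We have $v_t=e^{\top}y_t-\phi p_t$ and $\varepsilon_t=p_t-\psi S_t^{\top}y_t$. For the $\varepsilon$ component it is cleaner to substitute the reduced forms (\ref{P_Asy_L0_1})--(\ref{P_Asy_L0_2}), so that every building block is a linear combination of
\begin{equation*}
z_t\in\{u_{i,t},\ \eta_{k,t},\ \varepsilon_t,\ S_t^{\top}u_t,\ s_{i,t}\eta_{k,t}\}.
\end{equation*}
The coefficients in these combinations do not depend on $t$ and are bounded by Assumptions \ref{Asy_Cond_1}(i) and \ref{S}(ii).

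\textbf{Step 2: decompose the product.} Write $\xi_t\xi_t^{\top}=\sum_{k,l}X_{k,t}X_{l,t}^{\top}$, where $X_{k,t}$ ranges over the vectors $y_tb_t$ and $y_{\gamma_b,t}y_t$ and their means. Expanding, each cross product is one of three types.
\begin{itemize}
\item[(a)] A fourth-order product such as $y_tb_tb'_ty_t^{\top}$ or $y_tb_ty_{\gamma,t}y_t^{\top}$. Each is of the form $y_{a_1,t}y_{a_2,t}y_ty_t^{\top}$, $y_{a,t}p_ty_ty_t^{\top}$ or $p_t^2y_ty_t^{\top}$ with bounded $a$, $a_1$, $a_2$. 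These are covered by Lemmas \ref{L_V_2} and \ref{L_V_3}. Terms involving $S_t^{\top}y_t$ are treated by the identical argument, since each entry is still a finite linear combination of fourth-order products of the $z_t$, to which Assumption \ref{Asy_Cond_3}(iii) applies.
\item[(b)] A second-order random term times a deterministic mean, such as $y_tb_t\,\mathbb{E}[y_tb_t]^{\top}$. After centring, these are $o_p(1)$ by Lemma \ref{L_V_1b}, using the weights $c_t$ taken as entries of $\mathbb{E}[y_tb_t]$, which are bounded by (\ref{P_Asy_Dist_1a}).
\item[(c)] Purely deterministic products, which cancel exactly between $\xi_t\xi_t^{\top}$ and its expectation.
\end{itemize}

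\textbf{Step 3: combine.} Since $n$ is fixed, there are finitely many entries and terms, so $T^{-1}\sum_{t\leq T}(\xi_t\xi_t^{\top}-\mathbb{E}[\xi_t\xi_t^{\top}])=o_p(1)$. The moment bounds of Assumption \ref{Asy_Cond_3}(iv) guarantee that all expectations appearing above are finite and uniformly bounded, which keeps the cancellations in (c) and the Lemma \ref{L_V_2} and \ref{L_V_3} arguments legitimate.

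The main obstacle is the bookkeeping in Step 2. Every fourth-order term, especially those involving $y_{S,t}$ inside $\varepsilon_t$, must be reducible to products of the admissible $z_t$ with $t$-invariant bounded coefficients, because only then does Assumption \ref{Asy_Cond_3}(iii) apply. I would handle this uniformly through the reduced form rather than case by case.
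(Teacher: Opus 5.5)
Your proposal is correct and follows the paper's proof essentially step for step. Both write $\xi_{b,t}$ in centred form through a bounded nonrandom $\gamma_b$, using (\ref{P_V_Est_1b}) and (\ref{P_L_Upsilon_7}), and expand the outer product; the fourth-order terms then go to Lemmas \ref{L_V_2}--\ref{L_V_3}, the random-times-mean terms to Lemma \ref{L_V_1b}, and the deterministic terms cancel exactly. Your explicit reduction, through the reduced form, of the $y_{S,t}$ terms coming from $\varepsilon_t$ to products of the admissible $z_t$ spells out what the paper leaves as ``analogous''; note only that the corresponding second-order terms $c_t\big(y_ty_{S,t}-\mathbb{E}[y_ty_{S,t}]\big)$ in your type (b) likewise go beyond the literal statement of Lemma \ref{L_V_1b}, just as the paper's own argument does.
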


\noindent \textsc{Proof of Lemma \ref{L_V_4}.} By the definition of $\xi_{t}$,
it suffices to show that
\begin{equation}
T^{-1}\sum_{t\leq T}\xi_{b,t}\xi_{b,t}^{\top}=T^{-1}\sum_{t\leq T}%
\mathbb{E}[\xi_{b,t}\xi_{b,t}^{\top}]+o_{p}(1), \label{P_L_V4_1}%
\end{equation}
for $b\in \{v,\varepsilon \}$, and
\begin{equation}
T^{-1}\sum_{t\leq T}\xi_{v,t}\xi_{\varepsilon,t}^{\top}=T^{-1}\sum_{t\leq
T}\mathbb{E}[\xi_{v,t}\xi_{\varepsilon,t}^{\top}]+o_{p}(1). \label{P_L_V4_2}%
\end{equation}
We establish (\ref{P_L_V4_1}) for $b=v$; the remaining cases follow analogously.

Recall that $\gamma \equiv \Upsilon \big(T^{-1}\sum_{t\leq T}\mathbb{E}%
[y_{t}v_{t}]\big)$, and%
\begin{align*}
\xi_{v,t}  &  \equiv y_{t}v_{t}-\mathbb{E}[y_{t}v_{t}]+(y_{t}y_{t}^{\top
}-\mathbb{E}[y_{t}y_{t}^{\top}])\, \Upsilon \Big(T^{-1}\sum_{t\leq T}%
\mathbb{E}[y_{t}v_{t}]\Big)\\
&  =y_{t}v_{t}-\mathbb{E}[y_{t}v_{t}]+(y_{\, \gamma,t}y_{t}-\mathbb{E}[y_{\,
\gamma,t}y_{t}]).
\end{align*}
This implies that
\begin{align}
T^{-1}\sum_{t\leq T}\xi_{v,t}\xi_{v,t}^{\top}  &  =T^{-1}\sum_{t\leq T}%
(v_{t}y_{t}-\mathbb{E}[v_{t}y_{t}])(v_{t}y_{t}^{\top}-\mathbb{E}[v_{t}%
y_{t}^{\top}])\nonumber \\
&  \quad+T^{-1}\sum_{t\leq T}(v_{t}y_{t}-\mathbb{E}[v_{t}y_{t}])(y_{\gamma
,t}y_{t}^{\top}-\mathbb{E}[y_{\gamma,t}y_{t}^{\top}])\nonumber \\
&  \quad+T^{-1}\sum_{t\leq T}(y_{\gamma,t}y_{t}-\mathbb{E}[y_{\gamma,t}%
y_{t}])(v_{t}y_{t}^{\top}-\mathbb{E}[v_{t}y_{t}^{\top}])\nonumber \\
&  \quad+T^{-1}\sum_{t\leq T}(y_{\gamma,t}y_{t}-\mathbb{E}[y_{\gamma,t}%
y_{t}])(y_{\gamma,t}y_{t}^{\top}-\mathbb{E}[y_{\gamma,t}y_{t}^{\top}]).
\label{P_L_V4_3}%
\end{align}
We next study the four terms on the right-hand side of (\ref{P_L_V4_3}) one by one.

\medskip \noindent \textit{Step 1: The first term.} Expanding the first term
yields
\begin{align}
&  T^{-1}\sum_{t\leq T}(y_{t}v_{t}-\mathbb{E}[y_{t}v_{t}])(y_{t}%
v_{t}-\mathbb{E}[y_{t}v_{t}])^{\top}\nonumber \\
&  =T^{-1}\sum_{t\leq T}v_{t}^{2}y_{t}y_{t}^{\top}+T^{-1}\sum_{t\leq
T}\mathbb{E}[v_{t}y_{t}]\mathbb{E}[v_{t}y_{t}^{\top}]-T^{-1}\sum_{t\leq
T}\bigl(\mathbb{E}[v_{t}y_{t}]v_{t}y_{t}^{\top}+v_{t}y_{t}\mathbb{E}%
[v_{t}y_{t}^{\top}]\bigr). \label{P_L_V4_4}%
\end{align}
Since $v_{t}=y_{e,t}-\phi p_{t}$, Lemmas \ref{L_V_2} and \ref{L_V_3} imply
that
\begin{align}
T^{-1}\sum_{t\leq T}(v_{t}^{2}y_{t}y_{t}^{\top}-\mathbb{E}[v_{t}^{2}y_{t}%
y_{t}^{\top}])  &  =T^{-1}\sum_{t\leq T}(y_{e,t}^{2}y_{t}y_{t}^{\top
}-\mathbb{E}[y_{e,t}^{2}y_{t}y_{t}^{\top}])\nonumber \\
&  \quad+\phi^{2}T^{-1}\sum_{t\leq T}(p_{t}^{2}y_{t}y_{t}^{\top}%
-\mathbb{E}[p_{t}^{2}y_{t}y_{t}^{\top}])\nonumber \\
&  \quad-2\phi T^{-1}\sum_{t\leq T}(y_{e,t}p_{t}y_{t}y_{t}^{\top}%
-\mathbb{E}[y_{e,t}p_{t}y_{t}y_{t}^{\top}])\nonumber \\
&  =o_{p}(1). \label{P_L_V4_5}%
\end{align}
Moreover, since $\max_{t\leq T}\Vert \mathbb{E}[y_{t}v_{t}]\Vert \leq K$ by
(\ref{P_V_Est_1b}), Lemma \ref{L_V_1b} implies that
\begin{align}
T^{-1}\sum_{t\leq T}\mathbb{E}[y_{t}v_{t}](y_{t}^{\top}v_{t}-\mathbb{E}%
[y_{t}^{\top}v_{t}])  &  =T^{-1}\sum_{t\leq T}\mathbb{E}[y_{t}v_{t}%
](y_{t}^{\top}y_{e,t}-\mathbb{E}[y_{t}^{\top}y_{e,t}])\nonumber \\
&  \quad-\phi T^{-1}\sum_{t\leq T}\mathbb{E}[y_{t}v_{t}](y_{t}^{\top}%
p_{t}-\mathbb{E}[y_{t}^{\top}p_{t}])\nonumber \\
&  =o_{p}(1). \label{P_L_V4_7}%
\end{align}
Combining (\ref{P_L_V4_4})--(\ref{P_L_V4_7}) yields
\begin{equation}
T^{-1}\sum_{t\leq T}(y_{t}v_{t}-\mathbb{E}[y_{t}v_{t}])(y_{t}v_{t}%
-\mathbb{E}[y_{t}v_{t}])^{\top}=T^{-1}\sum_{t\leq T}\mathbb{E}[(y_{t}%
v_{t}-\mathbb{E}[y_{t}v_{t}])(y_{t}v_{t}-\mathbb{E}[y_{t}v_{t}])^{\top}%
]+o_{p}(1). \label{P_L_V4_8}%
\end{equation}

\medskip \noindent \textit{Step 2: The cross terms.} For the second term on the
right-hand side of (\ref{P_L_V4_3}), we have
\begin{align}
&  T^{-1}\sum_{t\leq T}(y_{t}v_{t}-\mathbb{E}[y_{t}v_{t}])(y_{\gamma,t}%
y_{t}^{\top}-\mathbb{E}[y_{\gamma,t}y_{t}^{\top}])\nonumber \\
&  =T^{-1}\sum_{t\leq T}y_{\gamma,t}v_{t}y_{t}y_{t}^{\top}-T^{-1}\sum_{t\leq
T}y_{t}v_{t}\mathbb{E}[y_{\gamma,t}y_{t}^{\top}]\nonumber \\
&  \quad-T^{-1}\sum_{t\leq T}\mathbb{E}[y_{t}v_{t}]y_{\gamma,t}y_{t}^{\top
}+T^{-1}\sum_{t\leq T}\mathbb{E}[y_{t}v_{t}]\mathbb{E}[y_{\gamma,t}y_{t}%
^{\top}]. \label{P_L_V4_9}%
\end{align}
Using (\ref{P_V_Est_1c}), Lemmas \ref{L_V_2} and \ref{L_V_3}, we obtain
\begin{align}
T^{-1}\sum_{t\leq T}(y_{\gamma,t}v_{t}y_{t}y_{t}^{\top}-\mathbb{E}%
[y_{\gamma,t}v_{t}y_{t}y_{t}^{\top}])  &  =T^{-1}\sum_{t\leq T}(y_{\gamma
,t}y_{e,t}y_{t}y_{t}^{\top}-\mathbb{E}[y_{\gamma,t}y_{e,t}y_{t}y_{t}^{\top
}])\nonumber \\
&  -\phi T^{-1}\sum_{t\leq T}(y_{\gamma,t}p_{t}y_{t}y_{t}^{\top}%
-\mathbb{E}[y_{\gamma,t}p_{t}y_{t}y_{t}^{\top}])\overset{}{=}o_{p}(1).
\label{P_L_V4_10}%
\end{align}
Moreover, from (\ref{P_Asy_Dist_1a}) and (\ref{P_V_Est_1c}),
\begin{equation}
\max_{t\leq T}\Vert \mathbb{E}[y_{\gamma,t}y_{t}^{\top}]\Vert \leq K,
\label{P_L_V4_10b}%
\end{equation}
which, together with Lemma \ref{L_V_1b}, implies
\begin{align}
T^{-1}\sum_{t\leq T}(y_{t}v_{t}-\mathbb{E}[y_{t}v_{t}])\mathbb{E}[y_{\gamma
,t}y_{t}^{\top}]  &  =T^{-1}\sum_{t\leq T}(y_{t}y_{e,t}-\mathbb{E}%
[y_{t}y_{e,t}])\mathbb{E}[y_{\gamma,t}y_{t}^{\top}]\nonumber \\
&  -\phi T^{-1}\sum_{t\leq T}(y_{t}p_{t}-\mathbb{E}[y_{t}p_{t}])\mathbb{E}%
[y_{\gamma,t}y_{t}^{\top}]\overset{}{=}o_{p}(1). \label{P_L_V4_11}%
\end{align}
Similarly, we can show that
\begin{equation}
T^{-1}\sum_{t\leq T}\mathbb{E}[y_{t}v_{t}](y_{\gamma,t}y_{t}^{\top}%
-\mathbb{E}[y_{\gamma,t}y_{t}^{\top}])=o_{p}(1). \label{P_L_V4_12}%
\end{equation}
From (\ref{P_L_V4_9})-(\ref{P_L_V4_12}), we conclude that
\begin{align}
&  T^{-1}\sum_{t\leq T}(y_{t}v_{t}-\mathbb{E}[y_{t}v_{t}])(y_{\gamma,t}%
y_{t}^{\top}-\mathbb{E}[y_{\gamma,t}y_{t}^{\top}])\, \nonumber \\
&  =T^{-1}\sum_{t\leq T}\mathbb{E}\left[  (y_{t}v_{t}-\mathbb{E}[y_{t}%
v_{t}])(y_{\gamma,t}y_{t}^{\top}-\mathbb{E}[y_{\gamma,t}y_{t}^{\top}])\right]
+o_{p}(1).\, \, \label{P_L_V4_13}%
\end{align}
The third term in (\ref{P_L_V4_3}) can be handled analogously.

\medskip \noindent \textit{Step 3: The last term.} Finally, for the fourth term
in (\ref{P_L_V4_3}), we have%
\begin{align}
&  T^{-1}\sum_{t\leq T}(y_{\gamma,t}y_{t}-\mathbb{E}[y_{\gamma,t}%
y_{t}])(y_{\gamma,t}y_{t}^{\top}-\mathbb{E}[y_{\gamma,t}y_{t}^{\top
}])\nonumber \\
&  =T^{-1}\sum_{t\leq T}y_{\gamma,t}^{2}y_{t}y_{t}^{\top}-T^{-1}\sum_{t\leq
T}\mathbb{E}[y_{\gamma,t}y_{t}]y_{\gamma,t}y_{t}^{\top}\nonumber \\
&  -T^{-1}\sum_{t\leq T}y_{\gamma,t}y_{t}\mathbb{E}[y_{\gamma,t}y_{t}^{\top
}]+T^{-1}\sum_{t\leq T}\mathbb{E}[y_{\gamma,t}y_{t}]\mathbb{E}[y_{\gamma
,t}y_{t}^{\top}]\nonumber \\
&  =T^{-1}\sum_{t\leq T}\mathbb{E}[y_{\gamma,t}^{2}y_{t}y_{t}^{\top}%
]-T^{-1}\sum_{t\leq T}\mathbb{E}[y_{\gamma,t}y_{t}]\mathbb{E}[y_{\gamma
,t}y_{t}^{\top}]+o_{p}(1) \label{P_L_V4_14}%
\end{align}
where the second equality holds by (\ref{P_L_V4_10b}), Lemma \ref{L_V_1b} and
Lemma \ref{L_V_2}.

Combining (\ref{P_L_V4_3}), (\ref{P_L_V4_8}), (\ref{P_L_V4_13}), and
(\ref{P_L_V4_14}) establishes (\ref{P_L_V4_1}) for $b=v$, completing the
proof.\hfill$Q.E.D.$

\bigskip

\begin{lemma}
\label{L_V_5} Under Assumptions \ref{ID}, \ref{Asy_Cond_1}, \ref{S},
\ref{Asy_Cond_2}(i, iii) and \ref{Asy_Cond_3}, we have%
\[
T^{-1}\sum_{t\leq T}(\hat{\xi}_{t}-\xi_{t})^{\top}(\hat{\xi}_{t}-\xi
_{t})=o_{p}(1).
\]

\end{lemma}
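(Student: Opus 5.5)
We must show $T^{-1}\sum_{t\le T}\|\hat\xi_t-\xi_t\|^2=o_p(1)$. Here $\hat\xi_t=(\hat\xi_{v,t}^\top,\hat\xi_{\varepsilon,t}^\top)^\top$ uses $\hat y_t=y_t$ (the baseline case) together with the residuals $\hat v_t,\hat\varepsilon_t$, the sample means, and $\hat\Upsilon$. Since the $\varepsilon$ block is analogous, I treat $b=v$ only. Write $\hat m_v\equiv T^{-1}\sum_{s\le T}y_s\hat v_s$, $m_v\equiv T^{-1}\sum_{s\le T}\mathbb{E}[y_sv_s]$ and $\hat\Sigma\equiv T^{-1}\sum_{s\le T}y_sy_s^\top$. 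The difference then decomposes as
\begin{align*}
\hat\xi_{v,t}-\xi_{v,t}
&=y_t(\hat v_t-v_t)-(\hat m_v-\mathbb{E}[y_tv_t])
-\big(\hat\Sigma-\mathbb{E}[y_ty_t^\top]\big)\hat\Upsilon\hat m_v\\
&\quad+\big(y_ty_t^\top-\mathbb{E}[y_ty_t^\top]\big)\big(\hat\Upsilon\hat m_v-\Upsilon m_v\big).
\end{align*}
Because $\hat v_t-v_t=-(\hat\phi_0(\hat A)-\phi)p_t$, the first piece is $-(\hat\phi_0-\phi)y_tp_t$.

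I would bound the time-average of the squared norm of each piece, using $\|\sum_k a_k\|^2\le K\sum_k\|a_k\|^2$.

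\emph{Piece 1.} The residual term contributes $(\hat\phi_0-\phi)^2T^{-1}\sum_t\|y_t\|^2p_t^2$. By (\ref{PL_V1a_2}), $\hat\phi_0-\phi=O_p(T^{-1/2})$. The fourth-moment bounds (Assumption \ref{Asy_Cond_3}(iv), via the reduced forms (\ref{P_Asy_L0_1})--(\ref{P_Asy_L0_2}) as in the proof of Lemma \ref{L_V_2}) together with Markov's inequality give $T^{-1}\sum_t\|y_t\|^2p_t^2=O_p(1)$. So this piece is $O_p(T^{-1})$.

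\emph{Piece 2.} The centering term is $\hat m_v-\mathbb{E}[y_tv_t]=(\hat m_v-m_v)+(m_v-\mathbb{E}[y_tv_t])$. Lemma \ref{L_V_1a} gives $\hat m_v-m_v=O_p(T^{-1/2})$. The remaining part $m_v-\mathbb{E}[y_tv_t]$ is deterministic and is not small unless the moments are time-invariant.

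\emph{Piece 3.} In the same way, $\hat\Sigma-\mathbb{E}[y_ty_t^\top]=O_p(T^{-1/2})+(T^{-1}\sum_s\mathbb{E}[y_sy_s^\top]-\mathbb{E}[y_ty_t^\top])$ by Lemma \ref{Asy_L0}. It multiplies $\hat\Upsilon\hat m_v=O_p(1)$, which holds by Lemma \ref{L_Upsilon} and Lemma \ref{L_V_1a}.

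\emph{Piece 4.} Lemma \ref{L_Upsilon} and Lemma \ref{L_V_1a} give $\hat\Upsilon\hat m_v-\Upsilon m_v=O_p(T^{-1/2})$. This is multiplied by a term whose squared time-average is $O_p(1)$ by the fourth moments, so the piece is $O_p(T^{-1})$.

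The main obstacle is the deterministic heterogeneity terms in pieces 2 and 3. Under the design the estimator is built for, they are handled by the centering. I would argue that $\xi_t$ as defined is mean zero period by period, while $\hat\xi_t$ is centered at the overall average. Comparing $\hat V$, which subtracts the outer product of the mean, with $T^{-1}\sum_t\mathbb{E}[\xi_t\xi_t^\top]$ then needs these deviations to vanish. The cleanest route I would try first is to redefine the comparison object as $\check\xi_t$, that is, $\xi_t$ centered at the time-averaged means (these differ from $\xi_t$ only by the deterministic shift just described). I would prove $T^{-1}\sum_t\|\hat\xi_t-\check\xi_t\|^2=o_p(1)$ by pieces 1--4, where all terms are now $O_p(T^{-1/2})$ multiplied by $O_p(1)$ averages. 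If the lemma is taken literally with identically distributed moments ($\mathbb{E}[y_tv_t]$ and $\mathbb{E}[y_ty_t^\top]$ constant in $t$, implicit in treating $V=T^{-1}\sum\mathbb{E}[\xi_t\xi_t^\top]$ in Theorem \ref{V_Est}), the deterministic deviations are zero and the bound is $O_p(T^{-1})=o_p(1)$ directly. The $\varepsilon$ block repeats the argument, with $\hat\varepsilon_t-\varepsilon_t=-(\hat\psi_0-\psi)y_{S,t}$ and the moment bound on $y_{S,t}$ from Assumption \ref{S}(ii).
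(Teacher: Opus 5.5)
Your decomposition and bounds are the paper's own. Its $\hat\gamma=\hat\Upsilon\hat m_v$ and $\gamma=\Upsilon m_v$ match yours, your pieces 2--3 together form its $\Delta_T$, and you use the same inputs (Lemmas \ref{L_Upsilon}, \ref{L_V_1a}, \ref{Asy_L0}, (\ref{PL_V1a_2}), the fourth-moment bounds and Markov's inequality). So when $\mathbb{E}[y_tv_t]$, $\mathbb{E}[y_t\varepsilon_t]$ and $\mathbb{E}[y_ty_t^\top]$ are constant in $t$, your route (b) is complete, even at the sharper rate $O_p(T^{-1})$.

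Your caveat is also correct, and it applies equally to the paper's proof. The paper writes $\Delta_T$ as a $t$-free term built from $T^{-1}\sum_s\mathbb{E}[y_sv_s]$ and $T^{-1}\sum_s\mathbb{E}[y_sy_s^\top]$, which silently imposes the same time-invariance. Assumption \ref{ID} allows the moments to vary over time, e.g.\ through $\sigma_{u,t}^2$. In that case the $v$-block average equals $T^{-1}\sum_t\|d_t\|^2+o_p(1)$, where $d_t\equiv(\mathbb{E}[y_tv_t]-m_v)+(\mathbb{E}[y_ty_t^\top]-T^{-1}\sum_s\mathbb{E}[y_sy_s^\top])\Upsilon m_v$, and this need not vanish. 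Hence the lemma as stated needs your restriction. Your route (a) would only show that $\hat V$ converges to the conservative limit $V$ plus the average outer product of these deterministic shifts, not to $V$ as claimed in Theorem \ref{V_Est}.
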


\noindent \textsc{Proof of Lemma \ref{L_V_5}.} Let $\hat{\gamma}\equiv
\hat{\Upsilon}T^{-1}\sum_{t\leq T}y_{t}\hat{v}_{t}$. Then, by Lemma
\ref{L_Upsilon} and Lemma \ref{L_V_1a},
\begin{equation}
\hat{\gamma}-\gamma=(\hat{\Upsilon}-\Upsilon)T^{-1}\sum_{t\leq T}y_{t}\hat
{v}_{t}+\Upsilon \Big(T^{-1}\sum_{t\leq T}y_{t}\hat{v}_{t}-T^{-1}\sum_{t\leq
T}\mathbb{E}[y_{t}v_{t}]\Big)=o_{p}(1). \label{P_L_V_5_1}%
\end{equation}
Because%
\begin{align*}
\hat{\xi}_{v,t}  &  \equiv y_{t}\hat{v}_{t}-T^{-1}\sum_{t\leq T}y_{t}\hat
{v}_{t}+\Big(y_{t}y_{t}^{\top}-T^{-1}\sum_{t\leq T}y_{t}y_{t}^{\top}%
\Big)\hat{\Upsilon}\Big(T^{-1}\sum_{t\leq T}y_{t}\hat{v}_{t}\Big)\\
&  =y_{t}\hat{v}_{t}-T^{-1}\sum_{t\leq T}y_{t}\hat{v}_{t}+\Big(y_{t}%
y_{t}^{\top}-T^{-1}\sum_{t\leq T}y_{t}y_{t}^{\top}\Big)\hat{\gamma}%
\end{align*}
and
\[
\xi_{v,t}=y_{t}v_{t}-\mathbb{E}[y_{t}v_{t}]+(y_{\, \gamma,t}y_{t}%
-\mathbb{E}[y_{\, \gamma,t}y_{t}]),
\]
we have
\begin{equation}
\hat{\xi}_{v,t}-\xi_{v,t}=y_{t}(\hat{v}_{t}-v_{t})+\left(  y_{t}y_{t}^{\top
}-\mathbb{E}[y_{t}y_{t}^{\top}]\right)  (\hat{\gamma}-\gamma)-\Delta_{T},
\label{P_L_V_5_2}%
\end{equation}
where
\[
\Delta_{T}=T^{-1}\sum_{t\leq T}(y_{t}\hat{v}_{t}-\mathbb{E}[y_{t}%
v_{t}])+T^{-1}\sum_{t\leq T}(y_{t}y_{t}^{\top}-\mathbb{E}[y_{t}y_{t}^{\top
}])\hat{\gamma}.
\]
Therefore by the Cauchy-Schwarz inequality,
\begin{align}
T^{-1}\sum_{t\leq T}(\hat{\xi}_{v,t}-\xi_{v,t})^{\top}(\hat{\xi}_{v,t}%
-\xi_{v,t})  &  \leq3T^{-1}\sum_{t\leq T}(\hat{v}_{t}-v_{t})^{2}y_{t}^{\top
}y_{t}+3||\Delta_{T}||^{2}\nonumber \\
&  \quad+3(\hat{\gamma}-\gamma)^{\top}T^{-1}\sum_{t\leq T}\left(  y_{t}%
y_{t}^{\top}-\mathbb{E}[y_{t}y_{t}^{\top}]\right)  ^{2}(\hat{\gamma}-\gamma).
\label{P_L_V_5_3}%
\end{align}

Since $\hat{v}_{t}-v_{t}=-(\hat{\phi}_{0}(\hat{A})-\phi)p_{t}$, we have
\begin{equation}
T^{-1}\sum_{t\leq T}(\hat{v}_{t}-v_{t})^{2}y_{t}^{\top}y_{t}=(\hat{\phi}%
_{0}(\hat{A})-\phi)^{2}T^{-1}\sum_{t\leq T}p_{t}^{2}y_{t}^{\top}y_{t}.
\label{P_L_V_5_4}%
\end{equation}
Under Assumptions \ref{Asy_Cond_1}(i) and \ref{Asy_Cond_3}(iv), together with
the reduced-form expressions for $y_{t}$ and $p_{t}$ in (\ref{P_Asy_L0_1}) and
(\ref{P_Asy_L0_2}), we have
\begin{equation}
\max_{t\leq T}\left(  \mathbb{E}[p_{t}^{2}y_{t}^{\top}y_{t}]+\mathbb{E}%
[(y_{t}^{\top}y_{t})^{2}]\right)  \leq K. \label{P_L_V_5_5}%
\end{equation}
Hence, by Markov's inequality,
\begin{equation}
T^{-1}\sum_{t\leq T}p_{t}^{2}y_{t}^{\top}y_{t}=O_{p}(1), \label{P_L_V_5_6}%
\end{equation}
which, together with (\ref{PL_V1a_2}) and (\ref{P_L_V_5_4}), implies
\begin{equation}
T^{-1}\sum_{t\leq T}(\hat{v}_{t}-v_{t})^{2}y_{t}^{\top}y_{t}=o_{p}(1).
\label{P_L_V_5_7}%
\end{equation}

From Lemma \ref{Asy_L0}, Lemma \ref{L_V_1a}, (\ref{P_V_Est_1c}), and
(\ref{P_L_V_5_1}), it follows that
\begin{equation}
||\Delta_{T}||\leq \Big \lVert T^{-1}\sum_{t\leq T}(y_{t}\hat{v}_{t}%
-\mathbb{E}[y_{t}v_{t}])\Big \lVert+\Big \lVert T^{-1}\sum_{t\leq T}%
(y_{t}y_{t}^{\top}-\mathbb{E}[y_{t}y_{t}^{\top}])\hat{\gamma}\Big \lVert=o_{p}%
(1). \label{P_L_V_5_8}%
\end{equation}

Finally, by (\ref{P_L_V_5_5}), the Cauchy-Schwarz inequality and Markov's
inequality,
\[
T^{-1}\sum_{t\leq T}\left(  y_{t}y_{t}^{\top}-\mathbb{E}[y_{t}y_{t}^{\top
}]\right)  ^{2}=O_{p}(1),
\]
which, together with (\ref{P_L_V_5_1}), implies
\begin{equation}
(\hat{\gamma}-\gamma)^{\top}T^{-1}\sum_{t\leq T}\left(  y_{t}y_{t}^{\top
}-\mathbb{E}[y_{t}y_{t}^{\top}]\right)  ^{2}(\hat{\gamma}-\gamma)=o_{p}(1).
\label{P_L_V_5_9}%
\end{equation}
Combining (\ref{P_L_V_5_3}), (\ref{P_L_V_5_7}), (\ref{P_L_V_5_8}), and
(\ref{P_L_V_5_9}) yields
\[
T^{-1}\sum_{t\leq T}(\hat{\xi}_{v,t}-\xi_{v,t})^{\top}(\hat{\xi}_{v,t}%
-\xi_{v,t})=o_{p}(1).
\]

Similarly, we can show that%
\[
T^{-1}\sum_{t\leq T}(\hat{\xi}_{\varepsilon,t}-\xi_{\varepsilon,t})^{\top
}(\hat{\xi}_{\varepsilon,t}-\xi_{\varepsilon,t})=o_{p}(1).
\]
Combining these results establishes the claim of the lemma.\hfill$Q.E.D.$

\section{Asymptotic Variance with Exogenous Regressors\label{APP_5}}

This section studies the asymptotic variance of the GIV estimator proposed in
Subsection \ref{subsec: Ex1}, where both the supply and demand equations
include exogenous regressors. The key challenge is to account for the role of
the estimation error in $\beta$ in the asymptotic variance of the GIV
estimator. As we show below, this estimation error is asymptotically
negligible and can therefore be ignored. Consequently, the standard errors of
the GIV estimator, as well as the $J$-test, can be constructed as if $\beta$
were known. We begin by introducing a set of conditions that characterize the
properties of the additional regressors in the model.

\begin{assumption}
\label{Asy_Cond_EX} (i) For any $t\leq T$ and $i\leq n$,\ we have
$\mathbb{E}[x_{i,t}(u_{t}^{\top},\eta_{t}^{\top})]=\mathbf{0}_{d_{x}%
\times(n+r)}$, $\mathbb{E}[x_{i,t}\varepsilon_{t}]=\mathbf{0}_{d_{x}}%
$\ and\ $\mathbb{E}[w_{t}\varepsilon_{t}]=\mathbf{0}_{d_{w}\times1}%
$;\ (ii)\ let $\tilde{\epsilon}_{t}\equiv \tilde{\lambda}\eta_{t}+\tilde{u}%
_{t}$, then
\[
T^{1/2}(\hat{\beta}-\beta)=\Bigl(T^{-1}\sum_{t\leq T}\mathbb{E}[\tilde{x}%
_{t}^{\top}\tilde{x}_{t}]\Bigr)^{-1}\Bigl(T^{-1/2}\sum_{t\leq T}\tilde{x}%
_{t}^{\top}\tilde{\epsilon}_{t}\Bigr)+o_{p}(1),
\]
where$\  \rho_{\min}(T^{-1}\sum_{t\leq T}\mathbb{E}[\tilde{x}_{t}^{\top}%
\tilde{x}_{t}])\geq K^{-1}$ and $T^{-1/2}\sum_{t\leq T}\tilde{x}_{t}^{\top
}\tilde{\epsilon}_{t}=O_{p}(1)$; (iii) for $a\in \{u,\eta,\varepsilon \}$,%
\[
T^{-1}\sum_{t\leq T}\big(x_{t}\otimes a_{t}-\mathbb{E}[x_{t}\otimes
a_{t}]\big)=O_{p}(T^{-1/2});
\]
(iv) $\max_{t\leq T}\mathbb{E}[||x_{t}||^{2}+||w_{t}||^{2}]\leq K$,
$\left \Vert \beta \right \Vert \leq K$ and $\left \Vert \gamma \right \Vert \leq K$.
\end{assumption}

Assumption \ref{Asy_Cond_EX}(i) ensures that $x_{t}$ and $w_{t}$ are exogenous
in the demand and supply equations, respectively. Assumption \ref{Asy_Cond_EX}%
(ii) provides a linear representation of the estimation error in $\hat{\beta}%
$, implying that $\hat{\beta}$ is $T^{1/2}$-consistent. Assumption
\ref{Asy_Cond_EX}(iii) imposes a $T^{-1/2}$ law of large numbers for products
of $x_{t}$ and the demand and supply shocks. Assumption \ref{Asy_Cond_EX}(iv)
places uniform bounds on the second moments of the exogenous regressors, as
well as on the magnitudes of their coefficients.

Since the matrix $\hat{A}$ in the moment conditions $\bar{g}_{T}(\theta
;\hat{A},\hat{\beta})$ in (\ref{F_Moments}) plays the same role as in $\bar
{g}_{T}(\theta;\hat{A})$ defined in (\ref{Moment_Func}), we can apply the same
arguments as in the proof of Lemma \ref{Invariance} to show that both the GIV
estimator $\hat{\theta}(\hat{A})$ in (\ref{G_GIV_1}) and the $J$-statistic
\[
T\, \bar{g}_{T}(\hat{\theta}(\hat{A});\hat{A},\hat{\beta})^{\top}W_{0,T}%
(\hat{A})\bar{g}_{T}(\hat{\theta}(\hat{A});\hat{A},\hat{\beta})
\]
are invariant to nonsingular rotations of $\hat{A}$ of the form $\hat{A}C_{1}%
$, for any nonsingular $(n-\bar{r})\times(n-\bar{r})$ matrix $C_{1}$.

Moreover, Lemmas \ref{Asy_L0} (with $y_{t}$ in that lemma corresponding to
$y_{t}^{\ast}$ here) and \ref{Sigma_est} imply that
\[
Q_{-1}^{\top}(\hat{\Sigma}_{\hat{y}^{\ast}}-\bar{\Sigma}_{y^{\ast}}%
)Q_{-1}=O_{p}(T^{-1/2}).
\]
Therefore, we can use the same arguments as in the proof of Lemma
\ref{Asy_Ahat_L1} to obtain
\begin{equation}
\hat{A}_{0}\hat{H}_{0}^{\top}-A_{0}=A_{0,\bot}(\bar{\sigma}_{u}^{2}%
\mathbf{I}_{\bar{r}-1}-\Lambda_{\bot})^{-1}A_{0,\bot}^{\top}Q_{-1}^{\top}%
(\hat{\Sigma}_{\hat{y}^{\ast}}-\bar{\Sigma}_{y^{\ast}})Q_{-1}A_{0}%
+O_{p}(T^{-1}), \label{A_hat_Con_Ex}%
\end{equation}
where $\hat{H}_{0}$ is nonsingular with $\hat{H}_{0}^{\top}\hat{H}%
_{0}=\mathbf{I}_{n-\bar{r}}$ wpa1. Here $A_{0}$, $A_{0,\bot}$, $\Lambda_{\bot
}$, and $\hat{H}_{0}$ are defined by replacing $\bar{\Sigma}_{y}$ and
$\hat{\Sigma}_{y}$ in their original definitions with $\bar{\Sigma}_{y^{\ast}%
}$ and $\hat{\Sigma}_{\hat{y}^{\ast}}$, respectively.

Since the contribution of the estimation error in $\beta$ to the asymptotic
variance of the GIV estimator arises through $\bar{g}_{T}(\theta;\hat{A}%
,\hat{\beta})$, and given the invariance of both the GIV estimator
$\hat{\theta}(\hat{A})$ in (\ref{G_GIV_1}) and the $J$-statistic, it suffices
to study the asymptotic variance of $\bar{g}_{T}(\theta;\tilde{A},\hat{\beta
})$ in order to assess the effect of the estimation error in $\beta$. The
asymptotic expansion of $\bar{g}_{T}(\theta;\tilde{A},\hat{\beta})$ is given
in the following lemma.

\begin{lemma}
\label{Moment_est}\ Under Assumptions \ref{ID}, \ref{Asy_Cond_1},
\ref{S} and \ref{Asy_Cond_EX}, we have
\[
\bar{g}_{T}(\theta;\tilde{A},\hat{\beta})=\mathrm{diag}(A^{\top},A^{\top
},\mathbf{I}_{d_{w}})T^{-1}\sum_{t\leq T}\bar{\xi}_{t}+O_{p}(T^{-1}),
\]
where $\bar{\xi}_{t}\equiv(\bar{\xi}_{v,t}^{\top},\bar{\xi}_{\varepsilon
,t}^{\top},w_{t}^{\top}\varepsilon_{t})^{\top}$, and for $b\in \{v,\varepsilon
\}$,
\begin{equation}
\bar{\xi}_{b,t}\equiv y_{t}^{\ast}b_{t}-\mathbb{E}[y_{t}^{\ast}b_{t}%
]+(y_{t}^{\ast}y_{t}^{\ast \top}-\mathbb{E}[y_{t}^{\ast}y_{t}^{\ast \top
}])\Upsilon \Bigl(T^{-1}\sum_{s\leq T}\mathbb{E}[y_{s}^{\ast}b_{s}]\Bigr).
\label{bar_zeta_b}%
\end{equation}

\end{lemma}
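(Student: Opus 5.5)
Lemma \ref{Moment_est} is proved by decomposing $\bar g_T(\theta;\tilde A,\hat\beta)$ block by block and reusing the argument of Lemma \ref{Asy_L5} on $y_t^{\ast}$. The one new ingredient is that every term carrying $\hat\beta-\beta$ is killed by the orthogonality of $A$ to $(\mathbf 1_n,\lambda)$, or is a product of two $O_p(T^{-1/2})$ factors.

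We write $\hat y_t^{\ast}=y_t^{\ast}-x_t(\hat\beta-\beta)$ and $\hat y^{\ast}_{e,t}=y^{\ast}_{e,t}-e^{\top}x_t(\hat\beta-\beta)$. We also use the demand and supply residuals $v_t=y^{\ast}_{e,t}-\phi p_t$ and $\varepsilon_t=p_t-\psi y_{S,t}-w_t^{\top}\gamma$, so that the supply block has residual $\varepsilon_t$ exactly and the third block $T^{-1}\sum_t w_t\varepsilon_t$ is already of the stated form. For the first block,
\[
T^{-1}\sum_{t}\tilde A^{\top}\hat y_t^{\ast}(\hat y^{\ast}_{e,t}-\phi p_t)=T^{-1}\sum_t\tilde A^{\top}y_t^{\ast}v_t-\tilde A^{\top}T^{-1}\sum_t x_t(\hat\beta-\beta)v_t-T^{-1}\sum_t\tilde A^{\top}y_t^{\ast}e^{\top}x_t(\hat\beta-\beta)+R_T ,
\]
where $R_T$ is quadratic in $\hat\beta-\beta$ and hence $O_p(T^{-1})$ by Assumption \ref{Asy_Cond_EX}(ii),(iv).

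For the linear terms we replace $\tilde A$ by $A$ at a cost of $O_p(T^{-1/2})\cdot O_p(T^{-1/2})$, using (\ref{A_hat_Con_Ex}). The reduced forms (\ref{F_Reduced_Demand})--(\ref{F_Reduced_Supply}) express $v_t$ and $A^{\top}y_t^{\ast}=A^{\top}u_t$ linearly in $u_t,\eta_t,\varepsilon_t$, with $x_t\beta$ entering only through $S_t^{\top}x_t\beta$ in $p_t$.

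\emph{Main obstacle.} We must show
\[
T^{-1}\sum_t x_t v_t=\mathbb E\text{-part}+O_p(T^{-1/2})\quad\text{and}\quad T^{-1}\sum_t A^{\top}u_t\,e^{\top}x_t=O_p(T^{-1/2}),
\]
with the expectations vanishing. By Assumption \ref{Asy_Cond_EX}(i),(iii) and exogeneity, $\mathbb E[x_{i,t}u_t^{\top}]=0$ and $\mathbb E[x_{i,t}\eta_t^{\top}]=0$. So the second sum is $O_p(T^{-1/2})$, and after multiplication by $\hat\beta-\beta$ it is $O_p(T^{-1})$.

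The first sum is the delicate case, because $p_t$ contains $S_t^{\top}x_t\beta$, so $\mathbb E[x_tv_t]$ need not vanish. What saves us is that this term is premultiplied by $\tilde A^{\top}$ on the $x_t$ side, giving $\tilde A^{\top}x_t$. I would try to show that
\[
A^{\top}\Big(T^{-1}\sum_t\mathbb E[x_tv_t]\Big)(\hat\beta-\beta)
\]
is absorbed. If it does not vanish, it is an $O_p(T^{-1/2})$ term linear in $T^{-1}\sum_t\tilde x_t^{\top}\tilde\epsilon_t$. It would then be incorporated by checking that $\mathbb E[x_t p_t]$ enters only through a component orthogonal to $\operatorname{col}(A)$. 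Alternatively, since $\hat\beta$ is computed from $\tilde y_t=M_{\mathbf 1_n}y_t$, which does not contain $p_t$, the $x$--$p$ covariance enters only via the $\mathbf 1_n$ direction and is annihilated by $A^{\top}$. I expect this cancellation to be where care is needed.

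The residual piece $T^{-1}\sum_t\tilde A^{\top}y_t^{\ast}v_t$ is then treated exactly as in the proof of Lemma \ref{Asy_L5}, with $y_t$ replaced by $y_t^{\ast}$. We expand
\[
\tilde A-A=\Upsilon\,(\hat\Sigma_{\hat y^{\ast}}-\bar\Sigma_{y^{\ast}})A+O_p(T^{-1})
\]
using (\ref{A_hat_Con_Ex}). We further split $\hat\Sigma_{\hat y^{\ast}}-\bar\Sigma_{y^{\ast}}$ into
\[
T^{-1}\sum_t\big(y_t^{\ast}y_t^{\ast\top}-\mathbb E[y_t^{\ast}y_t^{\ast\top}]\big)
\]
plus cross terms $T^{-1}\sum_t x_t(\hat\beta-\beta)y_t^{\ast\top}+{}$transpose. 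Using $A^{\top}\mathbb E[y_t^{\ast}v_t]=0$ (as in (\ref{P_Asy_L5_6})), the leading terms give $A^{\top}T^{-1}\sum_t\bar\xi_{v,t}$.

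The $\hat\beta$ cross terms in $\Upsilon(\cdot)A$ involve $A^{\top}y_t^{\ast}=A^{\top}u_t$ multiplied by $x_t$, whose mean is zero. They are therefore $O_p(T^{-1/2})\cdot O_p(T^{-1/2})$, and after multiplication by $T^{-1}\sum_t\mathbb E[y_t^{\ast}v_t]=O(1)$ they contribute $O_p(T^{-1})$. The same argument applies verbatim to the supply block with $b=\varepsilon$. Stacking the three blocks yields the stated expansion with $\mathrm{diag}(A^{\top},A^{\top},\mathbf I_{d_w})$.
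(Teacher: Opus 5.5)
Your architecture matches the paper's, which proves the result through Lemmas \ref{Sigma_est}, \ref{Moment_est_1} and \ref{Moment_est_2}. You split into three blocks, write $\hat y_t^{\ast}=y_t^{\ast}-x_t(\hat\beta-\beta)$, dispose of $\hat\beta$-terms via $A^{\top}(\mathbf 1_n,\lambda)=0$ or products of two $O_p(T^{-1/2})$ factors, and expand $\tilde A-A$ via (\ref{A_hat_Con_Ex}). However, the step you call the main obstacle rests on a miscomputation, and you leave it open with fixes that would not work.

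\textbf{The $x_tv_t$ term.} By (\ref{F_demand}), $y_t^{\ast}=\phi p_t\mathbf 1_n+\lambda\eta_t+u_t$, so $v_t=y^{\ast}_{e,t}-\phi p_t=e^{\top}\lambda\eta_t+u_{e,t}$. It contains no $p_t$, and no $x_t$, $w_t$ or $\varepsilon_t$ either. Hence:
\begin{itemize}
\item $\mathbb E[x_tv_t]=0$ directly from Assumption \ref{Asy_Cond_EX}(i);
\item Assumption \ref{Asy_Cond_EX}(iii) then gives $T^{-1}\sum_t x_tv_t=O_p(T^{-1/2})$;
\item so $\tilde A^{\top}T^{-1}\sum_t x_t(\hat\beta-\beta)v_t=O_p(T^{-1})$.
\end{itemize}
This is Step 2 of the proof of Lemma \ref{Moment_est_1}. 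The supply block is the same, since $\mathbb E[x_t\varepsilon_t]=0$ is assumed outright.

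Your suggested rescues would fail had the term really involved $\mathbb E[x_tp_t]$. In this term $A^{\top}$ acts on $x_t$, not on a $\mathbf 1_n$ direction. Moreover, $\mathbb E[x_tp_t]$ is driven by $S_t^{\top}x_t\beta$. For example, with $x_{i,t}$ i.i.d.\ across $i$ with identity covariance, it is proportional to $S\beta^{\top}$. Then $A^{\top}\mathbb E[x_tp_t]\propto (A^{\top}S)\beta^{\top}$, which is nonzero for $\beta\neq0$ because relevance requires $A^{\top}S\neq0$. Carrying such an $O_p(T^{-1/2})$ term along instead would contradict the statement, which has no $\hat\beta$ contribution.

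\textbf{The transposed cross term in $\tilde A-A$.} The $\mathbf 1_n$-annihilation idea belongs where $y_t^{\ast}$, not $v_t$, is paired with $x_t$. You use it correctly where $A^{\top}$ meets $y_t^{\ast}$. Your treatment of $\hat\Sigma_{\hat y^{\ast}}-\bar\Sigma_{y^{\ast}}$, however, overlooks the term $\Upsilon\,T^{-1}\sum_t y_t^{\ast}(\hat\beta-\beta)^{\top}x_t^{\top}A$. There $y_t^{\ast}$ is hit by $\Upsilon$, and its component $\phi p_t\mathbf 1_n$ is correlated with $x_t$. So this is not a mean-zero product, and a priori it is of the same order $O_p(T^{-1/2})$ as the leading part of $\tilde A-A$. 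It drops out only because $\Upsilon\mathbf 1_n=0$, equivalently $\mathbf 1_n^{\top}Q_{-1}=0$, as used in the proof of Lemma \ref{Sigma_est}.

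With these two corrections your argument coincides with the paper's.
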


\noindent \textsc{Proof of Lemma \ref{Moment_est}.\ }The result follows
immediately from Lemmas \ref{Sigma_est}, \ref{Moment_est_1}, and
\ref{Moment_est_2}, and is therefore omitted.\hfill$Q.E.D.$

Lemma \ref{Moment_est} shows that the asymptotic variance of $\hat{\theta
}(\hat{A})$ is determined by that of $T^{-1/2}\sum_{t\leq T}\bar{\xi}_{t}$.
Compared with $\xi_{t}$ defined in Assumption \ref{Asy_Cond_2}(i), $\bar{\xi
}_{t}$ contains the additional component $w_{t}^{\top}\varepsilon_{t}$, which
arises from the moment conditions used to identify and estimate the
coefficients on $w_{t}$ in the supply equation. More importantly, $\bar{\xi
}_{b,t}$ in (\ref{bar_zeta_b}) has the same form as $\xi_{b,t}$ in
(\ref{zeta_b}), with the only difference being that $y$ there is replaced by
$y^{\ast}$ here. Therefore, the estimation error in $\beta$ does not affect
the asymptotic variance of $\bar{g}_{T}(\theta;\tilde{A},\hat{\beta})$, and
hence does not affect that of the GIV estimator $\hat{\theta}(\hat{A})$ in
(\ref{G_GIV_1}). As a result, the standard errors of $\hat{\theta}(\hat{A})$,
the optimal weight matrix, and the $J$-statistic can be constructed as if
$\beta$ were known.

\bigskip

\begin{lemma}
\label{Sigma_est}\ Under Assumptions \ref{ID}, \ref{Asy_Cond_1},\  \ref{S} and \ref{Asy_Cond_EX}, we have
\[
Q_{-1}^{\top}(\hat{\Sigma}_{\hat{y}^{\ast}}-\bar{\Sigma}_{y^{\ast}}%
)Q_{-1}=Q_{-1}^{\top}T^{-1}\sum_{t\leq T}\big(y_{t}^{\ast}y_{t}^{\ast \top
}-\mathbb{E}[y_{t}^{\ast}y_{t}^{\ast \top}]\big)Q_{-1}+O_{p}(T^{-1}).
\]

\end{lemma}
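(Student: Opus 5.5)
The plan is to decompose $\hat{y}_{t}^{\ast}$ around $y_{t}^{\ast}$ and show that every term created by the estimation error in $\hat{\beta}$ is $O_{p}(T^{-1})$ once it is sandwiched by $Q_{-1}$. By definition,
\[
\hat{y}_{t}^{\ast}=y_{t}-x_{t}\hat{\beta}=y_{t}^{\ast}-x_{t}(\hat{\beta}-\beta).
\]
This gives
\[
\hat{\Sigma}_{\hat{y}^{\ast}}=T^{-1}\sum_{t\leq T}y_{t}^{\ast}y_{t}^{\ast\top}-T^{-1}\sum_{t\leq T}\bigl(y_{t}^{\ast}(\hat{\beta}-\beta)^{\top}x_{t}^{\top}+x_{t}(\hat{\beta}-\beta)y_{t}^{\ast\top}\bigr)+T^{-1}\sum_{t\leq T}x_{t}(\hat{\beta}-\beta)(\hat{\beta}-\beta)^{\top}x_{t}^{\top}.
\]
Subtracting $\bar{\Sigma}_{y^{\ast}}$ and pre- and post-multiplying by $Q_{-1}^{\top}$ and $Q_{-1}$ isolates the leading term in the statement. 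It then remains to show that the cross term and the quadratic term are $O_{p}(T^{-1})$ after projection.

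The quadratic term is handled directly. By Assumption \ref{Asy_Cond_EX}(ii), $\hat{\beta}-\beta=O_{p}(T^{-1/2})$. By Assumption \ref{Asy_Cond_EX}(iv) and Markov's inequality, $T^{-1}\sum_{t}\Vert x_{t}\Vert^{2}=O_{p}(1)$. Together these bound the quadratic term by $O_{p}(T^{-1})$.

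For the cross term, write it as $T^{-1}\sum_{t}y_{t}^{\ast}(\hat{\beta}-\beta)^{\top}x_{t}^{\top}$ plus its transpose. Since $\hat{\beta}-\beta$ is already $O_{p}(T^{-1/2})$, it suffices to show that $Q_{-1}^{\top}\,T^{-1}\sum_{t}(y_{t}^{\ast}\otimes x_{t})$, in the appropriate arrangement, is $O_{p}(T^{-1/2})$. This means the population mean must vanish after projection by $Q_{-1}$ and the fluctuation must be of order $T^{-1/2}$.

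To treat this step, substitute the reduced form: $y_{t}^{\ast}$ is linear in $(u_{t},\eta_{t},\varepsilon_{t})$, and $p_{t}$ also contains $S_{t}^{\top}x_{t}\beta$. Applying $Q_{-1}^{\top}$ annihilates every component proportional to $\mathbf{1}_{n}$, including the $\phi p_{t}\mathbf{1}_{n}$ term. This leaves $Q_{-1}^{\top}y_{t}^{\ast}=Q_{-1}^{\top}(\lambda\eta_{t}+u_{t})$, so the relevant sum is $T^{-1}\sum_{t}Q_{-1}^{\top}(\lambda\eta_{t}+u_{t})x_{i,t}^{\top}$. Its mean is zero by the exogeneity condition in Assumption \ref{Asy_Cond_EX}(i), and its fluctuation is $O_{p}(T^{-1/2})$ by Assumption \ref{Asy_Cond_EX}(iii). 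The mirror-image term is treated the same way. Multiplying by $\hat{\beta}-\beta=O_{p}(T^{-1/2})$ then gives $O_{p}(T^{-1})$.

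The main obstacle is exactly this cross term. Without $Q_{-1}$, the product $y_{t}^{\ast}x_{t}^{\top}$ involves $p_{t}$, and $p_{t}$ depends on $x_{t}$ through the supply-side equilibrium, so its mean need not vanish. The argument therefore hinges on applying $Q_{-1}$ before averaging and checking that no $x_{t}$-dependent, non-$\mathbf{1}_{n}$ component survives in $Q_{-1}^{\top}y_{t}^{\ast}$. Because $y_{t}^{\ast}$ subtracts $x_{t}\beta$ entrywise, only the aggregate $S_{t}^{\top}x_{t}\beta$ term enters, through $p_{t}$, and it is multiplied by $\mathbf{1}_{n}$, so it is annihilated. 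I would verify this carefully and collect the remainder bounds to conclude.
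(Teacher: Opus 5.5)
Your proposal is correct and follows the paper's proof. It uses the same expansion of $\hat{\Sigma}_{\hat{y}^{\ast}}$ into a leading term, two cross terms and a quadratic term, and the same $O_p(T^{-1})$ bound on the quadratic term. It also rests on the same key observation: the only non-mean-zero part of $y_t^{\ast}x_t^{\top}$ comes from $\phi p_t\mathbf{1}_n$, which $Q_{-1}$ annihilates, while the $\lambda\eta_t+u_t$ part is handled by Assumption~\ref{Asy_Cond_EX}(i) and (iii).

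The only cosmetic difference is that you project by $Q_{-1}$ before averaging. This spares you the paper's auxiliary reduced-form bound $T^{-1}\sum_{t\leq T}p_t x_t=O_p(1)$.
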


\noindent \textsc{Proof of Lemma \ref{Sigma_est}.}\ By the definitions of
$\hat{y}_{t}^{\ast}$ and $\hat{y}_{e,t}^{\ast}$,
\begin{equation}
\hat{y}_{t}^{\ast}-y_{t}^{\ast}=-x_{t}(\hat{\beta}-\beta),\qquad \  \hat
{y}_{e,t}^{\ast}-y_{e,t}^{\ast}=-x_{e,t}(\hat{\beta}-\beta),
\label{P_Sigma_est_0}%
\end{equation}
where $x_{e,t}\equiv e^{\top}x_{t}$. It follows that
\begin{align}
\hat{\Sigma}_{\hat{y}^{\ast}}-\bar{\Sigma}_{y^{\ast}}  &  =T^{-1}\sum_{t\leq
T}\hat{y}_{t}^{\ast}\hat{y}_{t}^{\ast \top}-T^{-1}\sum_{t\leq T}\mathbb{E}%
[y_{t}^{\ast}y_{t}^{\ast \top}]\nonumber \\
&  =T^{-1}\sum_{t\leq T}(y_{t}^{\ast}-x_{t}(\hat{\beta}-\beta))(y_{t}^{\ast
}-x_{t}(\hat{\beta}-\beta))^{\top}-T^{-1}\sum_{t\leq T}\mathbb{E}[y_{t}^{\ast
}y_{t}^{\ast \top}]\nonumber \\
&  =T^{-1}\sum_{t\leq T}\big(y_{t}^{\ast}y_{t}^{\ast \top}-\mathbb{E}%
[y_{t}^{\ast}y_{t}^{\ast \top}]\big)-T^{-1}\sum_{t\leq T}y_{t}^{\ast}%
(\hat{\beta}-\beta)^{\top}x_{t}^{\top}\nonumber \\
&  \quad-T^{-1}\sum_{t\leq T}x_{t}(\hat{\beta}-\beta)y_{t}^{\ast \top}%
+T^{-1}\sum_{t\leq T}x_{t}(\hat{\beta}-\beta)(\hat{\beta}-\beta)^{\top}%
x_{t}^{\top}. \label{P_Sigma_est_1}%
\end{align}

From Assumption \ref{Asy_Cond_EX}(ii),
\begin{equation}
\hat{\beta}-\beta=O_{p}(T^{-1/2}). \label{P_Sigma_est_2}%
\end{equation}
By the triangle and Cauchy--Schwarz inequalities,
\begin{equation}
\Big \lVert T^{-1}\sum_{t\leq T}x_{t}(\hat{\beta}-\beta)(\hat{\beta}%
-\beta)^{\top}x_{t}^{\top}\Big \lVert \leq \Vert \hat{\beta}-\beta \Vert
^{2}\,T^{-1}\sum_{t\leq T}\Vert x_{t}\Vert^{2}=O_{p}(T^{-1}),
\label{P_Sigma_est_3}%
\end{equation}
where the last equality follows from Assumption \ref{Asy_Cond_EX}(iv) and
(\ref{P_Sigma_est_2}).

Let $\lambda_{i}$ denote the $i$th row of $\lambda$. By Assumptions
\ref{Asy_Cond_EX}(i, iii)
\begin{align}
T^{-1}\sum_{t\leq T}y_{i,t}^{\ast}x_{t}  &  =T^{-1}\sum_{t\leq T}(\phi
p_{t}+\lambda_{i}\eta_{t}+u_{i,t})x_{t}\nonumber \\
&  =\phi T^{-1}\sum_{t\leq T}p_{t}x_{t}+T^{-1}\sum_{t\leq T}\lambda_{i}%
\eta_{t}x_{t}+T^{-1}\sum_{t\leq T}u_{i,t}x_{t}\nonumber \\
&  =\phi T^{-1}\sum_{t\leq T}p_{t}x_{t}+O_{p}(T^{-1/2}). \label{P_Sigma_est_4}%
\end{align}
Under Assumptions \ref{Asy_Cond_1}(i, iv) and \ref{Asy_Cond_EX}(iv), the
demand and supply equations (\ref{F_demand})-(\ref{F_supply}) admit the
following reduced form expressions%
\begin{align}
y_{t}  &  =\Bigl(\mathbf{I}_{n}+\frac{\phi \psi}{1-\phi \psi}\mathbf{1}_{n}%
S_{t}^{\top}\Bigr)(u_{t}+x_{t}\beta+\lambda \eta_{t})+\frac{\phi \mathbf{1}_{n}%
}{1-\phi \psi}(\varepsilon_{t}+w_{t}^{\top}\gamma),\label{P_Sigma_est_5}\\
p_{t}  &  =\frac{\psi}{1-\phi \psi}S_{t}^{\top}(u_{t}+x_{t}\beta+\lambda
\eta_{t})+\frac{\varepsilon_{t}+w_{t}^{\top}\gamma}{1-\phi \psi}.
\label{P_Sigma_est_6}%
\end{align}
From Assumptions \ref{Asy_Cond_1}(i, iv) and \ref{Asy_Cond_EX}(iv), it follows
that%
\[
\mathbb{E}[p_{t}^{2}]\leq K.
\]
Together with Assumption \ref{Asy_Cond_EX}(iv), the Cauchy--Schwarz
inequality, and Markov's inequality, this implies that%
\begin{equation}
T^{-1}\sum_{t\leq T}p_{t}x_{t}=O_{p}(1). \label{P_Sigma_est_7}%
\end{equation}

Combining (\ref{P_Sigma_est_2}), (\ref{P_Sigma_est_4}) and
(\ref{P_Sigma_est_7}),
\begin{equation}
T^{-1}\sum_{t\leq T}x_{t}(\hat{\beta}-\beta)y_{t}^{\ast \top}=\phi
\Bigl(T^{-1}\sum_{t\leq T}p_{t}x_{t}\Bigr)(\hat{\beta}-\beta)\mathbf{1}%
_{n}^{\top}+O_{p}(T^{-1}). \label{P_Sigma_est_8}%
\end{equation}
Since $\mathbf{1}_{n}^{\top}Q_{-1}=\mathbf{0}_{n-1}$, it follows that
\[
Q_{-1}^{\top}\Bigl(T^{-1}\sum_{t\leq T}x_{t}(\hat{\beta}-\beta)y_{t}^{\ast
\top}\Bigr)Q_{-1}=O_{p}(T^{-1}).
\]
The same bound holds for its transpose. Combining these bounds with the
quadratic term gives the desired result.\hfill$Q.E.D.$

\bigskip

\begin{lemma}
\label{Moment_est_1}\ Under Assumptions \ref{ID}, \ref{Asy_Cond_1},
\ref{S} and \ref{Asy_Cond_EX}, we have%
\begin{align*}
T^{-1/2}\sum_{t\leq T}\tilde{A}^{\top}\hat{y}_{t}^{\ast}(\hat{y}_{e,t}^{\ast
}-\phi p_{t})  &  =A^{\top}T^{-1/2}\sum_{t\leq T}(y_{t}^{\ast}v_{t}%
-\mathbb{E}[y_{t}^{\ast}v_{t}])\\
&  \quad+A^{\top}\Bigl(T^{-1/2}\sum_{t\leq T}(y_{t}^{\ast}y_{t}^{\ast \top
}-\mathbb{E}[y_{t}^{\ast}y_{t}^{\ast \top}])\Bigr)\Upsilon \Bigl(T^{-1}%
\sum_{t\leq T}\mathbb{E}[y_{t}^{\ast}v_{t}]\Bigr)\\
&  \quad+O_{p}(T^{-1/2}).
\end{align*}

\end{lemma}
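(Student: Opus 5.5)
The plan is to mimic the proof of Lemma \ref{Asy_L5}, working with the infeasible quantities $y_t^{\ast}$ and $v_t=y_{e,t}^{\ast}-\phi p_t$, and to show that replacing $\beta$ by $\hat\beta$ only adds $O_p(T^{-1/2})$ after scaling. Using (\ref{P_Sigma_est_0}), I decompose
\[
\hat{y}_t^{\ast}(\hat{y}_{e,t}^{\ast}-\phi p_t)=y_t^{\ast}v_t-y_t^{\ast}x_{e,t}(\hat\beta-\beta)-x_t(\hat\beta-\beta)v_t+x_t(\hat\beta-\beta)x_{e,t}(\hat\beta-\beta).
\]
Premultiplying by $\tilde A^{\top}$ and summing with $T^{-1/2}$ gives four terms. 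The quadratic term is $T^{1/2}O_p(T^{-1})=O_p(T^{-1/2})$, by Assumption \ref{Asy_Cond_EX}(ii),(iv) and $\|\tilde A\|\le K$.

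For the term $T^{-1/2}\sum_t\tilde A^{\top}y_t^{\ast}x_{e,t}(\hat\beta-\beta)$, I write $\tilde A=A+O_p(T^{-1/2})$. The perturbation piece is then $T^{1/2}\,O_p(T^{-1/2})\,O_p(1)\,O_p(T^{-1/2})=O_p(T^{-1/2})$. For the $A$ piece, $A^{\top}y_t^{\ast}=A^{\top}u_t$, because $A$ annihilates $\mathbf 1_n$ and $\lambda$. Assumption \ref{Asy_Cond_EX}(i),(iii) then gives $T^{-1}\sum_t A^{\top}u_t x_{e,t}^{\top}=O_p(T^{-1/2})$, so this piece is also $O_p(T^{-1/2})$. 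The third term, $T^{-1/2}\sum_t\tilde A^{\top}x_t(\hat\beta-\beta)v_t$, needs $T^{-1}\sum_t x_t v_t=O_p(T^{-1/2})$. Here $v_t=\eta_{1}$-type terms plus $u_{e,t}$, namely $v_t=e^{\top}(\lambda\eta_t+u_t)$, and its mean product with $x_{i,t}$ is zero by Assumption \ref{Asy_Cond_EX}(i), so Assumption \ref{Asy_Cond_EX}(iii) applies. Hence all correction terms from $\hat\beta$ are $O_p(T^{-1/2})$.

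The remaining task is the main term $T^{-1/2}\sum_t\tilde A^{\top}y_t^{\ast}v_t$, where $\tilde A$ itself is built from $\hat\Sigma_{\hat y^{\ast}}$. I will invoke (\ref{A_hat_Con_Ex}) together with Lemma \ref{Sigma_est} to get
\[
\tilde A-A=\Upsilon\,T^{-1}\sum_t\big(y_t^{\ast}y_t^{\ast\top}-\mathbb{E}[y_t^{\ast}y_t^{\ast\top}]\big)A+O_p(T^{-1}),
\]
which is exactly (\ref{P_Asy_L5_2}) with $y$ replaced by $y^{\ast}$. I then split $y_t^{\ast}v_t=\mathbb{E}[y_t^{\ast}v_t]+(y_t^{\ast}v_t-\mathbb{E}[y_t^{\ast}v_t])$, which requires the $T^{-1/2}$ LLN for $y_t^{\ast}v_t$ and $y_t^{\ast}y_t^{\ast\top}$. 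That LLN follows as in Lemma \ref{Asy_L0}, since $y_t^{\ast}$ satisfies the model (\ref{G_demand}) with a supply shock $\varepsilon_t+w_t^{\top}\gamma$; to cover this I need Assumption \ref{Asy_Cond_EX}(iii),(iv) to extend the cross-product bounds to $w_t$.

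Next I show $A^{\top}\mathbb{E}[y_t^{\ast}v_t]=0$, exactly as in (\ref{P_Asy_L5_4})--(\ref{P_Asy_L5_6}), using Assumption \ref{ID}. This leaves the leading term together with $T^{1/2}(\tilde A-A)^{\top}T^{-1}\sum_t\mathbb{E}[y_t^{\ast}v_t]$, which reproduces the $\Upsilon$ term by the symmetry of $\Upsilon$ and of the centered second-moment matrix. All cross remainders are $T^{1/2}\cdot O_p(T^{-1/2})\cdot O_p(T^{-1/2})$.

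I expect the main obstacle to be verifying the $O_p(T^{-1/2})$ LLNs for the products involving $p_t$. Because $p_t$ now contains $w_t^{\top}\gamma$ and $S_t^{\top}x_t\beta$, Assumption \ref{S} and Assumption \ref{Asy_Cond_1}(ii) do not cover the needed products directly. I would first handle this through the reduced forms (\ref{P_Sigma_est_5})--(\ref{P_Sigma_est_6}), using the fact that $v_t$ and $A^{\top}y_t^{\ast}$ do not involve $p_t$, so that only $x_t$-shock products enter and the difficulty is avoided.
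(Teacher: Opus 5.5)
Your proposal is correct and is essentially the paper's proof. You remove the $\hat\beta$-corrections using $\hat y_t^{\ast}-y_t^{\ast}=-x_t(\hat\beta-\beta)$, $A^{\top}\mathbf{1}_n=\mathbf{0}$ (equivalently $A^{\top}y_t^{\ast}=A^{\top}u_t$), Assumption~\ref{Asy_Cond_EX}(i),(iii) and $\hat\beta-\beta=O_p(T^{-1/2})$, and you obtain the $\Upsilon$ term from (\ref{A_hat_Con_Ex}), Lemma~\ref{Sigma_est} and $A^{\top}\mathbb{E}[y_t^{\ast}v_t]=\mathbf{0}$, with only cosmetic differences in how the remainders are grouped.

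Your closing remark is the right way to handle $p_t$. Assumption~\ref{Asy_Cond_EX}(iii),(iv) do not by themselves give the $w_t$-product laws of large numbers you first invoke, a point the paper also passes over by treating $y_t^{\ast}$ as in Lemma~\ref{Asy_L0}. To close the argument, also note that $\tilde A=Q_{-1}\hat A_0\hat H_0^{\top}$ annihilates $\mathbf{1}_n$. Then every average that must be $O_p(T^{-1/2})$ involves only $M_{\mathbf{1}_n}y_t^{\ast}=\tilde\lambda\eta_t+\tilde u_t$, $v_t=e^{\top}(\lambda\eta_t+u_t)$ and $x_t$.
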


\noindent \textsc{Proof of Lemma \ref{Moment_est_1}.} We begin with the
decomposition
\begin{align}
T^{-1/2}\sum_{t\leq T}\tilde{A}^{\top}\hat{y}_{t}^{\ast}(\hat{y}_{e,t}^{\ast
}-\phi p_{t})  &  =T^{-1/2}\sum_{t\leq T}A^{\top}y_{t}^{\ast}(y_{e,t}^{\ast
}-\phi p_{t})+T^{-1/2}\sum_{t\leq T}A^{\top}y_{t}^{\ast}(\hat{y}_{e,t}^{\ast
}-y_{e,t}^{\ast})\nonumber \\
&  \quad+A^{\top}T^{-1/2}\sum_{t\leq T}(\hat{y}_{t}^{\ast}-y_{t}^{\ast}%
)(\hat{y}_{e,t}^{\ast}-\phi p_{t})+(\tilde{A}-A)^{\top}T^{-1/2}\sum_{t\leq
T}\hat{y}_{t}^{\ast}(\hat{y}_{e,t}^{\ast}-\phi p_{t}). \label{P_Moment_est1_1}%
\end{align}
Since $v_{t}\equiv y_{e,t}^{\ast}-\phi p_{t}$, by (\ref{P_Asy_L5_6}) we have
\begin{equation}
T^{-1/2}\sum_{t\leq T}A^{\top}y_{t}^{\ast}(y_{e,t}^{\ast}-\phi p_{t}%
)=T^{-1/2}\sum_{t\leq T}A^{\top}y_{t}^{\ast}v_{t}=A^{\top}T^{-1/2}\sum_{t\leq
T}(y_{t}^{\ast}v_{t}-\mathbb{E}[y_{t}^{\ast}v_{t}]). \label{P_Moment_est1_1b}%
\end{equation}
We next control the last three terms on the right-hand side of
(\ref{P_Moment_est1_1}).

\medskip \noindent \textit{Step 1: Second term.} Using $\hat{\beta}-\beta
=O_{p}(T^{-1/2})$ in (\ref{P_Sigma_est_2}) and the same arguments as in
(\ref{P_Sigma_est_4}), we have%
\begin{align}
T^{-1/2}\sum_{t\leq T}y_{t}^{\ast}(\hat{y}_{e,t}^{\ast}-y_{e,t}^{\ast})  &
=-T^{-1/2}\sum_{t\leq T}y_{t}^{\ast}x_{e,t}(\hat{\beta}-\beta)\nonumber \\
&  =-\mathbf{1}_{n}\Bigl(\phi T^{-1}\sum_{t\leq T}p_{t}x_{e,t}\Bigr)T^{1/2}%
(\hat{\beta}-\beta)+O_{p}(T^{-1/2}). \label{P_Moment_est1_2a}%
\end{align}
Since $A^{\top}\mathbf{1}_{n}=A_{0}^{\top}Q_{-1}^{\top}\mathbf{1}%
_{n}=\mathbf{0}_{n-1}$, it follows that
\begin{equation}
T^{-1/2}\sum_{t\leq T}A^{\top}y_{t}^{\ast}(\hat{y}_{e,t}^{\ast}-y_{e,t}^{\ast
})=O_{p}(T^{-1/2}). \label{P_Moment_est1_2}%
\end{equation}

\medskip \noindent \textit{Step 2: Third term.} By Assumptions \ref{Asy_Cond_1}%
(i) and \ref{Asy_Cond_EX}(i, iii),%
\[
T^{-1/2}\sum_{t\leq T}x_{t}(y_{e,t}^{\ast}-\phi p_{t})=O_{p}(1).
\]
Combining this with $\hat{\beta}-\beta=O_{p}(T^{-1/2})$ yields
\begin{equation}
T^{-1/2}\sum_{t\leq T}(\hat{y}_{t}^{\ast}-y_{t}^{\ast})(y_{e,t}^{\ast}-\phi
p_{t})=-\,T^{-1/2}\sum_{t\leq T}x_{t}(\hat{\beta}-\beta)(y_{e,t}^{\ast}-\phi
p_{t})=O_{p}(T^{-1/2}). \label{P_Moment_est1_3}%
\end{equation}
Moreover, by the same arguments as in (\ref{P_Sigma_est_3}),
\[
T^{-1/2}\sum_{t\leq T}x_{t}(\hat{\beta}-\beta)x_{e,t}(\hat{\beta}-\beta
)=O_{p}(T^{-1/2}).
\]
Therefore,
\begin{align}
T^{-1/2}\sum_{t\leq T}(\hat{y}_{t}^{\ast}-y_{t}^{\ast})(\hat{y}_{e,t}^{\ast
}-\phi p_{t})  &  =-T^{-1/2}\sum_{t\leq T}x_{t}(\hat{\beta}-\beta
)(y_{e,t}^{\ast}-\phi p_{t})\nonumber \\
&  \quad+T^{-1/2}\sum_{t\leq T}x_{t}(\hat{\beta}-\beta)x_{e,t}(\hat{\beta
}-\beta)=O_{p}(T^{-1/2}). \label{P_Moment_est1_4}%
\end{align}
Hence,%
\begin{equation}
A^{\top}T^{-1/2}\sum_{t\leq T}(\hat{y}_{t}^{\ast}-y_{t}^{\ast})(\hat{y}%
_{e,t}^{\ast}-\phi p_{t})=O_{p}(T^{-1/2}). \label{P_Moment_est1_4b}%
\end{equation}

\medskip \noindent \textit{Step 3: Fourth term.} To study the last term, we
first use (\ref{P_Moment_est1_4}) to obtain
\begin{align}
T^{-1/2}\sum_{t\leq T}\hat{y}_{t}^{\ast}(\hat{y}_{e,t}^{\ast}-\phi p_{t})  &
=T^{-1/2}\sum_{t\leq T}y_{t}^{\ast}(\hat{y}_{e,t}^{\ast}-\phi p_{t}%
)+T^{-1/2}\sum_{t\leq T}(\hat{y}_{t}^{\ast}-y_{t}^{\ast})(\hat{y}_{e,t}^{\ast
}-\phi p_{t})\nonumber \\
&  =T^{-1/2}\sum_{t\leq T}y_{t}^{\ast}v_{t}+T^{-1/2}\sum_{t\leq T}y_{t}^{\ast
}(\hat{y}_{e,t}^{\ast}-y_{e,t}^{\ast})+O_{p}(T^{-1/2}),
\label{P_Moment_est1_5}%
\end{align}
where $v_{t}\equiv y_{e,t}^{\ast}-\phi p_{t}$. From (\ref{P_Sigma_est_7}), we
have
\begin{equation}
T^{-1}\sum_{t\leq T}p_{t}x_{e,t}=O_{p}(1), \label{P_Moment_est1_8}%
\end{equation}
which, together with (\ref{P_Sigma_est_2}), implies
\begin{equation}
\mathbf{1}_{n}\Bigl(\phi T^{-1}\sum_{t\leq T}p_{t}x_{e,t}\Bigr)T^{1/2}%
(\hat{\beta}-\beta)=O_{p}(1). \label{P_Moment_est1_9}%
\end{equation}
From Lemmas \ref{Asy_L0} and \ref{Asy_Ahat_L1} (with $y_{t}$ in that lemma
corresponding to $y_{t}^{\ast}$ here), and \ref{Sigma_est}, we have
\begin{equation}
T^{1/2}(\tilde{A}-A)=\Upsilon \Bigl(T^{-1/2}\sum_{t\leq T}(y_{t}^{\ast}%
y_{t}^{\ast \top}-\mathbb{E}[y_{t}^{\ast}y_{t}^{\ast \top}])\Bigr)A+O_{p}%
(T^{-1/2})=O_{p}(1), \label{P_Moment_est1_10}%
\end{equation}
where the second equality follows from (\ref{P_L_Upsilon_7}) and Lemma
\ref{Asy_L0}.

Using $\Upsilon \mathbf{1}_{n}=0$, together with (\ref{P_Moment_est1_2a}) and
(\ref{P_Moment_est1_9}), we obtain
\begin{equation}
(\tilde{A}-A)^{\top}T^{-1/2}\sum_{t\leq T}y_{t}^{\ast}(\hat{y}_{e,t}^{\ast
}-y_{e,t}^{\ast})=O_{p}(T^{-1/2}). \label{P_Moment_est1_11}%
\end{equation}
By (\ref{P_V_Est_1b}) and (\ref{P_Asy_L5_1}) (with $y_{t}$ there corresponding
to $y_{t}^{\ast}$ here), we have
\begin{equation}
T^{-1}\sum_{t\leq T}\mathbb{E}[y_{t}^{\ast}v_{t}]=O(1),\quad T^{-1}\sum_{t\leq
T}(y_{t}^{\ast}v_{t}-\mathbb{E}[y_{t}^{\ast}v_{t}])=O_{p}(1).
\label{P_Moment_est1_12}%
\end{equation}
From\ (\ref{P_Moment_est1_10}) and (\ref{P_Moment_est1_12}), it follows that
\begin{align}
(\tilde{A}-A)^{\top}T^{-1/2}\sum_{t\leq T}y_{t}^{\ast}v_{t}  &  =A^{\top
}\Bigl(T^{-1/2}\sum_{t\leq T}(y_{t}^{\ast}y_{t}^{\ast \top}-\mathbb{E}%
[y_{t}^{\ast}y_{t}^{\ast \top}])\Bigr)\Upsilon \Bigl(T^{-1}\sum_{t\leq
T}\mathbb{E}[y_{t}^{\ast}v_{t}]\Bigr)\nonumber \\
&  \quad+O_{p}(T^{-1/2}). \label{P_Moment_est1_13}%
\end{align}
Combining (\ref{P_Moment_est1_5}), (\ref{P_Moment_est1_11}) and
(\ref{P_Moment_est1_13}),\ we obtain%
\begin{align}
(\tilde{A}-A)^{\top}T^{-1/2}\sum_{t\leq T}\hat{y}_{t}^{\ast}(\hat{y}%
_{e,t}^{\ast}-\phi p_{t})  &  =A^{\top}\Bigl(T^{-1/2}\sum_{t\leq T}%
(y_{t}^{\ast}y_{t}^{\ast \top}-\mathbb{E}[y_{t}^{\ast}y_{t}^{\ast \top
}])\Bigr)\Upsilon \Bigl(T^{-1}\sum_{t\leq T}\mathbb{E}[y_{t}^{\ast}%
v_{t}]\Bigr)\nonumber \\
&  \quad+O_{p}(T^{-1/2}). \label{P_Moment_est1_14}%
\end{align}

The result then follows from (\ref{P_Moment_est1_1}), (\ref{P_Moment_est1_1b}%
), (\ref{P_Moment_est1_2}), (\ref{P_Moment_est1_4b}), and
(\ref{P_Moment_est1_14}). \hfill$Q.E.D.$

\bigskip \ 

\begin{lemma}
\label{Moment_est_2}\ Under Assumptions \ref{ID}, \ref{Asy_Cond_1},
\ref{S} and \ref{Asy_Cond_EX}, we have
\begin{align*}
T^{-1/2}\sum_{t\leq T}\tilde{A}^{\top}\hat{y}_{t}^{\ast}\varepsilon_{t}  &
=A^{\top}T^{-1/2}\sum_{t\leq T}(y_{t}^{\ast}\varepsilon_{t}-\mathbb{E}%
[y_{t}^{\ast}\varepsilon_{t}])\\
&  \quad+A^{\top}\Bigl(T^{-1/2}\sum_{t\leq T}(y_{t}^{\ast}y_{t}^{\ast \top
}-\mathbb{E}[y_{t}^{\ast}y_{t}^{\ast \top}])\Bigr)\Upsilon \Bigl(T^{-1}%
\sum_{t\leq T}\mathbb{E}[y_{t}^{\ast}\varepsilon_{t}]\Bigr)\\
&  +O_{p}(T^{-1/2}).
\end{align*}

\end{lemma}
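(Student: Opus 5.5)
The argument follows the proof of Lemma \ref{Moment_est_1}, with $v_t$ replaced by $\varepsilon_t$. Two simplifications occur. First, $\varepsilon_t$ here is the true supply shock, so $\hat{\beta}$ enters only through $\hat{y}_t^{\ast}$ and not through the residual. Second, the exact-zero relation used in (\ref{P_Moment_est1_1b}) is replaced by $A^{\top}\mathbb{E}[y_t^{\ast}\varepsilon_t]=0$. This last identity holds because $A^{\top}y_t^{\ast}=A^{\top}u_t$ (as $A^{\top}(\mathbf{1}_n,\lambda)=0$), so $A^{\top}\mathbb{E}[u_t\varepsilon_t]=\sigma_{\varepsilon u,t}A^{\top}\mathbf{1}_n=0$ by Assumption \ref{ID}(iii).

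\textbf{Decomposition.} I start from
\begin{align*}
T^{-1/2}\sum_{t\leq T}\tilde{A}^{\top}\hat{y}_t^{\ast}\varepsilon_t
&=A^{\top}T^{-1/2}\sum_{t\leq T}y_t^{\ast}\varepsilon_t
+A^{\top}T^{-1/2}\sum_{t\leq T}(\hat{y}_t^{\ast}-y_t^{\ast})\varepsilon_t\\
&\quad+(\tilde{A}-A)^{\top}T^{-1/2}\sum_{t\leq T}y_t^{\ast}\varepsilon_t
+(\tilde{A}-A)^{\top}T^{-1/2}\sum_{t\leq T}(\hat{y}_t^{\ast}-y_t^{\ast})\varepsilon_t .
\end{align*}

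\textbf{First term.} Using the identity above, it equals $A^{\top}T^{-1/2}\sum_{t\leq T}(y_t^{\ast}\varepsilon_t-\mathbb{E}[y_t^{\ast}\varepsilon_t])$.

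\textbf{Second term.} Write $\hat{y}_t^{\ast}-y_t^{\ast}=-x_t(\hat{\beta}-\beta)$. By Assumption \ref{Asy_Cond_EX}(i,iii), $T^{-1}\sum_{t\leq T} x_t\varepsilon_t=O_p(T^{-1/2})$. Combined with (\ref{P_Sigma_est_2}), the second term is $T^{1/2}\cdot O_p(T^{-1/2})\cdot O_p(T^{-1/2})=O_p(T^{-1/2})$. This is simpler than Step 1 of the previous proof, because there is no $p_t$ factor that would need to be killed by $A^{\top}\mathbf{1}_n=0$.

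\textbf{Third term.} By (\ref{P_Moment_est1_10}),
\[
T^{1/2}(\tilde{A}-A)=\Upsilon\Bigl(T^{-1/2}\sum_{t\leq T}(y_t^{\ast}y_t^{\ast\top}-\mathbb{E}[y_t^{\ast}y_t^{\ast\top}])\Bigr)A+O_p(T^{-1/2}),
\]
and $T^{-1}\sum_{t\leq T} y_t^{\ast}\varepsilon_t=T^{-1}\sum_{t\leq T}\mathbb{E}[y_t^{\ast}\varepsilon_t]+O_p(T^{-1/2})$, by (\ref{P_Asy_L5_1}) applied to $y^{\ast}$. The leading part is $O(1)$ by the analogue of (\ref{P_V_Est_1b}). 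Transposing, and using the symmetry of $\Upsilon$ and of the centered second-moment matrix, the third term becomes
\[
A^{\top}\Bigl(T^{-1/2}\sum_{t\leq T}(y_t^{\ast}y_t^{\ast\top}-\mathbb{E}[y_t^{\ast}y_t^{\ast\top}])\Bigr)\Upsilon\Bigl(T^{-1}\sum_{t\leq T}\mathbb{E}[y_t^{\ast}\varepsilon_t]\Bigr)+O_p(T^{-1/2}),
\]
exactly as in (\ref{P_Moment_est1_13}).

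\textbf{Fourth term.} It is $O_p(T^{-1/2})\cdot O_p(T^{-1/2})=O_p(T^{-1})$, a product of the rates obtained for the second and third terms.

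\textbf{Main obstacle.} The only delicate step is the third term. It needs the expansion (\ref{P_Moment_est1_10}) of $\tilde{A}-A$ with the estimated $\hat{\beta}$. That expansion rests on Lemma \ref{Sigma_est}, which shows that the $\hat{\beta}$ error in $\hat{\Sigma}_{\hat{y}^{\ast}}$ is $O_p(T^{-1})$ after projection by $Q_{-1}$, and on Lemma \ref{Asy_Ahat_L1} applied to $y^{\ast}$. I take both as given. It remains only to check that the bounded-moment conditions (Assumption \ref{Asy_Cond_1}(iv) and Assumption \ref{Asy_Cond_EX}(iv), together with the reduced forms (\ref{P_Sigma_est_5})--(\ref{P_Sigma_est_6})) give $\max_{t\leq T}\Vert\mathbb{E}[y_t^{\ast}\varepsilon_t]\Vert\leq K$.

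Collecting the four terms yields the claimed expansion.
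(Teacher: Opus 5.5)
Your proposal is correct and follows essentially the same route as the paper: your second and fourth terms are exactly the two pieces into which the paper splits $T^{-1/2}\sum_{t\leq T}\tilde{A}^{\top}(\hat{y}_{t}^{\ast}-y_{t}^{\ast})\varepsilon_{t}$ in (\ref{P_Moment_est2_2}), and the other two terms are handled, as in the paper, by the centering identity $A^{\top}\mathbb{E}[y_t^{\ast}\varepsilon_t]=\mathbf{0}$ and the expansion (\ref{P_Moment_est1_10}). In the third term you use $T^{-1}\sum_{t\leq T}(y_t^{\ast}\varepsilon_t-\mathbb{E}[y_t^{\ast}\varepsilon_t])=O_p(T^{-1/2})$, which is the rate the $O_p(T^{-1/2})$ remainder actually requires; the paper's text states only $O_p(1)$ at that point.
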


\noindent \textsc{Proof of Lemma \ref{Moment_est_2}.} We begin with the
decomposition
\begin{equation}
T^{-1/2}\sum_{t\leq T}\tilde{A}^{\top}\hat{y}_{t}^{\ast}\varepsilon
_{t}=T^{-1/2}\sum_{t\leq T}A^{\top}y_{t}^{\ast}\varepsilon_{t}+(\tilde
{A}-A)^{\top}T^{-1/2}\sum_{t\leq T}y_{t}^{\ast}\varepsilon_{t}+T^{-1/2}%
\sum_{t\leq T}\tilde{A}^{\top}(\hat{y}_{t}^{\ast}-y_{t}^{\ast})\varepsilon
_{t}. \label{P_Moment_est2_1}%
\end{equation}
By arguments similar to those used to establish (\ref{P_Asy_L5_6}), we have%
\[
T^{-1/2}\sum_{t\leq T}A^{\top}\mathbb{E}[y_{t}^{\ast}\varepsilon
_{t}]=\mathbf{0}_{n-\bar{r}}.
\]
Therefore,
\begin{equation}
T^{-1/2}\sum_{t\leq T}A^{\top}y_{t}^{\ast}\varepsilon_{t}=A^{\top}T^{-1/2}%
\sum_{t\leq T}(y_{t}^{\ast}\varepsilon_{t}-\mathbb{E}[y_{t}^{\ast}%
\varepsilon_{t}]). \label{P_Moment_est2_1b}%
\end{equation}

We next study the last term on the right-hand side of (\ref{P_Moment_est2_1}).
By definition,
\begin{equation}
T^{-1/2}\sum_{t\leq T}\tilde{A}^{\top}(\hat{y}_{t}^{\ast}-y_{t}^{\ast
})\varepsilon_{t}=-A^{\top}T^{-1/2}\sum_{t\leq T}x_{t}\varepsilon_{t}%
(\hat{\beta}-\beta)-(\tilde{A}-A)^{\top}T^{-1/2}\sum_{t\leq T}x_{t}%
\varepsilon_{t}(\hat{\beta}-\beta), \label{P_Moment_est2_2}%
\end{equation}
where we use $\hat{y}_{t}^{\ast}-y_{t}^{\ast}=-x_{t}(\hat{\beta}-\beta)$. From
Assumption \ref{Asy_Cond_EX}(iii), together with the condition $T^{-1}%
\sum_{t\leq T}\mathbb{E}[x_{t}\varepsilon_{t}]=0$, it follows that
\begin{equation}
T^{-1/2}\sum_{t\leq T}x_{t}\varepsilon_{t}=O_{p}(1). \label{P_Moment_est2_3}%
\end{equation}
This, combined with (\ref{P_Sigma_est_2}), (\ref{P_Moment_est1_10}), and
(\ref{P_Moment_est2_2}), implies that
\begin{equation}
T^{-1/2}\sum_{t\leq T}\tilde{A}^{\top}(\hat{y}_{t}^{\ast}-y_{t}^{\ast
})\varepsilon_{t}=O_{p}(T^{-1/2}). \label{P_Moment_est2_4}%
\end{equation}

Next, by arguments similar to those used in (\ref{P_Moment_est1_12}), we have
\[
T^{-1}\sum_{t\leq T}\mathbb{E}[y_{t}^{\ast}\varepsilon_{t}]=O(1),\qquad
T^{-1}\sum_{t\leq T}(y_{t}^{\ast}\varepsilon_{t}-\mathbb{E}[y_{t}^{\ast
}\varepsilon_{t}])=O_{p}(1).
\]
Combining this with (\ref{P_Moment_est1_10}) yields
\begin{equation}
(\tilde{A}-A)^{\top}T^{-1/2}\sum_{t\leq T}y_{t}^{\ast}\varepsilon_{t}=A^{\top
}\Bigl(T^{-1/2}\sum_{t\leq T}(y_{t}^{\ast}y_{t}^{\ast \top}-\mathbb{E}%
[y_{t}^{\ast}y_{t}^{\ast \top}])\Bigr)\Upsilon \Bigl(T^{-1}\sum_{t\leq
T}\mathbb{E}[y_{t}^{\ast}\varepsilon_{t}]\Bigr)+O_{p}(T^{-1/2}).
\label{P_Moment_est2_5}%
\end{equation}

Substituting (\ref{P_Moment_est2_1b}), (\ref{P_Moment_est2_4}) and
(\ref{P_Moment_est2_5}) into (\ref{P_Moment_est2_1}) completes the
proof.\hfill$Q.E.D.$

\section{Additional Simulation Results\label{APP_6}}

This appendix reports two additional sets of simulation results related to
Section~\ref{sec: MC}: the power of the feasible $J$-test in the extended
design and the performance of the BIC criterion in estimating the number of
latent factors.

\begin{figure}[!t]
	\caption{Empirical Power of the $J$-test: Extended Design ($d_{x}=3$)}%
	\label{fig:j_power_extended}
	\centering
	\includegraphics[width=0.9\textwidth]{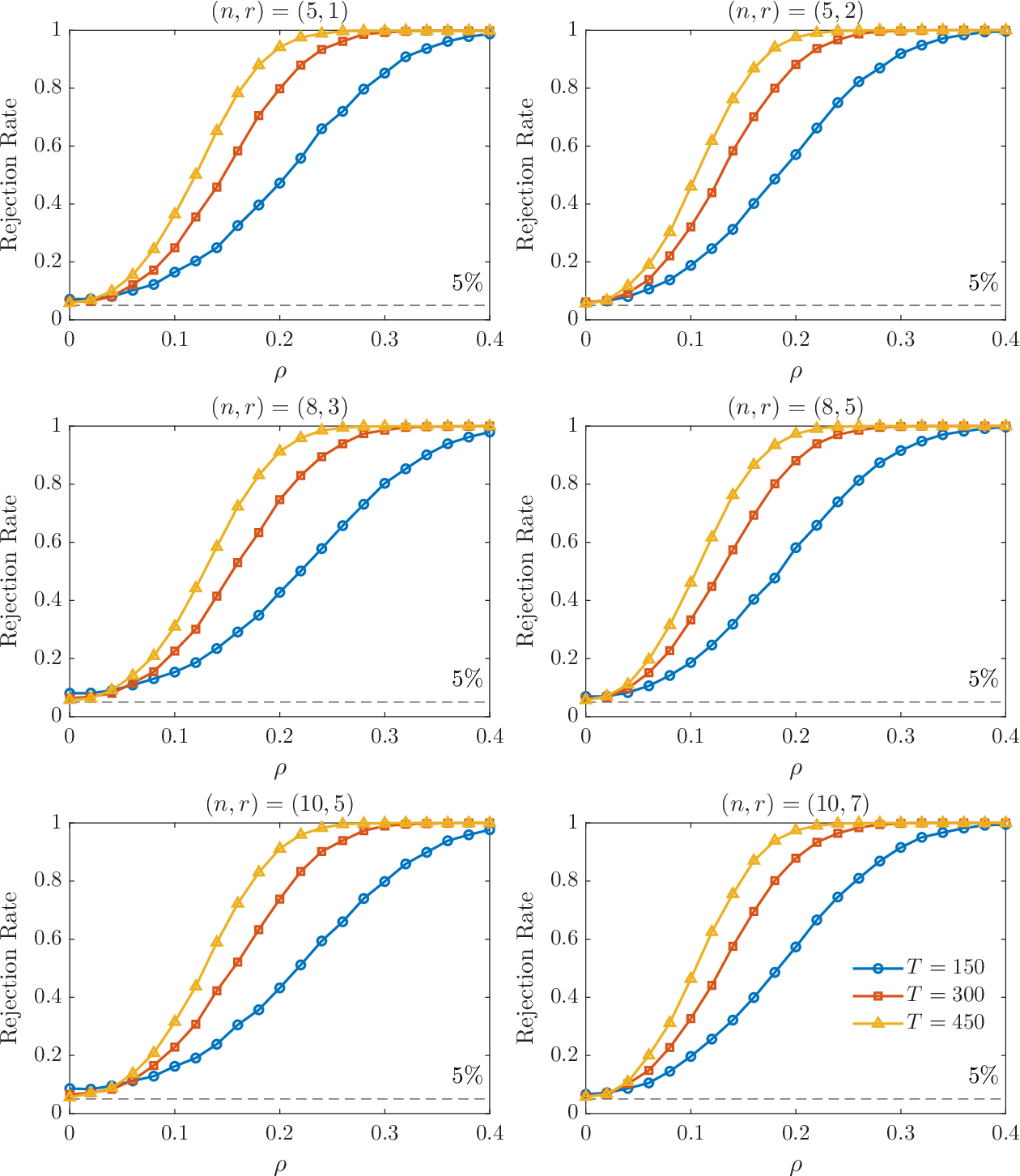}
	\begin{minipage}{0.95\textwidth}
		\footnotesize
		\noindent \textit{Notes.} This figure plots the empirical rejection probabilities of the $J$-test at the $5\%$ nominal significance level in the
		extended design with $d_x=3$ exogenous regressors, based on $10{,}000$ Monte
		Carlo replications. Each panel corresponds to one of the six configurations
		$(n,r)$ listed in~\eqref{MC_nr_pair}, and within each panel the three curves
		correspond to the sample sizes $T\in \{150,300,450\}$. The horizontal axis
		is $\rho$, while the vertical axis reports the
		rejection rate. The horizontal dashed line indicates the nominal $5\%$ level.
		Power is reported for the feasible GIV estimator, which selects $\bar r$ using
		the BIC criterion in~\eqref{BIC}--\eqref{r_hat} and is implemented according
		to Algorithm~1 in Appendix~\ref{APP_0}. In the extended design, the estimator additionally partials out $x_t\beta$ using OLS.
	\end{minipage}
\end{figure}

\subsection{Power of the $J$-test in the Extended Design}

In the extended design with $d_{x}=3$ exogenous regressors, the feasible
$J$-test exhibits behavior similar to that in the baseline design shown in
Figure~\ref{fig:j_power_baseline}: the power increases monotonically with
$\rho$, rises substantially as the sample size $T$ increases, and approaches
one by $\rho=0.4$ across all configurations (see
Figure~\ref{fig:j_power_extended} for details). Partialling out the exogenous
regressors $x_{t}$ has only a minor effect on power. Relative to the baseline
design, the power curves in the extended design lie only slightly below their
baseline counterparts at intermediate values of $\rho$, reflecting the
additional estimation error from estimating $\beta$.

\begin{table}[!t]
	\caption{BIC Selection Diagnostics for $\widehat{r}$.}%
	\label{tab:bic}%
	\centering
	{\small \setlength{\tabcolsep}{4.5pt} \renewcommand{\arraystretch}{1.05}
		\begin{threeparttable}
			\begin{tabular*}{\textwidth}{@{\extracolsep{\fill}}cc ccc ccc@{}}
				\toprule
				& & \multicolumn{3}{c}{Baseline ($d_x = 0$)}
				& \multicolumn{3}{c}{Extended ($d_x = 3$)} \\
				\cmidrule(lr){3-5}\cmidrule(lr){6-8}
				$(n, r)$ & $T$
				& $\textnormal{mean}(\widehat r)$
				& $\textnormal{mode}(\widehat r)$
				& $\mathbb{P}(\widehat r = \bar r)$
				& $\textnormal{mean}(\widehat r)$
				& $\textnormal{mode}(\widehat r)$
				& $\mathbb{P}(\widehat r = \bar r)$ \\
				\midrule
				\multirow{3}{*}{$(5, 1)$}
				& 150 & 2.008  & 2 & 0.994  & 2.008  & 2 & 0.994 \\
				& 300 & 2.010  & 2 & 0.992  & 2.010  & 2 & 0.992 \\
				& 450 & 2.009  & 2 & 0.993  & 2.010  & 2 & 0.991 \\
				\addlinespace
				\multirow{3}{*}{$(5, 2)$}
				& 150 & 3.020  & 3 & 0.980  & 3.022  & 3 & 0.978 \\
				& 300 & 3.021  & 3 & 0.979  & 3.021  & 3 & 0.979 \\
				& 450 & 3.021  & 3 & 0.979  & 3.017  & 3 & 0.983 \\
				\addlinespace
				\multirow{3}{*}{$(8, 3)$}
				& 150 & 4.000  & 4 & 1.000  & 4.000  & 4 & 1.000 \\
				& 300 & 4.002  & 4 & 0.999  & 4.001  & 4 & 0.999 \\
				& 450 & 4.002  & 4 & 0.998  & 4.002  & 4 & 0.998 \\
				\addlinespace
				\multirow{3}{*}{$(8, 5)$}
				& 150 & 6.025  & 6 & 0.975  & 6.025  & 6 & 0.975 \\
				& 300 & 6.022  & 6 & 0.978  & 6.022  & 6 & 0.978 \\
				& 450 & 6.020  & 6 & 0.980  & 6.020  & 6 & 0.980 \\
				\addlinespace
				\multirow{3}{*}{$(10, 5)$}
				& 150 & 6.001  & 6 & 0.999  & 6.001  & 6 & 1.000 \\
				& 300 & 6.002  & 6 & 0.999  & 6.001  & 6 & 0.999 \\
				& 450 & 6.003  & 6 & 0.998  & 6.002  & 6 & 0.998 \\
				\addlinespace
				\multirow{3}{*}{$(10, 7)$}
				& 150 & 8.026  & 8 & 0.974  & 8.027  & 8 & 0.974 \\
				& 300 & 8.024  & 8 & 0.976  & 8.022  & 8 & 0.978 \\
				& 450 & 8.022  & 8 & 0.978  & 8.023  & 8 & 0.977 \\
				\bottomrule
			\end{tabular*}
			\begin{tablenotes}[flushleft]
				\footnotesize
				\item \textit{Notes.} This table reports BIC selection diagnostics for the rank estimator
				$\widehat r$, based on $10{,}000$ Monte Carlo replications for each design
				configuration. Results are presented separately for the baseline design
				($d_x=0$) and the extended design with $d_x=3$ exogenous regressors. The
				columns $\mathrm{mean}(\widehat r)$, $\mathrm{mode}(\widehat r)$, and
				$\mathbb{P}(\widehat r=\bar r)$ report the average value, modal value, and
				probability of correct selection of the BIC estimator $\widehat r$ defined
				in~\eqref{BIC}--\eqref{r_hat}, where $\bar r=r+1$ denotes the true rank of
				$(\mathbf{1}_{n},\lambda)$.
			\end{tablenotes}
		\end{threeparttable}
}\end{table}

\subsection{BIC Rank Selection}

The feasible GIV estimation and inference procedures determine the rank $\bar
r$ using the BIC estimator proposed in \eqref{BIC}--\eqref{r_hat}.
Table~\ref{tab:bic} shows that the BIC criterion performs remarkably well in
estimating the true rank across all configurations of $(n,r)$ in both the
baseline and extended designs. In particular, the modal value of $\widehat r$
always coincides with the true rank $\bar r=r+1$, and the average value of
$\widehat r$ is extremely close to $\bar r$ in every case.

The probability of correct selection is uniformly high across all
configurations, typically exceeding $97\%$ and often above $99\%$. For
example, in the baseline design with $(n,r)=(5,1)$, the correct selection
probabilities are $0.994$, $0.992$, and $0.993$ at $T=150$, $300$, and $450$,
respectively. Even in configurations with a larger number of latent factors,
such as $(n,r)=(10,7)$, the correct selection probabilities remain close to
$98\%$ in both designs. Moreover, the performance of the BIC criterion is
nearly identical in the baseline and extended designs, indicating that the
additional estimation of $\beta$ in the extended model has little effect on
rank determination.

Overall, these results suggest that the proposed BIC procedure provides a
highly reliable estimator of the rank $\bar{r}$, supporting the practical
implementation of the feasible GIV estimation and inference procedures.

\section{Additional Empirical Results\label{app:sectors}}

\subsection{Variable Construction \label{subsec:variable}}

We construct the sector-level holding growth $\Delta q_{i,t}$ and the
aggregate price change $\Delta p_{t}$ following Appendix~C of \citet{GIVOA-gabaix2021search}.

The Financial Accounts report the dollar value of each sector's U.S. equity
holdings, denoted by $w_{i,t}$. Changes in $w_{i,t}$ reflect both capital
appreciation of existing holdings and net purchases or sales of equities. To
isolate the latter component, let $q_{i,t}$ denote the number of shares held
by sector $i$ and let $P_{t}$ denote the aggregate equity price. Since
$w_{i,t}=q_{i,t}P_{t}$, a sector that does not trade between $t-1$ and $t$
would end the quarter holding $w_{i,t-1}R_{t}$ dollars of equity, where
$R_{t}=P_{t}/P_{t-1}$ is the gross capital-appreciation return on the
aggregate market. It follows that
\[
\Delta q_{i,t}\equiv \frac{q_{i,t}}{q_{i,t-1}}-1=\frac{w_{i,t}/P_{t}}%
{w_{i,t-1}/P_{t-1}}-1=\frac{w_{i,t}}{w_{i,t-1}R_{t}}-1,
\]
which is the empirical measure used in \eqref{eq:dq}. Define $\Delta
f_{i,t}\equiv(w_{i,t}-w_{i,t-1}R_{t})/w_{i,t-1}$. We can equivalently write
$\Delta q_{i,t}=\Delta f_{i,t}/R_{t}$, which corresponds to the flow-based
measure reported in Appendix~C.1.2 of \citet{GIVOA-gabaix2021search}. The gross
capital-appreciation return $R_{t}$ is constructed from the CRSP
value-weighted index excluding dividends and compounded from monthly to
quarterly frequency.

The aggregate price change $\Delta p_{t}$ is measured by the quarterly simple
return on the CRSP value-weighted index including dividends, compounded from
monthly returns.

\subsection{Sector Construction and Weights \label{subsec:sector}}

We construct the sector classification from the corporate-equity holdings
reported in Financial Accounts Table~L.224, following the data construction in
Appendix~C of \citet{GIVOA-gabaix2021search}. The resulting cross-section consists
of twelve equity-holding sectors, listed in Table~\ref{tab:sectors}. These
sectors correspond closely to the investor categories used in \citet{GIVOA-gabaix2021search}.

For each sector, we compute its time-$t$ market share as
\begin{equation}
S_{i,t}\equiv \frac{w_{i,t-1}}{\sum_{j\leq12}w_{j,t-1}}. \label{GK_Sit}%
\end{equation}
Thus,\ $S_{i,t}$ measures sector $i$'s share of the total equity holdings of
the twelve modeled sectors and satisfies $\sum_{i=1}^{12}S_{i,t}=1$.
Table~\ref{tab:sectors} reports the average market share, $\bar{S}_{i}%
=T^{-1}\sum_{t\leq T}S_{i,t}$,\ computed over the benchmark sample 1993Q1--2018Q4.

\begin{table}[ptbh]
\caption{Institutional Investor Sectors Holding U.S. Equities}%
\label{tab:sectors}
\centering
\begin{threeparttable}
		\small
		\begin{tabular*}{0.95\textwidth}{@{\extracolsep{\fill}} c l r r c @{}}
			\toprule
			Rank & Sector & Avg.\ share (\%) & Cum.\ share (\%) & Cross-section \\
			\midrule
			1  & Households                          & $41.63$ & $41.63$  & $6,\,12$ \\
			2  & Mutual funds and ETFs               & $23.42$ & $65.05$  & $6,\,12$ \\
			3  & Foreign sector                      & $13.00$ & $78.05$  & $6,\,12$ \\
			4  & Private pension funds               & $9.92$  & $87.97$  & $6,\,12$ \\
			5  & State and local pension funds       & $7.54$  & $95.51$  & $6,\,12$ \\
			6  & Life insurance companies            & $1.89$  & $97.40$  & $6,\,12$ \\
			\addlinespace
			7  & Property and casualty insurers      & $1.17$  & $98.57$  & $12$ \\
			8  & Federal government retirement funds & $0.44$  & $99.01$  & $12$ \\
			9  & State and local governments         & $0.43$  & $99.44$  & $12$ \\
			10 & Closed-end funds                    & $0.38$  & $99.82$  & $12$ \\
			11 & Banks                               & $0.12$  & $99.94$  & $12$ \\
			12 & Broker-dealers                      & $0.06$  & $100.00$ & $12$ \\
			\bottomrule
		\end{tabular*}
		\begin{tablenotes}[flushleft]
			\footnotesize
			\item \textit{Notes.} The table lists the twelve institutional investor sectors holding U.S. equities in the Financial Accounts of the United States (Table~L.224), ranked by average market share over 1993Q1--2018Q4. ``Avg.\ share'' denotes the sector's average share of aggregate equity holdings (in percent), and the shares sum to 100. ``Cum.\ share'' denotes the cumulative share accounted for by the sector and all sectors ranked above it. ``Cross-section'' identifies the estimation samples that include the sector: $n=6$ corresponds to the granular core formed by the six largest sectors, while $n=12$ corresponds to the full twelve-sector panel.
		\end{tablenotes}
	\end{threeparttable}
\end{table}

We rank sectors according to their average market shares $\bar{S}_{i}$. The
six largest sectors jointly account for 97.4\% of total equity holdings and
constitute the granular core ($n=6$): households, mutual funds and ETFs, the
foreign sector, private pension funds, state and local pension funds, and life
insurance companies. The remaining six sectors each account for less than
1.2\% of total equity holdings and, together with the granular core, form the
full twelve-sector panel ($n=12$).

\subsection{Implementation of the Factor-residual IV Estimator
\label{subsec:gk-replication}}

This subsection describes our implementation of the FIV estimator of
\citet{GIVOA-gabaix2021search}. To facilitate comparison with their results, we
follow the estimation procedure in Section 4.2 of their paper as closely as possible.

\bigskip

\noindent \textsc{Algorithm 3 (FIV Estimation)}

\noindent \textbf{Step 0.} For each sector $i$, let $\hat{\sigma}_{q,i}^{2}$
denote the time-series sample variance of $\Delta q_{i,t}$. Construct the
precision weights
\[
E_{i}=\frac{1/\bar{\sigma}_{q,i}^{2}}{\sum_{j\leq n}1/\bar{\sigma}_{q,j}^{2}%
},\qquad \bar{\sigma}_{q,i}^{2}=\max \bigl \{ \hat{\sigma}_{q,i}^{2}%
,\  \operatorname{median}_{j\leq n}\hat{\sigma}_{q,j}^{2}\bigr \}.
\]

\noindent \textbf{Step 1.} Estimate the two-way fixed-effects regression by
weighted least squares:
\[
(\hat{\alpha},\hat{\gamma},\hat{\beta})=\arg \min_{{\alpha_{i}},{\gamma_{t}%
},{\beta_{i}}}\sum_{t\leq T}\sum_{i\leq n}E_{i}\left(  \Delta q_{i,t}%
-\alpha_{i}-\gamma_{t}-\beta_{i}^{\top}\eta_{o,t}\right)  ^{2},
\]
where $\alpha_{i}$ and $\gamma_{t}$ are sector and time fixed effects,
respectively, and $\eta_{o,t}$ denotes the vector of observed factors with
sector-specific loadings $\beta_{i}$. The residual is
\[
\Delta \check{q}_{i,t}\equiv \Delta q_{i,t}-\hat{\alpha}_{i}-\hat{\gamma}%
_{t}-\hat{\beta}_{i}^{\top}\eta_{o,t}.
\]

\noindent \textbf{Step 2.} Extract $r$ latent factors, denoted by $\hat{\eta
}_{l,t}$, as the principal components of the weighted residuals $E_{i}%
^{1/2}\Delta \check{q}_{i,t}$. Given the estimated factors $\hat{\eta}_{l,t}$,
estimate the corresponding factor loadings $\hat{\lambda}_{i}$ and recover the
idiosyncratic shock $\check{u}_{i,t}$ from the regression
\[
\Delta \check{q}_{i,t}=\hat{\lambda}_{i}^{\top}\hat{\eta}_{l,t}+\check{u}%
_{i,t}.
\]

\noindent \textbf{Step 3.} Construct the FIV as the size-weighted sum of the
estimated idiosyncratic shocks
\[
z_{t}=\sum_{i\leq n}S_{i,t}\, \check{u}_{i,t},
\]
where $S_{i,t}$ is defined in (\ref{GK_Sit}), and estimate the market
multiplier from the time-series regression
\begin{equation}
\Delta p_{t}=\alpha+\kappa z_{t}+\Gamma^{\top}\eta_{t}+e_{t},\text{
\  \  \ }\eta_{t}=(\eta_{o,t}^{\top},\hat{\eta}_{l,t}^{\top})^{\top}
\label{reduced_reg}%
\end{equation}
where $\Delta p_{t}$ denotes the aggregate market return and $\kappa$ is the
aggregate market multiplier.

\noindent \textbf{Step 4.} Recompute the precision weights using the
time-series sample variance of the estimated idiosyncratic shocks $\check
u_{i,t}$ and repeat Steps~1--3 once. Report the coefficient estimates and
their Newey--West standard errors from the time-series regression in
\eqref{reduced_reg}, without adjusting for the estimation error in $z_{t}$ and
$\eta_{t}$.

\bigskip

In our empirical implementation of the FIV estimator, the observed factors
$\eta_{o,t}$ consist of GDP growth together with a set of asset-pricing
factors. Specifically, following \citet{GIVOA-gabaix2021search}, we run monthly
cross-sectional regressions of stock returns on log market capitalization, the
log book-to-market ratio, and a momentum signal. Individual stock returns,
prices, and shares outstanding are obtained from CRSP, while book equity used
to construct the book-to-market ratio is obtained from Compustat, matching the
data inputs used by \citet{GIVOA-gabaix2021search}. GDP growth is
	computed from real gross domestic product (series GDPC1), produced by the U.S.
	Bureau of Economic Analysis and retrieved from the Federal Reserve Economic
	Data (FRED) database maintained by the Federal Reserve Bank of St.\ Louis. The
resulting Fama--MacBeth slope estimates are aggregated to the quarterly
frequency and used as the observed size, value, and momentum factors.

\bigskip

{\small
\input{GIV_OA_merged.bbl}

}

\end{document}

%% file: Granular_IV_v7_3.bbl
\ifx\undefined\BySame
\newcommand{\BySame}{\leavevmode\rule[.5ex]{3em}{.5pt}\ }
\fi
\ifx\undefined\textsc
\newcommand{\textsc}[1]{{\sc #1}}
\newcommand{\emph}[1]{{\em #1\/}}
\let\tmpsmall\small
\renewcommand{\small}{\tmpsmall\sc}
\fi

%% file: GIV_OA_merged.bbl
\ifx\undefined\BySame
\newcommand{\BySame}{\leavevmode\rule[.5ex]{3em}{.5pt}\ }
\fi
\ifx\undefined\textsc
\newcommand{\textsc}[1]{{\sc #1}}
\newcommand{\emph}[1]{{\em #1\/}}
\let\tmpsmall\small
\renewcommand{\small}{\tmpsmall\sc}
\fi

%% file: GIV.bib
@article{koijen2019demand,
	title={A demand system approach to asset pricing},
	author={Koijen, Ralph SJ and Yogo, Motohiro},
	journal={Journal of Political Economy},
	volume={127},
	number={4},
	pages={1475--1515},
	year={2019},
	publisher={The University of Chicago Press Chicago, IL}
}

@Article{HansenHeatonYaron1996,
  author  = {Hansen, Lars Peter and Heaton, John and Yaron, Amir},
  journal = {Journal of Business \& Economic Statistics},
  title   = {Finite-sample properties of some alternative {GMM} estimators},
  year    = {1996},
  number  = {3},
  pages   = {262--280},
  volume  = {14},
  doi     = {10.1080/07350015.1996.10524656},
}

@article{gabaix2011granular,
	title={The granular origins of aggregate fluctuations},
	author={Gabaix, Xavier},
	journal={Econometrica},
	volume={79},
	number={3},
	pages={733--772},
	year={2011},
	publisher={Wiley Online Library}
}

@Article{gabaix2024granular,
  author    = {Gabaix, Xavier and Koijen, Ralph SJ},
  journal   = {Journal of Political Economy},
  title     = {Granular instrumental variables},
  year      = {2024},
  number    = {7},
  pages     = {2274--2303},
  volume    = {132},
  publisher = {The University of Chicago Press Chicago, IL},
}

@article{qian2023heterogeneity,
  title={Heterogeneity-robust granular instruments},
  author={Qian, Eric},
  journal={arXiv preprint arXiv:2304.01273},
  year={2023}
}

@Unpublished{gabaix2021search,
  author = {Gabaix, Xavier and Koijen, Ralph SJ},
  note   = {SSRN Working Paper No. 3686935},
  title  = {In search of the origins of financial fluctuations: The inelastic markets hypothesis},
  year   = {2023},
  url    = {https://papers.ssrn.com/sol3/papers.cfm?abstract_id=3686935},
}

@article{camanho2022global,
  title={Global portfolio rebalancing and exchange rates},
  author={Camanho, Nelson and Hau, Harald and Rey, H{\'e}lene},
  journal={The Review of Financial Studies},
  volume={35},
  number={11},
  pages={5228--5274},
  year={2022},
  publisher={Oxford University Press}
}

@article{ma2022mutual,
  title={Mutual fund liquidity transformation and reverse flight to liquidity},
  author={Ma, Yiming and Xiao, Kairong and Zeng, Yao},
  journal={The Review of Financial Studies},
  volume={35},
  number={10},
  pages={4674--4711},
  year={2022},
  publisher={Oxford University Press}
}

@techreport{galaasen2020granular,
  title={Granular credit risk},
  author={Galaasen, Sigurd and Jamilov, Rustam and Juelsrud, Ragnar and Rey, H{\'e}lene},
  year={2020},
  institution={National Bureau of Economic Research}
}

@Article{dong2025fast,
  author  = {Dong, Xi and Kang, Namho and Peress, Joel},
  journal = {The Review of Financial Studies},
  title   = {Fast and slow arbitrage: The predictive power of persistent capital flows for factor returns},
  year    = {2025},
  number  = {9},
  pages   = {2936--2987},
  volume  = {38},
}

@article{bai2003inferential,
  title={Inferential theory for factor models of large dimensions},
  author={Bai, Jushan},
  journal={Econometrica},
  volume={71},
  number={1},
  pages={135--171},
  year={2003},
  publisher={Wiley Online Library}
}

@Article{yu2015useful,
  author    = {Yu, Yi and Wang, Tengyao and Samworth, Richard J},
  journal   = {Biometrika},
  title     = {A useful variant of the Davis--Kahan theorem for statisticians},
  year      = {2015},
  number    = {2},
  pages     = {315--323},
  volume    = {102},
  publisher = {Oxford University Press},
}

@Article{BCCK2015,
  author   = {Alexandre Belloni and Victor Chernozhukov and Denis Chetverikov and Kengo Kato},
  journal  = {Journal of Econometrics},
  title    = {Some New Asymptotic Theory for Least Squares Series: Pointwise and Uniform Results},
  year     = {2015},
  issn     = {0304-4076},
  note     = {High Dimensional Problems in Econometrics},
  number   = {2},
  pages    = {345 - 366},
  volume   = {186},
  doi      = {https://doi.org/10.1016/j.jeconom.2015.02.014},
  url      = {http://www.sciencedirect.com/science/article/pii/S030440761500038X},
}

@Article{ackerberg2014asymptotic,
  author    = {Ackerberg, Daniel and Chen, Xiaohong and Hahn, Jinyong and Liao, Zhipeng},
  journal   = {Review of Economic Studies},
  title     = {Asymptotic efficiency of semiparametric two-step GMM},
  year      = {2014},
  number    = {3},
  pages     = {919--943},
  volume    = {81},
  publisher = {Oxford University Press},
}

@Article{banafti2022inferential,
  author  = {Banafti, Saman and Lee, Tae-Hwy},
  journal = {arXiv preprint arXiv:2201.06605},
  title   = {Inferential theory for granular instrumental variables in high dimensions},
  year    = {2022},
}

@Article{newey2009generalized,
  author    = {Newey, Whitney K and Windmeijer, Frank},
  journal   = {Econometrica},
  title     = {Generalized method of moments with many weak moment conditions},
  year      = {2009},
  number    = {3},
  pages     = {687--719},
  volume    = {77},
  publisher = {Wiley Online Library},
}

@Article{han2006gmm,
  author    = {Han, Chirok and Phillips, Peter CB},
  journal   = {Econometrica},
  title     = {GMM with many moment conditions},
  year      = {2006},
  number    = {1},
  pages     = {147--192},
  volume    = {74},
  publisher = {Wiley Online Library},
}

@article{acemoglu2012network,
  title={The network origins of aggregate fluctuations},
  author={Acemoglu, Daron and Carvalho, Vasco M and Ozdaglar, Asuman and Tahbaz-Salehi, Alireza},
  journal={Econometrica},
  volume={80},
  number={5},
  pages={1977--2016},
  year={2012},
  publisher={Wiley Online Library}
}

@article{di2014firms,
  title={Firms, destinations, and aggregate fluctuations},
  author={di Giovanni, Julian and Levchenko, Andrei A and Mejean, Isabelle},
  journal={Econometrica},
  volume={82},
  number={4},
  pages={1303--1340},
  year={2014},
  publisher={Wiley Online Library}
}

@article{baqaee2019macroeconomic,
  title={The macroeconomic impact of microeconomic shocks: Beyond Hulten's theorem},
  author={Baqaee, David Rezza and Farhi, Emmanuel},
  journal={Econometrica},
  volume={87},
  number={4},
  pages={1155--1203},
  year={2019},
  publisher={Wiley Online Library}
}

@article{gaubert2021granular,
  title={Granular comparative advantage},
  author={Gaubert, Cecile and Itskhoki, Oleg},
  journal={Journal of Political Economy},
  volume={129},
  number={3},
  pages={871--939},
  year={2021},
  publisher={The University of Chicago Press Chicago, IL}
}

@article{leary2014peer,
  title={Do peer firms affect corporate financial policy?},
  author={Leary, Mark T and Roberts, Michael R},
  journal={The Journal of Finance},
  volume={69},
  number={1},
  pages={139--178},
  year={2014},
  publisher={Wiley Online Library}
}

@article{amiti2018much,
  title={How much do idiosyncratic bank shocks affect investment? Evidence from matched bank-firm loan data},
  author={Amiti, Mary and Weinstein, David E},
  journal={Journal of Political Economy},
  volume={126},
  number={2},
  pages={525--587},
  year={2018},
  publisher={University of Chicago Press Chicago, IL}
}

@article{amiti2019international,
  title={International shocks, variable markups, and domestic prices},
  author={Amiti, Mary and Itskhoki, Oleg and Konings, Jozef},
  journal={The Review of Economic Studies},
  volume={86},
  number={6},
  pages={2356--2402},
  year={2019},
  publisher={Oxford University Press}
}

@Article{baumeister2023uncovering,
  author      = {Baumeister, Christiane and Hamilton, James D},
  journal     = {Working paper, UCSD},
  title       = {Uncovering disaggregated oil market dynamics: A full-information approach to granular instrumental variables},
  year        = {2023},
  institution = {Working paper, UCSD},
}

@Article{lucas1978asset,
  author  = {Lucas, Robert E. Jr.},
  journal = {Econometrica},
  title   = {Asset prices in an exchange economy},
  year    = {1978},
  number  = {6},
  pages   = {1429--1445},
  volume  = {46},
}

@article{barro2006rare,
  title={Rare disasters and asset markets in the twentieth century},
  author={Barro, Robert J},
  journal={The Quarterly Journal of Economics},
  volume={121},
  number={3},
  pages={823--866},
  year={2006},
  publisher={MIT Press}
}

@article{gabaix2012variable,
  title={Variable rare disasters: An exactly solved framework for ten puzzles in macro-finance},
  author={Gabaix, Xavier},
  journal={The Quarterly Journal of Economics},
  volume={127},
  number={2},
  pages={645--700},
  year={2012},
  publisher={MIT Press}
}

@article{bansal2004risks,
  title={Risks for the long run: A potential resolution of asset pricing puzzles},
  author={Bansal, Ravi and Yaron, Amir},
  journal={The Journal of Finance},
  volume={59},
  number={4},
  pages={1481--1509},
  year={2004},
  publisher={Wiley Online Library}
}

@article{lou2012flow,
  title={A flow-based explanation for return predictability},
  author={Lou, Dong},
  journal={The Review of Financial Studies},
  volume={25},
  number={12},
  pages={3457--3489},
  year={2012},
  publisher={Society for Financial Studies}
}

@article{pavlova2023benchmarking,
  title={Benchmarking intensity},
  author={Pavlova, Anna and Sikorskaya, Taisiya},
  journal={The Review of Financial Studies},
  volume={36},
  number={3},
  pages={859--903},
  year={2023},
  publisher={Oxford University Press}
}

@Article{piazzesi2007asset,
  author    = {Piazzesi, Monika and Schneider, Martin},
  journal   = {Journal of the European Economic Association},
  title     = {Asset prices and asset quantities},
  year      = {2007},
  number    = {2-3},
  pages     = {380--389},
  volume    = {5},
  publisher = {Oxford University Press},
}
